%% file: main.tex
\PassOptionsToPackage{table}{xcolor}
\documentclass[acmsmall,screen,nonacm]{acmart}

\input{packages}
\input{macros}

\begin{document}

\title{A Type System for Communication-Safe Hardware Design}

\newcommand{\nusaffiliation}{
\affiliation{%
  \institution{National University of Singapore}
  \country{Singapore}
}
}

\author{Aditya Ranjan Jha}
\authornote{Equal contribution.}
\email{adityajha@u.nus.edu}
\orcid{0009-0005-3553-0903}
\nusaffiliation{}

\author{Jason Zhijingcheng Yu}
\authornotemark[1]
\authornote{Work conducted while at NUS. Currently a visiting researcher at Jane Street.}
\email{yu.z@u.nus.edu}
\orcid{0000-0001-6013-157X}
\nusaffiliation{}

\author{Umang Mathur}
\email{umathur@nus.edu.sg}
\orcid{0000-0002-7610-0660}
\nusaffiliation{}

\author{Prateek Saxena}
\email{prateeks@comp.nus.edu.sg}
\orcid{0000-0002-1875-8675}
\nusaffiliation{}

\input{sections/abstract}

\keywords{hardware, session types, communication safety, deadlock freedom, programming language}

\maketitle

\input{sections/intro}
\input{sections/overview}

\input{sections/pact}

\input{sections/types}

\input{sections/evaluation}
\input{sections/related}

\input{sections/conclusions}
\input{sections/acknowledgement}
\input{sections/data-availability}

\bibliographystyle{ACM-Reference-Format}
\bibliography{references}

\newpage

\appendix
\input{sections/appendix}

\end{document}

%% file: packages.tex
\usepackage{tcolorbox}
\tcbuselibrary{breakable,skins}
\usepackage{needspace}
\usepackage{ebproof}
\usepackage{subcaption}
\usepackage{listings}
\usepackage{wrapfig}
\usepackage{multirow}
\usepackage{minted}
\usepackage[nointegrals]{wasysym}
\usepackage{syntax}
\usepackage{tikz}

\usepackage{amssymb}
\usetikzlibrary{decorations.pathreplacing,angles,quotes}
\usepackage{adjustbox}
\usepackage{comment}
\usepackage{booktabs}
\usepackage[inkscapearea=page]{svg}
\usepackage{macrodefs}
\usepackage[inline]{enumitem}
\usepackage{pifont}
\usepackage{tipa}
\usepackage{amsmath}
\usepackage{thm-restate}
\usepackage{hyperref}
\usepackage{xcolor}

%% file: macros.tex
\newcommand{\figlabel}[1]{\label{fig:#1}}
\newcommand{\figref}[1]{Fig.~\ref{fig:#1}}

\newcommand{\tabref}[1]{Table~\ref{tab:#1}}
\newcommand{\seclabel}[1]{\label{sec:#1}}
\newcommand{\secref}[1]{Section~\ref{sec:#1}}

\newcommand{\deflabel}[1]{\label{def:#1}}
\newcommand{\defref}[1]{Definition~\ref{def:#1}}

\newcommand{\thmlabel}[1]{\label{thm:#1}}
\newcommand{\thmref}[1]{Theorem~\ref{thm:#1}}
\newcommand{\proplabel}[1]{\label{prop:#1}}
\newcommand{\propref}[1]{Proposition~\ref{prop:#1}}
\newcommand{\lemlabel}[1]{\label{lem:#1}}
\newcommand{\lemref}[1]{Lemma~\ref{lem:#1}}
\newcommand{\corlabel}[1]{\label{cor:#1}}
\newcommand{\corref}[1]{Corollary~\ref{cor:#1}}

\newcommand{\applabel}[1]{\label{app:#1}}
\newcommand{\appref}[1]{Appendix~\ref{app:#1}}

\theoremstyle{definition}
\newtheorem{theorem}{Theorem}[section]
\newtheorem{lemma}{Lemma}[section]
\newtheorem{proposition}{Proposition}[section]

\newtheorem{corollary}{Corollary}[section]

\newtheorem{definition}{Definition}[section]

\newenvironment{myparagraph}[1]{\vspace{0.1in}\noindent{\bf #1.}\enspace\ignorespaces}{}

\definecolor{papayawhip}{rgb}{1.0, 0.94, 0.84}

\setminted{
  fontsize=\tiny,
  breaklines=true,
  frame=lines,
  bgcolor=papayawhip,
  framesep=2mm,
  baselinestretch=1.0
}

\lstdefinelanguage{SystemVerilog}{
  keywords={module, input, output, logic},
  keywordstyle=\color{blue}\bfseries,
  sensitive=true,
  morecomment=[l]{//},
  morestring=[b]"
}

\lstdefinelanguage{Anvil}{
  keywords={let, send, recv, proc,loop, chan, cycle ,spawn},
  keywordstyle=\color{purple}\bfseries,
  sensitive=true,
  morecomment=[l]{//},
  morestring=[b]"
}

\lstdefinestyle{anvilstyle}{
  language=Anvil,
  basicstyle=\tiny\ttfamily,
  breaklines=true,
  frame=lines,
  backgroundcolor=\color{papayawhip},
  framesep=2mm,
  columns=fullflexible,
  showstringspaces=false,
  commentstyle=\color{gray}\itshape,
  stringstyle=\color{brown}
}

\lstnewenvironment{anvilcode}{\lstset{style=anvilstyle}}{}

\newcommand{\anvil}{\textsf{Anvil}}
\DeclareRobustCommand{\codename}{\texorpdfstring{\textcolor{pactnamecolor}{\textsc{Pact}}}{Pact}}
\definecolor{vgreen}{RGB}{104,180,104}
\definecolor{vblue}{RGB}{49,49,255}
\definecolor{vorange}{RGB}{255,143,102}
\definecolor{vtypecompat}{RGB}{150,78,45}
\definecolor{vautstate}{RGB}{10,125,125}
\definecolor{vautenv}{RGB}{122,62,168}
\definecolor{vautact}{RGB}{197,48,138}

\makeatletter
\DeclareRobustCommand*{\escapeus}[1]{%
    \begingroup\@activeus\scantokens{#1 }\endgroup}
\begingroup\lccode`\~=`\_\relax
    \lowercase{\endgroup\def\@activeus{\catcode`\_=\active \let~\_}}
\makeatother

\DeclareRobustCommand{\pkw}[1]{%
    \ifmmode
        \text{\textcolor{vblue}{\texttt{#1}}}%
    \else
        \textcolor{vblue}{\texttt{#1}}%
    \fi
}

\newcommand{\psym}[1]{\textcolor{vblue}{#1}}
\newcommand{\pseq}{\mathbin{\psym{;}}}
\newcommand{\pparop}{\mathbin{\psym{\parallel}}}
\newcommand{\pmu}{\psym{\mu}}

\newcommand{\auxkw}[1]{\textcolor{darkred}{#1}}

\newcommand{\Messages}{\mathcal{M}}
\newcommand{\Pids}{\mathcal{T}}
\newcommand{\Etas}{\mathcal{H}}
\newcommand{\Act}{\text{Act}}

\newcommand{\Logs}{\mathcal{L}}
\newcommand{\Strategy}{\mathcal{S}}

\newcommand{\BoundaryMsgs}{\auxkw{\mathsf{B}}}

\newcommand{\PidSet}[1]{\Pids(#1)}
\newcommand{\EtaSet}[1]{\Etas(#1)}
\newcommand{\ActSet}[1]{\mathsf{A}_{#1}}

\newcommand{\Fut}{\auxkw{\mathsf{Fut}}}

\newcommand{\Imm}{\auxkw{\mathsf{Imm}}}

\newcommand{\Pol}{\auxkw{\mathsf{Pol}}}

\newcommand{\Need}{\auxkw{\mathsf{Need}}}

\newcommand{\Resolved}{\auxkw{\mathsf{Resolved}}}

\newcommand{\Avail}{\auxkw{\mathsf{Avail}}}
\newcommand{\LocalEnv}{\auxkw{\mathsf{LocalEnv}}}

\newcommand{\Sync}{\auxkw{\mathsf{Sync}}}
\newcommand{\SyncCompat}{\textcolor{vblue}{\mathsf{SyncCompat}}}

\newcommand{\Honest}{\textcolor{vblue}{\mathsf{Honest}}}

\newcommand{\Stuck}{\textcolor{vblue}{\mathsf{Stuck}}}
\newcommand{\Reach}{\mathsf{Reach}}
\newcommand{\Fair}{\textcolor{vblue}{\mathsf{Fair}}}
\newcommand{\Live}{\textcolor{vblue}{\mathsf{Live}}}
\newcommand{\ProcJoin}{\textcolor{vblue}{\scalebox{1.5}{$\Join$}}}
\newcommand{\Inf}{\mathsf{Inf}}
\newcommand{\Bare}{\auxkw{\mathsf{Bare}}}
\newcommand{\Sends}{\auxkw{\mathsf{Sends}}}
\newcommand{\Recvs}{\auxkw{\mathsf{Recvs}}}
\newcommand{\SyncedMsgs}{\auxkw{\mathsf{Synced}}}
\newcommand{\BEnv}{\mathcal{O}}
\newcommand{\IntMsgs}{\auxkw{\mathcal{I}\mkern-2mu\mathcal{M}}}
\newcommand{\tick}[1]{#1^{\checkmark}}
\newcommand{\unionI}{\cup_{I}}

\DeclareRobustCommand{\tkw}[1]{%
    \ifmmode
        \text{\textcolor{vtypecompat}{\texttt{#1}}}%
    \else
        \textcolor{vtypecompat}{\texttt{#1}}%
    \fi
}
\newcommand{\tsym}[1]{\textcolor{vtypecompat}{#1}}
\newcommand{\tmu}{\tsym{\mu}}
\newcommand{\tclock}{\tkw{\#}}
\newcommand{\tmarker}[1]{\tkw{\$}_{#1}}
\newcommand{\tseq}{\mathbin{\tkw{;}}}
\newcommand{\tplus}{\mathbin{\tkw{+}}}
\newcommand{\tfairplus}[1]{\mathbin{\textcolor{vtypecompat}{\dotplus}_{#1}}}
\newcommand{\tparop}{\mathbin{\textcolor{vtypecompat}{\parallel}}}
\newcommand{\tpar}[1]{\tparop_{#1}}
\newcommand{\toplus}{\textcolor{vtypecompat}{\oplus}}
\newcommand{\twedge}{\textcolor{vtypecompat}{\wedge}}

\newcommand{\TypeFair}{\textcolor{vtypecompat}{\mathsf{Fair}}}

\newcommand{\TypeCompat}{\textcolor{vtypecompat}{\mathsf{Compat}}}
\newcommand{\TypeWinRegion}{\textcolor{vtypecompat}{\mathsf{WinRegion}}}

\newcommand{\Rel}{\mathcal{R}}
\newcommand{\Aut}{\mathcal{A}}
\newcommand{\erase}{\mathsf{erase}}
\newcommand{\inferpact}[2]{#1 \Rightarrow #2}
\newcommand{\checkdo}[2]{#1 \Rightarrow_{\mathsf{do}} #2}
\newcommand{\qst}[1]{\textcolor{vautstate}{#1}}
\newcommand{\aenv}[1]{\textcolor{vautenv}{#1}}
\newcommand{\aact}[1]{\textcolor{vautact}{#1}}

\newcommand{\atrans}[7]{ \qst{#1} \xrightarrow[{\aenv{\langle#2,#3 \rangle}}]{\aact{#4,\langle#5,#6\rangle}}\qst{#7}}

\newcommand{\lstep}[4]{#1 \vdash #2 \xrightarrow{#3} #4}

\newcommand{\stratstep}[5]{#1 \vdash_{#2} #3 \xrightarrow{#4} #5}
\newcommand{\restr}[2]{#1\mathbin{\upharpoonright}_{#2}}

\newcommand{\Alphabet}{\auxkw{\Omega}}
\newcommand{\Off}{\auxkw{\mathsf{Off}}}
\newcommand{\FutOff}{\auxkw{\mathsf{FutOff}}}
\newcommand{\Hide}[2]{\auxkw{\mathsf{hide}}_{#1}(#2)}
\newcommand{\Complete}{\auxkw{\mathsf{c}}}
\newcommand{\Env}[1]{\mathsf{Env}_{#1}}

\newtcolorbox{semrulebox}[1][]{
    colback=gray!3,
    colframe=black!40,
    boxrule=0.35pt,
    arc=1pt,
    left=3pt,
    right=3pt,
    top=3pt,
    bottom=3pt,
    boxsep=0pt,
    #1
}

\newcommand{\term}[2]{%
 \hypertarget{#1}{\ifstrequal{#1}{def:comm-safety}%
   {\textcolor{commsafetycolor}{\emph{#2}}}{\emph{#2}}}}
\newcommand{\termref}[2]{%
 \ifstrequal{#1}{def:comm-safety}%
   {\hyperlink{#1}{\textcolor{commsafetycolor}{#2}}}{#2}}

\definecolor{pactcolor}{HTML}{2E5C8A}
\definecolor{commsafetycolor}{HTML}{2E5C8A}
\definecolor{pactnamecolor}{HTML}{A34A00}
\definecolor{svcolor}{HTML}{7A7A7A}
\definecolor{pactbg}{HTML}{EAF1FA}
\definecolor{svbg}{HTML}{F2F2F2}
\definecolor{markgood}{HTML}{1E7B3E}
\definecolor{markbad}{HTML}{B3261E}
\definecolor{markunk}{HTML}{8A8A8A}

\newcommand{\okmark}{\textcolor{markgood}{\checkmark}}
\newcommand{\badmark}{\textcolor{markbad}{\ding{55}}}
\newcommand{\unkmark}{\textcolor{markunk}{?}}
\newcommand{\pacthdr}[1]{\textcolor{pactcolor}{\textbf{#1}}}
\newcommand{\svhdr}[1]{\textcolor{svcolor}{\textbf{#1}}}

%% file: sections/abstract.tex
\begin{abstract}
Hardware components communicate through clock-cycle-sensitive protocols.
\termref{def:hdl}{Hardware Description Languages (HDLs)}, however, provide limited support for expressing these protocols or for
checking that independently developed components agree on them.
When components disagree, the composed design can perform unsynchronized
exchanges or stop making progress.
We call these failures \termref{def:comm-safety}{communication-safety} violations.
\termref{def:comm-safety}{Communication safety} is therefore a
global property, determined by interactions between multiple components.
This makes the task of modularly checking this property difficult.

This paper presents \codename{}, a typed extension of \anvil{} for modular
verification of \termref{def:comm-safety}{communication safety} in synchronous hardware.
\codename{} types describe \termref{def:message-passing}{message-passing}, control-flow, and clock-cycle
behaviour.
Each component is checked against its type.
Compatible types then guarantee that the corresponding components compose into
a communication-safe system.

\codename{} also supports incremental design.
It can check a subsystem before the full system context is available.
The check produces a \termref{def:summary-type}{summary type} that captures the communication behaviour left
for future compositions.
To reduce annotation burden, \codename{} infers types by abstracting payload
values while preserving communication, timing, control, and \termref{def:fairness}{fairness}.
These inferred types are optimal for the checker: whenever some
checker-accepted types can prove \termref{def:compatibility}{compatibility}, the inferred types can prove
it as well.
\termref{def:type-inference}{Type inference} and \termref{def:type-checking}{type checking} are decidable.
\termref{def:compatibility}{Compatibility} checking is decided by a sound and complete algorithm.

We implement \codename{} by extending the \anvil{} compiler and evaluate it
along five dimensions: detecting communication errors during incremental design,
expressiveness on ten clock-cycle-accurate translations of open-source hardware
IPs, \termref{def:compatibility}{compatibility}-checking time, generated
hardware overhead, and annotation burden.
The results show that \codename{} expresses realistic protocols and verifies
all ten benchmarks in reasonable time.

\end{abstract}

%% file: sections/intro.tex
\section{Introduction}
\seclabel{intro}

Type systems in modern programming languages are effective because they move some verification obligations into the language itself.
Instead of checking every property after a program is built, the programmer records enough intent for the compiler to rule out whole classes of mistakes during development.
Hardware design has not benefited from this style of reasoning to the same extent.
Conventional \term{def:hdl}{hardware description languages (HDLs)}~\cite{18002017IEEEStandard2018,VHDLRef}, such as SystemVerilog and VHDL,
expose the low-level \term{def:rtl}{register-transfer language (RTL)} abstractions needed for efficient circuits.
Those abstractions make hardware expressive, but they leave many correctness properties to post-hoc verification,
where model checking and pre-silicon verification often face state-space explosion.

Recent safe HDLs~\cite{filamentHDL,christensenWireSortsLanguage2021,anvilhdl2026} reduce this burden by checking some properties at the language level.
They retain RTL concepts such as clock cycles, registers, and wires.
They also add abstractions and type systems that verify structural properties, including absence of combinational loops, pipeline safety, and timing safety.
These structural checks are useful, but they do not cover the behavioural properties that arise when hardware components \termref{def:message-passing}{communicate}.

Such behavioural properties matter because hardware designs are reactive systems.
Their components do not merely compute values; they repeatedly exchange values in particular clock cycles.
If two components disagree about when an exchange happens, the design can read the wrong value or wait forever.
Such instances arise in widely used open-source hardware IPs as protocol bugs
and documented interface constraints~\cite{pulpino182,pulpaxi-changelog,litex572,axisfifo-src,axidemux-src,axixbar-src}.

We focus on two such properties.
The first is \term{def:sync-safety}{synchronization safety}, which requires each sender and receiver to be mutually synchronized before they exchange a value.
The second is \term{def:liveness}{liveness}, which requires every component to eventually make progress.
Together, they define \term{def:comm-safety}{communication safety}.

\myparagraph{Challenge}
\termref{def:comm-safety}{Communication safety} is difficult to verify modularly.
Hardware systems are built from components, and those components are often designed independently.
The property we want, however, is not purely local to each pair of interacting components.
The dining philosophers problem is the standard example: each philosopher--fork pair can appear compatible,
while the composed system can still deadlock~\cite{dining-philosophers}.
The deadlock appears only when all pairs are composed, because the composed system may contain a cyclic wait.

Timing makes this modularity problem sharper in hardware.
A software protocol type may record that one process sends and another receives.
A hardware protocol type must also record when the \termref{def:message-send}{send} and \termref{def:message-receive}{receive} occur.
One cycle of latency can be the difference between a correct exchange and a synchronization bug.
Thus, a useful \termref{def:hdl}{HDL} abstraction for \termref{def:comm-safety}{communication safety} must describe both interaction structure and clock-cycle behaviour.
It must also support checking components as they are composed.

Formal verification for hardware spans post-hoc analysis of RTL state spaces and theorem proving over abstract hardware models.
This paper takes a complementary language-design approach.
We present \codename{}, an \termref{def:hdl}{HDL} that preserves \termref{def:rtl}{RTL} abstractions while adding constructs and types for communication-safe hardware.
\codename{} preserves these abstractions by building on Anvil~\cite{anvilhdl2026}, a recent HDL that lowers to SystemVerilog RTL.
\anvil{} also makes communication between hardware components explicit through \termref{def:message-passing}{message-passing} constructs.
This message-passing structure gives \codename{} a place to attach behavioural types.
A component type in \codename{} describes the component's communication behaviour,
including clock-cycle delays and complex control-flow patterns.

Behavioural type systems, such as session types~\cite{sessiontypes}, describe
communication through types. \codename{} uses these types in a bottom-up
workflow: each component implementation is first checked against its own type.
These component types are then checked for \term{def:compatibility}{compatibility}.
If the implementations satisfy their types and the types are compatible, the composed system is \termref{def:comm-safety}{communication safe}.
This is the \term{def:closed}{closed setting}, where all components are present
and their types are checked together.

Hardware designs, however, are often assembled before every component is present.
To support this incremental workflow, \codename{} also works in an \term{def:open}{open setting},
where some components are known and the rest of the environment is not yet specified.
In this setting, compatibility asks whether some boundary environment can make
the available components \termref{def:comm-safety}{communication safe}, under every resolution of their
internal choices (\defref{compatible-processes-main}).
The result is a \term{def:summary-type}{summary type}, which describes the
observable behaviour at that boundary and can be reused when adding components.
\codename{} checks compatibility in both closed and open settings using a sound and complete algorithm,
based on an automata-theoretic characterization of types.

For usability, \codename{} automatically infers a type from each process;
user-provided protocol types are optional.
Inference erases payload values and expressions, giving a
\term{def:data-oblivious-abstraction}{data-oblivious abstraction}.
These inferred types are optimal for the checker: if any types accepted by the
checker can establish \termref{def:compatibility}{compatibility} for the given
processes, then the inferred types can do so as well.
If inferred types are incompatible, choosing different checker-accepted types
cannot make the check succeed. The designer must change the processes or their
composition, or use value-sensitive reasoning beyond this checker.

Finally, we implement \codename{} as an extension of \anvil{} and evaluate it
by creating cycle-accurate ports of open-source hardware IP cores in \codename{}.
These translations demonstrate \codename{}'s expressiveness in capturing complex communication patterns and verifying their \termref{def:comm-safety}{communication safety}.

\myparagraph{Contributions}
The contribution of this paper is \codename{}, a typed \termref{def:hdl}{HDL} for specifying and verifying
\termref{def:comm-safety}{communication safety} in synchronous hardware.
We give the syntax and semantics of \codename{}, including the clock-cycle behaviour needed to define \termref{def:comm-safety}{communication safety} for process compositions.
We then define \termref{def:type}{types}, type satisfaction, and type \termref{def:compatibility}{compatibility}.
These definitions yield the main soundness theorem: if processes satisfy compatible types, then their composition is \termref{def:comm-safety}{communication safe}.
We give a sound and complete automata-based algorithm for checking \termref{def:compatibility}{compatibility} in both \termref{def:closed}{closed} and \termref{def:open}{open} settings, and an optimal \termref{def:type-inference}{type-inference} algorithm under a \termref{def:data-oblivious-abstraction}{data-oblivious abstraction}.
Finally, we implement \codename{} as an extension of \anvil{} and evaluate its expressiveness and practicality on real-world hardware IP cores.
Our results have also been mechanized in Lean 4, and the accompanying artefacts are open-source (see the Data Availability Statement).

\myparagraph{Outline}
The rest of the paper is organized as follows.
Section~\ref{sec:overview} gives a brief overview of \codename{}'s motivation and design through examples.
Section~\ref{sec:pact} presents the language syntax and semantics in detail.
Section~\ref{sec:types} discusses the type system.
Section~\ref{sec:evaluation} presents our empirical evaluation.
Section~\ref{sec:related} discusses related work, and Section~\ref{sec:conclusions} concludes with the future work.

%% file: sections/overview.tex
\section{Overview of \codename{}}
\seclabel{overview}

In this section, we first walk the readers through a
simple example illustrating \termref{def:comm-safety}{communication safety} issues in hardware design.
We then discuss our language \codename{} and present the core ideas
behind the design of its type system catered for development of
\termref{def:comm-safety}{communication safe} \termref{def:rtl}{RTL} designs.

\input{sections/problem}
\input{sections/overview-anvil-abstractions}
\input{sections/overview-new-constructs}
\input{sections/overview-types}
\input{sections/overview-type-compatibility}
\input{sections/overview-incremental}
\input{sections/overview-optimality-dataoblivious}

%% file: sections/problem.tex
\subsection{The Problem}
\seclabel{problem}

\begin{figure}[t]
\centering
\input{assets/motivating-example}
\caption{Example of two worker nodes contending for two arbiters.
 Each worker connects to both arbiters. On each connection,
\protect\prog{req} and \protect\prog{done} flow from worker to arbiter; \protect\prog{gnt} flows
from arbiter to worker.}
\figlabel{motivatingex}
\end{figure}

Let us illustrate what we mean by \termref{def:comm-safety}{communication safety} through \figref{motivatingex}
which shows a simple \termref{def:rtl}{RTL}-style implementation of
a hardware design with two worker modules and two arbiter modules.
Informally, each arbiter manages some resource and each worker is supposed
to access a resource by liaising with the corresponding arbiter.
The design encodes a coordination protocol that is supposed to ensure that
at any given moment in time, an arbiter grants access to the resource it is managing
to at most one of the two workers.
All four components execute synchronously and share a single global clock.

The clocked pseudocode in \figref{motivatingex} represents the state-machine
structure used in \termref{def:hdl}{HDLs} such as SystemVerilog.
Each worker is implemented as a state machine.
The register \prog{state} records the current control state of the worker, while the register \prog{cyc} counts the number of elapsed clock cycles.
The figure uses clocked pseudocode: the block selected by the current
\prog{state} determines the outputs for that cycle, and assignments to
\prog{state} take effect at the next clock edge.
The counter \prog{cyc} increments once at every
clock edge. All \prog{req} and \prog{done} outputs default to zero in each
cycle unless assigned by the active block; the worker drives these signals,
and the arbiter drives \prog{gnt}.
The workers \termref{def:message-passing}{communicate} with the arbiters through wires.
Each worker-arbiter interface consists of three signals: \prog{req}, \prog{gnt}, and \prog{done}.
The signal \prog{req} is asserted by a worker to request access, \prog{gnt} is asserted by an arbiter to grant access, and \prog{done} is asserted by a worker to release access.

Now consider \prog{Worker 1}. It first asserts \prog{req} to
\prog{Arbiter 1} and then waits for \prog{Arbiter 1} to assert \prog{gnt}.
Once \prog{Worker 1} receives this grant, it makes a local choice based on the value of \prog{cyc}.
If the cycle count is not a multiple of four, the worker
finishes the transaction immediately: it asserts \prog{done} to \prog{Arbiter 1} and returns to its initial state.
If the cycle count is a multiple of four, the worker keeps the first transaction open and begins a second transaction with \prog{Arbiter 2}.
It asserts \prog{req} to \prog{Arbiter 2}, waits for \prog{Arbiter 2} to assert \prog{gnt}, and only then asserts \prog{done} to both arbiters.

\prog{Worker 2} implements a symmetric behaviour. It first initiates a transaction with \prog{Arbiter 2}. After receiving a grant from \prog{Arbiter 2}, it makes a local choice based on the value of \prog{cyc}. Depending on this choice, it either completes the transaction immediately or keeps the transaction open while initiating a second transaction with \prog{Arbiter 1}.
\prog{Worker 2} uses multiples of three rather than four for this choice; the workers
share the same protocol structure, but need not make the same local choice in
each cycle.

This behaviour implicitly encodes several assumptions about the behaviour of the arbiters.
Some assumptions describe the basic ordering of messages: a \prog{req} is followed by a \prog{gnt}, and the transaction is eventually closed by a \prog{done}.
Other assumptions describe the timing of these messages. For instance, a \prog{req} and its corresponding \prog{gnt} are separated by at least one clock cycle. In contrast, \prog{done} may be asserted either in the same cycle as the grant or in a later cycle, depending on whether the worker starts a second transaction.

The ordering and timing of these messages determine whether the components
can communicate safely. However, the component interfaces do not express
these requirements, and the HDLs provide little support for checking them.

In this example, even if each arbiter respects these local assumptions,
the overall system obtained by the composition of the 4 components may still fail to make progress.
Because the workers behave symmetrically, the system can eventually reach a state where \prog{Worker 1} is waiting for \prog{Arbiter 2}, while \prog{Worker 2} is waiting for \prog{Arbiter 1}.
Each worker still holds its first resource because it has not sent \prog{done}.
Each arbiter may therefore wait for that transaction to finish before granting
access to the other worker.
In this state, both workers are waiting for grants that can never be produced, and the system is deadlocked.

As this example illustrates, existing \termref{def:hdl}{HDLs}
make it difficult to reason about the communication behaviour of a system.
This is because such HDLs do not allow hardware designers to succinctly express the
relevant protocol directly in the component interface.
Instead, it has to be encoded indirectly in low-level state machines,
making \termref{def:comm-safety}{communication safety} difficult to specify and check.

%% file: assets/motivating-example.tex
\begin{minipage}[t]{0.48\linewidth}
\textbf{Worker 1}
\begin{lstlisting}[basicstyle=\ttfamily\scriptsize,columns=fullflexible,keepspaces=true]
reg cyc : logic[32]
reg state : logic[3] = SEND_REQ1
// At each clock edge: cyc := cyc + 1
// Each cycle: req = done = 0 by default
state: SEND_REQ1
    arb1.req = 1
    state = WAIT_GRANT1
state: WAIT_GRANT1
    if (arb1.gnt == 1) {
        state = (cyc%4==0) ?
                SEND_REQ2 : SEND_REQ1
        arb1.done = (cyc%4==0) ? 0 : 1
    } else arb1.req = 1
state: SEND_REQ2
    arb2.req = 1
    state = WAIT_GRANT2
state: WAIT_GRANT2
    if (arb2.gnt == 1) state = DONE
    else arb2.req = 1
state: DONE
    arb1.done = 1
    arb2.done = 1
    state = SEND_REQ1
\end{lstlisting}
\end{minipage}\hfill
\begin{minipage}[t]{0.48\linewidth}
\textbf{Worker 2}
\begin{lstlisting}[basicstyle=\ttfamily\scriptsize,columns=fullflexible,keepspaces=true]
reg cyc : logic[32]
reg state : logic[3] = SEND_REQ2
// At each clock edge: cyc := cyc + 1
// Each cycle: req = done = 0 by default
state: SEND_REQ2
    arb2.req = 1
    state = WAIT_GRANT2
state: WAIT_GRANT2
    if (arb2.gnt == 1) {
        state = (cyc%3==0) ?
                SEND_REQ1 : SEND_REQ2
        arb2.done = (cyc%3==0) ? 0 : 1
    } else arb2.req = 1
state: SEND_REQ1
    arb1.req = 1
    state = WAIT_GRANT1
state: WAIT_GRANT1
    if (arb1.gnt == 1) state = DONE
    else arb1.req = 1
state: DONE
    arb2.done = 1
    arb1.done = 1
    state = SEND_REQ2
\end{lstlisting}
\end{minipage}
\smallskip

%% file: sections/overview-anvil-abstractions.tex
\subsection{Background: \anvil{} Abstractions}
\seclabel{overview-anvil-abstractions}

\codename{} builds on top of \anvil~\cite{anvilhdl2026}, a recently introduced HDL
that has taken a step towards capturing intended communication behaviour explicitly.
\anvil{} adopts the \term{def:message-passing}{message-passing} paradigm: each hardware
component is modeled as a \term{def:process}{process}, and processes communicate by exchanging
\termref{def:message}{messages} over channels.
Its type system ensures the safety of values exchanged between components, which
\anvil{}'s authors call \emph{timing safety}. For example, when a process sends a
reference to its internal state, \anvil{} ensures that the state is not mutated
while the receiving process may still access it. It also ensures that the receiver
does not use the value beyond the lifetime declared in the process interface.

Communication in \anvil{} is limited to blocking message exchanges: both
\prog{send} and \prog{recv} wait until the corresponding message is exchanged.
This prevents synchronization errors, but does not rule out liveness errors.
For example, a process that sends \prog{m1} followed by \prog{m2} can deadlock
with a process that receives \prog{m2} followed by \prog{m1}.
Both processes wait for the other to perform its second communication, and
\anvil{}'s type system does not reject this composition.

\myparagraph{Extending \anvil{}}
\codename{} extends \anvil{} in two ways. First, it generalizes the communication
model to include instantaneous message exchanges and two offer constructs:
\prog{offer}, which allows multiple enabled communications to proceed, and
\prog{xoffer}, which selects one enabled alternative. We explain these constructs
in detail later. These extensions retain \anvil{}'s blocking communication while expressing
a broader class of communication behaviours.

Second, \codename{} introduces a type system for
\termref{def:comm-safety}{communication safety}. With the more expressive
communication model, untyped \codename{} programs can exhibit both synchronization
and liveness errors. Our type system rules out both: processes that conform to
compatible types are guaranteed to be \termref{def:comm-safety}{communication safe} when composed.
This guarantee is orthogonal to \anvil{}'s timing safety, which concerns the
validity of the values exchanged. The implementation of \codename{} combines
both type systems to provide both guarantees.

To illustrate the extended communication model,
\figref{overview-processes}(\subref{fig:overview-worker}) shows Worker~1
from the preceding example rewritten in \codename{}.

\input{assets/fig-overview-processes}

The channel \prog{waij} connects Worker~$i$ to Arbiter~$j$. Thus Worker~1
uses \prog{wa11} and \prog{wa12}, while Arbiter~1 uses \prog{wa11} and
\prog{wa21}. This process makes the communication structure explicit. The messages
\prog{wa11.req}, \prog{wa11.done}, and their \prog{wa12} counterparts name the
worker's interactions with the two arbiters. A \term{def:blocking-send}{\prog{send}} is blocking: the
process continues only when the matching communication is available. An
\term{def:immediate-send}{\prog{isend}} is immediate: it must be exchanged in the current clock cycle.
The operator \prog{>>} sequences behaviours, while \prog{;} composes
behaviours in parallel.

At the RTL level, each message exchange is lowered to a payload signal and,
when needed, handshake signals. Here the blocking \prog{send wa11.req(1)}
captures the request/grant handshake: the worker requests access and continues
when the arbiter accepts the request. The grant is therefore part of the
handshake, rather than a separate message in the process code. The immediate
\prog{done} sends describe the release signals. More generally, a send drives
the payload and indicates its availability; the matching receive reads the
payload when the handshake occurs. Offers inspect availability before
committing and use two-way handshakes, such as ready-valid or valid-ack.

\myparagraph{More Constructs Needed}
However, sends and receives alone still fall short of structurally expressing
behaviour important for reasoning about communication protocols. For example,
the worker process above describes one side of the protocol. The arbiter
describes the other side: it must observe which requests are available and then
grant one of them. This pattern is awkward to express using only blocking or
immediate sends and receives.

%% file: assets/fig-overview-processes.tex
\begin{figure}[t]
\centering
\lstset{aboveskip=2pt,belowskip=2pt,
  xleftmargin=2pt,xrightmargin=2pt,
  framexleftmargin=0pt,framexrightmargin=0pt}
\captionsetup{skip=4pt}
\captionsetup[subfigure]{skip=2pt}
\begin{subfigure}[t]{0.52\linewidth}
\begin{lstlisting}
reg cyc : logic[32] // cycle counter
loop {
    send wa11.req(1) >> // blocking send
    if (*cyc % 4 != 0) then {
        isend wa11.done(1) // immediate send
    }
    else {
        send wa12.req(1) >> // blocking send
        cycle 1 >>
        isend wa12.done(1); // immediate send
        isend wa11.done(1) // immediate send
    } >>
    cycle 1
}
loop {
    set cyc := *cyc + 1 // increment cycle counter
}
\end{lstlisting}
\caption{Worker~1}
\label{fig:overview-worker}
\end{subfigure}\hfill
\begin{subfigure}[t]{0.46\linewidth}
\begin{lstlisting}
reg prio : logic
loop {
    xoffer *prio { // current priority
        _ = recv wa11.req : {
            _ = irecv wa11.done >>
            set prio := 1 // switch priority
        }
        _ = recv wa21.req : {
            _ = irecv wa21.done >>
            set prio := 0 // switch priority
        }
    }
    else {
        ()
    } >>
    cycle 1
}
\end{lstlisting}
\caption{Arbiter~1}
\label{fig:overview-arbiter}
\end{subfigure}
\caption{Worker and arbiter processes in \codename{}.}
\figlabel{overview-processes}
\end{figure}

%% file: sections/overview-new-constructs.tex
\subsection{New Language Constructs in \codename{}}
\seclabel{overview-new-constructss}

\codename{} addresses this limitation of \anvil{} by extending it with
two \term{def:offer}{offer constructs}:
\prog{offer B else P} and \prog{xoffer B else P}. A \term{def:conjunctive-offer}{conjunctive offer}
\prog{offer} lets every enabled branch in $B$ progress. An \term{def:exclusive-offer}{exclusive offer}
\prog{xoffer} may observe several enabled branches, but commits to one of
them. Semantically, this choice is nondeterministic; in the current
implementation, it is resolved by a designer-programmed priority. Consider
Arbiter~1 in \figref{overview-processes}(\subref{fig:overview-arbiter}),
which receives requests on \prog{wa11} and \prog{wa21}.

Here \prog{*prio} reads the register that selects the branch priority.
In each branch, \prog{\_ = recv} discards the received payload, and \prog{:}
separates the communication head from its continuation.
This arbiter checks whether either worker is requesting access. If a request is
available, it grants one worker and then expects that worker's \prog{done}
message immediately. After completing the selected branch, or finding no
request, it waits one cycle before checking again. Arbiter~2 has the same
structure with channels \prog{wa12} and \prog{wa22}.
This arbiter requires \prog{done} in the grant cycle, while the worker may
delay it. We examine this mismatch in \secref{overview-incremental}.

%% file: sections/overview-types.tex
\subsection{\codename{} Types}
\seclabel{overview-types}

For \termref{def:comm-safety}{communication safety}, the values carried by \termref{def:message}{messages} are not the important
aspect. What matters is the protocol: which messages may be exchanged, how
control flows between those exchanges, and which timing constraints those
exchanges impose.

Types in \codename{} make this protocol explicit. The full type syntax appears
later in Section~\ref{sec:types}; in this overview, we use a desugared
transition-graph view. This view presents a type as a small transition system
that the \termref{def:compatibility}{compatibility} check can analyze for \termref{def:comm-safety}{communication safety}.

\begin{figure}[t]
  \centering
  \captionsetup[subfigure]{skip=2pt}
  \begin{subfigure}[b]{0.60\linewidth}
    \centering
    \resizebox{\linewidth}{!}{\input{assets/fig-overview-worker-type}}
    \caption{Worker~1}
    \label{fig:workertype}
  \end{subfigure}
  \hfill
  \begin{subfigure}[b]{0.37\linewidth}
    \centering
    \resizebox{\linewidth}{!}{\input{assets/fig-overview-arbiter-type}}
    \caption{Arbiter~1}
    \label{fig:arbitertype}
  \end{subfigure}
  \caption{Worker and arbiter types. $\#$ is a clock tick,
  $\$$ a progress marker, and $\theta$ an internal choice.
  The grouped releases occur in the same cycle.}
  \label{fig:overview-types}
\end{figure}

Consider the types of Worker~1 and Arbiter~1, shown in
Fig.~\ref{fig:overview-types}. The worker type first \termref{def:message-send}{sends}
\prog{wa11.req!}. Since this send is blocking, the type can also wait when the
matching \prog{wa11.req?} action is not available in the current cycle. After
that request, the worker chooses between two branches. One sends
\prog{wa11.done!} immediately. The other sends \prog{wa12.req!}, waits one
cycle ($\#$), and then sends both \prog{wa12.done!} and \prog{wa11.done!}.
The type abstracts away the
branch condition while preserving the communication behaviours that either
branch may perform.

The arbiter type gives the complementary view. It first receives whichever
request is available, such as \prog{wa11.req?} or \prog{wa21.req?}. If both
requests are available, it may choose one non-deterministically. After choosing
a request, it expects the corresponding \prog{done} message immediately.

\term{def:progress-marker}{Progress markers} $\$_{w_1}$ and $\$_{arb_1}$ record when the worker and
arbiter make progress. These markers are used to state \termref{def:liveness}{liveness}. In this
example, liveness requires the clock to tick infinitely often in every
infinite execution, and both markers to appear infinitely often in every fair
infinite execution.

%% file: assets/fig-overview-worker-type.tex
\begin{tikzpicture}[x=1pt,y=1pt,>=stealth,line width=.55pt,
  every node/.style={font=\fontsize{8}{9}\selectfont},
  state/.style={circle,draw=black,minimum size=8pt,inner sep=0pt},
  lab/.style={text=vtypecompat,inner sep=1pt,fill=white}]
\node[state,double,double distance=.8pt] (w0) at (0,0) {};
\node[state] (w1) at (52,0) {};
\node[state] (w2) at (104,0) {};
\node[state] (w3) at (156,0) {};
\node[state] (w4) at (156,-54) {};
\node[state] (w5) at (104,-54) {};
\node[state] (w6) at (52,-54) {};
\node[state] (w7) at (52,-29) {};
\draw[->] (w0) -- node[lab,above] {\texttt{wa11.req!}} (w1);
\draw[->] (w1) -- node[lab,above] {$\theta$} (w2);
\draw[->] (w2) -- node[lab,above] {\texttt{wa12.req!}} (w3);
\draw[->] (w3) -- node[lab,right] {$\#$} (w4);
\draw[->] (w4) -- node[lab,below,align=center]
  {\texttt{wa12.done!}\\\texttt{wa11.done!}} (w5);
\draw[->] (w5) -- node[lab,below] {$\#$} (w6);
\draw[->,rounded corners=3pt] (w6) -- (-31,-54) -- (-31,0) -- (w0);
\node[lab,left] at (-31,-27) {$\$_{w_1}$};
\draw[->] (w1) -- node[lab,right] {$\theta$} (w7);
\draw[->] (w7) -- (w5);
\node[lab] at (78,-34) {\texttt{wa11.done!}};
\draw[->] (w0) to[out=115,in=65,looseness=7] (w0);
\node[lab,above] at (0,15) {$\#\;\neg(\texttt{wa11.req?})$};
\draw[->] (w2) to[out=115,in=65,looseness=7] (w2);
\node[lab,above] at (104,15) {$\#\;\neg(\texttt{wa12.req?})$};
\end{tikzpicture}

%% file: assets/fig-overview-arbiter-type.tex
\begin{tikzpicture}[x=1pt,y=1pt,>=stealth,line width=.55pt,
  every node/.style={font=\fontsize{8}{9}\selectfont},
  state/.style={circle,draw=black,minimum size=8pt,inner sep=0pt},
  lab/.style={text=vtypecompat,inner sep=1pt,fill=white}]
\node[state] (a0) at (0,0) {};
\node[state] (a1) at (-37,-36) {};
\node[state] (a2) at (37,-36) {};
\node[state] (a3) at (0,-72) {};
\node[state] (a4) at (0,-104) {};
\draw[->] (0,14) -- (a0);
\draw[->] (a0) -- (a1);
\node[lab] at (-28,-14) {\texttt{wa11.req?}};
\draw[->] (a0) -- (a2);
\node[lab] at (28,-14) {\texttt{wa21.req?}};
\draw[->] (a1) -- (a3);
\node[lab] at (-28,-59) {\texttt{wa11.done?}};
\draw[->] (a2) -- (a3);
\node[lab] at (28,-59) {\texttt{wa21.done?}};
\draw[->] (a0) -- node[lab,left] {\texttt{else}} (a3);
\draw[->] (a3) -- node[lab,right] {$\#$} (a4);
\draw[->,rounded corners=3pt] (a4) -- (51,-104) -- (51,0) -- (a0);
\node[lab,right] at (51,-78) {$\$_{arb_1}$};
\end{tikzpicture}

%% file: sections/overview-type-compatibility.tex
\subsection{Type Compatibility}
\seclabel{overview-type-compatibility}

Recall that the two workers have symmetric behaviour and therefore symmetric
types, and the two arbiters have the same type. Once all four components have
types, we have a \termref{def:closed}{closed system} at the type level. \termref{def:comm-safety}{Communication safety} can then
be checked over the types rather than over the full \termref{def:rtl}{RTL} state space. This
reduction is justified by \term{def:type-soundness}{type soundness}: the types faithfully abstract process
behaviour. Thus, if the types are compatible, every set of processes that
satisfies those types is \termref{def:comm-safety}{communication safe} as well.

For a \termref{def:closed}{closed system}, this means composing the types of Worker~1, Worker~2,
Arbiter~1, and Arbiter~2 and checking the resulting automaton for
\termref{def:sync-safety}{synchronization safety} and \termref{def:liveness}{liveness}. Synchronization safety rules out stuck
states. Liveness is encoded as an \term{def:acceptance-condition}{acceptance condition} on infinite runs: all
infinite runs must contain infinitely many clock ticks, and all four progress
markers must appear infinitely often on \termref{def:fairness}{fair} runs.

\nparagraph{Fairness}
\term{def:fairness}{Fairness} assumptions are often necessary to establish that a system is \termref{def:liveness}{live}. Without such assumptions, the checker may reject a system because of an execution that indefinitely ignores an enabled branch.
To express these assumptions, \codename{} provides fair-tagged constructs and
corresponding type constructs. For example, a fair-tagged if-else requires both
branches to recur infinitely often if either does. The acceptance condition
checks progress on executions that satisfy these annotations.

The designer specifies these fairness assumptions explicitly. A designer can
initially write a system without fairness annotations and run the compatibility
check. If the check fails, the checker produces a counterexample trace. The
designer can then inspect the trace to determine whether the failure arises
from an execution that should be excluded by a fairness assumption. If so, the
designer can annotate the relevant constructs as fair and rerun the check.

%% file: sections/overview-incremental.tex
\subsection{Incremental Checking}
\seclabel{overview-incremental}

\termref{def:closed}{Closed-system} checking requires types for every component of the system.
Hardware designers, however, often build systems incrementally. They may want
to check whether two components compose before the rest of the system is
complete. This motivates the \termref{def:open}{open setting}: given some components and a shared
interface, we ask whether some \term{def:boundary-environment}{boundary environment} makes their shared
interaction safe.

The key idea is to define both process and type semantics
compositionally. A type takes a step under a \termref{def:one-step-environment}{one-step environment}. That
environment records which complementary actions are available now: receives for
sent messages, and sends for received messages. This local view lets us compose
part of the system while treating the remaining actions as boundary actions.

For example, consider the open composition of Worker~1 and Arbiter~1. The
shared interface contains the messages between them, such as
\prog{wa11.req} and \prog{wa11.done}. The boundary contains the remaining
worker-arbiter messages, such as \prog{wa12.req}, \prog{wa12.done},
\prog{wa21.req}, and \prog{wa21.done}.

This \termref{def:open}{open composition} is not compatible. To see why, suppose the boundary
environment does not provide \prog{wa21.req}. Then the arbiter can grant
Worker~1's request. After that grant, Worker~1 may either send \prog{done}
immediately or delay it. In the delayed branch, the arbiter expects
\prog{done} in the current cycle, so the composition cannot complete that
cycle. Providing
\prog{wa21.req} at the boundary does not remove the problem. The arbiter can
still choose to grant only Worker~1's request, leading to the same delayed
\prog{done} execution. Thus no boundary strategy can ensure both clock
progress and the absence of stuck executions for this composition.

This is the quantifier pattern behind \termref{def:compatibility}{compatibility}. We ask whether there exists an
environment that makes the system safe against all resolutions of the system's
nondeterminism. This alternation is naturally represented as a two-player game
on an automaton~\cite{automatagames}. In our game,
the environment is Player~0: it provides boundary actions that may help the
types win. The system's nondeterminism is Player~1: it resolves the internal
nondeterminism in ways that may violate
\termref{def:comm-safety}{communication safety}. \termref{def:compatibility}{Compatibility} therefore
reduces to determining whether Player~0 has a winning strategy against all
Player~1 strategies. We give the detailed compatibility algorithm in
Section~\ref{sec:typecompat}.

%% file: sections/overview-optimality-dataoblivious.tex
\subsection{Optimal Type Inference}
\seclabel{overview-optimality}

Finally, designers should not have to write every type by hand.
The compiler infers a type $\mathsf{abs}(P)$ from each process $P$
by erasing payload values and local data computations while preserving communication, timing, control
structure, and fair-choice markers.
\termref{def:compatibility}{Compatibility} is then checked on these inferred types.
No user-written behavioural type is required for this check.

The inferred types are optimal for the \termref{def:data-oblivious-type-checking}{data-oblivious checker}:
if some checker-accepted type assignment can establish
\termref{def:compatibility}{compatibility} for the given processes, then the
inferred types can do so as well. Thus choosing different checker-accepted
types cannot repair a failed compatibility check.
The checker does not use invariants about data values to rule out branches.
Such reasoning may prove compatibility beyond this guarantee, as illustrated
in \secref{type-optimality}.

\subsection{Design Workflow}
\seclabel{overview-design-workflow}

In summary, \codename{} supports multiple design workflows, depending on
whether types for the complete system or only part of it are available.
In either case, designers can provide component types as specifications of
the intended protocols or infer them from existing implementations. When an
implementation becomes available for a supplied type, it must be checked
against that type.

\myparagraph{Closed Setting}
When designing a system as a whole, designers can first specify the types of
all components and check whether they are compatible. If they are, designers
can proceed to implement the components against their respective types.
Alternatively, if the implementations already exist, designers can infer
their types and check them for compatibility.

\myparagraph{Open Setting}
When a system is developed incrementally, designers can first check whether
the available component types are compatible while the rest of the system
remains unspecified. If they are, the compatible composition can be
represented by a \termref{def:summary-type}{summary type}
(\secref{summary-types}), which captures its boundary behaviour. This summary
can then be used to check compatibility with additional components as their
types become available, without reconstructing the existing composition
from its individual components.

%% file: sections/pact.tex
\section{Syntax, Semantics and Safety of \codename{} Designs}
\seclabel{pact}

\codename{} models hardware components as \termref{def:process}{processes} that
share a global clock and communicate by passing \termref{def:message}{messages}.
This operational model follows the base language \anvil~\cite{anvilhdl2026}, on which
we build.
A \term{def:message}{message} is a labelled value exchanged between processes.
Each pair of processes exchanges a set of messages.
These messages can be derived from the definitions of the processes.
At the RTL level, messages are refined into bundles of wires carrying data and,
when necessary, handshake signals for synchronisation.
As a result, the core language exposes the timing and \termref{def:message-passing}{communication} structure of a design while abstracting away datapath details.
In the rest of this section, we present the abstract syntax, semantics, and
safety properties of \codename{}.

\input{sections/pact-syntax}
\input{sections/pact-semantics}
\input{sections/pact-safety}

%% file: sections/pact-syntax.tex
\subsection{Syntax of \codename{}}
\seclabel{pact-syntax}

\begin{figure}[t]
\centering
\setlength{\abovedisplayskip}{0pt}
\setlength{\belowdisplayskip}{0pt}
\[
\begin{aligned}
e ::= {} & x \mid v \mid e\ \texttt{bop}\ e
\qquad
v ::= \pkw{true} \mid \pkw{false}
\qquad B ::= \emptyset \mid B, a : P \\
a ::= {} & \pkw{send}\ m\ e \mid x = \pkw{recv}\ m
\\[0.4em]
P ::= {} &
\pkw{skip} \mid a\ \pkw{in}\ P \mid X \mid \pkw{cycle} \mid \pkw{end}_t
\mid P \pparop_I P \mid P \pseq  P \mid \pkw{let}\ x = e\ \pkw{in}\ P \\
& \mid \pkw{if}\ e\ \pkw{then}\ P\ \pkw{else}\ P
\mid \pkw{fair-if}_\eta\ e\ \pkw{then}\ P\ \pkw{else}\ P \\
& \mid \pkw{offer}\ B\ \pkw{else}\ P
\mid \pkw{xoffer}\ B\ \pkw{else}\ P
\mid \pmu X . P
\end{aligned}
\]
	\caption{Abstract syntax of processes in \codename{}. Here $m$ ranges over
	message labels, $I$ over sets of message labels, $t$ over progress
	identifiers, $\eta$ over fair-choice identifiers, and \texttt{bop} over
	binary operators.}
\label{fig:abstract-syntax}
\end{figure}

Fig.~\ref{fig:abstract-syntax} presents the abstract syntax of \codename{} \termref{def:process}{processes}, distinguishing expressions $e$, branch sets $B$, branch actions $a$, and processes $P$.
We omit datapath constructs, such as register assignments, because the safety property depends only on timing and communication behaviour.
The full syntax also includes a message alphabet specifying the messages a process can exchange with other components.
For simplicity, we omit this alphabet from the main text and defer its formal treatment to Appendix~\ref{app:signed-process-syntax}.
The core syntax also uses \pkw{send} and \pkw{recv} for the instantaneous
\prog{isend} and \prog{irecv} of the overview. Blocking communication is
encoded using offers, as shown below. Similarly, sequencing $P\pseq P$
corresponds to \prog{>>} in the overview, and parallel composition
$P\pparop_I P$ corresponds to \prog{;}.

\myparagraph{Core structure}{
The process $\pkw{cycle}$ advances the clock,
while $\pkw{skip}$ terminates the
current sub-process. Sequencing, let-binding, and if-else are standard.
The parallel composition operator $\pparop_I$ composes two processes
synchronizing on the messages in the set $I$. This operator models both
concurrency inside a hardware component, where $I=\emptyset$, and parallel
composition of components in a system.
Recursion $\pmu X.P$ describes the repeating behaviour of hardware components.
Well-formed recursion requires every occurrence of $X$ to be bound. Because
hardware behaviour is clocked, each recursive cycle must also be guarded by
$\pkw{cycle}$. Finally, recursion is tail terminating: along every path, the
recursive call must be the last action in that path.

}

\myparagraph{Communication}{
A \term{def:message-prefix}{message prefix} $a\ \pkw{in}\ P$ performs a \term{def:message-send}{send action} or \term{def:message-receive}{receive action} and then
continues as $P$.
In the core language, these actions are \term{def:instantaneous-action}{instantaneous}: a process can send or receive a
message only if another process is ready to perform the complementary action in
the same clock cycle. They may wait for a complementary action to become
available within that cycle, but cannot defer the exchange across a clock tick.
They correspond to \prog{isend} and \prog{irecv} in the implementation language
used in the overview. This makes it unnatural to model hardware components that
may communicate with several processes or block on a message. To address this,
\codename{} introduces the \termref{def:offer}{offer constructs} $\pkw{offer}$ and $\pkw{xoffer}$.
The offer forms inspect a branch set $B$, where each branch head action $a$ is
a send or receive.
$\pkw{xoffer}$ (\termref{def:exclusive-offer}{exclusive offer}) commits to one enabled branch, and
$\pkw{offer}$ (\termref{def:conjunctive-offer}{conjunctive offer}) admits all enabled branches. Both allow the
else branch to be taken when no branch is enabled.
Offers also let us encode \termref{def:blocking-send}{blocking} send/receive behaviour. For example,
$
\pmu X.\ \pkw{xoffer}\ \{a:\pkw{skip}\}\ \pkw{else}\ (\pkw{cycle}\pseq X)$
waits across cycles until action $a$ is enabled, then performs it and terminates.
This is the blocking \prog{send}/\prog{recv} behaviour used in the overview.
Their precise semantics are
discussed in the next subsection.

}

\myparagraph{Safety markers}{%
\phantomsection\label{sec:safety-specifications}
The construct $\pkw{fair-if}_\eta$ marks fair choices with
identifier $\eta$.
The \termref{def:progress-marker}{marker} $\pkw{end}_{t}$ records progress with identifier $t$. In a
looping hardware process, placing $\pkw{end}_{t}$ at the end of the loop body
records completion of one logical iteration.
Here we include fair-tagged if-else in the core syntax. A fair-tagged exclusive
offer has an analogous specification, but we omit it from the core language.
These markers let designers make their intent explicit. For instance,
fair-tagging an if-else restricts \termref{def:liveness}{liveness} checking to executions that respect
that \termref{def:fairness}{fairness} assumption.
}

%% file: sections/pact-semantics.tex
\subsection{Semantics of \codename{}}
\seclabel{pact-semantics}

\input{assets/fig-semantics}

\begingroup
\raggedbottom
\setlength{\abovedisplayskip}{7pt}
\setlength{\belowdisplayskip}{7pt}
\setlength{\abovedisplayshortskip}{4pt}
\setlength{\belowdisplayshortskip}{6pt}

The process semantics of \codename{} is defined in a local, small-step style.
A judgment
\[
\lstep{E}{P}{\alpha}{P'}
\]
says that, under the \term{def:one-step-environment}{one-step environment} $E$, the process $P$ takes one
atomic step, produces the observation $\alpha$, and continues as $P'$.
The process $P'$ is the residual left after this step.
This local view lets us reason compositionally. It is especially useful for
\termref{def:open}{open systems}, where some components are left unspecified but
we still want to prove \termref{def:comm-safety}{communication safety}. \termref{def:closed}{Closed}
systems are the special case in which the only outside behaviour is the global
clock shared by all processes.

\myparagraph{Actions}
The label $\alpha$ records the set of actions performed by a process in a single local step. This set may contain multiple actions for parallel compositions and conjunctive offers. These actions include \termref{def:message-send}{message sends} ($m\psym{!}v$), \termref{def:message-receive}{message receives} ($m\psym{?}v$), clock ticks ($\#$), fair-branch outcomes, and \termref{def:progress-marker}{progress markers}. A resolved internal exchange of message $m$ is recorded as $\tick{m}$.

We use $\IntMsgs(P)$ to denote the names of internal messages and $\BoundaryMsgs(P)$ to denote the polarized actions visible outside $P$.
Internal messages arise from parallel composition $P_1\pparop_I P_2$, where $I$ denotes the set of internal messages.
The actions of $P$ come from the following set:
\[
\begin{aligned}
\Act_{P} &=
  \underbrace{\{\tick{m}\mid m\in\IntMsgs(P)\}}_{\text{resolved messages}}
  \cup
  \underbrace{\{m\psym{?}v\mid m\psym{?}\in\BoundaryMsgs(P)\}
  \cup\{m\psym{!}v\mid m\psym{!}\in\BoundaryMsgs(P)\}}_{\text{boundary messages}}\\
&\quad \cup
  \underbrace{\{\eta_{\mathsf{true}},\eta_{\mathsf{false}}
  \mid\eta\in\EtaSet{P}\}}_{\text{fair outcomes}}
  \cup
  \underbrace{\{\$_t\mid t\in\PidSet{P}\}}_{\text{progress identifiers}} .
\end{aligned}
\]

Here $\EtaSet{P}$ is the set of fair-choice identifiers $\eta$ occurring in
$P$, and $\PidSet{P}$ is the set of its progress identifiers $t$.
The set of possible labels is $\ActSet{P}=2^{\Act_{P}}\cup\{\{\#\}\}$,
where $\{\#\}$ records a clock tick.
The empty label $\epsilon=\emptyset$ records a silent step. We will omit braces
around singleton labels.

\myparagraph{Environment}
The \termref{def:one-step-environment}{one-step environment} $E$ describes what the processes around $P$ can offer
during the current clock cycle. We will use $E=(\Imm(E),\Fut(E))$ for this environment. The \term{def:immediate-environment}{immediate
part} $\Imm(E)$ contains complementary actions that are available now: receives
for sends, sends for receives, or the clock tick $\#$. The future part
$\Fut(E)$ contains complementary \termref{def:message}{message} actions that are unavailable now but
may become available later in the same clock cycle.
For example, in the combinational assignments $b=c$ and $a=b$, the value of
$a$ depends on $c$ through $b$, although both assignments settle within one
cycle. Message availability can have the same causal structure: one exchange
can enable another before the next clock tick.
The distinction between $\Imm$ and $\Fut$ preserves this structure, allowing the semantics to capture the causal ordering of message actions across modules while reasoning about each module independently.

Fig.~\ref{fig:process-semantics-rules} presents selected rules defining this judgment.
The atomic communication rules use $\Imm(E)$ to determine whether a message action can occur: a send on $m$ can fire when $m\psym{?}$ is immediately available, while a receive on $m$ can fire when some $m\psym{!}v$ is immediately available.
Otherwise, the process may take an $\epsilon$-step without changing its syntactic state.

These waiting steps capture the semantics of synchronous hardware designs, where all combinational changes within a clock cycle are observed together, even though they may be causally ordered.
For example, in the clocked pseudocode from \figref{motivatingex}, \prog{arb1.done} is assigned in the same cycle in which \prog{arb1.gnt} is observed.
However, this assignment depends on the grant being enabled, establishing a causal dependency from \prog{arb1.gnt} to \prog{arb1.done}:
\begin{lstlisting}
if (arb1.gnt == 1) arb1.done = (cyc%4==0)?0:1 // gnt -> done
\end{lstlisting}

Thus, an action should not be ruled out merely because its complement is not immediately available. For instance, $\pkw{send}\ m\ e\ \pkw{in}\ P$ may wait for $m\psym{?}$ to become available within the current cycle. However, once the environment provides $\#$, the cycle boundary has been reached, and the process must take the clock step rather than continue waiting.

\myparagraph{Parallel Composition}
The rule for $P_1\pparop_I P_2$ derives each component's local environment from the outer environment and its sibling's offers.
Each step exposes a pair $(O_C,O_F)$, where $O_C$ contains the current offers and $O_F$ contains the bare actions that may be offered later in the cycle.
These offers describe availability, while the step label records the communications that actually occur.
Let $\Off(E,P,\alpha,P')$ denote the offer pairs associated with a step.

For each offer pair $(O_{C2},O_{F2})$ of $P_2$, $\LocalEnv(E,I,P_1,O_{C2},O_{F2})$ constructs the local environment of $P_1$.
The rule then checks the local steps and offers of both components simultaneously.
Every solution satisfying these mutual constraints contributes a step and its offers to the composition.
The premise $\Sync_I(\alpha_1,\alpha_2)$ ensures that the local observations agree on the internal interface $I$: complementary internal messages must synchronize, and both components must take the clock tick together when it is provided by the environment.

If both components finish, the composition finishes in the same step.
If one has already finished, the other continues using only the messages visible outside $I$ and supplied by the environment.
The full parallel composition rules appear in \appref{app-pact-semantics}, while the definitions of $\LocalEnv$ and $\Sync_I$ appear in \appref{aux-appendix}.

\myparagraph{Example}Consider the following three process states:
\[
\begin{aligned}
P_0 ={}&
(x=\pkw{recv}\ m_1\ \pkw{in}\
  \pkw{send}\ m_2\ v_2\ \pkw{in}\ \pkw{cycle})
\pparop_{\{m_1\}}
(\pkw{send}\ m_1\ v_1\ \pkw{in}\ \pkw{cycle}) ,\\
P_1 ={}&
(\pkw{send}\ m_2\ v_2\ \pkw{in}\ \pkw{cycle})
\pparop_{\{m_1\}}\pkw{cycle} ,\\
P_2 ={}& \pkw{cycle}\pparop_{\{m_1\}}\pkw{cycle} .
\end{aligned}
\]
In $P_0$, the message $m_1$ is internal, while $m_2$ is a boundary message.
Suppose the external environment first promises $m_2\psym{?}$ for later in the
cycle, then makes it immediate. Let
$E_0=(\emptyset,\{m_2\psym{?}\})$ and $E_1=(\{m_2\psym{?}\},\emptyset)$.
At the first step, the left component receives $m_1$ and offers $m_2\psym{!}$ for
later; the right component sends $m_1$ and has no future offer.
These offers give the local environments
\[
 E_L=(\{m_1\psym{!}v_1\},\{m_2\psym{?}\}),\qquad
 E_R=(\{m_1\psym{?}\},\{m_2\psym{?}\}).
\]
The components synchronize on $m_1$. The left component can then send $m_2$
in the same cycle, while the right component waits at \pkw{cycle}:
\[
\lstep{E_0}{P_0}{\tick{m_1}}{P_1}
\quad\text{then}\quad
\lstep{E_1}{P_1}{m_2\psym{!}v_2}{P_2}.
\]
Finally, once the environment provides the clock tick, both components take
the clock step together. In a \termref{def:closed}{closed system}, this clock tick is the only
boundary action the environment can provide.

\myparagraph{Exclusive Offer}
The future component $\Fut(E)$ is most important for offers. An \termref{def:exclusive-offer}{exclusive
offer} $\pkw{xoffer}\ B\ \pkw{else}\ P_0$ commits to one enabled branch in
$B$. If a branch head is immediately enabled, the process may take that branch.
If no branch head is immediately enabled, it may take the else branch. Both
choices are allowed only when the offer is resolved.

The premise $\Resolved(E,B)$ makes this last condition precise. Each branch in
$B$ has a need, written $\Need(B)$, namely the complementary action required to
fire that branch. The offer is resolved only when no branch need can still
become immediate later in the same cycle. Equivalently, the bare branch needs
must be disjoint from the future actions of the environment; the formal
definition is given in \appref{aux-appendix}.

The need for this check comes from hardware semantics. Consider
\[
\begin{aligned}
Q &= \pkw{xoffer}\
      (x=\pkw{recv}\ m_1 : \pkw{skip},\
       \pkw{send}\ m_2\ v_2 : \pkw{skip})\
      \pkw{else}\ \pkw{skip},\\
R &= \pkw{send}\ m_1\ v_1\ \pkw{in}\
      \pkw{xoffer}\
      (y=\pkw{recv}\ m_2 : \pkw{skip})\
      \pkw{else}\ \pkw{skip}.
\end{aligned}
\]
For $Q$, the action $m_1\psym{!}v_1$ is immediate, but $m_2\psym{?}$ is future: $R$
may expose it after the $m_1$ exchange in the same clock cycle. If $Q$
committed using only immediate availability, it could choose the receive branch
on $m_1$ too early. The later receive opportunity on $m_2$ would then be hidden
by that choice, even though the \termref{def:rtl}{RTL} may observe both opportunities within the
same cycle. Such a mismatch is exactly the kind of
\termref{def:sync-safety}{synchronization hazard} that can become undefined
combinational behaviour. Since $m_2\psym{?}$ is future, $\Resolved(E,B)$ fails, so
$Q$ must wait rather than commit prematurely.

\myparagraph{Conjunctive Offer}
The \termref{def:conjunctive-offer}{conjunctive form} $\pkw{offer}\ B\ \pkw{else}\ P_0$ models the dual case:
when several branches are enabled in one cycle, the process can accept all of
them. Operationally, the process immediately communicates every enabled head in one step.
Its residual contains the continuations of those heads, with received payloads
substituted, and singleton exclusive offers for the heads that were not enabled.
This remaining xoffer allows the process to wait for the other branches to become enabled
within the same clock-cycle.

If no branch is immediately enabled,
the else branch is available only after the same resolution check succeeds.
Thus $\pkw{xoffer}$ captures \termref{def:exclusive-offer}{exclusive external choice}, while $\pkw{offer}$
captures a conjunction of all communications that the environment enables in
the cycle.
\par\endgroup

%% file: assets/fig-semantics.tex
\begin{figure}[t]
\centering
\setlength{\topsep}{0pt}
\setlength{\partopsep}{0pt}
\newcommand{\semfigscale}{0.83}
\begin{semrulebox}[fontupper=\scriptsize]
\setlength{\parskip}{2pt}
\textbf{Atomic communication and time}

\begin{center}
\scalebox{\semfigscale}{$
\begin{array}{@{}ccc@{}}
\dfrac{m\psym{?} \in \Imm(E) \quad e\Downarrow v}
      {\lstep{E}{\pkw{send}\ m\ e\ \pkw{in}\ P}{m\psym{!}v}{P}}
\rulelabel{L-Send}
&
\dfrac{\exists v. \ m\psym{!}v \in \Imm(E)}
      {\lstep{E}{x=\pkw{recv}\ m\ \pkw{in}\ P}{m\psym{?}v}{P[x\mapsto v]}}
\rulelabel{L-Recv}
&
\dfrac{\Imm(E)=\{\#\}}
      {\lstep{E}{\pkw{cycle}}{\#}{\pkw{skip}}}
\rulelabel{L-Cycle}
\\[1.3em]
\dfrac{m\psym{?}\notin\Imm(E) \quad \Imm(E)\neq\{\#\}}
      {\lstep{E}{\pkw{send}\ m\ e\ \pkw{in}\ P}{\epsilon}{\pkw{send}\ m\ e\ \pkw{in}\ P}}
\rulelabel{L-SendWait}
&
\dfrac{\forall v.\ m\psym{!}v\notin\Imm(E) \quad \Imm(E)\neq\{\#\}}
      {\lstep{E}{x=\pkw{recv}\ m\ \pkw{in}\ P}{\epsilon}{x=\pkw{recv}\ m\ \pkw{in}\ P}}
\rulelabel{L-RecvWait}
&
\dfrac{\Imm(E)\neq\{\#\}}
      {\lstep{E}{\pkw{cycle}}{\epsilon}{\pkw{cycle}}}
\rulelabel{L-CycleWait}
\end{array}$}
\end{center}

\noindent\makebox[\linewidth]{\color{black!25}\rule{0.96\linewidth}{0.2pt}}

\textbf{Local computation}

\begin{center}
\scalebox{\semfigscale}{$
\begin{array}{@{}ccc@{}}
\dfrac{e\Downarrow\pkw{true} \quad \Imm(E)\neq\{\#\}}
      {\lstep{E}{\pkw{if}\ e\ \pkw{then}\ P_1\ \pkw{else}\ P_2}{\epsilon}{P_1}}
\rulelabel{L-IfTrue}
&
\dfrac{e\Downarrow\pkw{false} \quad \Imm(E)\neq\{\#\}}
      {\lstep{E}{\pkw{if}\ e\ \pkw{then}\ P_1\ \pkw{else}\ P_2}{\epsilon}{P_2}}
\rulelabel{L-IfFalse}
&
\dfrac{e\Downarrow v \quad \Imm(E)\neq\{\#\}}
      {\lstep{E}{\pkw{let}\ x=e\ \pkw{in}\ P}{\epsilon}{P[x\mapsto v]}}
\rulelabel{L-Let}
\end{array}$}
\end{center}

\noindent\makebox[\linewidth]{\color{black!25}\rule{0.96\linewidth}{0.2pt}}

\textbf{Fair branching and structural rules}

\begin{center}
\scalebox{\semfigscale}{$
\begin{array}{@{}ccc@{}}
\dfrac{e\Downarrow\pkw{true} \quad \Imm(E)\neq\{\#\}}
      {\lstep{E}{\pkw{fair-if}_\eta\ e\ \pkw{then}\ P_t\ \pkw{else}\ P_f}
        {\eta_{\mathsf{true}}}{P_t}}
\rulelabel{L-FairTrue}
&
\dfrac{e\Downarrow\pkw{false} \quad \Imm(E)\neq\{\#\}}
      {\lstep{E}{\pkw{fair-if}_\eta\ e\ \pkw{then}\ P_t\ \pkw{else}\ P_f}
        {\eta_{\mathsf{false}}}{P_f}}
\rulelabel{L-FairFalse}
&
\dfrac{\Imm(E)\neq\{\#\}}
      {\lstep{E}{\pmu X.P}{\epsilon}{P[X\mapsto\pmu X.P]}}
\rulelabel{L-Rec}
\end{array}$}
\end{center}

\begin{center}
\scalebox{\semfigscale}{$
\begin{array}{@{}cccc@{}}
\dfrac{\lstep{E}{P_1}{\alpha}{P_1'} \quad P_1'\neq\pkw{skip}}
      {\lstep{E}{P_1\pseq P_2}{\alpha}{P_1'\pseq P_2}}
\rulelabel{L-Seq}
&
\dfrac{\lstep{E}{P_1}{\alpha}{\pkw{skip}}}
      {\lstep{E}{P_1\pseq P_2}{\alpha}{P_2}}
\rulelabel{L-SeqDone}
&
\dfrac{\Imm(E)\neq\{\#\}}
      {\lstep{E}{\pkw{skip}}{\epsilon}{\pkw{skip}}}
\rulelabel{L-Skip}
&
\dfrac{\Imm(E)\neq\{\#\}}
      {\lstep{E}{\pkw{end}_{t}}{\$_t}{\pkw{skip}}}
\rulelabel{L-End}
\end{array}$}
\end{center}

\noindent\makebox[\linewidth]{\color{black!25}\rule{0.96\linewidth}{0.2pt}}

\textbf{Exclusive offer}

\begin{center}
\begin{adjustbox}{max width=\linewidth}
\scalebox{\semfigscale}{$
\begin{array}{@{}cc@{}}
\dfrac{\pkw{send}\ m\ e:P_i\in B \quad m\psym{?}\in\Imm(E) \quad e\Downarrow v \quad \Resolved(E,B)}
      {\lstep{E}{\pkw{xoffer}\ B\ \pkw{else}\ P_0}{m\psym{!}v}{P_i}}
\rulelabel{L-XOfferSend}
&
\dfrac{x=\pkw{recv}\ m\ :P_i\in B \quad m\psym{!}v\in\Imm(E) \quad \Resolved(E,B)}
      {\lstep{E}{\pkw{xoffer}\ B\ \pkw{else}\ P_0}{m\psym{?}v}{P_i[x\mapsto v]}}
\rulelabel{L-XOfferRecv}
\\[1.3em]
\dfrac{\Need(B)\cap\Bare(\Imm(E))=\emptyset \quad \Resolved(E,B) \quad \Imm(E)\neq\{\#\}}
      {\lstep{E}{\pkw{xoffer}\ B\ \pkw{else}\ P_0}{\epsilon}{P_0}}
\rulelabel{L-XOfferElse}
&
\dfrac{\neg\,\Resolved(E,B) \quad \Imm(E)\neq\{\#\}}
      {\lstep{E}{\pkw{xoffer}\ B\ \pkw{else}\ P_0}{\epsilon}{\pkw{xoffer}\ B\ \pkw{else}\ P_0}}
\rulelabel{L-XOfferWait}
\end{array}$}
\end{adjustbox}
\end{center}

\noindent\makebox[\linewidth]{\color{black!25}\rule{0.96\linewidth}{0.2pt}}

\textbf{Parallel composition}

\begin{center}

\begin{adjustbox}{max width=.97\linewidth}
$\dfrac{\begin{gathered}
\lstep{E_1}{P_1}{\alpha_1}{P_1'}\quad \lstep{E_2}{P_2}{\alpha_2}{P_2'}\\
(O_{C1},O_{F1})\in\Off(E_1,P_1,\alpha_1,P_1')\quad (O_{C2},O_{F2})\in\Off(E_2,P_2,\alpha_2,P_2')\\
E_1\in\LocalEnv(E,I,P_1,O_{C2},O_{F2})\quad E_2\in\LocalEnv(E,I,P_2,O_{C1},O_{F1})\quad \Sync_I(\alpha_1,\alpha_2)\\
P_1\neq \pkw{skip}\quad P_2\neq \pkw{skip}\quad (P_1'\neq \pkw{skip}\ \text{or}\ P_2'\neq \pkw{skip})\end{gathered}}{\lstep{E}{P_1\pparop_I P_2}{\alpha_1\unionI\alpha_2}{P_1'\pparop_I P_2'}}\rulelabel{L-Par}$
\end{adjustbox}

\end{center}
\end{semrulebox}
\caption{Selected operational rules for \codename{} processes. The complete rules
and auxiliary functions appear in \appref{app-pact-semantics} and
\appref{aux-appendix}, respectively.}

\label{fig:process-semantics-rules}
\end{figure}

%% file: sections/pact-safety.tex
\subsection{Defining Communication Safety}
\seclabel{pact-safety}

A composite system is built by nesting the parallel operator over component
\termref{def:process}{processes},
\[
Q_0 = P_1\pparop_{I_1}\cdots\pparop_{I_{n-1}}P_n,
\]
for some fixed nesting. Each component is a non-terminating hardware process,
so each $P_i$ has the form $\pmu X.P$. Once $Q_0$ is fixed, its boundary messages
$\BoundaryMsgs(Q_0)$, internal messages $\IntMsgs(Q_0)$, action labels
$\ActSet{Q_0}$, and one-step environments are exactly those of
\secref{pact-semantics}, read for $P:=Q_0$.

\subsubsection{Synchronization Safety}

Intuitively, a system is \termref{def:sync-safety}{synchronization safe} when
any component that intends to perform a message action can complete that action
with its peer in the same clock cycle.

For example, consider
\[
\begin{aligned}
 P_1=\pmu X.\,(x=\pkw{recv}\ m_1\ \pkw{in}\
                    (\pkw{cycle}\pseq \pkw{end}_1\pseq X)) \qquad \text{and} \qquad
 P_2=\pmu Y.\,(\pkw{cycle}\pseq \pkw{end}_2\pseq Y)
\end{aligned}
\]
composed on the shared interface $I=\{m_1\}$. The system $Q=P_1\pparop_I P_2$
cannot complete the current cycle: $P_1$ waits to receive $m_1$, but $P_2$
never sends it. When the environment offers the clock tick, $Q$ has no
transition, because $P_1$ cannot advance past the unmatched receive.
No external environment can supply this internal message, so $Q$ is
\term{def:stuck}{stuck} under that clock environment.
The same notion extends to \termref{def:open}{open systems} by fixing the \termref{def:one-step-environment}{one-step environment}
$E$.

\begin{definition}[Stuck]
A system $Q$ is stuck under a one-step environment $E$ if no transition is
possible from $Q$ under $E$:
\[
\Stuck_E(Q)\triangleq
\forall\alpha\in\ActSet{Q}.\ \nexists Q'.\ \lstep{E}{Q}{\alpha}{Q'}.
\]
\end{definition}

To discuss \termref{def:stuck}{stuck} states, we must fix the multi-step environment in which a
system runs.  A finite execution prefix from $Q_0$ has the form
\[
  h=
  Q_0\langle E_0,\alpha_0\rangle Q_1\cdots
  \langle E_{k-1},\alpha_{k-1}\rangle Q_k,
\]
where each $E_i$ is a well-formed one-step environment for $Q_0$ and
$\lstep{E_i}{Q_i}{\alpha_i}{Q_{i+1}}$ for every $i<k$. Let
$\mathsf{last}(h)=Q_k$ and $h_0=Q_0$. A \term{def:strategy}{history-sensitive strategy} is a
function
\[
  \Strategy_{Q_0}:\mathsf{Pref}(Q_0)\longrightarrow\mathsf{Env}(Q_0)
\]
from such prefixes to one-step environments. Its argument contains the current
residual and every preceding environment and action. In an open system, the
strategy abstracts the missing boundary
components. In a closed system, no message actions remain at the boundary, so
the strategy can offer only the global clock tick or nothing.

Under a strategy $\Strategy_{Q_0}$, prefixes extend as follows:
\[
\begin{aligned}
\stratstep{\Strategy_{Q_0}}{}{h}{\alpha}{h'}
\triangleq{}&
\exists Q'.\
\lstep{\Strategy_{Q_0}(h)}{\mathsf{last}(h)}{\alpha}{Q'}
\\[-1mm]
&\land\
h'=h\cdot
  \langle\Strategy_{Q_0}(h),\alpha\rangle Q'.
\end{aligned}
\]

Every maximal execution is represented by an infinite, \term{def:environment-annotated-log}{environment-annotated
log}
\[
  \ell=\langle E_0,\alpha_0\rangle
       \langle E_1,\alpha_1\rangle\cdots .
\]
An ordinary strategy step from $h_i$ records
$\langle\Strategy_{Q_0}(h_i),\alpha_i\rangle$.  If the selected environment
has no successor, completion records
$\langle\Strategy_{Q_0}(h_i),\epsilon\rangle$, enters an absorbing failure
state, and repeats an empty-environment $\epsilon$-entry forever.  Thus a failed execution has an infinite log with no further clock ticks. Let $\Logs^{\omega}_{\Strategy_{Q_0}}(Q_0)$ contain all ordinary
infinite logs and all such failure completions.

Not every \termref{def:strategy}{strategy} is a faithful model of an environment.  Recall that
$\Fut(E)$ records what may still become available later in the current cycle,
and define availability without the clock by
\[
  \Avail(E)=\Bare\bigl(\Imm(E)\setminus\{\#\}\bigr)\cup\Fut(E).
\]
A log is \term{def:honest}{honest} when each non-clock step respects the
promise made by the preceding environment; a clock step starts a new cycle and
therefore resets the condition.  A strategy is honest exactly when all of its
completed logs are honest:
\[
\begin{aligned}
\Honest(\ell)
&\triangleq
\forall i\geq 0.\
  \#\in\alpha_i\ \lor\
  \Avail(E_{i+1})\subseteq\Fut(E_i),
\\
\Honest(\Strategy_{Q_0})
&\triangleq
\forall\ell\in\Logs^{\omega}_{\Strategy_{Q_0}}(Q_0).\
  \Honest(\ell).
\end{aligned}
\]
Thus an honest strategy may withdraw future actions, but before the next clock
it cannot expose an action already ruled out. Since immediate and future
actions are disjoint, an immediate offer also cannot reappear in that cycle.  The adjacent condition implies the corresponding condition between
any two positions in the same cycle.  This is exactly what makes
$\Resolved(E,B)$ (\secref{pact-semantics}) a safe point at which to commit to
an else branch: once a branch need drops out of $\Fut$, it cannot resurface
before the next $\#$.

With a fixed \termref{def:honest}{honest} strategy, \term{def:synchronization-compatibility}{synchronization compatibility} becomes a
reachability condition. Let $\Reach_{\Strategy_{Q_0}}(Q_0)$ contain the prefixes
reachable from $h_0$ by the strategy-step relation.  No such prefix may be
stuck under the environment selected for it.

\begin{definition}[Sync-Compatible]
A system $Q_0=P_1\pparop_{I_1}\cdots\pparop_{I_{n-1}}P_n$ is
sync-compatible under $\Strategy_{Q_0}$ if the residual at no reachable prefix
is stuck under the environment selected by the strategy:
\[
\SyncCompat_{\Strategy_{Q_0}}(P_1,\dots,P_n)
\triangleq
\forall h\in\Reach_{\Strategy_{Q_0}}(Q_0).\
\neg\Stuck_{\Strategy_{Q_0}(h)}\bigl(\mathsf{last}(h)\bigr).
\]
\end{definition}

Synchronization safety closes this condition existentially over \termref{def:honest}{honest}
strategies. It asks whether \emph{some} honest strategy keeps the system out of
\termref{def:stuck}{stuck} states. This condition alone allows a strategy to
withhold the clock indefinitely; liveness below also requires clock progress.

\begin{definition}[Synchronization Safety]
A system is sync-safe if there exists an honest strategy under which it is
sync-compatible:
\[
\exists\Strategy_{Q_0}.\ \Honest(\Strategy_{Q_0})\land\SyncCompat_{\Strategy_{Q_0}}(P_1,\dots,P_n).
\]
\end{definition}

\subsubsection{Liveness}

\termref{def:sync-safety}{Synchronization safety} only rules out \termref{def:stuck}{stuck} states. It does not require the
system to keep making \emph{progress}. \termref{def:liveness}{Liveness} adds that requirement.

For example, consider a process that waits across cycles for $m_1$ and
records progress only after receiving it:
\[
 P=\pmu X.\,\bigl(
   (\pmu Y.\,\pkw{xoffer}\ \{x=\pkw{recv}\ m_1:\pkw{skip}\}
       \ \pkw{else}\ (\pkw{cycle}\pseq Y))\pseq \pkw{end}_1\pseq X\bigr).
\]
A strategy that never offers $m_1$ can keep the process running through its
else branch, but the process never reaches progress marker $1$.
Liveness rules out this strategy for the progress set $T=\{1\}$.
Compatibility asks whether some strategy meets both requirements, under every
resolution of the process choices.

The completed logs above let us state these requirements for all executions
under a strategy.

In practice, \termref{def:liveness}{liveness} often requires \termref{def:fairness}{fairness} assumptions. In
\codename{}, when a designer uses a fair-tagged construct such as
$\pkw{fair-if}_\eta$, the specification should only consider executions where
both branches of that construct are taken infinitely often, or neither branch
is taken infinitely often.

\begin{definition}[Fairness]
For a log $\ell$, let $\Inf(\ell)$ inspect the action component and collect the
actions that occur infinitely often. A log is fair when each fair branch resolves both ways
infinitely often or neither way infinitely often:
\[
\Fair(\ell)\triangleq
\forall\eta\in\EtaSet{Q_0}.\
\eta_{\mathsf{true}}\in\Inf(\ell)\Longleftrightarrow\eta_{\mathsf{false}}\in\Inf(\ell).
\]
\end{definition}

With logs and \termref{def:fairness}{fairness} in hand, \termref{def:liveness}{liveness} has a direct form. Every infinite log
must contain infinitely many clock ticks. In addition, every fair infinite log
must revisit each progress marker in $T$ infinitely often.

\begin{definition}[Liveness]
For $T\subseteq\PidSet{Q_0}$, the progress identifiers of interest, the system
$Q_0$ is live under $\Strategy_{Q_0}$ with progress set $T$ if every fair
infinite log visits each progress marker in $T$ infinitely often and every
infinite log contains infinitely many clock ticks:
\[
\Live_{\Strategy_{Q_0},T}(Q_0)\triangleq
\forall\ell\in\Logs^{\omega}_{\Strategy_{Q_0}}(Q_0).\
\#\in\Inf(\ell)
\ \text{and}\
\bigl(\Fair(\ell)\Rightarrow\forall t\in T.\ \$_t\in\Inf(\ell)\bigr).
\]
\end{definition}

\subsection{Compatible Processes}

A system is \termref{def:compatibility}{compatible} under a given \termref{def:strategy}{strategy} when it is both sync-compatible
and live under that strategy. Process compatibility, or
\termref{def:comm-safety}{communication safety}, asks whether such an honest
strategy exists.

\begin{definition}[Compatible Processes]
\deflabel{compatible-processes-main}
For a fixed composition $Q_0=P_1\pparop_{I_1}\cdots\pparop_{I_{n-1}}P_n$,
processes are compatible under $\Strategy_{Q_0}$ with progress set $T$ when
\[
\ProcJoin_{\Strategy_{Q_0},T}(P_1,\dots,P_n)
\triangleq
\SyncCompat_{\Strategy_{Q_0}}(P_1,\dots,P_n)
\ \text{and}\
\Live_{\Strategy_{Q_0},T}(Q_0),
\]
and the system is compatible when some honest strategy achieves this:
\[
\exists\Strategy_{Q_0}.\ \Honest(\Strategy_{Q_0})\land\ProcJoin_{\Strategy_{Q_0},T}(P_1,\dots,P_n).
\]
\end{definition}

\termref{def:closed}{Closed compatibility} is the special case $\BoundaryMsgs(Q_0)=\emptyset$, where
the strategy can only offer the clock tick or nothing. The reachable prefixes
and closed environments are given explicitly in
\appref{app-pact-safety}.

%% file: sections/types.tex
\section{The \codename{} Type System}
\seclabel{types}

This section presents the \term{def:type-system}{\codename{} type system}. We first introduce the
syntax and local semantics of \codename{} types. We then define \termref{def:type-inference}{type inference}
and \termref{def:type-checking}{type checking}: given a \termref{def:process}{process}, the compiler infers a type that captures
its communication, timing, and control behaviour, and checks user-provided
contracts by comparing them against the inferred type. Next, we define
\termref{def:compatibility}{compatibility} for collections of types and show how to decide it by reducing it to
questions on games on finite graphs.
We then characterize the optimality guarantee of our \termref{def:type-inference}{type inference}
algorithm. Finally, we describe how summary types allow compatible
compositions to be reused in incremental checks.

\input{sections/type-syntax-semantics}
\input{sections/typechecking-inference}
\input{sections/type-compatibility-checking}
\input{sections/type-optimality}
\input{sections/summary-types}

%% file: sections/type-syntax-semantics.tex
\subsection{Type Syntax and Semantics}
\seclabel{type-syntax-semantics}

\codename{} \term{def:type}{types} describe the value-erased communication structure exposed by
a \termref{def:process}{process}: message order, branch outcomes, clock ticks, and \termref{def:progress-marker}{progress markers}.
Similar to processes, types carry a finite alphabet of polarized message
actions, which we omit from the main text for simplicity. The full syntax and
well-formedness conditions are given in \appref{type-semantics-appendix}.
The type syntax is:
\begin{align*}
\tau ::= {}& a \mid \tmarker{t} \mid \tclock \mid \tkw{skip}
        \mid \tau \tplus \tau \mid \tau\tseq \tau \mid \tau \tpar{I} \tau
        \mid X \mid \tmu X. \tau \\
        &\mid \tau \tfairplus{\eta} \tau
        \mid \toplus (\tau, a : \tau, \cdots)
        \mid \twedge (\tau, a : \tau, \cdots) \\
        a ::= {}& m\tsym{!} \mid m\tsym{?}
\end{align*}
The atomic types $m\tsym{!}$ and $m\tsym{?}$ describe \termref{def:message-send}{sending} and \termref{def:message-receive}{receiving} message $m$.
The type $\tclock$ describes a clock tick, and $\tmarker{t}$ records that
\termref{def:progress-marker}{progress marker} $t$ has been reached. The type $\tkw{skip}$ is completed
behaviour. Sequencing $\tau_1\tseq\tau_2$ performs $\tau_1$ and then $\tau_2$.
Parallel composition $\tau_1\tpar{I}\tau_2$ synchronizes the two types on the
interface $I$. Recursion $\tmu X.\tau$ gives regular, repeating protocols.
We consider well-formed, closed, guarded \term{def:regular-type}{regular types}: when unfoldings of each recursive
binder are shared, only finitely many control residuals are generated.  This is
the finiteness condition required by the compatibility algorithm.

The remaining constructors describe choices. The sum $\tau_1\tplus\tau_2$ is an
internal nondeterministic choice; $\tau_1\tfairplus{\eta}\tau_2$ is the fair-tagged
version, whose branch outcomes are recorded using the identifier $\eta$. The
\termref{def:exclusive-offer}{exclusive offer} $\toplus(\tau_0,a_1:\tau_1,\ldots)$ observes which branch
needs are available and commits to one enabled branch, or to the default type
$\tau_0$ when no branch can still become enabled in the current cycle.
The default is the first argument of a type offer; in process syntax it
follows \pkw{else}. The
\termref{def:conjunctive-offer}{conjunctive offer} $\twedge(\tau_0,a_1:\tau_1,\ldots)$ admits all enabled
branches in the current cycle, again falling through to $\tau_0$ when the
branch set is resolved and no branch is enabled.

\termref{def:type}{Types} abstract away payload values. A process action such as $m\psym{!}v$ or $m\psym{?}v$
is represented by the corresponding message polarity $m\tsym{!}$ or $m\tsym{?}$. Likewise,
data-dependent conditionals in processes become type-level choice, either
$\tplus$ or $\tfairplus{\eta}$.

The operational semantics follows the process semantics after this erasure.
A type step has the same local form:
\[
  \lstep{E_\tau}{\tau}{\alpha}{\tau'}.
\]
Here $E_\tau$ is a \termref{def:one-step-environment}{one-step environment} for the type, $\alpha$ is a type
observation, and $\tau'$ is the \term{def:type-residual}{residual of $\tau$} after that step.
This semantic residual is distinct from a summary type, which describes the
boundary behaviour of a compatible composition.
As for processes, the environment records which complementary boundary actions
are available now and which may still become available later in the same clock
cycle. The full erased semantic rules are given in
\appref{type-semantics-appendix}. They are structured so that a type can be
executed under a boundary strategy in the same way as the process it abstracts.

%% file: sections/typechecking-inference.tex
\subsection{Type Inference, Refinement, and Checking}
\seclabel{type-checking-inference}

Our \term{def:type-inference}{type inference} algorithm erases payload values
and expressions while preserving communication and timing.
\term{def:type-checking}{Type checking} compares this inferred type with a
claimed type using \term{def:type-refinement}{type refinement}.
To justify these checks, we first relate process steps to type steps after
erasing their payload values.

The \term{def:value-erasure}{erasure} $\erase$ maps process observations to
type observations by removing payloads from boundary messages and keeping
other observations unchanged:
\[
\erase(m\psym{!}v)=m\tsym{!},
\qquad
\erase(m\psym{?}v)=m\tsym{?},
\qquad
\erase(\#)=\tclock,
\qquad
\erase(\$_t)=\tmarker{t}.
\]
It also preserves fair-branch observations,
$\erase(\eta_{\mathsf{true}})=\eta_{\mathsf{true}}$ and
$\erase(\eta_{\mathsf{false}})=\eta_{\mathsf{false}}$, and maps $\epsilon$ to
$\epsilon$. We also lift $\erase$ to sets of actions, one-step
environments, and histories and for a process environment
$E_P=(U,F)$ the corresponding type
environment is $\erase(E_P)=(\erase(U),\erase(F))$.

\myparagraph{Type conformance}
To justify checking \termref{def:comm-safety}{communication safety} on types, we first define when a
type describes the behaviour of a process.
\begin{definition}[Process-type conformance]
\deflabel{process-type-conformance}
A process $P$ \term{def:process-type-conformance}{conforms} to a type $\tau$, written $P \models \tau$,
if there exists a relation $\Rel$ between processes and types such that
$(P,\tau)\in\Rel$ and, for every $(P_0,\tau_0)\in\Rel$ and concrete
one-step environment $E_P$:
\begin{enumerate}
 \item \textbf{Availability reflection.} If $\tau_0$ can take a step under $\erase(E_P)$, then $P_0$
       can take a step under $E_P$.
 \item \textbf{Step matching.} Every process step $\lstep{E_P}{P_0}{\alpha}{P_1}$ has a matching
       type step $\lstep{\erase(E_P)}{\tau_0}{\erase(\alpha)}{\tau_1}$
       with $(P_1,\tau_1)\in\Rel$.
\end{enumerate}
\end{definition}

The relation $\Rel$ is a witness for \termref{def:process-type-conformance}{conformance}. It records the process/type
pairs that remain matched as execution proceeds. The first clause says that
if the process cannot step, neither can the type under the erased
environment. The second clause requires the type to match each process step
after erasing payload values.

\myparagraph{Lifting from types to processes under value-oblivious objectives}
Process-type conformance is useful because it lets us transfer winning
environment strategies from types back to processes. Informally, if
$P\models\tau$, then any game that the type $\tau$ can win can also be won by
the process $P$, provided the winning condition does not depend on payload
values, i.e. it is \term{def:value-oblivious}{value-oblivious}. Formally, an objective $\Phi$
over infinite environment-annotated logs is value-oblivious if
\[
\erase(\ell)=\erase(\ell')
\quad\Longrightarrow\quad
\ell\in\Phi \Longleftrightarrow \ell'\in\Phi.
\]
Erasure acts pointwise on both the environment and action components.
A strategy is only a faithful model of an environment if it is
\termref{def:honest}{honest} in the sense of \secref{pact-safety}: it must never
expose, immediately or as a future promise, an action it had already ruled out
earlier in the same clock cycle. We therefore only ever transfer honest
strategies. We can now state the formal guarantee offered by type conformance
(proof in \appref{sec4-type-proofs}):
\begin{restatable}[Strategy transfer]{lemma}{StrategyTransfer}
\lemlabel{strategy-transfer}
Let $\Phi$ be a \termref{def:value-oblivious}{value-oblivious} objective over infinite environment-annotated
logs. If $P\models\tau$ and an honest type-level environment strategy $\Strategy_\tau$
ensures $\Phi$ for $\tau$, then there exists an honest process-level
environment strategy $\Strategy_P$ that ensures $\Phi$ for $P$.
\end{restatable}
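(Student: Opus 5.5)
We build $\Strategy_P$ by simulation. Along every process-level history we maintain a coupled type-level history, and we define $\Strategy_P$ as a concrete lifting of what $\Strategy_\tau$ prescribes on that coupled history.

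\textbf{Concretizing environments.} Given a process prefix $h_P$, we first fix a type prefix $h_\tau=\mathsf{sim}(h_P)$ such that three things hold: the two prefixes have the same length, $\erase$ maps the environments and actions of $h_P$ pointwise to those of $h_\tau$, and $(\mathsf{last}(h_P),\mathsf{last}(h_\tau))\in\Rel$ for the conformance witness $\Rel$. Let $E_\tau=\Strategy_\tau(h_\tau)$. We then set $\Strategy_P(h_P)$ to be a concrete environment $E_P$ with $\erase(E_P)=E_\tau$. For each $m\tsym{?}$ we put $m\psym{?}$; for each $m\tsym{!}$ we put $m\psym{!}v$ for every value $v$, or for one fixed canonical value; $\#$ and $\Fut$ are copied. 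Because $\erase$ is determined by the bare polarized names, this concretization commutes with $\Bare$ and $\Avail$. In particular $\Avail(E_P)$ and $\erase$-images line up, so the honesty inequality $\Avail(E_{i+1})\subseteq\Fut(E_i)$ transfers verbatim from the type side to the process side. That gives honesty of $\Strategy_P$, as long as we also check failure completions: the empty environment that is repeated forever erases to the empty environment.

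\textbf{Extending the simulation by induction on prefix length.} Suppose the process takes a step $\lstep{E_P}{\mathsf{last}(h_P)}{\alpha}{P_1}$. Clause (2) of \defref{process-type-conformance} supplies a type step $\lstep{E_\tau}{\mathsf{last}(h_\tau)}{\erase(\alpha)}{\tau_1}$ with $(P_1,\tau_1)\in\Rel$. We extend $h_\tau$ by that step. To make this well defined we choose the matching step once, say by a choice function, so that $\mathsf{sim}$ is a genuine function on prefixes. The extended type prefix is then reachable under $\Strategy_\tau$.

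\textbf{Failure completions.} If the process is stuck under $E_P$, the contrapositive of clause (1) shows that $\mathsf{last}(h_\tau)$ is stuck under $E_\tau$. The type-level log therefore also takes a failure completion, and the two completions erase identically.

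Every process log $\ell\in\Logs^{\omega}_{\Strategy_P}(P)$ is thus paired with a type log $\ell_\tau\in\Logs^{\omega}_{\Strategy_\tau}(\tau)$ satisfying $\erase(\ell)=\erase(\ell_\tau)=\ell_\tau$. Since $\Strategy_\tau$ ensures $\Phi$, we have $\ell_\tau\in\Phi$. Value-obliviousness then gives $\ell\in\Phi$. Objectives stated as "no reachable stuck prefix", as opposed to objectives on logs, are encoded through the failure completions, so they are covered by the same argument.

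The main obstacle is the concretization step, that is, choosing the payloads in $E_P$. A process receive $x=\pkw{recv}\ m$ may branch on the received value, and clause (2) handles every such branch. Still, we must make sure that the chosen payloads never break the erasure correspondence, and that $\Imm$ and $\Fut$ stay disjoint after concretization. Our first approach is to offer every value for each $m\tsym{!}$, which is harmless because the process-side rules only require existence of $m\psym{!}v\in\Imm(E)$. If well-formedness instead demands a single value per message, we fall back to a canonical value and note that step matching holds for all concrete environments anyway. A secondary subtlety is that $\mathsf{sim}$ must respect prefix order so that the coupled type logs are genuine $\Strategy_\tau$-logs. We handle this by defining $\mathsf{sim}$ by recursion on $h_P$ rather than by picking a type prefix independently at each length.
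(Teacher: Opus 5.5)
Your proposal is correct and follows essentially the paper's route. You concretize each type environment while preserving its bare current and future sets, set $\Strategy_P(h)=\gamma(\Strategy_\tau(g))$ along a coupled type history $g$ that you extend by step matching, use availability reflection to align failure completions, and obtain $\Phi$ from value-obliviousness and honesty from the preservation of $\Avail$, $\Fut$, and clock ticks. The only differences are cosmetic: the paper fixes one well-typed payload per boundary send (your fallback), and it defines $\Strategy_P$ arbitrarily on prefixes that the construction never generates.
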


\myparagraph{Inference}
The inferred type follows the process syntax. The abstraction erases
payloads and expressions while retaining communication and control steps:
{\small
\[
\begingroup
\renewcommand{\arraystretch}{1.15}
\begin{array}{@{}l@{\hspace{2.4em}}l@{}}
\multicolumn{2}{@{}c@{}}{
  \begin{array}{@{}l@{\hspace{1.1em}}l@{\hspace{1.1em}}l@{\hspace{1.1em}}l@{}}
  \mathsf{abs}(\pkw{skip}) \triangleq \tkw{skip},
  &
  \mathsf{abs}(\pkw{cycle}) \triangleq \tclock,
  &
  \mathsf{abs}(\pkw{end}_{t}) \triangleq \tmarker{t},
  &
  \mathsf{abs}(X) \triangleq X,
  \end{array}
}\\[2pt]
\mathsf{abs}(\pkw{if}\ e\ \pkw{then}\ P\ \pkw{else}\ Q)
\triangleq \mathsf{abs}(P)\tplus\mathsf{abs}(Q),
&
\mathsf{abs}(\pkw{send}\ m\ e\ \pkw{in}\ P)
\triangleq m\tsym{!}\tseq\mathsf{abs}(P),
\\
\mathsf{abs}(\pkw{fair-if}_\eta\ e\ \pkw{then}\ P\ \pkw{else}\ Q)
\triangleq \mathsf{abs}(P)\tfairplus{\eta}\mathsf{abs}(Q),
&
\mathsf{abs}(x=\pkw{recv}\ m\ \pkw{in}\ P)
\triangleq m\tsym{?}\tseq\mathsf{abs}(P),
\\[2pt]
\mathsf{abs}(\pkw{let}\ x=e\ \pkw{in}\ P)
\triangleq \tkw{skip}\tseq\mathsf{abs}(P),
&
\mathsf{abs}(\pmu X.P)
\triangleq \tmu X.\mathsf{abs}(P),
\\
\mathsf{abs}(P\pseq Q)
\triangleq \mathsf{abs}(P)\tseq\mathsf{abs}(Q),
&
\mathsf{abs}(P\pparop_I Q)
\triangleq \mathsf{abs}(P)\tpar{I}\mathsf{abs}(Q),
\\
\mathsf{abs}(\pkw{xoffer}\ B\ \pkw{else}\ P)
\triangleq \toplus(\mathsf{abs}(P),\mathsf{abs}_B(B)),
&
\mathsf{abs}(\pkw{offer}\ B\ \pkw{else}\ P)
\triangleq \twedge(\mathsf{abs}(P),\mathsf{abs}_B(B)).
\end{array}
\endgroup
\]
}
Here $\mathsf{abs}_B$ abstracts branch sets by erasing each branch head to
its message polarity and recursively abstracting the branch continuation:
{\small
\[
\begingroup
\renewcommand{\arraystretch}{1.15}
\begin{array}{@{}l@{\hspace{2.4em}}l@{}}
\mathsf{abs}_B(\emptyset)
\triangleq \emptyset,
&
\mathsf{abs}_B(B,\ \pkw{send}\ m\ e : P)
\triangleq \mathsf{abs}_B(B),\ m\tsym{!}:\mathsf{abs}(P),
\\
\multicolumn{2}{@{}l@{}}{
  \mathsf{abs}_B(B,\ x=\pkw{recv}\ m : P)
  \triangleq \mathsf{abs}_B(B),\ m\tsym{?}:\mathsf{abs}(P).
}
\end{array}
\endgroup
\]
}
Thus payloads and expressions are erased, while each local binding retains
its silent step. Communication
structure, timing actions, progress markers, recursion, and fair-choice
identifiers are preserved.

The inference judgment associates each process with this abstraction:
\[
  \inferpact{P}{\tau}
  \quad\triangleq\quad
  \tau = \mathsf{abs}(P).
\]

\begin{restatable}[Soundness of inferred types]{lemma}{DataAbsSound}
\lemlabel{data-abs-sound}
For every well-formed process $P$, $P\models\mathsf{abs}(P)$.
\end{restatable}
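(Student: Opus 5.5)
We exhibit an explicit witness relation for \defref{process-type-conformance}. Let $\Rel$ be the least relation containing $(P,\mathsf{abs}(P))$ for every well-formed process $P$ that is a closed residual of some well-formed process. It is closed under syntactic substitution of values for variables, since $\mathsf{abs}(P[x\mapsto v])=\mathsf{abs}(P)$: abstraction ignores expressions. It is also closed under unfolding, since $\mathsf{abs}(P[X\mapsto\pmu X.P])=\mathsf{abs}(P)[X\mapsto\tmu X.\mathsf{abs}(P)]$. Both facts follow by a routine induction on $P$, and we prove them first as auxiliary lemmas. We then check both clauses for every pair $(P_0,\mathsf{abs}(P_0))$ and every concrete environment $E_P$. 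The argument is by induction on the derivation of the step, or for availability reflection on the structure of $P_0$, with a case for each rule of \figref{process-semantics-rules} and its type counterpart in \appref{type-semantics-appendix}.

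For \textbf{step matching}, each process rule is matched by the erased type rule applied to the abstracted term. For \rulelabel{L-Send} and \rulelabel{L-Recv}, the premise $m\psym{?}\in\Imm(E_P)$ (resp.\ $m\psym{!}v\in\Imm(E_P)$) becomes $m\tsym{?}\in\erase(\Imm(E_P))$ (resp.\ $m\tsym{!}$). The residual $P[x\mapsto v]$ abstracts to $\mathsf{abs}(P)$, which is the type residual after $m\tsym{?}$ fires in $m\tsym{?}\tseq\mathsf{abs}(P)$. Here we must check that the type rule for sequencing with a completed atomic prefix yields exactly $\mathsf{abs}(P)$ and not $\tkw{skip}\tseq\mathsf{abs}(P)$. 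If it does not, we enlarge $\Rel$ by closing it under $\tkw{skip}\tseq\tau\sim\tau$ up to a silent step.

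The remaining cases go as follows.
\begin{itemize}
\item \emph{Conditionals.} $\pkw{if}$ maps to an internal-choice step of $\tplus$, and $\pkw{fair-if}$ to the $\eta$-labelled step of $\tfairplus{\eta}$. The process condition $e\Downarrow\cdot$ only selects one of the type's options.
\item \emph{Let.} $\pkw{let}$ maps to the silent step of $\tkw{skip}\tseq\mathsf{abs}(P)$. This is precisely why $\mathsf{abs}$ keeps that $\tkw{skip}$: it makes the step counts line up.
\item \emph{Offers.} We use that $\Need$, $\Bare$ and $\Resolved$ commute with erasure. Each is defined on bare message names, so $\Resolved(E_P,B)\iff\Resolved(\erase(E_P),\mathsf{abs}_B(B))$.
\item \emph{Sequencing.} $\pseq$ follows from the induction hypothesis.
\item \emph{Parallel composition.} This case needs more care. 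Offer sets must commute with erasure, $\erase(\Off(E,P,\alpha,P'))=\Off(\erase(E),\mathsf{abs}(P),\erase(\alpha),\mathsf{abs}(P'))$, which we prove simultaneously with step matching. $\LocalEnv$ and $\Sync_I$ must also commute with erasure, which we check directly from their definitions in \appref{aux-appendix}.
\end{itemize}

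For \textbf{availability reflection}, we argue by structural induction on $P_0$ that if $\mathsf{abs}(P_0)$ steps under $\erase(E_P)$ then $P_0$ steps under $E_P$. For most constructs this is immediate because process rules are total. For $\pkw{if}$ and $\pkw{let}$, evaluation $e\Downarrow v$ always succeeds on closed well-formed expressions, so some rule applies whenever $\Imm(E_P)\neq\{\#\}$. For receives, erasure loses the payload, so from $m\tsym{!}\in\erase(\Imm(E_P))$ we recover some $v$ with $m\psym{!}v\in\Imm(E_P)$. The waiting rules cover the cases where the head is not enabled. Whenever $\Imm(E_P)=\{\#\}$, the type steps only through clock-compatible constructs, and these correspond one-to-one with process constructs.

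The main obstacle is the parallel composition case. Its premises form a mutual fixed-point constraint between the two components' environments and offers, so we cannot use the component-wise induction hypothesis naively. We would strengthen the induction to state, for each component, that erasure induces a bijection on step derivations modulo payloads. Concretely, we would prove together with step matching a converse: every type step under $\erase(E_P)$ lifts to a process step with the same erased label and offers, choosing payloads from $E_P$. Given this bijection, a solution of the type-level constraints in \rulelabel{L-Par} pulls back to a solution of the process-level constraints, and a process-level solution pushes forward. This yields both clauses for $\pparop_I$.
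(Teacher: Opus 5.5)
Your overall architecture matches the paper's. You relate each residual $P$ to $\mathsf{abs}(P)$, prove step matching by a one-step abstraction lemma (induction on the derivation, using that erasure commutes with $\Need$, $\Resolved$, $\Sync_I$, $\LocalEnv$ and $\Off$), and get availability reflection by lifting type steps back to process steps. The gap is in the lifting statement you propose for the parallel case. Erasure does not induce a bijection on step derivations. Nor does every step of $\mathsf{abs}(P)$ under $\erase(E_P)$ lift to a process step ``with the same erased label and offers'': abstraction discards data-dependent guards, as you yourself note for conditionals. For $\pkw{fair-if}_\eta\ \pkw{false}\ \pkw{then}\ P_t\ \pkw{else}\ P_f$, the type $\mathsf{abs}(P_t)\tfairplus{\eta}\mathsf{abs}(P_f)$ can step with label $\eta_{\mathsf{true}}$, but the process's only step carries $\eta_{\mathsf{false}}$. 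An induction on your strengthened statement therefore already fails at the $\pkw{fair-if}$ case. For plain $\pkw{if}$ the labels agree but the targets do not. In $(\pkw{if}\ \pkw{false}\ \pkw{then}\ \pkw{skip}\ \pkw{else}\ \pkw{cycle})\pseq Q$, the abstract derivation may choose the $\tkw{skip}$ branch and use \textsf{T-SeqDone}, while the only concrete step uses \textsf{L-Seq}. Likewise, with two such conditionals in parallel, an abstract \textsf{T-ParFuse} step has only a concrete \textsf{L-Par} counterpart. So a type-level solution of the parallel premises cannot be pulled back rule for rule.

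What is needed is a weaker lifting, which the paper proves as Lemma~\ref{lem:decorated-steps-lift}. The process has a step whose offer pair erases to the abstract one and whose label agrees on message and clock observations, but whose branch, fair outcome and target may differ. You then re-choose the enclosing sequence or parallel rule from the concrete targets (fuse, product, or survivor) and check that these cases are exhaustive and publish the same pair. This relies on three features of the semantics your argument never uses:
\begin{itemize}
\item both outcomes of a conditional publish the same source menu $(\emptyset,\FutOff(P_1)\cup\FutOff(P_2))$;
\item the sequence pair is the same for \textsf{L-Seq} and \textsf{L-SeqDone};
\item $\Sync_I$ ignores fair outcomes.
\end{itemize}
Together these keep the siblings' $\LocalEnv$ equations valid for the lifted steps. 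Also, payloads of internal receives cannot be chosen from $E_P$. They must come from evaluating the sibling's active send expressions, which are well typed by data preservation and unique by message ownership.

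A smaller point: do not restrict $\Rel$ to residuals that are themselves well formed. Unfolding $\pmu X.P$ duplicates $\pkw{end}_t$ and $\pkw{fair-if}_\eta$ occurrences and breaks the linearity conditions, so your relation is not closed under \textsf{L-Rec}. Relate all residuals reachable from a well-formed process instead, as the paper does.
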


The proof is in \appref{sec4-type-proofs}; the process well-formedness
requirements are in Appendix~\ref{app:process-well-formedness}.

\myparagraph{Refinement}
To check a user-provided type against the inferred one, we compare types using
a \term{def:precision-preorder}{precision preorder}. We will use
$\tau\sqsubseteq\tau'$ to mean that $\tau$ refines $\tau'$.
\begin{definition}[Type refinement]
\deflabel{type-refinement}
A type $\tau$ refines a type $\tau'$ if there exists a relation $\Rel$
between types at the same fixed boundary such that $(\tau,\tau')\in\Rel$
and, for every $(\tau_0,\tau'_0)\in\Rel$ and one-step environment $E$:
\begin{enumerate}
 \item \textbf{Availability.} Every type step $\lstep{E}{\tau'_0}{\alpha}{\tau'_1}$
       with $(O_C,O_F)\in\Off(E,\tau'_0,\alpha,\tau'_1)$ has a matching
       type step $\lstep{E}{\tau_0}{\alpha}{\tau_1}$
       with the same offer pair.
 \item \textbf{Response.} Every type step $\lstep{E}{\tau_0}{\alpha}{\tau_1}$
       with $(O_C,O_F)\in\Off(E,\tau_0,\alpha,\tau_1)$ has a matching
       type step $\lstep{E}{\tau'_0}{\alpha}{\tau'_1}$
       with the same offer pair and $(\tau_1,\tau'_1)\in\Rel$.
 \item \textbf{Completion.} $\tau_0=\tkw{skip}$ iff $\tau'_0=\tkw{skip}$.
\end{enumerate}
\end{definition}

Here $\tau$ is the more precise type. Availability and response require
related types to admit the same labels and offer pairs under each environment.
Only response requires the resulting types to remain related.
Equal offer pairs preserve the local environments used in parallel composition.
The finite refinement game used by the checker is given in
\appref{typeinference-optimality}.

\begin{restatable}[Conformance is closed under refinement]{lemma}{ConformanceRefinement}
\lemlabel{conformance-refinement}
If $P\models\tau$ and $\tau\sqsubseteq\tau'$, then $P\models\tau'$.
\end{restatable}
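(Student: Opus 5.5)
The plan is to compose the two witnessing relations. Let $\Rel_1$ witness $P\models\tau$ (\defref{process-type-conformance}) and $\Rel_2$ witness $\tau\sqsubseteq\tau'$ (\defref{type-refinement}). Define
\[
\Rel \;=\; \{(P_0,\tau'_0) \mid \exists \tau_0.\ (P_0,\tau_0)\in\Rel_1 \land (\tau_0,\tau'_0)\in\Rel_2\}.
\]
Then $(P,\tau')\in\Rel$ via $\tau$. It remains to check the two conformance clauses for an arbitrary $(P_0,\tau'_0)\in\Rel$ with intermediate $\tau_0$, and an arbitrary concrete environment $E_P$. Throughout, write $E=\erase(E_P)$.

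\textbf{Step matching.} Take a process step $\lstep{E_P}{P_0}{\alpha}{P_1}$. The step-matching clause of $\Rel_1$ gives $\lstep{E}{\tau_0}{\erase(\alpha)}{\tau_1}$ with $(P_1,\tau_1)\in\Rel_1$. To apply the response clause of $\Rel_2$, we need an offer pair $(O_C,O_F)\in\Off(E,\tau_0,\erase(\alpha),\tau_1)$. I would establish, as a small auxiliary fact from the definition of $\Off$ in \appref{aux-appendix}, that every type step has at least one offer pair. This should follow by induction on the type step derivation, since each rule contributes some (possibly empty) current and future offers. Response then yields $\lstep{E}{\tau'_0}{\erase(\alpha)}{\tau'_1}$ with $(\tau_1,\tau'_1)\in\Rel_2$, so $(P_1,\tau'_1)\in\Rel$ as required.

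\textbf{Availability reflection.} Suppose $\tau'_0$ can step under $E$, say $\lstep{E}{\tau'_0}{\beta}{\tau'_1}$. Using the same auxiliary fact, pick an offer pair for this step. The availability clause of $\Rel_2$ then gives a step of $\tau_0$ under $E$ with label $\beta$, so $\tau_0$ can step under $E$. Availability reflection of $\Rel_1$ now gives a step of $P_0$ under $E_P$. Note that availability in \defref{type-refinement} does not require the successors to remain related; this is harmless here because reflection only asks for the existence of a step.

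\textbf{Main obstacle.} The main obstacle is the auxiliary fact that $\Off(E,\tau_0,\alpha,\tau_1)$ is nonempty for every derivable type step. The rest is routine relational composition. If $\Off$ turns out to be partial for some rules, such as the waiting or else rules of offers, I would fall back to a weaker property: for each derivable step there exists some step with the same label and some offer pair. This still suffices, because refinement requires the same label and both conformance clauses only concern labels and existence.

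A second point to check is that the completion clause and the fixed boundary alphabet of refinement are compatible with the alphabet of $P$. Since both relations are taken at the same boundary, I expect the erased environments of $P$ to be exactly the type environments quantified over in \defref{type-refinement}.
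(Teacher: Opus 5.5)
Your proposal is correct and matches the paper's proof: compose the two witnesses through the intermediate type residual $\tau_0$. Availability reflection comes from refinement's availability clause followed by conformance's reflection clause, and step matching comes from conformance's step matching, then decorating the type step and applying refinement's response clause. The auxiliary fact you flag is asserted in the paper (\appref{aux-appendix}: every derivable step has a decoration, since direct rules supply an offer pair and sequence and parallel build theirs from the pairs in their premises), so your fallback is not needed; this is fortunate, because the fallback would fail for step matching: replacing the step $\tau_0\to\tau_1$ by some other decorated step with the same label loses $(P_1,\tau_1)\in\Rel_1$, and response then relates only the new target.
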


Proof in \appref{sec4-type-proofs}.

\myparagraph{Data-oblivious type checking}
The abstraction $\mathsf{abs}(P)$ erases payload values and expressions. For example, when a process branches on a data expression,
the inferred type keeps both branches rather than trying to prove which branch
is feasible. This is the sense in which the checking procedure is
\term{def:data-oblivious-type-checking}{data-oblivious}: it validates contracts using the communication and timing
structure of the process, but it does not use value-sensitive invariants to
remove behaviours.

This abstraction makes inference automatic and keeps checking finite-state. The
\termref{def:data-oblivious-type-checking}{data-oblivious type checking} judgment $\checkdo{P}{\tau}$ asks whether a
claimed type is a further abstraction of the type synthesized by inference.

\begin{definition}[Data-oblivious type checking]
\deflabel{data-oblivious-type-checking}
\[
  \checkdo{P}{\tau}
  \quad\triangleq\quad
  \exists \tau_0.\
    \inferpact{P}{\tau_0}
    \land
    \tau_0 \sqsubseteq \tau .
\]
Equivalently, $\checkdo{P}{\tau}$ iff $\mathsf{abs}(P)\sqsubseteq\tau$.
\end{definition}

\begin{restatable}[Soundness of type checking]{lemma}{TypeCheckingSound}
\lemlabel{type-checking-sound}
If $P$ is well formed and $\checkdo{P}{\tau}$, then $P\models\tau$.
\end{restatable}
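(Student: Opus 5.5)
This lemma follows directly from the two preceding results. We unfold \defref{data-oblivious-type-checking}, apply \lemref{data-abs-sound} to the inferred type, and then transport conformance along refinement using \lemref{conformance-refinement}.

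\textbf{Step 1: unfold the hypothesis.} Assume $P$ is well formed and $\checkdo{P}{\tau}$. By \defref{data-oblivious-type-checking} there is some $\tau_0$ with $\inferpact{P}{\tau_0}$ and $\tau_0\sqsubseteq\tau$. Since the inference judgment is defined as $\tau_0=\mathsf{abs}(P)$, this gives $\mathsf{abs}(P)\sqsubseteq\tau$. This is exactly the stated equivalence.

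\textbf{Step 2: conformance to the inferred type.} Because $P$ is well formed, \lemref{data-abs-sound} gives $P\models\mathsf{abs}(P)$. Concretely, it supplies a conformance witness relation $\Rel_0$ containing $(P,\mathsf{abs}(P))$ that satisfies availability reflection and step matching.

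\textbf{Step 3: transport along refinement.} Apply \lemref{conformance-refinement} to $P\models\mathsf{abs}(P)$ and $\mathsf{abs}(P)\sqsubseteq\tau$. This yields $P\models\tau$, as required.

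There is no real obstacle here, since all the semantic work sits inside the two cited lemmas. The only point to check is that the fixed boundary alphabet of $\mathsf{abs}(P)$ agrees with that of $P$ and of $\tau$. Refinement is stated between types at the same fixed boundary, and conformance is taken relative to the process's alphabet. This holds because $\mathsf{abs}$ preserves the signed alphabet (Appendix~\ref{app:signed-process-syntax}), so the hypotheses of \lemref{conformance-refinement} are met.
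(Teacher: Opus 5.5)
Your proof is correct and is exactly the paper's argument: \lemref{data-abs-sound} gives $P\models\mathsf{abs}(P)$, the checking hypothesis unfolds to $\mathsf{abs}(P)\sqsubseteq\tau$, and \lemref{conformance-refinement} concludes. Your remark on alphabets is fine, though the fact that $\mathsf{abs}$ preserves the signed alphabet is recorded in Lemma~\ref{lem:abstraction-substitution}, not in the process-syntax appendix you cite.
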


Proof in \appref{sec4-type-proofs}.

We will call a process or type composition
\term{def:valid-composition}{valid} if it is well formed. For a fixed process
composition $Q_0$, a type composition is valid for $Q_0$ if it is valid and
has the same grouping and interfaces as $Q_0$. Corresponding components must
have the same boundary actions, fair-choice identifiers, and progress
identifiers, and compatibility checks use the same set $T$ of required
progress markers. The precise conditions are given in \appref{valid-compositions}.

\begin{corollary}[Compositional conformance of checked types]
\corlabel{compositional-conformance}
Let $P_1\pparop_I P_2$ be a valid process composition and $\tau_1\tpar{I}\tau_2$ a
valid type composition for it.
If $\checkdo{P_i}{\tau_i}$ for $i=1,2$, then
\[
  P_1\pparop_I P_2\models\tau_1\tpar{I}\tau_2.
\]
\end{corollary}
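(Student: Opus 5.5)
The plan is to avoid building a conformance relation for the parallel composition directly. The reason is that the conformance relation of \defref{process-type-conformance} does not track offer pairs, yet the L-Par rule computes each component's local environment from its sibling's offers. Instead I will go through inference and refinement. First, since inference is compositional by definition, $\mathsf{abs}(P_1\pparop_I P_2)=\mathsf{abs}(P_1)\tpar{I}\mathsf{abs}(P_2)$. Because $P_1\pparop_I P_2$ is a valid, hence well-formed, composition, \lemref{data-abs-sound} gives $P_1\pparop_I P_2\models\mathsf{abs}(P_1)\tpar{I}\mathsf{abs}(P_2)$. By \defref{data-oblivious-type-checking}, the hypotheses say $\mathsf{abs}(P_i)\sqsubseteq\tau_i$. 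So by \lemref{conformance-refinement} it suffices to prove a congruence lemma:
\[
\tau_1\sqsubseteq\tau_1',\ \tau_2\sqsubseteq\tau_2' \implies \tau_1\tpar{I}\tau_2\sqsubseteq\tau_1'\tpar{I}\tau_2',
\]
for valid compositions at the same boundary.

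To prove the congruence, take witnesses $\Rel_1,\Rel_2$ and define $\Rel$ as the set of pairs $(\sigma_1\tpar{I}\sigma_2,\ \sigma_1'\tpar{I}\sigma_2')$ with $(\sigma_i,\sigma_i')\in\Rel_i$. I will also include whatever residual forms the full parallel rules of \appref{app-pact-semantics} produce once a component has finished, together with the identity on $\tkw{skip}$.

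\textbf{Response.} A step of $\sigma_1\tpar{I}\sigma_2$ under $E$ is derived by L-Par, or one of its finished-side variants, from component steps $\lstep{E_j}{\sigma_j}{\alpha_j}{\sigma_j''}$, offer pairs $(O_{Cj},O_{Fj})$, $E_1\in\LocalEnv(E,I,\sigma_1,O_{C2},O_{F2})$ (and symmetrically for $E_2$), and $\Sync_I(\alpha_1,\alpha_2)$. Response for $\Rel_j$ yields steps of $\sigma_j'$ under the same $E_j$, with the same label $\alpha_j$ and the same offer pair, and with related residuals. The same offer pairs reproduce the same $\LocalEnv$ constraints. Here I need $\LocalEnv$ to depend on the component only through its boundary and interface data, which agree by validity. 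The same labels keep $\Sync_I$ and the composite label $\alpha_1\unionI\alpha_2$ unchanged. The completion clause guarantees that the side conditions $\sigma_j\neq\tkw{skip}$ and $\sigma_j''\neq\tkw{skip}$ transfer, so the same rule variant applies. I must also check that the offer pair of the composite step is computed from the component offer pairs and labels alone, so that it too coincides.

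\textbf{Availability and completion.} Availability is the same argument in the reverse direction, using the availability clauses of $\Rel_j$. Completion follows because a parallel type is $\tkw{skip}$ exactly when both sides are, and the completion clauses of $\Rel_j$ relate those sides.

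The main obstacle is the mismatch between the two relations. Availability in \defref{type-refinement} does not require the resulting residuals to be related, whereas the decomposition of a composite step of $\sigma_1'\tpar{I}\sigma_2'$ must be mirrored componentwise. If residuals do not need to be related there, the argument goes through as sketched. If the auxiliary definitions of $\Off$ for parallel composition, or $\LocalEnv$, inspect more of the component's syntax, I would first try to prove a small lemma that these functions are determined by the component's step labels, offer pairs and boundary, and use it in both clauses.
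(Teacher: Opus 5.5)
Your outer argument is the paper's own. You use $\mathsf{abs}(P_1\pparop_I P_2)=\mathsf{abs}(P_1)\tpar{I}\mathsf{abs}(P_2)$, \lemref{data-abs-sound} on the composed process, $\mathsf{abs}(P_i)\sqsubseteq\tau_i$ from checking, and then \lemref{conformance-refinement}. The congruence you isolate is the paper's Lemma~\ref{app:refinement-product-lifting}.

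The gap is in your sketch of that congruence, at survivor entry. When one side has finished, \textsf{T-ParLeft}/\textsf{T-ParRight} do not leave a product. The target is the re-signed survivor $\sigma_j''[\Omega]$, where $\Omega=\Alphabet(\sigma_1\tpar{I}\sigma_2)$ drops the $I$-actions of the child alphabet $\Omega_j$. Response therefore needs the pair $(\sigma_j''[\Omega],\sigma_j'''[\Omega])$ in your witness, with the refinement clauses holding there. However, $\Rel_j$ only relates residuals at $\Omega_j$. Re-signing changes the admissible environments. Because $D_a$, $D_B$ and every $\FutOff$ menu are intersected with the alphabet (via $\Pi_J$ at nested parallel nodes), it also changes every offer pair. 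So ``including whatever residual forms the rules produce'' does not give you a witness by itself. You need a separate lemma: refinement survives restriction to a common sub-alphabet, $\rho\sqsubseteq\theta\Rightarrow\rho[A]\sqsubseteq\theta[A]$. It is proved by showing that the decorated steps of $\rho[A]$ under $A$-admissible environments are exactly the $A$-projections of those of $\rho$. The proof relies on interface scoping and tail recursion, and it is Lemma~\ref{lem:refinement-alphabet-restriction} in the paper.

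Two smaller points. First, availability is not the mirror image of response. Your rule-selection argument applies the completion clause to the targets, but the targets supplied by the children's availability clauses are unrelated. Their $\tkw{skip}$ status can therefore differ from the abstract ones; for instance, the abstract step may use \textsf{T-Par} while both precise targets finish. You must choose the parent rule (fuse, product, or survivor) from the precise targets. You then check that all three publish the same label, using $\epsilon\unionI\alpha_2=\alpha_2$ under $\Sync_I$, and the same hidden offer pair, using that a finished side exposes $(\emptyset,\emptyset)$. Second, a parallel type is never $\tkw{skip}$; only \textsf{T-ParFuse} steps to $\tkw{skip}$. Completion therefore holds on product pairs trivially, not because a product is $\tkw{skip}$ exactly when both sides are.
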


The full statement and proof are given in
Appendix~\ref{app:compositional-conformance-proof}.

Applying this corollary at each parallel composition establishes conformance
for any fixed nesting of individually checked components. This justifies
checking implementations locally before checking compatibility of their types.

\begin{restatable}[Decidability and complexity of type inference and checking]{theorem}{TypeCheckingDecidable}
\thmlabel{type-checking-decidable}
For every well-formed process $P$ and well-formed regular type $\tau$, the judgments
$\inferpact{P}{\tau}$ and $\checkdo{P}{\tau}$ are decidable. Moreover,
computing the inferred type takes time linear in $|P|$. Given indexed
explicit finite type automata (\secref{typecompat}) of total sizes $m$ and $n$, including states
and transitions, type checking runs in time $O(mn)$.
\end{restatable}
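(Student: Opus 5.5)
The plan has three parts, one for each claim of \thmref{type-checking-decidable}: decidability and linear cost of inference, decidability of checking, and the $O(mn)$ bound.

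\textbf{Inference.} By definition, $\inferpact{P}{\tau}$ holds iff $\tau=\mathsf{abs}(P)$. The map $\mathsf{abs}$, together with $\mathsf{abs}_B$, is defined by structural recursion on $P$. Each clause visits every immediate subterm exactly once and emits a constant number of type constructors per process constructor. Examples are $m\tsym{!}\tseq\cdot$ for a send and $\tkw{skip}\tseq\cdot$ for a \pkw{let}. Expressions are discarded without being traversed into the output. Hence $\mathsf{abs}(P)$ is computed in time $O(|P|)$ and has size $O(|P|)$. Deciding $\tau=\mathsf{abs}(P)$ then reduces to computing $\mathsf{abs}(P)$ and doing a syntactic equality check, which is linear in $|P|+|\tau|$.

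\textbf{Checking.} By \defref{data-oblivious-type-checking}, $\checkdo{P}{\tau}$ holds iff $\mathsf{abs}(P)\sqsubseteq\tau$, so it suffices to decide refinement between two well-formed regular types. I would proceed in three steps.
\begin{enumerate}
\item Use the regularity assumption of \secref{type-syntax-semantics}: with shared unfoldings, each type has finitely many control residuals. Also observe that $\mathsf{abs}(P)$ is regular whenever $P$ is well formed, since guarded tail recursion is preserved by $\mathsf{abs}$.
\item Since the boundary alphabet is finite, there are finitely many one-step environments, labels and offer pairs. The step relation of each type therefore yields an explicit finite automaton whose transitions are labelled by (environment, label, offer pair).
\item \defref{type-refinement} is a coinductive simulation-like condition, so $\sqsubseteq$ is the greatest fixed point of a monotone operator on the finite set of residual pairs.
\end{enumerate}
Concretely, start from all pairs satisfying completion and the availability clause, since availability is a condition on a single pair and needs no successor relation. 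Then iteratively delete a pair $(\tau_0,\tau'_0)$ when some response step of $\tau_0$ has no matching step of $\tau'_0$ that leads into a surviving pair. The result is the largest refinement relation, and $\tau\sqsubseteq\tau'$ holds iff the initial pair survives. Correctness is the standard Knaster--Tarski argument.

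\textbf{Complexity.} I would use the finite refinement game of \appref{typeinference-optimality}, which is a safety game for the refuter on the product arena. Its positions are pairs of states, giving at most $mn$ positions. To bound the edges, index the transitions of each automaton by their (environment, label, offer pair) key. An edge of the product is a pair of transitions with equal keys, and with indexed lookup the number of such pairs is $O(mn)$. The greatest fixed point can then be computed by the linear-time attractor algorithm for safety and reachability games, which maintains for each pair and each response transition a counter of remaining matching successors, exactly as in Paige--Tarjan-style simulation algorithms. This runs in time linear in the arena, hence $O(mn)$. The availability precheck costs $O(mn)$ as well, since for each pair it compares the indexed transition keys.

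\textbf{Main obstacle.} I expect the hardest step to be the $O(mn)$ accounting rather than decidability. One must argue that the product arena really has $O(m+n)$-bounded matching per key, and not $O(mn)$ matches per position, which would give $O(m^2n^2)$ edges. This should come from the indexed representation: for a fixed key, each transition of one automaton meets only transitions with the same key. Counting per transition pair gives at most $\sum_{\text{keys}} t_1(k)\,t_2(k)\le mn$ edges. I would also check that the counter-based attractor touches each such edge a constant number of times.
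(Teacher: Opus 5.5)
Your proposal is correct and follows the paper's proof essentially step for step. It computes $\mathsf{abs}$ by a linear structural traversal, reduces $\checkdo{P}{\tau}$ to refinement between two finite automata, and obtains the $O(mn)$ bound from the refinement safety game: $|Q|\,|Q'|$ pair positions, $|\delta|\,|Q'|$ challenge counters, at most $\sum_k t_1(k)\,t_2(k)\le|\delta|\,|\delta'|$ signature-indexed response edges, and a counter-based worklist computation of the losing attractor. That same sum bound already settles the worry in your last paragraph, and it is exactly the paper's edge count.
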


Proof in \appref{sec4-type-proofs}; the refinement-game construction is given in
\appref{typeinference-optimality}.

%% file: sections/type-compatibility-checking.tex
\subsection{Type Compatibility Checking}
\seclabel{typecompat}

\termref{def:compatibility}{Compatibility} of a tuple of types can be decided by solving a finite
\term{def:two-player-graph-game}{two-player graph game}.  Player~0 represents the \termref{def:boundary-environment}{boundary environment}: in each
round it chooses which boundary actions are available now and which may become
available later in the current clock cycle.  Player~1 represents the composed
system: it resolves the internal nondeterminism of the types.  Player~0 wins
only when the \term{def:communication-objective}{communication objective} holds for every resolution chosen by
Player~1.

We derive this reduction in three steps.  First, we present executions as
infinite \termref{def:environment-annotated-log}{logs} that record both environment choices and system observations,
and state the winning condition on such logs.  Second, we compile types to
finite \term{def:type-automaton}{automata} without changing their behaviour.  Finally, we compose these
automata with finite monitors for honesty, fairness, and progress.  The result
is a finite \term{def:rabin-game}{Rabin game}.

Fix a valid type composition
\[
  \Theta_0=\tau_1\tpar{I_1}\cdots\tpar{I_{n-1}}\tau_n,
  \qquad I=I_1\cup\cdots\cup I_{n-1},
\]
and let \(T\subseteq\Pids(\Theta_0)\) be the \termref{def:progress-marker}{progress markers} of interest.  We
use \(\TypeCompat_{I,T}(\tau_1,\ldots,\tau_n)\) for its compatibility judgment,
with the fixed parenthesization and individual interfaces left implicit.

\subsubsection{Infinite Environment-Annotated Logs}
\label{subsec:tc-objective}

For types, we will use the definitions of strategies, completed logs, honesty,
and fairness from \secref{pact-safety}, with payloads erased.
For a boundary strategy \(\Strategy_\Theta\), let
\(\Logs^\omega_{\Strategy_\Theta}(\Theta_0)\) contain all completed logs of
\(\Theta_0\) generated by that strategy, over every resolution of the internal
choices. The communication objective combines honesty with clock and
component progress:
\[
\begin{aligned}
\Psi_T(\ell)\triangleq{}&
\Honest(\ell)
\ \land\ \tclock\in\Inf(\ell)
\\[-1mm]
&{}\land
\left(
  \TypeFair(\ell)
  \Longrightarrow
  \forall t\in T.\ \tmarker{t}\in\Inf(\ell)
\right).
\end{aligned}
\]
Here fairness ranges over the fair-choice identifiers of $\Theta_0$.
Every winning log is \termref{def:honest}{honest} and makes clock progress; if it is fair, it
also revisits every required progress marker.  A failure-completed log has only
\(\epsilon\)-labels from the sink onward and therefore loses the clock
condition.

\begin{proposition}[Regularity of the log objective]
\proplabel{type-objective-regular}
For the fixed valid composition $\Theta_0$, $\Psi_T$ is $\omega$-regular.
\end{proposition}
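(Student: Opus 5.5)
The plan is to show that $\Psi_T$ is recognized by a finite automaton over a finite alphabet, by expressing it as a Boolean combination of $\omega$-regular languages. The first step fixes that alphabet. Because $\Theta_0$ is a fixed valid composition, its boundary actions $\BoundaryMsgs(\Theta_0)$, internal messages, fair-choice identifiers $\EtaSet{\Theta_0}$ and progress identifiers $\Pids(\Theta_0)$ are all finite. Payloads are erased at the type level. Hence every erased one-step environment is a pair of finite subsets of a finite set, and every observation is a finite set of labels. The set $\Sigma$ of possible log letters $\langle E,\alpha\rangle$ is therefore finite. This includes the failure-completion letters $\langle E,\epsilon\rangle$ and the empty-environment letter. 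Each log is then an infinite word over $\Sigma$.

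Next I would handle the three conjuncts of $\Psi_T$ separately, since $\omega$-regular languages are closed under intersection, union and complement.

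\textbf{Honesty.} $\Honest(\ell)$ is a safety condition that relates only adjacent letters. For each $i$, either $\#\in\alpha_i$ or $\Avail(E_{i+1})\subseteq\Fut(E_i)$. A deterministic safety automaton checks this. Its state stores the previous letter, or only the pair consisting of $\Fut(E_i)$ and whether $\#\in\alpha_i$, and there are finitely many such pairs. On reading $E_{i+1}$ it moves to a rejecting sink if the inclusion fails and no tick occurred. All other states are accepting, i.e.\ it is a Büchi automaton whose only non-accepting state is the sink.

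\textbf{Clock progress.} The condition $\tclock\in\Inf(\ell)$ is a Büchi condition on the set of letters whose action component contains $\tclock$. It is recognized by a two-state deterministic Büchi automaton.

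\textbf{Fairness and progress.} $\TypeFair(\ell)$ is a finite conjunction over $\eta\in\EtaSet{\Theta_0}$ of the equivalences $\eta_{\mathsf{true}}\in\Inf(\ell)\Leftrightarrow\eta_{\mathsf{false}}\in\Inf(\ell)$. Each side of an equivalence is a Büchi language, so each equivalence, and hence $\TypeFair$, is a Boolean combination of Büchi languages. Each $\tmarker{t}\in\Inf(\ell)$ is also Büchi, and $T$ is finite. The implication $\TypeFair\Rightarrow\bigwedge_t\tmarker{t}\in\Inf$ is therefore a Boolean combination of Büchi languages, which is $\omega$-regular. Equivalently, it is a Streett/Rabin condition on a product of small deterministic automata that track the last-seen letter classes.

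Conjoining the three pieces shows that $\Psi_T$ is $\omega$-regular. For use in the Rabin game that follows, I would also note that the product is deterministic, with the acceptance condition expressed as a Rabin pair set. The only point needing care is the first step: confirming that the environment and label space is genuinely finite for erased types. In particular, it must be checked that $\Act$ has no hidden value component and that the failure-completion letters lie in the same finite alphabet. Once that is settled, the remainder is routine closure of $\omega$-regular languages under Boolean operations.
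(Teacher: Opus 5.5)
Your proposal is correct and takes essentially the same route as the paper: a finite honesty monitor that stores the preceding future set (reset at a clock), combined with recurrence conditions over the finite erased alphabet. The only difference is that the paper rewrites the clock-and-fairness part explicitly as a disjunction of a generalized B\"uchi ``live'' case and Rabin cases $\mathit{unfair}_{\eta,b}$ instead of appealing to Boolean closure, and that explicit form is what the later Rabin game reuses.
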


The full statement and proof are in \appref{type-compatibility-construction}.

The log formulation now fixes what it means to win.  To obtain an algorithm,
it remains to represent the type dynamics by a finite graph.

\subsubsection{From Types and Processes to Finite Automata}
\label{subsec:tc-automata}

The type semantics provides the edges of this graph. Its states are the
reachable residuals, with repeated recursive syntax shared to obtain a finite
control representation. This gives the canonical semantic automaton
\[
\begin{aligned}
 \Aut_\tau&=(Q_\tau,\Sigma_\tau,\delta_\tau,i_\tau,F_\tau),\\
 Q_\tau&=\{\lfloor\rho\rfloor_\tau\mid\rho\text{ is reachable from }\tau\},
 \qquad i_\tau=\lfloor\tau\rfloor_\tau.
\end{aligned}
\]
Here $\lfloor\rho\rfloor_\tau$ is the control state representing residual
$\rho$. Recursive unfoldings share binder headers, while their unfolding
steps remain explicit. The initial state represents $\tau$, and the final
states represent completed $\tkw{skip}$ residuals.

Each semantic step determines an edge that retains its environment,
observation, and current/future offers:
\[
\begin{aligned}
 \delta_\tau=\bigl\{\,
 &\atrans{\lfloor\rho\rfloor_\tau}{E_C}{E_F}{\alpha}{O_C}{O_F}
          {\lfloor\rho'\rfloor_\tau}\ \bigm|\ e=(E_C,E_F),\\
 &\lfloor\rho\rfloor_\tau\in Q_\tau,\quad
   \lstep{e}{\rho}{\alpha}{\rho'},\quad
   (O_C,O_F)\in\Off(e,\rho,\alpha,\rho')\,\bigr\}.
\end{aligned}
\]
The environment and observation become one log entry. The offer fields are
also retained because they determine the local environments when components
synchronize. Their finite alphabets, and hence $\Sigma_\tau$, are determined
by the full type syntax in \appref{type-semantics-appendix}.

The canonical definition describes the automaton through semantic steps.
To obtain a finite representation from syntax, we also give an equivalent automaton
construction from type syntax in Appendix~\ref{app:constructor-automaton} and prove its
correspondence with semantics.
The two constructions preserve the same environments, observations, and
offer pairs on their reachable steps. The constructor correspondence is also
mechanized in Lean.

\begin{lemma}[Type--automaton correspondence]
\label{lem:type-automata-bisimulation}
For every well-formed regular type $\tau$, $\Aut_\tau$ is finite.
There is a relation between residuals of $\tau$ and states of $\Aut_\tau$
containing $(\tau,i_\tau)$. At every related pair, semantic steps and automaton
edges match in both directions with the same environment, observation, and
current/future offers. The resulting residual and state remain related.
\end{lemma}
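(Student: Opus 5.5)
The proof splits into finiteness and correspondence. For finiteness, I use the regularity condition built into the type syntax. Residuals reachable from a well-formed, closed, guarded regular type are built from a finite set of syntactic pieces: subterms of $\tau$, with recursion variables replaced by pointers to their binder headers. Residual shapes include partially consumed sequences, parallel pairs $\rho_1\tpar{I}\rho_2$, and the singleton exclusive offers left behind by a conjunctive offer. I will define a closure set $\mathsf{cl}(\tau)$ by induction on the syntax. It contains every subterm of $\tau$ with binders shared. It is closed under the residual-forming operations that the semantic rules produce: sequencing with a continuation drawn from $\mathsf{cl}(\tau)$, pairing for $\tpar{I}$, and singleton $\toplus$ built from branches of a $\twedge$. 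I then show by induction on the step derivation that every rule maps $\mathsf{cl}(\tau)$ into itself. Since $\mathsf{cl}(\tau)$ is finite, $Q_\tau$ is finite. $\Sigma_\tau$ is finite as well, because environments, observations and offer pairs range over subsets of the fixed finite polarized alphabet. Hence $\delta_\tau$ is finite.

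For the correspondence, I take the relation $\{(\rho,\lfloor\rho\rfloor_\tau)\mid \rho \text{ reachable from } \tau\}$. It contains $(\tau,i_\tau)$ by definition of $i_\tau$. For the direction from semantics to automaton, take a semantic step $\lstep{e}{\rho}{\alpha}{\rho'}$ with offer pair $(O_C,O_F)$. By construction of $\delta_\tau$ it yields the edge $\atrans{\lfloor\rho\rfloor_\tau}{E_C}{E_F}{\alpha}{O_C}{O_F}{\lfloor\rho'\rfloor_\tau}$. Here $\rho'$ is reachable, so the targets are related. For the direction from automaton to semantics, any edge leaving $\lfloor\rho\rfloor_\tau$ came from a step of some reachable residual $\rho_0$ with $\lfloor\rho_0\rfloor_\tau=\lfloor\rho\rfloor_\tau$. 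I must then transfer that step to $\rho$ itself.

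That transfer is the main obstacle. The quotient $\lfloor\cdot\rfloor_\tau$ identifies distinct syntactic unfoldings of the same recursive binder, so I need a congruence lemma. If $\lfloor\rho\rfloor_\tau=\lfloor\rho_0\rfloor_\tau$, then $\rho$ and $\rho_0$ have identical step sets under every environment: the same labels, the same offer pairs, and residuals with equal classes. I would prove this by defining $\lfloor\cdot\rfloor_\tau$ as folding every $X$-unfolding back to its binder header. I then show that this folding commutes with each rule, by induction on derivations. The key case is $\mu$-unfolding, whose step stays explicit as an edge to the body; this is why the statement keeps unfolding steps visible. The offer functions $\Off$, $\Need$ and $\Resolved$ inspect only heads and branch needs, and folding preserves these, so the offer pairs agree. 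With the congruence in hand, relatedness is preserved and both matching directions hold with identical environment, observation and offers. Finally, the constructor automaton of the appendix is not needed for the statement itself; it only gives an effective syntax-directed presentation of the same finite $Q_\tau$.
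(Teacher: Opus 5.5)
Your skeleton is the paper's. You relate each reachable $\rho$ to $\lfloor\rho\rfloor_\tau$, take the forward direction from the definition of $\delta_\tau$, and reduce the backward direction to a congruence for residuals that share a control state. The gap is that neither finiteness nor the congruence is actually proved. The two well-formedness conditions that carry them, $\mathsf{TailRec}$ and $\mathsf{ClockGuarded}$, never enter your argument.

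\textbf{Finiteness.} A set closed under ``sequencing with a continuation drawn from $\mathsf{cl}(\tau)$'' and ``pairing for $\tpar{I}$'' is infinite, so ``since $\mathsf{cl}(\tau)$ is finite'' does not follow. If $\mathsf{cl}(\tau)$ is instead a finite set of position-indexed templates, then ``every rule maps $\mathsf{cl}(\tau)$ into itself'' is precisely the claim that can fail. Take $\theta=(\tclock\tseq X)\tseq a$. The clock-guarded but non-tail-recursive type $\tmu X.\theta$ reaches $(\tmu X.\theta)\tseq a$, then $((\tmu X.\theta)\tseq a)\tseq a$, and so on. These residuals carry unboundedly many pending continuations even after the binder is shared. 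What bounds residuals is that calls occur only in tail position. An unfolding therefore never wraps a call in a pending sequence continuation, a parallel sibling, or a conjunctive-branch product, and simply returns to the shared header. You also need two further facts. The rank argument shows nested products terminate, since each generated $C_{B,K}$ has one fewer conjunctive offer. And re-signing ranges over only finitely many alphabet annotations.

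\textbf{Congruence.} Your claim that $\Off$, $\Need$ and $\Resolved$ ``inspect only heads and branch needs'' is false for the future field of the offer pair. That field is built from $\FutOff$ and $\Complete$. These look through a sequence into its right continuation and through an offer into every continuation. They also treat a variable ($\FutOff(X)=\emptyset$, $\Complete(X)$ false) differently from a binder ($\FutOff(\tmu X.\theta)=\FutOff(\theta)$, $\Complete(\tmu X.\theta)=\Complete(\theta)$). Folding a substituted copy of the binder into a pointer therefore makes these syntax-directed definitions circular on the shared graph. For example, in $\tmu X.(a\tseq X)$ the completion test of the folded body depends on itself. So agreement with textual substitution has to be proved.

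The paper treats this as the central case:
\begin{itemize}
\item clock guarding stops the pre-clock menu calculation before it follows a call back to its binder;
\item tail position keeps the bound variable out of sequence-left completion tests, parallel children and conjunctive branches;
\item the marker-path and substitution equations of Lemma~\ref{lem:constructor-menus} then show that sharing and substitution give the same completion tests and menus.
\end{itemize}
Only with this, together with commutation of substitution and re-signing with the representation, does your induction on derivations go through. The explicit $\mu$-unfolding step you single out as the key case is the routine one.
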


The full statement and proof are given in
Appendix~\ref{app:semantic-automaton-correspondence}.

The conformance relation from
\defref{process-type-conformance} connects implementations to the same finite
representation.

\begin{lemma}[Process--automaton correspondence]
\label{lem:process-automaton-correspondence}
Suppose $P\models\tau$ and $\tau$ is well formed. There is a relation
between process residuals and states of $\Aut_\tau$ containing $(P,i_\tau)$.
At every related pair, each process step has a matching automaton edge with
the same erased environment and observation, and the resulting process and
state remain related. The process has a step under $E$ exactly when the
automaton state has an edge under $\erase(E)$.
\end{lemma}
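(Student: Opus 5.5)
We obtain the relation by composing the conformance witness with the type--automaton correspondence. Since $P\models\tau$, \defref{process-type-conformance} gives a relation $\Rel_1$ between processes and types with $(P,\tau)\in\Rel_1$. Lemma~\ref{lem:type-automata-bisimulation} gives a relation $\Rel_2$ between residuals of $\tau$ and states of $\Aut_\tau$ with $(\tau,i_\tau)\in\Rel_2$. We take the relational composition
\[
\Rel \triangleq \{(P_0,q)\mid \exists \tau_0.\ (P_0,\tau_0)\in\Rel_1 \land (\tau_0,q)\in\Rel_2\},
\]
which contains $(P,i_\tau)$. Before composing, we restrict $\Rel_1$ to pairs whose type component is a residual reachable from $\tau$. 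This is harmless, because step matching only produces types reachable by type steps, so the restricted relation is still a conformance witness. The restriction matters because $\Rel_2$ is only defined on such residuals.

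\textbf{Step matching.} Let $(P_0,q)\in\Rel$ be witnessed by $\tau_0$, and take a process step $\lstep{E}{P_0}{\alpha}{P_1}$. Clause~2 of conformance gives a type step $\lstep{\erase(E)}{\tau_0}{\erase(\alpha)}{\tau_1}$ with $(P_1,\tau_1)\in\Rel_1$. This step has some offer pair $(O_C,O_F)\in\Off(\erase(E),\tau_0,\erase(\alpha),\tau_1)$. We need this set to be nonempty, and will check it from the definition of $\Off$ in the appendix. Lemma~\ref{lem:type-automata-bisimulation} then yields an automaton edge from $q$ with environment $\erase(E)$, observation $\erase(\alpha)$ and the same offers, leading to some $q'$ with $(\tau_1,q')\in\Rel_2$. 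Hence $(P_1,q')\in\Rel$.

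\textbf{Exact step availability.} For the ``only if'' direction, a process step under $E$ gives an edge under $\erase(E)$ by the previous paragraph. For the ``if'' direction, an automaton edge from $q$ under $\erase(E)$ transfers back through $\Rel_2$ to a type step of $\tau_0$ under $\erase(E)$. Clause~1 of conformance (availability reflection) then gives a process step of $P_0$ under $E$. One subtlety is that the automaton's environments range over erased environments, while the lemma quantifies over concrete $E$. We need every erased environment arising at $q$ to be of the form $\erase(E)$, and it suffices to apply the argument to the given $E$.

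The main obstacle is the offer-pair bookkeeping in the step-matching case: the process step's edge must come with a well-defined offer pair, and $\Rel_2$'s backward and forward matching must be stated per offer pair rather than per label. If $\Off$ can be empty for some steps (for example, failure-like steps), we handle them separately. Alternatively, we drop offers from the claim, since the lemma only asks for matching environment and observation.
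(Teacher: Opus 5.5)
Your proposal is correct and follows the paper's proof: you compose the conformance witness with the relation $\mathcal R_\tau$ of Lemma~\ref{lem:type-automata-bisimulation}, use step matching plus the forward correspondence to obtain edges, and use the backward correspondence plus availability reflection to obtain readiness equivalence. The fact you flag, nonemptiness of $\Off$, is exactly what the paper invokes (every derivable step has a decoration), and you do need it. Your fallback of ``dropping offers'' would not work, because $\delta_\tau$ contains edges only for decorated steps. Your restriction to reachable residuals is harmless but unnecessary, since the composition already discards type residuals outside $\mathcal R_\tau$.
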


The full statement and proof are given in
Appendix~\ref{app:process-automaton-correspondence}.
For a fixed composition, \corref{compositional-conformance} establishes the
required composite conformance from the component checks.

\subsubsection{The Finite Compatibility Game}
\label{subsec:tc-algorithm}
\label{subsec:tc-product}
\label{subsec:tc-arena}
\label{subsec:tc-winning}
\label{subsec:tc-check}

Let $\Aut_I=\Aut(\Theta_0)=(S,\Sigma_I,\Delta_I,q_0,F_I)$ be the automaton
of the fixed composition.
At a state \(q\), Player~0 first chooses a well-formed boundary environment
\(e\).  Player~1 then chooses any product edge carrying \(e\), thereby
resolving all internal and type nondeterminism:
\[
  q^P\xrightarrow{e}q^e\xrightarrow{\alpha}q'^P
  \qquad\text{when}\qquad
  \atrans{q}{E_C}{E_F}{\alpha}{O_C}{O_F}{q'}\in\Delta_I,
  \quad e=(E_C,E_F).
\]
The two half-moves contribute the single log entry
\(\langle e,\alpha\rangle\).  If the chosen environment labels no outgoing
edge, the play enters an absorbing failure sink whose empty-environment
\(\epsilon\)-loop keeps the play and its log infinite.

To enforce honesty, the game remembers the preceding future set until a
clock tick resets it. A boundary choice that violates the same-cycle promise
enters the losing sink. Finite monitors track clock ticks, fair-branch
outcomes, and the progress markers in \(T\). Their acceptance condition
requires clock progress on every play and component progress on every fair
play, as specified by \(\Psi_T\).

The type automaton, boundary environment set, and monitor memories are finite.
Their product therefore gives a finite \termref{def:rabin-game}{Rabin game}
\(G_{I,T}(\tau_1,\ldots,\tau_n)\) with initial vertex \(s_0\).
The full monitor construction is given in Appendix~\ref{sec:tc-game-appendix}.

\subsubsection{Winning, Compatibility, and Consequences}
\label{subsec:tc-proofs}

\begin{definition}[Type compatibility]
\deflabel{type-compatibility}
The types \(\tau_1,\ldots,\tau_n\) are compatible with respect to \(I\) and
\(T\) when Player~0 wins the finite game from its initial vertex:
\[
\TypeCompat_{I,T}(\tau_1,\ldots,\tau_n)
\triangleq
s_0\in\TypeWinRegion\bigl(G_{I,T}(\tau_1,\ldots,\tau_n)\bigr).
\]
\end{definition}

The game definition has the intended semantic reading.

\begin{theorem}[Correctness of the game reduction]
\label{thm:type-compatibility-soundness-completeness}
For the fixed valid composition $\Theta_0$,
\[
\TypeCompat_{I,T}(\tau_1,\ldots,\tau_n)
\Longleftrightarrow
\exists\Strategy_\Theta.\
\forall\ell\in\Logs^\omega_{\Strategy_\Theta}(\Theta_0).\
\Psi_T(\ell).
\]
\end{theorem}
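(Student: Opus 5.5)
We prove the two directions by a play/log correspondence between the finite game $G_{I,T}(\tau_1,\ldots,\tau_n)$ and the semantic executions of $\Theta_0$ under boundary strategies, followed by finite-memory determinacy for Rabin games.

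\textbf{Step 1: plays versus logs.} By \lemref{type-automata-bisimulation}, applied to the composite $\Theta_0$ (whose automaton $\Aut_I$ is built from the component automata via the parallel-composition semantics), residuals of $\Theta_0$ and states of $\Aut_I$ are related with matching edges in both directions. Hence every finite history $h$ of $\Theta_0$ determines a unique play prefix of the arena, and conversely. The correspondence maps each Player~0 half-move to the chosen environment $e$, and each Player~1 half-move to the observation $\alpha$ of a semantic step under $e$. The failure-completion convention for logs, an $\epsilon$-loop with empty environments, is mirrored by the absorbing sink. So the log read off a play equals the completed log of the corresponding execution. We then show that the monitor acceptance condition on a play holds exactly when $\Psi_T$ holds on its log. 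The honesty memory stores the last $\Fut$ set, reset on $\tclock$, and sends dishonest choices to a losing sink. This matches $\Honest(\ell)$, since the adjacent-step condition is checked locally. The clock, fairness and progress monitors realize the remaining conjuncts as a Rabin condition, as in \propref{type-objective-regular}.

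\textbf{Step 2 ($\Rightarrow$).} Suppose $s_0\in\TypeWinRegion$. Rabin games are determined, and Player~0 has memoryless winning strategies on the product arena, so a positional strategy $\sigma$ wins. Define $\Strategy_\Theta(h)$ as $\sigma$ applied to the arena vertex reached by $h$. That vertex is computable from $h$, since the monitor memory is a function of the history. Every completed log in $\Logs^\omega_{\Strategy_\Theta}(\Theta_0)$ comes from a play consistent with $\sigma$, because Player~1 ranges over all product edges, which by Step 1 are exactly all semantic resolutions. That play is winning, so $\Psi_T(\ell)$ holds.

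\textbf{Step 3 ($\Leftarrow$).} Suppose $\Strategy_\Theta$ ensures $\Psi_T$ on all logs. Build a Player~0 arena strategy that maps a play prefix to $\Strategy_\Theta$ applied to the corresponding history. Arena strategies may use history, and winning with history implies $s_0$ lies in the winning region. Any consistent play yields a completed log of $\Strategy_\Theta$, so it satisfies $\Psi_T$ and is accepted.

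\textbf{Main obstacle.} The main difficulty is the exactness of Step 1 at its edges. Three points need care. First, failure sinks must agree: the arena sink triggers when no edge carries $e$, and this must coincide with $\Stuck_e$, which needs the bidirectional edge matching of the lemma. Second, dishonest choices: a dishonest strategy produces a non-honest log, which the game also rejects, so both sides lose consistently. Third, the monitor memory must be a deterministic function of the log. For this we would first prove, by induction on prefix length, that the monitor state equals a fixed function of $\erase$d history — the current future set since the last tick and the Rabin memory — and then everything else follows.
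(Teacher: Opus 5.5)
Your proposal is correct and follows essentially the paper's route. Strategies are transferred in both directions through the type--automaton correspondence, failure sinks are matched using its bidirectional edge matching, enabled dishonest choices are shown to lose on both sides, and the equivalence is closed off with the objective monitor (\lemref{type-objective-monitor}).

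Two small remarks. First, positional determinacy is not needed: any winning Player~0 strategy works if the semantic controller stores the current arena vertex, as the paper does. Second, in the backward direction you should maintain an inductively chosen matching history along the play rather than appeal to a unique one, since the paper allows distinct residuals to share a control state.
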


The full statement and proof are given in Appendix~\ref{sec:tc-game-appendix}.

The implementation result below uses conformance of the fixed process
composition to the fixed type composition. By
\corref{compositional-conformance}, checking each component against its type
is sufficient to establish this premise, for any fixed nesting of the
parallel compositions.

\begin{theorem}[Soundness of type compatibility]
\label{thm:type-compatibility-process-soundness}
Let $Q_0$ be a process composition and $\Theta_0$ a valid type composition
for it. If $Q_0\models\Theta_0$ and the component types are compatible, then
$Q_0$ is \termref{def:comm-safety}{communication safe}.
\end{theorem}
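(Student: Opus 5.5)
The plan is to chain the game-reduction theorem with strategy transfer (\lemref{strategy-transfer}), and then show that the transferred process-level strategy witnesses compatibility in the sense of \defref{compatible-processes-main}.

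\textbf{Step 1: extract a type-level strategy.} Since the component types are compatible, \thmref{type-compatibility-soundness-completeness} gives a boundary strategy $\Strategy_\Theta$ with $\Psi_T(\ell)$ for every $\ell\in\Logs^\omega_{\Strategy_\Theta}(\Theta_0)$. Because $\Psi_T$ contains $\Honest(\ell)$ as a conjunct, $\Strategy_\Theta$ is honest.

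\textbf{Step 2: transfer the strategy.} Let $\Phi=\Psi_T$, read on process logs through $\erase$. This objective is value-oblivious: honesty, clock progress, fairness and the progress markers are all stated on bare actions, $\#$, $\eta$-outcomes and $\$_t$, and these are preserved by erasure. Proposition~\ref{prop:type-objective-regular} is not needed here; only invariance under $\erase$ matters. Applying \lemref{strategy-transfer} with $Q_0\models\Theta_0$ yields an honest $\Strategy_{Q_0}$ all of whose completed logs satisfy $\Psi_T$. Liveness $\Live_{\Strategy_{Q_0},T}(Q_0)$ then follows immediately from the last two conjuncts of $\Psi_T$, because the fairness sets agree: valid compositions share their fair-choice and progress identifiers.

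\textbf{Step 3: derive sync-compatibility.} Suppose some reachable prefix $h$ had $\Stuck_{\Strategy_{Q_0}(h)}(\mathsf{last}(h))$. By the completion convention, $h$ extends to a failure-completed log in $\Logs^\omega_{\Strategy_{Q_0}}(Q_0)$, and that log has only finitely many $\#$. This contradicts the clock conjunct of $\Psi_T$. Hence no reachable prefix is stuck, so $\SyncCompat$ holds, and together with Step 2 we obtain $\ProcJoin_{\Strategy_{Q_0},T}$ under an honest strategy, as required.

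\textbf{Main obstacle.} The hard part is justifying that stuckness is faithfully visible in the transferred logs. We must check that the failure completion really is among the logs over which strategy transfer guarantees $\Phi$, and that a stuck process state corresponds to a stuck type state. For the latter, I would use the availability-reflection clause of \defref{process-type-conformance}, or equivalently the ``exactly when'' clause of Lemma~\ref{lem:process-automaton-correspondence}. These show that the process has a step under $E$ iff the related automaton state has an edge under $\erase(E)$, so failure sinks are matched on both sides. If \lemref{strategy-transfer} is stated only for ordinary infinite logs, I would instead inline its construction. That construction simulates $\Strategy_\Theta$ on erased histories through the conformance relation, and under it a stuck process prefix maps to a type prefix where the chosen environment labels no edge. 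That type prefix is a losing failure play, which contradicts the winning property of $\Strategy_\Theta$.
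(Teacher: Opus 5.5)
Your proposal is correct and follows the paper's proof essentially step for step. The game reduction yields a winning, hence honest, type strategy; \lemref{strategy-transfer} applied to $\Phi(\ell)=\Psi_T(\erase(\ell))$ yields an honest process strategy whose logs give liveness; and a stuck reachable prefix would produce a failure completion that contradicts clock recurrence. The concern in your ``main obstacle'' is already handled inside the paper's proof of \lemref{strategy-transfer}, which matches stuck process prefixes to stuck type prefixes via availability reflection, so failure completions are covered by the transferred objective.
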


The full statement and proof are given in
Appendix~\ref{app:type-compatibility-process-soundness}.

\begin{theorem}[Decidability and complexity of compatibility]
\thmlabel{type-compatibility-decidable}
Suppose the monitored game has \(N\) vertices, \(M\) edges, and
\(K=1+2|\Etas(\Theta_0)|\) Rabin pairs.  Then \(\TypeCompat_{I,T}(\tau_1,\ldots,\tau_n)\) is decidable in
\(\widetilde{O}(MN(K!)^{1+o(1)})\) time.
\end{theorem}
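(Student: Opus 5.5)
The plan is to invoke a known algorithm for solving finite Rabin games and check that its running time, instantiated with the parameters of the monitored game, matches the bound in the statement.

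\textbf{Step 1: the game is finite and effectively constructible.} The type automaton $\Aut_I$ is finite by \lemref{type-automata-bisimulation}, lifted to the composition through the product construction. The set of well-formed boundary environments is finite, since it is a subset of pairs of subsets of the finite polarized boundary alphabet. The honesty, clock, fairness and progress monitors have finite memory. Hence $G_{I,T}(\tau_1,\ldots,\tau_n)$ is an explicitly constructible finite graph with $N$ vertices and $M$ edges.

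\textbf{Step 2: count the Rabin pairs.} I would translate the acceptance condition of $\Psi_T$ into Rabin form.
\begin{itemize}
\item Honesty is a safety condition. It is already enforced by routing dishonest choices to the losing sink, so it needs no pair.
\item The combined condition ``$\tclock$ infinitely often and, if the play is fair, every $t\in T$ infinitely often'' becomes a disjunction over the unfair witnesses.
\item For each $\eta$ there are two unfair witnesses: $\eta_{\mathsf{true}}$ occurs infinitely often while $\eta_{\mathsf{false}}$ occurs finitely often, and symmetrically. Each witness is one Rabin pair: avoid $\eta_{\mathsf{false}}$-marked vertices eventually, visit $\eta_{\mathsf{true}}$ infinitely often, and still require clock progress.
\item The fair case adds one more pair. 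It requires $\tclock$ and all markers in $T$ infinitely often, with no finiteness constraint.
\end{itemize}
The fair-case pair needs a generalized Büchi condition folded into a single Büchi set. I would do this with a round-robin counter over $T\cup\{\tclock\}$ inside the progress monitor. The counter is already part of the monitored game, so the vertex count $N$ already accounts for it. Each unfair-witness pair must also carry the clock requirement, and the same counter supplies it: use the counter's wrap-around set intersected with $\eta_{\mathsf{true}}$ tracking. This may need a small extra flag already in the monitor. The total is $K=1+2|\Etas(\Theta_0)|$ pairs, matching the statement.

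\textbf{Step 3: solve the game.} I would cite a known Rabin-game algorithm, for instance the progress-measure / fixpoint algorithm of Piterman--Pnueli, or the recent quasi-optimal algorithms. These decide the winner from $s_0$ in $\widetilde{O}(MN(K!)^{1+o(1)})$ time on a game with $N$ vertices, $M$ edges and $K$ pairs. Correctness with respect to the semantic definition of compatibility is then immediate from \defref{type-compatibility} and \thmref{type-compatibility-soundness-completeness}. Decidability follows because every step is effective.

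The main obstacle is Step 2. The difficulty is encoding the conjunction of clock progress, the fairness implication and all progress markers within exactly $1+2|\Etas|$ pairs without blowing up $N$ beyond what the monitors already include. I would first try to have each pair's acceptance set be the round-robin counter's ``completed round'' vertices, restricted to $\tclock$-only progress for the unfair pairs. Only if that fails would I accept an extra monitor bit and argue that it was already counted in $N$.
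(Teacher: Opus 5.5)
Your proposal is essentially the paper's proof: the monitored game is an effectively constructed finite Rabin game whose $N$ and $M$ already absorb the automaton product, honesty memory and round-robin monitors, and the bound is read off a standard Rabin-game solver. Note, however, that $\widetilde{O}(MN(K!)^{1+o(1)})$ is the bound of the colourful-universal-tree algorithm~\cite{majumdarRabinGamesColourful2024}, whereas Piterman--Pnueli only gives $O(MN^{K+1}K\cdot K!)$; also, your Step~2 is not really an obstacle, since the $K$ pairs $(F_d,B_d)$ are part of the given game, built from a separate round-robin monitor per disjunct with $\mathsf{Req}(\eta,b)=\{\tclock,\eta_b\}$ and $\mathsf{Forbid}(\eta,b)=\{\eta_{\bar b}\}$ (your fallback), while your first-choice reuse of the $T\cup\{\tclock\}$ counter for the unfair pairs would wrongly demand progress markers on unfair plays.
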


The proof is given in Appendix~\ref{app:type-compatibility-complexity}.

%% file: sections/type-optimality.tex
\subsection{Optimality}
\seclabel{type-optimality}

\codename{} allows both inference and checking of types.
Checking is useful when the designer has a candidate type in mind,
while inference is useful when the designer wants to avoid the burden of
writing a type. In either case, if the compatibility check fails,
the designer might search for a different valid type assignment.
Our optimality result shows that this search is unnecessary for types inferred
by \codename{}: if any tuple of types accepted by the checker establishes
\termref{def:compatibility}{compatibility}, then the tuple of inferred types does
as well.

This guarantee is relative to the precision of the checker.
Our checker is \termref{def:data-oblivious-type-checking}{data-oblivious}: it records the possible
communication actions of each process but does not use invariants about data
values to eliminate branches. The following example illustrates the resulting
loss of precision. Let $P$ be a conditional that sends $m$ when $b$ is true
and sends $n$ otherwise, then terminates.
Suppose that a value invariant establishes that $b$ is always true whenever
this conditional is reached. Then $P$ can only send $m$, and
$\tau=\tkw{skip}\tseq m\tsym{!}$ is a valid semantic type for $P$: that is,
$P\models\tau$. The initial $\tkw{skip}$ matches the conditional's silent
selection step. This precision can affect compatibility with a peer that
does not support $n$.

Establishing $P \models\tkw{skip}\tseq m\tsym{!}$, however, requires reasoning about the value of
$b$. The \termref{def:data-oblivious-type-checking}{data-oblivious} checker does not perform such reasoning and must retain
both branches, yielding the inferred type $\mathsf{abs}(P)=m\tsym{!}\tplus n\tsym{!}.$
Thus, a value-sensitive type may establish \termref{def:compatibility}{compatibility} when the inferred
\termref{def:data-oblivious-type-checking}{data-oblivious} type cannot. Our result does not claim optimality with respect
to such value-sensitive types. Rather, it establishes optimality among exactly
the types admitted by the \termref{def:data-oblivious-type-checking}{data-oblivious} judgment $\checkdo{P}{\tau}$.

We now make this claim precise. By \lemref{data-abs-sound}, $P \models \mathsf{abs}(P)$,
so the inferred type is sound for $P$. Moreover, by definition of type
checking, $\checkdo{P}{\tau}$ implies that $\mathsf{abs}(P)\sqsubseteq\tau$.
Thus, every type accepted by the checker is a further abstraction of the
inferred type. Consequently, $\mathsf{abs}(P)$ is the most precise type
available within our \termref{def:data-oblivious-abstraction}{data-oblivious abstraction}.

To lift this per-module precision result to \termref{def:compatibility}{compatibility},
we need one monotonicity property. Since checked types may be coarser than inferred types,
we show that refining every component of a compatible tuple preserves
\termref{def:compatibility}{compatibility}.

\begin{restatable}[Compatibility is monotone under type refinement]{lemma}{TypeCompatRefinementMonotone}
\lemlabel{type-compat-refinement-monotone}
Consider two valid type compositions for the same system.
If $\tau_i\sqsubseteq\tau_i'$ for every $i$, then
\[
\TypeCompat_{I,T}(\tau_1',\ldots,\tau_n')
\Longrightarrow
\TypeCompat_{I,T}(\tau_1,\ldots,\tau_n).
\]
\end{restatable}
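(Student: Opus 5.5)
By \thmref{type-compatibility-soundness-completeness}, it suffices to work with the semantic characterization of compatibility. The hypothesis then gives an honest boundary strategy $\Strategy'$ for $\Theta_0'=\tau_1'\tpar{I_1}\cdots\tpar{I_{n-1}}\tau_n'$ such that every completed log satisfies $\Psi_T$. The goal is an honest strategy $\Strategy$ for $\Theta_0=\tau_1\tpar{I_1}\cdots\tpar{I_{n-1}}\tau_n$ with the same property.

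The first step lifts componentwise refinement to the composition, in the spirit of \corref{compositional-conformance}: we show $\Theta_0\sqsubseteq\Theta_0'$. Take the relation $\{(\rho_1\tpar{I_1}\cdots\rho_n,\ \rho_1'\tpar{I_1}\cdots\rho_n')\}$, where each pair $(\rho_i,\rho_i')$ lies in the refinement witness $\Rel_i$. The key point is that refinement preserves offer pairs exactly, in both the availability and response clauses. Consequently, for each component the sets $\LocalEnv(E,I,\cdot,O_C,O_F)$ coincide on both sides. Any solution of the mutual constraints in the parallel rule for one composition is therefore also a solution for the other, with the same labels $\alpha_i$, so $\Sync_I$ and the combined label $\alpha_1\unionI\alpha_2$ agree. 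Completion (clause 3) makes the side conditions on $\pkw{skip}$ and the terminated-branch rules agree as well. Because offers are preserved, the combined offer pair is also identical, which lets the argument go through by induction over the fixed nesting.

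The second step transfers the strategy along this composite refinement, simulating plays of $\Theta_0$ inside plays of $\Theta_0'$. Along a history $h$ of $\Theta_0$, we maintain a shadow history $h'$ of $\Theta_0'$ with the same environments and labels whose last residuals are related. We define $\Strategy(h)=\Strategy'(h')$. Each step of $\Theta_0$ under that environment has, by the response clause, a matching step of $\Theta_0'$ that keeps the residuals related, so $h'$ extends consistently. If $\Theta_0$ is stuck under the chosen environment, the availability clause implies $\Theta_0'$ is stuck too: any step of $\Theta_0'$ would yield a step of $\Theta_0$. Thus a failure completion on the $\Theta_0$ side matches a failure completion on the $\Theta_0'$ side. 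Every completed log of $\Theta_0$ under $\Strategy$ is then literally a completed log of $\Theta_0'$ under $\Strategy'$, with identical environment and action components. Since $\Psi_T$ (honesty, clock progress, fairness, progress markers) depends only on the log, it holds. Honesty of $\Strategy$ follows the same way, because its logs are a subset of those of $\Strategy'$. Applying the theorem in the reverse direction yields $\TypeCompat_{I,T}(\tau_1,\ldots,\tau_n)$.

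The main obstacle is the first step. The parallel rule defines local environments by a fixed-point-style mutual constraint through $\Off$ and $\LocalEnv$, so one must check that componentwise refinement transports every solution of those constraints. This is exactly why the definition demands equal offer pairs rather than mere label matching. I would prove a single-level lemma for $\tpar{I}$ by examining each parallel rule from the appendix, then iterate it over the nesting. A secondary subtlety is that the shadow history depends on the choice of matching step, so $h'$ should be defined by a fixed choice function to make $\Strategy$ a genuine function of $h$.
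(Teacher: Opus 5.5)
Your overall route matches the paper's. You lift the componentwise refinements to $\Theta_0\sqsubseteq\Theta_0'$. You then answer each precise history with $\Strategy'$ applied to a shadow abstract history: response extends the shadow, availability aligns failure completions, and the game reduction is used in both directions. The second step is sound as written. It is worth adding that validity gives $\Etas(\Theta_0)=\Etas(\Theta_0')$ and the same $T$, so $\Psi_T$ is literally the same objective on both sides. The gap is in the first step, which you rightly flag as the hard part, but your candidate relation is not a refinement witness.

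\emph{Survivor entry.} Your relation contains only products of componentwise-related residuals. The survivor rules T-ParLeft and T-ParRight do not return such a product. They return the re-signed residual $\rho_j'[\Omega]$, where $\Omega$ is the parent alphabet; this drops the interface polarities the child carried, and in a nested composition it re-signs every inner child via $\Pi_I$. The response clause requires the target pair $(\rho_j'[\Omega],\theta_j'[\Omega])$ to be related again. $\Rel_j$ cannot supply this, because it relates types at the child's alphabet, while after re-signing the admissible environments change and every current and future offer is projected.

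\emph{The missing lemma.} You need that refinement is stable under common alphabet restriction: $\rho\sqsubseteq\theta$ implies $\rho[A]\sqsubseteq\theta[A]$ for $A$ within their common alphabet (Lemma~\ref{lem:refinement-alphabet-restriction}). The candidate relation must then be enlarged by these re-signed pairs, and by the pair of signed $\tkw{skip}$s produced by fusion. This lemma is not a formality. One must show that the decorated steps of $\rho[A]$ are exactly the projections of those of $\rho$. At nested parallel nodes this uses interface scoping: removed actions lie outside inner interfaces, so the $\LocalEnv$ equations are unchanged. At recursion it uses tail position, so that substitution commutes with re-signing.

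\emph{A smaller slip.} In the availability direction, appealing to the completion clause does not determine which parallel rule applies, because availability does not relate the child targets. You have to choose fusion, product, or survivor according to the actual precise targets, and check that each publishes the same label and hidden offer pair. With these two repairs your argument coincides with the paper's.
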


A coarser type admits at least all the behaviours admitted by a more precise
one. Hence, if an environment strategy wins against the coarser tuple, the
strategy transfers to the refined tuple by maintaining a matching abstract
history. The full proof in \appref{type-optimality} establishes the lemma
as the type-to-type instance of the strategy-transfer principle used in
\lemref{strategy-transfer}.

We can now state the optimality result.

\Needspace{8\baselineskip}
\begin{restatable}[Optimality of inferred types]{theorem}{TypeInferenceOptimality}
\thmlabel{type-inference-optimality}
Fix a valid process composition of $P_1,\ldots,P_n$.
The inferred types are compatible exactly when some valid type composition
for it is compatible and all its component types are accepted by the checker:
\begin{align*}
\begin{array}{rcl}
\TypeCompat_{I,T}
  (\mathsf{abs}(P_1),\ldots,\mathsf{abs}(P_n))
  &
  \iff
  &
\exists \tau_1,\ldots,\tau_n.\
\left(\bigwedge_i \checkdo{P_i}{\tau_i}\right)
\wedge
\TypeCompat_{I,T}(\tau_1,\ldots,\tau_n).
\end{array}
\end{align*}
\end{restatable}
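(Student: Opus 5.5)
The proof splits into the two directions of the biconditional. Both reduce to results already stated: the reflexivity of the precision preorder, and \lemref{type-compat-refinement-monotone}.

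For the forward direction ($\Rightarrow$), we take $\tau_i := \mathsf{abs}(P_i)$ as the witness. Two things must be checked.
\begin{itemize}
\item First, $\checkdo{P_i}{\mathsf{abs}(P_i)}$ must hold. By \defref{data-oblivious-type-checking} this amounts to $\mathsf{abs}(P_i)\sqsubseteq\mathsf{abs}(P_i)$, which follows from reflexivity of refinement. The witness relation for \defref{type-refinement} is the identity relation on residuals. Availability and response hold trivially, since each step is matched by itself with the same offer pair, and completion is immediate.
\item Second, the tuple $(\mathsf{abs}(P_1),\ldots,\mathsf{abs}(P_n))$ must be a valid type composition for the fixed process composition. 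This needs $\mathsf{abs}$ to preserve grouping, interfaces, boundary actions, fair-choice identifiers and progress identifiers. That follows by structural induction on the inference clauses. Each clause maps $\pparop_I$ to $\tpar{I}$, message polarities to the same polarities, $\pkw{fair-if}_\eta$ to $\tfairplus{\eta}$, and $\pkw{end}_t$ to $\tmarker{t}$. I will also cite the well-formedness conditions in \appref{valid-compositions} to confirm that the inferred types are well-formed and regular. This is where the well-formedness of the $P_i$, such as guarded and tail recursion, is used.
\end{itemize}

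For the backward direction ($\Leftarrow$), suppose some $\tau_1,\ldots,\tau_n$ satisfy $\checkdo{P_i}{\tau_i}$ and $\TypeCompat_{I,T}(\tau_1,\ldots,\tau_n)$. By \defref{data-oblivious-type-checking}, each $\mathsf{abs}(P_i)\sqsubseteq\tau_i$. Both tuples are valid type compositions for the same process composition: the $\tau_i$ by hypothesis, and the inferred tuple by the argument in the forward direction. So \lemref{type-compat-refinement-monotone} applies with $\tau_i' := \tau_i$ and refined types $\mathsf{abs}(P_i)$. It yields $\TypeCompat_{I,T}(\mathsf{abs}(P_1),\ldots,\mathsf{abs}(P_n))$ directly.

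The only step with real content is the validity of the inferred composition. The risk is that validity in \appref{valid-compositions} demands that corresponding components carry \emph{identical} alphabets, which the inferred types should have by construction. If refinement were not obviously reflexive with respect to offer sets, which are determined by the step itself, I would argue that $\Off$ is a function of the step, so identical steps give identical offer pairs. Everything else is bookkeeping around \lemref{type-compat-refinement-monotone}, which carries the game-theoretic weight.
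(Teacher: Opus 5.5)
Your proposal is correct and follows the paper's proof. The forward direction takes $\tau_i=\mathsf{abs}(P_i)$, using reflexivity of $\sqsubseteq$ and the fact that abstraction preserves alphabets and identifiers; the backward direction unpacks $\checkdo{P_i}{\tau_i}$ into $\mathsf{abs}(P_i)\sqsubseteq\tau_i$ and applies \lemref{type-compat-refinement-monotone} with the inferred types on the precise side. Two small points. The part you single out as having real content is lighter than you expect, since $\mathsf{WFProc}(P)$ is defined to include $\mathsf{TypeWF}(\mathsf{abs}(P))$ and alphabet preservation is recorded in \lemref{abstraction-substitution}. Your fallback about $\Off$ being a function of the step is unnecessary, and in fact $\Off$ is set-valued in general, because the identity relation already matches each decorated step with itself.
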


For the forward direction, choose $\tau_i=\mathsf{abs}(P_i)$ for every $i$;
these types are accepted by reflexivity of $\sqsubseteq$. For the backward
direction, each $\checkdo{P_i}{\tau_i}$ gives
$\mathsf{abs}(P_i)\sqsubseteq\tau_i$.
\termref{def:compatibility}{Compatibility} of the checked tuple implies compatibility of the
inferred tuple by \lemref{type-compat-refinement-monotone}.
The full proof is given in \appref{type-optimality}.
The contrapositive gives the operational consequence of optimality.

\begin{restatable}[Failure rules out data-oblivious proofs]{corollary}{TypeInferenceFailure}
\corlabel{type-inference-failure}
For a valid process composition, if the inferred types are not compatible,
then there is no valid type composition for it that is both compatible and
accepted by the checker for every component.
\end{restatable}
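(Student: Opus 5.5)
The corollary is the contrapositive of the backward direction of \thmref{type-inference-optimality}, so the plan is simply to instantiate that theorem and negate. Fix a valid process composition of $P_1,\ldots,P_n$ with interfaces $I$ and progress set $T$. Assume $\neg\TypeCompat_{I,T}(\mathsf{abs}(P_1),\ldots,\mathsf{abs}(P_n))$. For contradiction, suppose there is a valid type composition $\tau_1,\ldots,\tau_n$ for it with $\checkdo{P_i}{\tau_i}$ for every $i$ and $\TypeCompat_{I,T}(\tau_1,\ldots,\tau_n)$. This tuple witnesses the right-hand side of the equivalence in \thmref{type-inference-optimality}, so the left-hand side holds, contradicting the assumption.

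If the argument should avoid the full theorem and only its ingredients, the chain is as follows. By \defref{data-oblivious-type-checking}, each $\checkdo{P_i}{\tau_i}$ unfolds to $\mathsf{abs}(P_i)\sqsubseteq\tau_i$. Then \lemref{type-compat-refinement-monotone} turns compatibility of $(\tau_i)_i$ into compatibility of $(\mathsf{abs}(P_i))_i$.

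The only point needing care is the side condition of \lemref{type-compat-refinement-monotone}. That lemma requires \emph{both} tuples to be valid type compositions for the same system. Validity of $(\tau_i)_i$ is assumed. Validity of the inferred tuple for the fixed process composition must be checked separately. I would discharge it by observing that $\mathsf{abs}$ preserves grouping and interfaces structurally, since $\mathsf{abs}(P\pparop_I Q)=\mathsf{abs}(P)\tpar{I}\mathsf{abs}(Q)$. It also preserves each component's boundary actions, fair-choice identifiers and progress identifiers, since erasure only drops payloads. Well-formedness follows from well-formedness of the processes, the same fact underlying \lemref{data-abs-sound}. This is exactly what the proof of \thmref{type-inference-optimality} already relies on, so no new argument is expected.
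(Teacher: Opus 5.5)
Your proposal is correct and matches the paper's proof, which is exactly the contrapositive of the backward implication of \thmref{type-inference-optimality}. Your unfolded chain through \defref{data-oblivious-type-checking} and \lemref{type-compat-refinement-monotone} mirrors how the paper proves that backward direction. The validity of the inferred tuple that you flag follows directly from the paper's definition of process well-formedness, which includes $\mathsf{TypeWF}(\mathsf{abs}(P))$.
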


Therefore, when \termref{def:compatibility}{compatibility} fails for the inferred types, choosing different
types accepted by the \termref{def:data-oblivious-type-checking}{data-oblivious} checker cannot make the check succeed.
One must instead change the processes or their composition, or use
value-sensitive reasoning beyond the abstraction considered here.

%% file: sections/summary-types.tex
\subsection{Summary Types}
\seclabel{summary-types}

When an open composition is compatible, we can summarize its boundary behaviour in a single \termref{def:summary-type}{summary type}. We construct this summary by composing the component type automata on their shared interfaces and hiding the internal message observations. The resulting automaton retains the boundary observations and the current and future offers needed for subsequent composition.

However, the resulting summary automaton may not have a natural representation in the type syntax. Consider two types composed on $\{m_1\}$:
$\sigma_1=\toplus(\tkw{skip},m_1\tsym{!}:\tau_1,m_2\tsym{!}:\tau_2)$
and
$\sigma_2=\toplus(\tkw{skip},m_1\tsym{?}:\tkw{skip}).$
Once the offers are resolved, the composition can synchronize on $m_1$ and select the continuation $\tau_1$, even when the boundary does not offer $m_2\tsym{?}$. Hiding this internal exchange leaves a silent step. Alternatively, if $m_2\tsym{?}$ is immediately available, the composition can send $m_2$ and select $\tau_2$, while synchronization on $m_1$ remains possible. Thus, the summary must represent a silent step that competes with an enabled boundary communication, a behaviour that does not have a natural representation in the current type syntax.
In these cases, the checker can use the summary automaton directly to check compatibility with additional components.

%% file: sections/evaluation.tex
\section{Evaluation}
\seclabel{evaluation}

We evaluate \codename{} through five research questions.

\noindent \textbf{RQ1: Utility.} Can incremental \termref{def:compatibility}{compatibility checking} expose communication errors during design?\\
\textbf{RQ2: Expressiveness.} Can \codename{} express realistic \termref{def:rtl}{RTL} protocols and prove \termref{def:compatibility}{compatibility}?\\
\textbf{RQ3: Checking time.} How does \codename{} \termref{def:compatibility}{compatibility} checking compare with SystemVerilog post-hoc verification?\\
\textbf{RQ4: Circuit overhead.} What area and timing overhead does \codename{} introduce after synthesis?\\
\textbf{RQ5: Annotation burden.} How much code and annotation do \codename{} designs need compared with SV designs and their verification harnesses?

We answer these questions with a prototype built by extending the OCaml
implementation of the \anvil{} compiler~\cite{anvilhdl2026}. The extension adds
the \termref{def:offer}{offer constructs} \prog{offer} and \prog{xoffer}, \termref{def:fairness}{fairness} annotations on
\prog{if}-expressions and \prog{xoffer}, and the \termref{def:type-system}{type system} with its
automata-based \termref{def:compatibility}{compatibility} checker. The backend also adds
ready-valid handshakes, where the receiver asserts ready first, alongside
\anvil{}'s valid-ack handshakes, where the sender asserts valid first.
We reduce code generation overhead by eliminating unnecessary control
state in the generated RTL (\appref{appendix-evaluation}).

\subsection{Utility on Real-World Communication Errors}
\seclabel{evaluation-utility}
An engineer can use \codename{} in two complementary ways. With an explicit
boundary type, the compiler checks both the implementation and the components
already present against the documented protocol. Without such a type, inference
summarizes the implementation's observable communication behaviour. In either
case, checking is incremental: an incomplete subsystem can be checked before
its environment is fully implemented.

To evaluate this workflow, we reproduced communication errors and documented
interface constraints from open-source \termref{def:rtl}{RTL}. We focus here on three AXI cases:
the filed PULPino handshake-dependency issue, an address/data ordering dependency in
PULP AXI's test infrastructure, and the corresponding dependency in its
demultiplexer~\cite{pulpino182,pulpaxi-changelog,axidemux-src}. The complete
set of cases is given in \tabref{utility-cases}.

Each case retains upstream RTL at the boundary relevant to its ordering
constraint. In the table, Filed denotes an upstream issue; Repository fix
reproduces behaviour corrected upstream. Documented uses a stated component
constraint, while Class applies a documented error pattern to the component's
boundary.

\begin{table}[t]
\centering
\captionsetup{skip=3pt}
\scriptsize
\caption{Completed AXI communication-error case studies. The setting shows whether
the mismatch is visible pairwise or requires an open environment;
\badmark{} means \codename{} reports the reproduced composition as incompatible.}
\label{tab:utility-cases}
\renewcommand{\arraystretch}{1.12}
\setlength{\tabcolsep}{2pt}
\begin{tabular}{@{}>{\raggedright\arraybackslash}p{2.0cm} >{\raggedright\arraybackslash}p{1.55cm} >{\raggedright\arraybackslash}p{1.1cm} >{\raggedright\arraybackslash}p{6.9cm} c@{}}
\toprule
Component & Source & Setting & Reproduced mismatch & \shortstack{Compat.\\result} \\
\midrule
AXI debug BIU~\cite{pulpino182}
  & Filed & Pairwise
  & The sender waits for address-channel readiness before offering write data, coupling two independent channels.
  & \badmark \\
AXI test infrastructure~\cite{pulpaxi-changelog}
  & Repository fix & Pairwise
  & The implementation requires the address before write data, although the protocol allows either order or both together.
  & \badmark \\
AXI demux~\cite{pulpaxi-changelog,axidemux-src}
  & Documented & Open
  & Routing and buffering restrict write-channel message order.
  & \badmark \\
AXI per-ID monitor~\cite{pulp-axi}
  & Class & Open
  & An ordering requirement for each transaction ID conflicts with interleavings of independent channel events.
  & \badmark \\
AXI demux/mux pipeline~\cite{axidemux-src,aximux-src}
  & Class & Open
  & Composing router stages imposes a message order that no individual interface specifies.
  & \badmark \\
AXI mux~\cite{aximux-src}
  & Documented & Open
  & Arbitration restricts the order of clients' write-channel messages.
  & \badmark \\
\bottomrule
\end{tabular}
\end{table}

All three cases expose the same underlying mismatch. An AXI write transaction
has an address channel $\mathit{AW}$ and a data channel $\mathit{W}$ whose
handshakes are independent~\cite{AMBAAXIProtocol2023}. A peer may therefore
produce either $\mathit{AW}?;\mathit{W}?$ or
$\mathit{W}?;\mathit{AW}?$. The implementations we studied accept only one
order. In the router, this serialization is not a single local mistake: it
emerges from the composition of routing logic and internal FIFOs.

Incremental checking localizes this mismatch before the complete router is
assembled. We modelled a client that produces $\mathit{W}$ before
$\mathit{AW}$ and composed the router pipeline one participant at a time.
\codename{} reported incompatibility after only 4 of the 10 participants were
present: the client and 3 of the router's 5 pipeline components. This partial
composition has no winning boundary strategy, so adding the remaining
participants cannot make it compatible.

\subsection{Expressiveness}
Our benchmarks are end-to-end systems adapted from open-source IPs: PULP common
cells~\cite{commoncellsgithub}, PULP AXI Lite routers~\cite{pulp-axi}, a PULP
ACE cache-coherence controller~\cite{pulp-ace}, and a Chip Alliance TileLink
component~\cite{chipalliance-tilelink}. All are SystemVerilog except TileLink,
which is Chisel. For each, we implemented a \codename{} version and checked
\termref{def:compatibility}{compatibility}. Since these systems are intended to
be \termref{def:comm-safety}{communication safe}, they test whether the
\termref{def:type-system}{type system} can describe realistic protocols and
establish their compatibility. During implementation, these types also served
as executable documentation and helped maintain
\termref{def:process-type-conformance}{conformance}.

The benchmarks range from FIFOs, arbiters, stream FIFOs, and spill registers
to larger AXI Lite designs. The AXI Demux and Mux require less direct type
abstractions.

\begin{table*}[t]
\centering
\captionsetup{skip=3pt}
\scriptsize
\caption{Compatibility-checking time and circuit synthesis results.
Part.\ is the number of participants; Ratio is \codename{} area divided by SV
area; Cycle Lat.\ Overhead counts additional clock cycles relative to SV.
\okmark{} denotes a successful proof/check or a met timing target;
\unkmark{} denotes an unproved property or timeout.
Overall gives success counts, geometric-mean areas, and mean cycle-latency overhead.}
\label{tab:eval-results}
\label{tab:compatibility-runtime}
\label{tab:synthesis-summary}
\renewcommand{\arraystretch}{1.05}
\begin{minipage}[t]{0.50\textwidth}
\centering
\setlength{\tabcolsep}{1.4pt}
\caption*{Compatibility checking}
\begin{tabular}{l c *{2}{>{\columncolor{svbg}}c} r *{3}{>{\columncolor{pactbg}}c}}
\toprule
\multirow{2}{*}{Benchmark} & \multirow{2}{*}{Part.}
  & \multicolumn{3}{c}{\svhdr{SV Model Checking}}
  & \multicolumn{3}{c}{\pacthdr{\codename{}}} \\
\cmidrule(lr){3-5}\cmidrule(lr){6-8}
& & Safety & Live & Time (s) & Result & 1 thr. (s) & 8 thr. (s) \\
\midrule
FIFO         & 3  & \okmark  & \okmark  & 3354.69 & \cellcolor{pactbg}\okmark & \cellcolor{pactbg}0.02 & \cellcolor{pactbg}0.03 \\
Stream FIFO  & 3  & \unkmark & \unkmark & 7200    & \cellcolor{pactbg}\okmark & \cellcolor{pactbg}0.02 & \cellcolor{pactbg}0.03 \\
Spill reg.   & 3  & \unkmark & \unkmark & 7200    & \cellcolor{pactbg}\okmark & \cellcolor{pactbg}0.02 & \cellcolor{pactbg}0.03 \\
Arbiter      & 6  & \okmark  & \okmark  & 3306.85 & \cellcolor{pactbg}\okmark & \cellcolor{pactbg}0.11 & \cellcolor{pactbg}0.06 \\
ACE          & 9  & \unkmark & \unkmark & 7200    & \cellcolor{pactbg}\okmark & \cellcolor{pactbg}130.69 & \cellcolor{pactbg}23.45 \\
TileLink     & 10 & \okmark  & \okmark  & 2839.34 & \cellcolor{pactbg}\okmark & \cellcolor{pactbg}9.44 & \cellcolor{pactbg}1.87 \\
AXI demux R  & 8  & \unkmark & \unkmark & 313.82  & \cellcolor{pactbg}\okmark & \cellcolor{pactbg}32.49 & \cellcolor{pactbg}7.53 \\
AXI demux W  & 10 & \unkmark & \unkmark & 515.96  & \cellcolor{pactbg}\okmark & \cellcolor{pactbg}627.15 & \cellcolor{pactbg}96.51 \\
AXI mux R    & 9  & \unkmark & \unkmark & 7200    & \cellcolor{pactbg}\okmark & \cellcolor{pactbg}112.08 & \cellcolor{pactbg}17.90 \\
AXI mux W    & 11 & \unkmark & \unkmark & 7200    & \cellcolor{pactbg}\okmark & \cellcolor{pactbg}1243.97 & \cellcolor{pactbg}181.06 \\
\midrule
Overall      & -- & 3/10 & 3/10 & -- & \cellcolor{pactbg}10/10 & \cellcolor{pactbg}-- & \cellcolor{pactbg}-- \\
\bottomrule
\end{tabular}
\end{minipage}
\hfill
\begin{minipage}[t]{0.47\textwidth}
\centering
\setlength{\tabcolsep}{1.5pt}
\caption*{Synthesis results}
\begin{tabular}{l *{2}{>{\columncolor{pactbg}}r} r c c}
\toprule
\multirow{2}{*}{Benchmark} & \multicolumn{3}{c}{Area ($\mu\mathrm{m}^2$)} & \multirow{2}{*}{\shortstack{Meets\\Timing}} & \multirow{2}{*}{\shortstack{Cycle Lat.\\Overhead}} \\
\cmidrule(lr){2-4}
& \pacthdr{\codename{}} & \svhdr{SV} & Ratio & & \\
\midrule
FIFO         & \cellcolor{pactbg}11,659 & \cellcolor{svbg}11,616 & 1.00$\times$ & \okmark & 0 \\
Stream FIFO  & \cellcolor{pactbg}11,671 & \cellcolor{svbg}11,486 & 1.02$\times$ & \okmark & 0 \\
Spill reg.   & \cellcolor{pactbg}3,363  & \cellcolor{svbg}2,770  & 1.21$\times$ & \okmark & 0 \\
Arbiter      & \cellcolor{pactbg}1,045  & \cellcolor{svbg}1,386  & 0.75$\times$ & \okmark & 0 \\
ACE          & \cellcolor{pactbg}4,792  & \cellcolor{svbg}8,423  & 0.57$\times$ & \okmark & 0 \\
TileLink     & \cellcolor{pactbg}1,942  & \cellcolor{svbg}2,280  & 0.85$\times$ & \okmark & 0 \\
AXI demux R  & \cellcolor{pactbg}2,948  & \cellcolor{svbg}1,978  & 1.49$\times$ & \okmark & 0 \\
AXI demux W  & \cellcolor{pactbg}4,462  & \cellcolor{svbg}2,726  & 1.64$\times$ & \okmark & 0 \\
AXI mux R    & \cellcolor{pactbg}3,490  & \cellcolor{svbg}2,603  & 1.34$\times$ & \okmark & 0 \\
AXI mux W    & \cellcolor{pactbg}6,036  & \cellcolor{svbg}4,217  & 1.43$\times$ & \okmark & 0 \\
\midrule
Overall      & \cellcolor{pactbg}4,065  & \cellcolor{svbg}3,771  & 1.08$\times$ & 10/10 & 0 \\
\bottomrule
\end{tabular}
\end{minipage}
\end{table*}

\subsection{Compatibility Checking}
The \codename{} checking times in \tabref{eval-results} are medians of five
runs on the full type composition, using one and eight threads. The SystemVerilog
baseline uses SymbiYosys proof mode with a maximum k-induction depth of 500.

The SV harness checks \termref{def:sync-safety}{synchronization safety} and
encodes bounded response, such as receiving a response within $N$ cycles,
as a safety assertion. The response bound $N$ is distinct from the induction
depth: a successful induction proves the assertion for all executions, rather
than only for the first 500 cycles. We use bounded-response assertions because
the Yosys \prog{read\_slang} frontend used for these designs does not support
the \prog{eventually} constructs needed to express recurring progress directly.
The SV \emph{Live} column therefore reports bounded response, whereas
\codename{} checks the recurring progress condition of \secref{pact-safety}.
Safety and bounded response are
checked separately for the two result columns; the reported time is for an
additional run containing all assertions.

SV proves all assertions for three benchmarks: the FIFO, the arbiter,
and TileLink, each taking 47 to 56 minutes. For the other seven, induction
leaves the properties unproved at depth 500 or the run times out after two
hours.

\codename{} proves \termref{def:compatibility}{compatibility} for all ten benchmarks. With one thread,
six of ten checks finish within 33 seconds, and AXI mux read and ACE within
about two minutes. AXI demux write takes 10 minutes, and the largest case, AXI
mux write, takes 21 minutes. With eight threads, these four checks run 5 to 7
times faster, and every check finishes within about three minutes.

Data-obliviousness is an important reason for this difference. The SV model
checker operates on the RTL state space, which includes payload data and local
computations. \codename{} exposes communication through message passing and
offers, and its inferred types erase data while retaining communication-relevant
control and timing. Model checking an explicit data abstraction could also
reduce the SV state space, but would additionally require establishing that
the RTL implementation refines that abstraction.

\subsection{Circuit Overhead}
To measure circuit overhead, we synthesized each \codename{} component and its
SystemVerilog counterpart with Yosys at a 10ns clock period (100MHz), using the
SkyWater 130nm high-density standard-cell library. A design meets timing if its worst-case setup slack at this period is positive.

\myparagraph{Timing and latency}
All designs meet the timing target, and \codename{} adds no clock-cycle latency.
These results support our design goal of preserving cycle-accurate behaviour without adding cycles to enforce \termref{def:comm-safety}{communication safety}.

\myparagraph{Area}
Across ten benchmarks, \codename{} incurs a geometric-mean area overhead of 7.8\% over SV.
This overhead comes from control state introduced during code generation.
Unlike the SV designs, which implement much of their behaviour using combinational logic, the \codename{} designs are written as communicating processes that compile into state machines spanning multiple clock cycles.

For example, in the AXI routers, the \codename{} designs declare 25--50 flip-flops, comparable to the 26--60 flip-flops in the SV routers.
Code generation adds another 21--43 control flip-flops per router, of which 15--27 implement one-cycle delays at clock-cycle boundaries in process threads.
In contrast, the SV routers have no corresponding control state machines: apart from their transaction FIFOs, they retain only a one-bit lock in the write routers and arbiter priority in the muxes.

\subsection{Annotation Burden}
\seclabel{evaluation-burden}
\begin{wraptable}{r}{0.41\textwidth}
\centering
\captionsetup{skip=3pt}
\scriptsize
\caption{Benchmark size in LOC.}
\label{tab:benchmark-size}
\renewcommand{\arraystretch}{1.15}
\setlength{\tabcolsep}{1.75pt}
\begin{tabular}{l *{3}{>{\columncolor{svbg}}r >{\columncolor{pactbg}}r}}
\toprule
\multirow{2}{*}{Benchmark} & \multicolumn{2}{c}{Design} & \multicolumn{2}{c}{Full} & \multicolumn{2}{c}{Verification} \\
\cmidrule(lr){2-3}\cmidrule(lr){4-5}\cmidrule(lr){6-7}
& \svhdr{SV} & \pacthdr{\codename{}} & \svhdr{SV} & \pacthdr{\codename{}} & \svhdr{SV} & \pacthdr{\codename{}} \\
\midrule
FIFO         & 99 & 49 & 345 & 484 & 152 & 19 \\
Stream FIFO  & 67 & 38 & 221 & 183 & 100 & 3 \\
Spill reg.   & 91 & 83 & 213 & 182 & 82 & 7 \\
Arbiter      & 244 & 31 & 375 & 128 & 63 & 9 \\
ACE          & 557 & 65 & 1,807 & 402 & 350 & 39 \\
TileLink   & 437 & 93 & 1,198 & 355 & 174 & 58 \\
AXI demux R  & 316 & 205 & 794 & 416 & 117 & 62 \\
AXI demux W  & 384 & 278 & 907 & 528 & 145 & 79 \\
AXI mux R    & 554 & 238 & 1,045 & 445 & 117 & 60 \\
AXI mux W    & 621 & 300 & 1,158 & 567 & 145 & 78 \\
\midrule
Total        & 3,370 & 1,380 & 8,063 & 3,690 & 1,445 & 414 \\
\bottomrule
\end{tabular}
\end{wraptable}
\tabref{benchmark-size} compares benchmark sizes in lines of code (LOC), excluding blank lines and comments.
\emph{Design} counts the synthesized components: the open-source SV IPs and their cycle-accurate \codename{} translations.
\emph{Full} counts all sources compiled for cycle-level equivalence checking: the main component and those it communicates with.
\emph{Verification} counts the code needed to prove \termref{def:comm-safety}{communication safety}.
For SV, this includes communication-safety assertions and the signals exposed by testbench stubs.
\codename{} infers types from implementations and requires no type annotations;
we declare types for all participants for comparison.
For TileLink, we count the Chisel sources for its crossbar and arbiter,
excluding channel bundles, rather than generated Verilog.

Every \codename{} design is smaller than its SV baseline, totaling 1,380 LOC compared to 3,370 LOC across ten benchmarks.
Even with explicitly declared types for all participants, \codename{} requires only 414 LOC for verification, compared to 1,445 LOC for the SV harness.

%% file: sections/related.tex
\section{Related Work}
\seclabel{related}

\myparagraph{HDLs for Safety}{%
The common \termref{def:rtl}{RTL} design flow uses low-level
\termref{def:hdl}{HDLs} such as SystemVerilog and VHDL before applying
post-design verification.
Prior safety-focused \termref{def:hdl}{HDLs} improve this baseline, but mostly
target low-level properties of wire use or register assignments, such as
combinational loop freedom~\cite{christensenWireSortsLanguage2021,basejump},
structural hazard freedom~\cite{filamentHDL,jangModularHardwareDesign2024},
information flow control~\cite{liCaissonHardwareDescription2011,zhangHardwareDesignLanguage2015},
and timing safety~\cite{anvilhdl2026}.
Some also focus on specific use cases such as pipelined CPU
designs~\cite{zagieboyloPDLHighlevelHardware2022,skarmanSpadeExpressionBasedHDL2023}.
Others, including high-level synthesis languages~\cite{BluespecSystemVerilogReference2008,Systemcorg},
provide safety by abstracting away details such as clock cycles and register assignments, or by
relying on latency-insensitive interfaces~\cite{hanShakeFlowFunctionalHardware2023,jangModularHardwareDesign2024}
and run-time scheduling logic
that incur significant overhead~\cite{BluespecSystemVerilogReference2008}.
Filament~\cite{filamentHDL} has a timing type system similar to Anvil's,
with static timelines for operations that rule out synchronization hazards
by restricting expressiveness, making control flow awkward to express.
\codename{} instead provides a richer communication model than \anvil{}
and a type system that ensures \termref{def:comm-safety}{communication safety}.

}

\myparagraph{RTL Verification}{%
Post-design RTL verification checks completed RTL against user-written
properties, often expressed in temporal logic or SystemVerilog assertions
(SVA)~\cite{pnueliTemporalLogicPrograms1977,antoniocauIntervalTemporalLogic2023,moszkowskiExecutingTemporalLogic2000,moszkowskiTemporalLogicMultilevel1985,18002017IEEEStandard2018}.
These checks use testing, model checking, fuzzing, or static
analysis~\cite{Cocotb,IEEEStandardUniversal2020,YosysHQYosys2025,JasperFormalVerification,witharanaSurveyAssertionbasedHardware2022,oleksenkoEnterExitPage2025,rfuzz,fuzz-hw-sw,efficient-fuzz,milesan,cweat,qihe,chisa}.
They are general, but testing and bounded methods can miss bugs, while
unbounded model checking faces state explosion. \codename{} instead checks
communication protocols in the \termref{def:hdl}{HDL}, before lowering to RTL,
and targets \termref{def:comm-safety}{communication safety} rather than
arbitrary temporal properties.
}

\myparagraph{Behavioural Types}{%
Graph types~\cite{mullerStaticPredictionParallel2022,rinaldiPipelinesBeyondGraph2024}
support deadlock and cost analysis of asynchronous parallel computations with
futures.
\codename{} instead checks \termref{def:comm-safety}{communication safety} of synchronous, cycle-accurate
protocols. Cost analysis is a possible direction for future work.
Session types describe communication protocols using
top-down~\cite{topdown1,topdown2,topdown3} or
bottom-up~\cite{bottomup2,bottomUpHybrid,bottomupkmc,bottomupsmc} approaches.
Unlike top-down approaches, where global contracts dictate local behaviour, \codename{}
follows the bottom-up approach, prioritizing modularity over efficiency~\cite{topdownBottomUpcomplexity}.
Bottom-up multiparty compatibility, from Synchronous Multiparty Compatibility
(SMC)~\cite{bottomupsmc} to asynchronous, bounded-buffer, and hybrid
variants~\cite{bottomupkmc,multipartysesiontype,bottomUpHybrid}, targets closed
automata systems over channels and lacks discrete time. \codename{} instead
uses a shared clock, cycle-accurate types, and open compatibility via
environment strategies and summary types, enabling incremental composition.
Temporal and timed session types~\cite{tempsessiontypes,timedsession} add time
to mostly asynchronous settings; \codename{} uses one global tick. Hardware
applications remain rare: H-Calculus~\cite{hcalculus} models cycle-level timing,
but targets HLS design-space exploration rather than designer-facing RTL.
}

%% file: sections/conclusions.tex
\section{Conclusions and Future Work}
\seclabel{conclusions}

This paper presented \codename{}, a typed extension of Anvil for specifying and
modularly verifying \termref{def:comm-safety}{communication safety}.
Types in \codename{}  capture clock-cycle latency and control
flow. We characterize these types as automata and use that
characterization to give a sound and complete \termref{def:compatibility}{compatibility}
algorithm.
The same type system also supports incremental design. In the
\termref{def:open}{open setting}, designers can check partial systems before
every component is available, and the checker produces the
\termref{def:summary-type}{summary of the behaviour} needed at the boundary. Finally,
\codename{} infers types optimally within its
\termref{def:data-oblivious-abstraction}{data-oblivious abstraction}: if some
checker-accepted type assignment can establish \termref{def:compatibility}{compatibility}, then
the inferred types can establish \termref{def:compatibility}{compatibility} as
well.
We implemented \codename{} as an extension of the Anvil compiler and evaluated
it by designing and checking open-source hardware IPs.

Our work opens several avenues for future research. On the design side,
incremental checking would benefit from more expressive process and type
representations. One example is value-sensitive refinement, which could recover
precision lost by our data-oblivious abstraction (\secref{type-optimality}). Another is a
finer-grained model of environment strategies, which could make open-system
checking more flexible.
On the practical side, we plan to improve both analysis and compilation.
Symbolic representations could make type-compatibility checking faster. Better
code generation for the Anvil backend could also reduce the area overhead
observed.

%% file: sections/acknowledgement.tex
\begin{acks}
We thank the anonymous reviewers for their feedback.
This research is supported by a Singapore Ministry of Education (MOE)
Tier~2 grant MOE-T2EP20124-0007.
Additionally, Umang Mathur is supported by a Ministry of Defence Temasek PYP grant.
\end{acks}

%% file: sections/data-availability.tex
\section*{Data Availability Statement}
\label{data-availability}

The \codename{} compiler, benchmarks, and Lean 4 mechanization of the proofs are archived on \mbox{Zenodo}~\cite{pactArtifact2026} and available at \url{https://github.com/AnvilHDL}. The archived artefact provides a stable version of the implementation and proofs presented in this paper, while we will continue to actively maintain and improve these artefacts.

\section*{Disclosure of Use of AI in Research}

The authors developed and are responsible for the research ideas, results, and proofs presented in this paper.
AI tools were used to polish the writing and identify potential mistakes.
The design of the compiler extensions and the implementation of the benchmarks were carried out
primarily by the authors, with AI tools providing interactive assistance during debugging.
AI tools also assisted in the Lean 4 formalization, guided by handwritten proofs developed by the authors.
Finally, AI tools were used to help prepare the reproducible artefact and automate benchmark evaluation and data collection.

%% file: sections/appendix.tex
\input{sections/appendix/pact}

\input{sections/appendix/types}

\input{sections/appendix/evaluation}

%% file: sections/appendix/pact.tex
\section{Complete Syntax, Semantics, and Safety of \codename{} Designs}
\applabel{appendix-pact}

We first give the process syntax with explicit message alphabets, followed by
its operational rules and auxiliary functions. We then give the reachability
relation and the environment restriction used for closed systems.

\input{sections/appendix/pact/syntax-of-pact}
\input{sections/appendix/pact/semantics-of-pact}

\input{sections/appendix/pact/defining-communication-safety}

%% file: sections/appendix/pact/syntax-of-pact.tex
\subsection{Complete Syntax of \codename{}}
\label{app:signed-process-syntax}

The process constructors are those of \figref{abstract-syntax}. To retain
message declarations as a process executes, each process also carries a
finite alphabet of polarized message actions:
\[
 P=(\Omega,R),\qquad
 \Omega\subseteq\mathsf{ActM}
   =\{m\psym{!},m\psym{?}\mid m\in\Messages\}.
\]
Here $\Omega$ is the boundary alphabet, denoted by $\Alphabet(P)$, and $R$
is a process body built from the productions for $P$ in
\figref{abstract-syntax}. Each process child in $R$ is itself an annotated
pair. We call such a pair a signed process. These alphabets
retain declarations that an executed residual may no longer use, and each
contains at most one polarity of a message. Examples and rules suppress
annotations when the surrounding process determines them.

\myparagraph{Alphabet well-formedness}
At a parallel composition, the parent exposes only boundary actions, while
the children also expose complementary actions on their shared interface.
Their alphabets must therefore satisfy
\[
\begin{aligned}
\mathsf{ParCompat}_I(\Omega,\Omega_1,\Omega_2)\quad\Longleftrightarrow\quad
 &\restr{\Omega}{I}=\emptyset,
 \quad \restr{\Omega_1}{I}=\overline{\restr{\Omega_2}{I}},\\
 &\{\mathsf{Msg}(a)\mid a\in\restr{\Omega_1}{I}\}
 =\{\mathsf{Msg}(a)\mid a\in\restr{\Omega_2}{I}\}=I,\\
 &\Omega_1\setminus I=\Omega=\Omega_2\setminus I.
\end{aligned}
\]
Here $\Omega_i=\Alphabet(P_i)$ is the alphabet of child $P_i$.
Complementation exchanges $m\psym{!}$ with $m\psym{?}$ and acts pointwise on sets.
Restriction $\restr{A}{I}$ keeps the actions whose message names lie in $I$,
while $A\setminus I$ removes both polarities of those messages. The first
line keeps interface actions internal to the parent and gives them
complementary polarities in the children. The second requires both children
to declare every message in $I$. The last gives both children the complete
parent alphabet outside $I$, since both children receive the complete parent
environment (\appref{aux-appendix}). Every other constructor gives each child the same alphabet as its
parent. In particular, every default and continuation of an offer carries the
complete surrounding alphabet. A process satisfies $\mathsf{AlphaWF}(P)$
when its annotations satisfy these conditions recursively, including the
single-polarity condition.

\myparagraph{Re-signing a residual}
When parallel composition finishes on one side, the surviving process takes
the parent's alphabet. This replacement propagates through the residual,
while preserving the interface actions of nested parallel children:

\[
\begin{aligned}
 \Pi_I(\Omega',\Omega_i)&=\restr{\Omega_i}{I}\cup\Omega',\\
 (\Omega,P_1\pparop_I P_2)[\Omega']
   &=(\Omega',P_1[\Pi_I(\Omega',\Alphabet(P_1))]
        \pparop_I P_2[\Pi_I(\Omega',\Alphabet(P_2))]).
\end{aligned}
\]
Here $P[\Omega']$ replaces the alphabet annotation; it is not substitution
for a process variable. The function $\Pi_I$ combines the incoming alphabet
with a child's existing actions on $I$. For a non-parallel body $R$, the
replacement passes the incoming alphabet directly to its children:

\[
\begin{aligned}
 (\Omega,R)[\Omega']&=(\Omega',R[\Omega'])\quad(R\text{ non-parallel}),\\
 R[\Omega']&=R\quad
   (R\in\{\pkw{skip},X,\pkw{cycle},\pkw{end}_t\}),\\
 (a\ \pkw{in}\ P)[\Omega']&=a\ \pkw{in}\ P[\Omega'],\\
 (P_1\pseq P_2)[\Omega']&=P_1[\Omega']\pseq P_2[\Omega'],\\
 (\pmu X.P)[\Omega']&=\pmu X.P[\Omega'],\\
 (\pkw{let}\ x=e\ \pkw{in}\ P)[\Omega']
   &=\pkw{let}\ x=e\ \pkw{in}\ P[\Omega'],\\
 (\pkw{if}\ e\ \pkw{then}\ P_1\ \pkw{else}\ P_2)[\Omega']
   &=\pkw{if}\ e\ \pkw{then}\ P_1[\Omega']\ \pkw{else}\ P_2[\Omega'],\\
 (\pkw{fair-if}_\eta\ e\ \pkw{then}\ P_1\ \pkw{else}\ P_2)[\Omega']
   &=\pkw{fair-if}_\eta\ e\ \pkw{then}\ P_1[\Omega']\ \pkw{else}\ P_2[\Omega'],\\
 (Z\ B\ \pkw{else}\ P_0)[\Omega']&=Z\ B[\Omega']\ \pkw{else}\ P_0[\Omega'],\\
 B[\Omega']&=\{a_i:P_i[\Omega']\mid a_i:P_i\in B\}.
\end{aligned}
\]
Here $Z$ ranges over $\pkw{xoffer}$ and $\pkw{offer}$. Re-signing changes no
expression, head action, or binder. Hence an old interface head may remain in
the syntax after completion, but an admissible environment can no longer
enable it.

%% file: sections/appendix/pact/semantics-of-pact.tex
\subsection{Complete Operational Semantics of \codename{}}
\applabel{app-pact-semantics}

The rules below complete the process semantics of \secref{pact-semantics},
including the message alphabets omitted there.

\myparagraph{Preliminaries}
Let $\Messages$ be the finite set of all messages, $\Pids$ be the finite set of
progress identifiers, and $\Etas$ be the finite set of fair-choice identifiers.
Every occurrence of $\pkw{end}_t$ in a process is annotated with some
$t\in\Pids$. Similarly, every occurrence of a fair construct, such as
$\pkw{fair-if}_\eta$,
is annotated with some
$\eta\in\Etas$. Distinct occurrences must use distinct annotations.

Let $\PidSet{P}$ and $\EtaSet{P}$ collect the progress and fair-choice
identifiers occurring in $P$. The boundary actions $\BoundaryMsgs(P)$ come
from its declared alphabet; the internal message names $\IntMsgs(P)$ come
from its parallel interfaces. Their definitions, together with the functions
on branch heads and offers, are given in \appref{aux-appendix}.
The polarities $!$ and $?$ denote sending and receiving from the point of
view of $P$.

Transition labels come from $\ActSet{P}$ as defined in
\secref{pact-semantics}. We will continue to omit braces around singleton
labels and use $\epsilon$ for the empty label.

\nparagraph{Environment}
The offers available to the environment have polarities complementary to
the boundary actions of the process:
\[
\BEnv_P = \{m\psym{?}\mid m\psym{!}\in\BoundaryMsgs(P)\}\cup\{m\psym{!}v\mid m\psym{?}\in\BoundaryMsgs(P)\}.
\]
A one-step environment selects immediate and future offers from this set:
\[
\begin{aligned}
E &= (U,F)\quad\text{where}\quad
U\in 2^{\BEnv_P}\cup\{\{\#\}\},\quad
F\in 2^{\Bare(\BEnv_P)},\\
&\qquad U=\{\#\}\Rightarrow F=\emptyset,\quad
U\neq\{\#\}\Rightarrow \Bare(U)\cap F=\emptyset.
\end{aligned}
\]
Here $U$ contains immediate offers and $F$ contains bare future offers.
The function $\Bare$ erases values, so $\Bare(\{m\psym{!}v\})=\{m\psym{!}\}$
and $\Bare(\{m\psym{?}v\})=\{m\psym{?}\}$. When $U\neq\{\#\}$, the immediate and future
sets are disjoint after erasing values. When $U=\{\#\}$, the clock boundary
has arrived and there is no later point in the same cycle. Let $\Imm(E)=U$
and $\Fut(E)=F$ denote the two components. History-sensitive strategies over
execution prefixes are defined in
\secref{pact-safety}.

\nparagraph{Semantic Rules}
The semantic rules for processes are defined below. The auxiliary functions
used in the rules are defined in full afterward.
Let $\Env{P}$ denote this set of environments. Every rule has the premise
$E\in\Env{P}$ for its annotated source $P$; each component step likewise uses an
environment in its child's environment set. When a rule retains a sequence,
parallel composition, or offer, that constructor keeps its source alphabet.
When a rule selects a child, the child keeps its annotation. The displayed
rules omit these environment premises and alphabet annotations.

\begin{center}
\begin{semrulebox}[fontupper=\scriptsize]
\begin{adjustbox}{max width=.99\linewidth}
$\begin{array}{@{}ccc@{}}
\dfrac{m\psym{?}\in\Imm(E) \quad e\Downarrow v}
      {\lstep{E}{\pkw{send}\ m\ e\ \pkw{in}\ P}{m\psym{!}v}{P}}
\rulelabel{L-Send}
&
\dfrac{m\psym{!}v\in\Imm(E)}
      {\lstep{E}{x=\pkw{recv}\ m\ \pkw{in}\ P}{m\psym{?}v}{P[x\mapsto v]}}
\rulelabel{L-Recv}
&
\dfrac{\Imm(E)=\{\#\}}
      {\lstep{E}{\pkw{cycle}}{\#}{\pkw{skip}}}
\rulelabel{L-Cycle}
\\[1.15em]
\dfrac{m\psym{?}\notin\Imm(E) \quad \Imm(E)\neq\{\#\}}
      {\lstep{E}{\pkw{send}\ m\ e\ \pkw{in}\ P}{\epsilon}{\pkw{send}\ m\ e\ \pkw{in}\ P}}
\rulelabel{L-SendWait}
&
\dfrac{\forall v.\ m\psym{!}v\notin\Imm(E) \quad \Imm(E)\neq\{\#\}}
      {\lstep{E}{x=\pkw{recv}\ m\ \pkw{in}\ P}{\epsilon}{x=\pkw{recv}\ m\ \pkw{in}\ P}}
\rulelabel{L-RecvWait}
&
\dfrac{\Imm(E)\neq\{\#\}}
      {\lstep{E}{\pkw{cycle}}{\epsilon}{\pkw{cycle}}}
\rulelabel{L-CycleWait}
\\[1.15em]
\dfrac{\lstep{E}{P_1}{\alpha}{\pkw{skip}}}
      {\lstep{E}{P_1\pseq P_2}{\alpha}{P_2}}
\rulelabel{L-SeqDone}
&
\dfrac{\Imm(E)\neq\{\#\}}
      {\lstep{E}{\pkw{skip}}{\epsilon}{\pkw{skip}}}
\rulelabel{L-Skip}
&
\dfrac{\Imm(E)\neq\{\#\}}
      {\lstep{E}{\pkw{end}_{t}}{\$_t}{\pkw{skip}}}
\rulelabel{L-End}
\\[1.15em]
\multicolumn{3}{c}{
\dfrac{\lstep{E}{P_1}{\alpha}{P_1'} \quad P_1'\neq\pkw{skip}}
      {\lstep{E}{P_1\pseq P_2}{\alpha}{P_1'\pseq P_2}}
\rulelabel{L-Seq}
}
\end{array}$
\end{adjustbox}
\end{semrulebox}
\end{center}

\begin{center}
\begin{semrulebox}[fontupper=\scriptsize]
\begin{adjustbox}{max width=.99\linewidth}
$\begin{array}{@{}ccc@{}}
\dfrac{e\Downarrow\pkw{true} \quad \Imm(E)\neq\{\#\}}
      {\lstep{E}{\pkw{if}\ e\ \pkw{then}\ P_1\ \pkw{else}\ P_2}{\epsilon}{P_1}}
\rulelabel{L-IfTrue}
&
\dfrac{e\Downarrow\pkw{false} \quad \Imm(E)\neq\{\#\}}
      {\lstep{E}{\pkw{if}\ e\ \pkw{then}\ P_1\ \pkw{else}\ P_2}{\epsilon}{P_2}}
\rulelabel{L-IfFalse}
&
\dfrac{e\Downarrow v \quad \Imm(E)\neq\{\#\}}
      {\lstep{E}{\pkw{let}\ x=e\ \pkw{in}\ P}{\epsilon}{P[x\mapsto v]}}
\rulelabel{L-Let}
\\[1.15em]
\dfrac{e\Downarrow\pkw{true} \quad \Imm(E)\neq\{\#\}}
      {\lstep{E}{\pkw{fair-if}_\eta\ e\ \pkw{then}\ P_t\ \pkw{else}\ P_f}
        {\eta_{\mathsf{true}}}{P_t}}
\rulelabel{L-FairTrue}
&
\dfrac{e\Downarrow\pkw{false} \quad \Imm(E)\neq\{\#\}}
      {\lstep{E}{\pkw{fair-if}_\eta\ e\ \pkw{then}\ P_t\ \pkw{else}\ P_f}
        {\eta_{\mathsf{false}}}{P_f}}
\rulelabel{L-FairFalse}
&
\dfrac{\Imm(E)\neq\{\#\}}
      {\lstep{E}{\pmu X.P}{\epsilon}{P[X\mapsto\pmu X.P]}}
\rulelabel{L-Rec}
\end{array}$
\end{adjustbox}
\end{semrulebox}
\end{center}

The sequence rules distinguish a step that finishes the first component
from one that retains it. The wait rules require a non-clock environment;
once the clock is offered, waiting cannot postpone the cycle boundary.

\begin{center}
\begin{semrulebox}[fontupper=\scriptsize]
\begin{adjustbox}{max width=.99\linewidth}
$\begin{array}{@{}c@{}}
\dfrac{\pkw{send}\ m\ e:P_i\in B \quad m\psym{?}\in\Imm(E) \quad e\Downarrow v \quad \Resolved(E,B)}
      {\lstep{E}{\pkw{xoffer}\ B\ \pkw{else}\ P_0}{m\psym{!}v}{P_i}}
\rulelabel{L-XOfferSend}
\\[1.15em]
\dfrac{x=\pkw{recv}\ m:P_i\in B \quad m\psym{!}v\in\Imm(E) \quad \Resolved(E,B)}
      {\lstep{E}{\pkw{xoffer}\ B\ \pkw{else}\ P_0}{m\psym{?}v}{P_i[x\mapsto v]}}
\rulelabel{L-XOfferRecv}
\\[1.15em]
\dfrac{\Need(B)\cap\Bare(\Imm(E))=\emptyset \quad \Resolved(E,B) \quad \Imm(E)\neq\{\#\}}
      {\lstep{E}{\pkw{xoffer}\ B\ \pkw{else}\ P_0}{\epsilon}{P_0}}
\rulelabel{L-XOfferElse}
\qquad
\dfrac{\neg\,\Resolved(E,B) \quad \Imm(E)\neq\{\#\}}
      {\lstep{E}{\pkw{xoffer}\ B\ \pkw{else}\ P_0}{\epsilon}{\pkw{xoffer}\ B\ \pkw{else}\ P_0}}
\rulelabel{L-XOfferWait}
\end{array}$
\end{adjustbox}
\end{semrulebox}
\end{center}

\nparagraph{Offers}
The exclusive-offer rules use $\Resolved(E,B)$ from \appref{aux-appendix}.
For conjunctive offers, the rules below use the set of immediately enabled
heads to construct the combined label and residual.
The continuations share no messages: the ownership
requirements of \appref{type-semantics-appendix} separate their message names.
They therefore run in parallel with empty interfaces.

\begin{center}
\begin{semrulebox}[fontupper=\scriptsize]
\begin{adjustbox}{max width=.99\linewidth}
$\begin{array}{@{}c@{}}
\dfrac{\begin{gathered}
 K=K(E,B)\neq\emptyset\\
 \forall i\in K.\ \lstep{E}{a_i\ \pkw{in}\ P_i}{\beta_i(v_i)}{P_i^{\mathbf v}}
 \end{gathered}}
      {\lstep{E}{\pkw{offer}\ B\ \pkw{else}\ P_0}{\beta_K(\mathbf v)}{C^P_{B,K,\mathbf v}}}
\rulelabel{L-OfferSome}
\\[1.15em]
\dfrac{\Need(B)\cap\Bare(\Imm(E))=\emptyset \quad \Resolved(E,B) \quad \Imm(E)\neq\{\#\}}
      {\lstep{E}{\pkw{offer}\ B\ \pkw{else}\ P_0}{\epsilon}{P_0}}
\rulelabel{L-OfferElse}
\qquad
\dfrac{\Need(B)\cap\Bare(\Imm(E))=\emptyset \quad \neg\,\Resolved(E,B) \quad \Imm(E)\neq\{\#\}}
      {\lstep{E}{\pkw{offer}\ B\ \pkw{else}\ P_0}{\epsilon}{\pkw{offer}\ B\ \pkw{else}\ P_0}}
\rulelabel{L-OfferWait}
\end{array}$
\end{adjustbox}
\end{semrulebox}
\end{center}

For $B=\{a_i:P_i\mid i\in J_B\}$, where $J_B=\{1,\ldots,n\}$, define
\[
 K(E,B)=\{i\in J_B\mid\Need(a_i)\in\Bare(\Imm(E))\}.
\]
For each $i\in K$, a send evaluates its expression to $v_i$, and a receive
chooses a payload $v_i$ supplied by $\Imm(E)$. Let
$\mathbf v=(v_i)_{i\in K}$ denote these payloads. The prefix step in \rulelabel{L-OfferSome}
checks this communication, whose label and continuation are
\[
\begin{aligned}
 \beta_i(v_i)&=\begin{cases}
                 m\psym{!}v_i,&a_i=\pkw{send}\ m\ e,\\
                 m\psym{?}v_i,&a_i=(x=\pkw{recv}\ m),
                \end{cases}\\
 P_i^{\mathbf v}&=\begin{cases}
                 P_i,&a_i=\pkw{send}\ m\ e,\\
                 P_i[x\mapsto v_i],&a_i=(x=\pkw{recv}\ m).
                \end{cases}
\end{aligned}
\]
The combined label is $\beta_K(\mathbf v)=\{\beta_i(v_i)\mid i\in K\}$.
The target retains the continuation of every branch that communicated:
\[
\begin{aligned}
 R_i^{K,\mathbf v}&=\begin{cases}
 P_i^{\mathbf v}[\Omega],&i\in K,\\
 \pkw{xoffer}\ a_i:P_i[\Omega]\ \pkw{else}\ \pkw{skip},&i\notin K,
 \end{cases}\\
 C^P_{B,K,\mathbf v}&=R_1^{K,\mathbf v}\pparop_{\emptyset}\cdots
                       \pparop_{\emptyset}R_n^{K,\mathbf v}.
\end{aligned}
\]
Here $\Omega$ is the source alphabet. The product is right-associated, and a
single factor is returned unchanged. Every generated node carries $\Omega$,
and the continuations are re-signed by $\Omega$. The continuations take no step
in this round. The other heads remain available through their singleton
exclusive offers. If no branch is immediately enabled, resolution determines
whether the original offer takes its default or waits.

\begin{center}
\begin{semrulebox}[fontupper=\scriptsize]
\begin{adjustbox}{max width=.97\linewidth}
$\dfrac{\begin{gathered}
\lstep{E_1}{P_1}{\alpha_1}{\pkw{skip}}\quad \lstep{E_2}{P_2}{\alpha_2}{\pkw{skip}}\\
(O_{C1},O_{F1})\in\Off(E_1,P_1,\alpha_1,\pkw{skip})\quad (O_{C2},O_{F2})\in\Off(E_2,P_2,\alpha_2,\pkw{skip})\\
E_1\in\LocalEnv(E,I,P_1,O_{C2},O_{F2})\quad E_2\in\LocalEnv(E,I,P_2,O_{C1},O_{F1})\quad \Sync_I(\alpha_1,\alpha_2)\end{gathered}}{\lstep{E}{P_1\pparop_I P_2}{\alpha_1\unionI\alpha_2}{\pkw{skip}}}\rulelabel{L-ParFuse}$
\end{adjustbox}
\end{semrulebox}
\end{center}
\begin{center}
\begin{semrulebox}[fontupper=\scriptsize]
\begin{adjustbox}{max width=.97\linewidth}
$\dfrac{\begin{gathered}
\lstep{E_1}{P_1}{\alpha_1}{P_1'}\quad \lstep{E_2}{P_2}{\alpha_2}{P_2'}\\
(O_{C1},O_{F1})\in\Off(E_1,P_1,\alpha_1,P_1')\quad (O_{C2},O_{F2})\in\Off(E_2,P_2,\alpha_2,P_2')\\
E_1\in\LocalEnv(E,I,P_1,O_{C2},O_{F2})\quad E_2\in\LocalEnv(E,I,P_2,O_{C1},O_{F1})\quad \Sync_I(\alpha_1,\alpha_2)\\
P_1\neq \pkw{skip}\quad P_2\neq \pkw{skip}\quad (P_1'\neq \pkw{skip}\ \text{or}\ P_2'\neq \pkw{skip})\end{gathered}}{\lstep{E}{P_1\pparop_I P_2}{\alpha_1\unionI\alpha_2}{P_1'\pparop_I P_2'}}\rulelabel{L-Par}$
\end{adjustbox}
\end{semrulebox}
\end{center}
\begin{center}
\begin{semrulebox}[fontupper=\scriptsize]
\begin{adjustbox}{max width=.97\linewidth}
$\dfrac{\begin{gathered}
\lstep{E_1}{\pkw{skip}}{\epsilon}{\pkw{skip}}\quad \lstep{E_2}{P_2}{\alpha_2}{P_2'}\\
(O_{C1},O_{F1})\in\Off(E_1,\pkw{skip},\epsilon,\pkw{skip})\quad (O_{C2},O_{F2})\in\Off(E_2,P_2,\alpha_2,P_2')\\
E_1\in\LocalEnv(E,I,\pkw{skip},O_{C2},O_{F2})\quad E_2\in\LocalEnv(E,I,P_2,O_{C1},O_{F1})\quad \Sync_I(\epsilon,\alpha_2)\\P_2\neq \pkw{skip}\quad P_2'\neq \pkw{skip}\end{gathered}}{\lstep{E}{\pkw{skip}\pparop_I P_2}{\alpha_2}{P_2'[\Alphabet(P)]}}\rulelabel{L-ParLeft}$
\end{adjustbox}
\end{semrulebox}
\end{center}
\begin{center}
\begin{semrulebox}[fontupper=\scriptsize]
\begin{adjustbox}{max width=.97\linewidth}
$\dfrac{\begin{gathered}
\lstep{E_1}{P_1}{\alpha_1}{P_1'}\quad \lstep{E_2}{\pkw{skip}}{\epsilon}{\pkw{skip}}\\
(O_{C1},O_{F1})\in\Off(E_1,P_1,\alpha_1,P_1')\quad (O_{C2},O_{F2})\in\Off(E_2,\pkw{skip},\epsilon,\pkw{skip})\\
E_1\in\LocalEnv(E,I,P_1,O_{C2},O_{F2})\quad E_2\in\LocalEnv(E,I,\pkw{skip},O_{C1},O_{F1})\quad \Sync_I(\alpha_1,\epsilon)\\P_1\neq \pkw{skip}\quad P_1'\neq \pkw{skip}\end{gathered}}{\lstep{E}{P_1\pparop_I \pkw{skip}}{\alpha_1}{P_1'[\Alphabet(P)]}}\rulelabel{L-ParRight}$
\end{adjustbox}
\end{semrulebox}
\end{center}

The rules \rulelabel{L-Par}, \rulelabel{L-ParFuse}, and
\rulelabel{L-ParLeft}/\rulelabel{L-ParRight} respectively retain both
components, finish both, or continue with the unfinished component.
Here $P$ denotes the annotated source of the parallel step. In the last case,
the finished side at $\pkw{skip}$ takes a silent step and exposes no offers.
It has no clock step, so the surviving process must first leave the product
in a non-clock step. A surviving $\pkw{cycle}$ waits in this step and takes
its clock step afterward.
Re-signing removes the old interface from the survivor's boundary and preserves
its nested interfaces as defined in Appendix~\ref{app:signed-process-syntax}.

\input{sections/appendix/pact/auxiliary-functions}

%% file: sections/appendix/pact/auxiliary-functions.tex
\subsubsection{Auxiliary Functions}
\applabel{aux-appendix}

The following functions define the branch tests, offers, and synchronization
conditions used by the operational rules.

\myparagraph{Messages and labels}
For a branch head $a$, its bare polarized label is
\[
 \Pol(\pkw{send}\ m\ e)=m\psym{!},\qquad \Pol(x=\pkw{recv}\ m)=m\psym{?}.
\]
On an observation, $\mathsf{Msg}(m\psym{!}v)=\mathsf{Msg}(m\psym{?}v)
=\mathsf{Msg}(m\psym{!})=\mathsf{Msg}(m\psym{?})=m$ returns the message name. The remaining
functions distinguish boundary messages from internal ones and erase
payloads from sets of offers:
\[
\begin{aligned}
 \BoundaryMsgs(P)&=\Alphabet(P),\\
 \IntMsgs(P)&=\bigcup\{I\mid P_1\pparop_I P_2\text{ occurs in }P\},\\
 \Bare(\{m\psym{!}v\})&=\{m\psym{!}\},\qquad \Bare(\{m\psym{?}v\})=\{m\psym{?}\},\\
 \Bare(\{b\})&=\{b\}\quad\text{for a bare action }b,\qquad
 \Bare(S)=\bigcup_{b\in S}\Bare(\{b\}).
\end{aligned}
\]
Restriction to an interface keeps message actions with names in that interface;
$\Hide{I}{S}$ removes those actions. Clock and marker observations are not
message offers.

\myparagraph{Branch needs}
To determine whether an offer can commit, we collect its branch heads and
the complementary actions needed to enable them:
\[
\begin{aligned}
 \Need(m\psym{!})&=m\psym{?},\qquad \Need(m\psym{?})=m\psym{!},\qquad
 \Need(a)=\Need(\Pol(a)),\\
 \Need(B)&=\{\Need(a_i)\mid i\in J\},\qquad
 \Resolved(E,B)\ \Longleftrightarrow\ \Need(B)\cap\Fut(E)=\emptyset.
\end{aligned}
\]
Here $B=\{a_i:P_i\mid i\in J\}$ is the branch set, and $\Need$ gives the
complement of each head. The last clause makes
the offer resolved when none of those needs remains in the future
environment. Let $\widehat a$ denote a head's concrete offer: $m\psym{!}v$ for a
send whose expression evaluates to $v$, and $m\psym{?}$ for a receive.

\myparagraph{Future offers}
$\FutOff(P)$ is an approximation to the messages that $P$ may offer before the
next clock. In a sequence, it includes offers from the right component when
the left component passes an optimistic completion test, defined below.
The approximation is computed from syntax as follows:
\[
\begin{aligned}
 \FutOff(\pkw{skip})&=\FutOff(X)=\FutOff(\pkw{cycle})
   =\FutOff(\pkw{end}_t)=\emptyset,\\
 \FutOff((\Omega,a\ \pkw{in}\ P))
   &=(\{\Pol(a)\}\cap\Omega)\cup\FutOff(P),\\
 \FutOff(\pkw{let}\ x=e\ \pkw{in}\ P)&=\FutOff(P),\qquad
 \FutOff(\pmu X.P)=\FutOff(P),\\
 \FutOff(\pkw{if}\ e\ \pkw{then}\ P_1\ \pkw{else}\ P_2)
   &=\FutOff(P_1)\cup\FutOff(P_2),\\
 \FutOff(\pkw{fair-if}_\eta\ e\ \pkw{then}\ P_1\ \pkw{else}\ P_2)
   &=\FutOff(P_1)\cup\FutOff(P_2),\\
 \FutOff(P_1\pseq P_2)&=
 \begin{cases}
  \FutOff(P_1)\cup\FutOff(P_2),&\Complete(P_1),\\
  \FutOff(P_1),&\text{otherwise},
 \end{cases}\\
 \FutOff(P_1\pparop_I P_2)&=\Hide{I}{\FutOff(P_1)\cup\FutOff(P_2)},\\
 \FutOff((\Omega,Z\ B\ \pkw{else}\ P_0))
   &=\Omega\cap\Bigl(\FutOff(P_0)\cup
      \bigcup_{i\in J}(\{\Pol(a_i)\}\cup\FutOff(P_i))\Bigr).
\end{aligned}
\]
Here $Z$ ranges over $\pkw{xoffer}$ and $\pkw{offer}$. An undeclared head
contributes no action, but its continuation still contributes its declared
future offers. The completion test used for sequences is
\[
\begin{aligned}
 \Complete(\pkw{skip})&=\Complete(\pkw{end}_t)=\mathsf{true},\qquad
 \Complete(\pkw{cycle})=\Complete(X)=\mathsf{false},\\
 \Complete(a\ \pkw{in}\ P)&=\Complete(P),\quad
 \Complete(\pkw{let}\ x=e\ \pkw{in}\ P)=\Complete(P),\quad
 \Complete(\pmu X.P)=\Complete(P),\\
 \Complete(\pkw{if}\ e\ \pkw{then}\ P_1\ \pkw{else}\ P_2)
   &=\Complete(P_1)\lor\Complete(P_2),\\
 \Complete(\pkw{fair-if}_\eta\ e\ \pkw{then}\ P_1\ \pkw{else}\ P_2)
   &=\Complete(P_1)\lor\Complete(P_2),\\
 \Complete(P_1\pseq P_2)&=\Complete(P_1\pparop_I P_2)
   =\Complete(P_1)\land\Complete(P_2),\\
 \Complete(Z\ B\ \pkw{else}\ P_0)&=\Complete(P_0)\lor
      \bigvee_{i\in J}\Complete(P_i).
\end{aligned}
\]
For this test, a prefix passes whenever its continuation does, even if the
head cannot fire. Parallel components are tested separately.

\myparagraph{Current and future offers of a step}
$\Off(E,P,\alpha,P')$ is the set of pairs $(O_C,O_F)$ exposed by the selected
step $\lstep{E}{P}{\alpha}{P'}$. Current sends retain their payloads; future
actions are bare. All current offers are projected to the source alphabet:
\[
\begin{aligned}
 \mathsf{Proj}_\Omega(S)&=\{b\in S\mid\Bare(\{b\})\subseteq\Omega\},\\
 D_a&=\mathsf{Proj}_{\Alphabet(P)}(\{\widehat a\}),\qquad
 A_B=\{\widehat a_i\mid i\in J\},\qquad
 D_B=\mathsf{Proj}_{\Alphabet(P)}(A_B),\\
 W(E,B)&=\{\widehat a_i\mid i\in J,\ \Need(a_i)\in\Fut(E)\}.
\end{aligned}
\]
Here $D_a$ is the declared offer of one head, $A_B$ contains all concrete
branch offers, and $D_B$ keeps the declared ones. The waiting set $W(E,B)$
keeps heads whose complements are still promised. The source $P$ in $D_a$
and $D_B$ is the source of the displayed step.
For a firing prefix, $\alpha\neq\epsilon$ is its label and $P'$ is its continuation,
with the received payload substituted when needed. Every clock step, at every constructor,
exposes $(\emptyset,\emptyset)$. For non-clock steps, the direct clauses are
\[
\begin{aligned}
 \Off(E,a\ \pkw{in}\ Q,\alpha,P')
   &=\{(D_a,\FutOff(P')\setminus\Bare(D_a))\},\\
 \Off(E,a\ \pkw{in}\ Q,\epsilon,a\ \pkw{in}\ Q)
   &=\{(D_a,\FutOff(a\ \pkw{in}\ Q)\setminus\Bare(D_a))\},\\
 \Off(E,\pkw{cycle},\epsilon,\pkw{cycle})&=\{(\emptyset,\emptyset)\},\\
 \Off(E,\pkw{skip},\epsilon,\pkw{skip})&=\{(\emptyset,\emptyset)\},\\
 \Off(E,\pkw{end}_t,\$_t,\pkw{skip})&=\{(\emptyset,\emptyset)\},\\
 \Off(E,\pkw{if}\ e\ \pkw{then}\ P_1\ \pkw{else}\ P_2,\epsilon,P_i)
   &=\{(\emptyset,\FutOff(P_1)\cup\FutOff(P_2))\},\\
 \Off(E,\pkw{fair-if}_\eta\ e\ \pkw{then}\ P_1\ \pkw{else}\ P_2,\eta_i,P_i)
   &=\{(\emptyset,\FutOff(P_1)\cup\FutOff(P_2))\},\\
 \Off(E,\pkw{let}\ x=e\ \pkw{in}\ Q,\epsilon,P')
   &=\{(\emptyset,\FutOff(P'))\},\\
 \Off(E,\pmu X.Q,\epsilon,P')&=\{(\emptyset,\FutOff(P'))\}.
\end{aligned}
\]
Here $i\in\{1,2\}$ selects the true or false branch, respectively, with
$\eta_1=\eta_{\mathsf{true}}$ and $\eta_2=\eta_{\mathsf{false}}$.
Both conditional outcomes publish the source menu. Thus the future information
used by a sibling does not depend on which data-dependent branch was selected.
For $Z=\pkw{xoffer}\ B\ \pkw{else}\ P_0$ and
$Y=\pkw{offer}\ B\ \pkw{else}\ P_0$, the offer clauses are given below.
In the first clause, $\alpha_i$ and $P_i'$ are the label and substituted
continuation of branch $a_i:P_i$.
\[
\begin{aligned}
 \Off(E,Z,\alpha_i,P_i')
   &=\{(\{\widehat a_i\},\FutOff(Z)\setminus\Bare(A_B))\},\\
 \Off(E,Z,\epsilon,P_0)&=\{(D_B,\FutOff(Z)\setminus\Bare(A_B))\},\\
 \Off(E,Z,\epsilon,Z)&=\{(W(E,B),\FutOff(Z)\setminus\Bare(W(E,B)))\},\\
 \Off(E,Y,\beta_K(\mathbf v),C^P_{B,K,\mathbf v})
   &=\{(D_B,\FutOff(C^P_{B,K,\mathbf v})\setminus\Bare(D_B))\},\\
 \Off(E,Y,\epsilon,P_0)&=\{(D_B,\FutOff(P_0)\setminus\Bare(D_B))\},\\
 \Off(E,Y,\epsilon,Y)&=\{(D_B,\FutOff(Y)\setminus\Bare(D_B))\}.
\end{aligned}
\]
Here each clause is read with the premises of its corresponding operational
rule. Admissibility ensures that a firing head is declared and that $W(E,B)$
is already within the source alphabet. An undeclared head can wait with empty
current offers.
Exclusive commitment removes all branch heads from its future menu, but
retains the source's continuation menus. Conjunctive firing instead uses the
residual product, which retains the heads that have not communicated.
Both defaults advertise $D_B$ even though no head communicates: current offers
record availability, while labels record the chosen communications. The
exclusive default retains the source continuation menus; the conjunctive
default uses its selected continuation's menu.
A waiting exclusive offer advertises only $W(E,B)$; a waiting conjunctive
offer advertises all declared heads, since it can retain the others after a
head communicates.

\myparagraph{Compound offer clauses}
For each $(O_C,O_F)\in\Off(E,P_1,\alpha,P_1')$ with $\#\notin\alpha$,
sequencing publishes
\[
 \left(O_C,(O_F\cup F_2)\setminus\Bare(O_C)\right),\qquad
 F_2=\begin{cases}\FutOff(P_2),&\Complete(P_1),\\
                  \emptyset,&\text{otherwise}.
     \end{cases}
\]
This is the pair for either sequence rule: finishing changes the target,
while the future field is computed from the same left source and chosen pair.
For a parallel step, range over the component steps and pairs in its premises.
The product and simultaneous-completion rules publish
\[
 \left(\Hide{I}{O_{C1}\cup O_{C2}},
       \Hide{I}{O_{F1}\cup O_{F2}}\right).
\]
A survivor rule publishes the surviving component's pair with $I$ hidden in
both fields. The finished side still supplies its silent $\pkw{skip}$ step and
empty pair. $\Off$ collects a pair for every solution of these mutual local
environment constraints. In particular, each derivable step has a decoration.

\myparagraph{Local environments}
$\LocalEnv(E,I,P_i,O_C,O_F)$ gives the environment of child $P_i$ using the
sibling's chosen offer pair. It is the singleton $\{(U_L,F_L)\}$ where
\[
\begin{aligned}
 U_L&=\begin{cases}
        \{\#\},&\Imm(E)=\{\#\},\\
        \Imm(E)\cup\restr{O_C}{I},&\text{otherwise},
       \end{cases}\\
 F_L&=\begin{cases}
        \emptyset,&\Imm(E)=\{\#\},\\
        (\Fut(E)\cup\restr{O_F}{I})\setminus\Bare(U_L),&\text{otherwise}.
       \end{cases}
\end{aligned}
\]
Both children receive the complete parent environment. Alphabet compatibility
ensures that its actions are declared at each child. The parallel rule also
requires each resulting local environment to belong to its child's
environment set. The argument $P_i$ identifies that child; the displayed
formula for the pair itself does not depend on it.

\myparagraph{Synchronization}
A label identifies the message names sent and received during its step:
\[
 \Sends(\alpha)=\{m\mid\exists v.\ m\psym{!}v\in\alpha\},\qquad
 \Recvs(\alpha)=\{m\mid\exists v.\ m\psym{?}v\in\alpha\}.
\]
Two component labels are synchronized on
interface $I$ when every $I$-tagged send on one side is matched by an
$I$-tagged receive on the other. Clock ticks also synchronize: one side may
produce $\#$ exactly when the other side does. The local environment
contains the sibling's valued send offer, and the receive rule takes its
payload from that environment. Thus the payload agreement is checked before
$\Sync_I$ compares message names.
\[
\begin{aligned}
\Sync_I(\alpha_1,\alpha_2) \triangleq{} &
I\cap\Sends(\alpha_1) = I\cap\Recvs(\alpha_2)
\ \text{and}\ I\cap\Recvs(\alpha_1) = I\cap\Sends(\alpha_2) \\
& \text{and}\ (\alpha_1=\{\#\}\Leftrightarrow\alpha_2=\{\#\}).
\end{aligned}
\]
Given $\Sync_I(\alpha_1,\alpha_2)$, the messages resolved internally in this
step are
\[
\SyncedMsgs(\alpha_1,\alpha_2) = (I\cap\Sends(\alpha_1))\cup(I\cap\Sends(\alpha_2))
\]
(equivalently $(I\cap\Recvs(\alpha_2))\cup(I\cap\Recvs(\alpha_1))$, by
$\Sync_I$). Combining labels replaces every raw send/receive occurrence of a
synced message by a single $\tick{\cdot}$ observation, leaving every other
occurrence unchanged. These include boundary messages, internal exchanges
already resolved in a deeper composition, $\#$, fair outcomes, and progress
markers:
\[
\alpha_1\unionI\alpha_2 =
\{\tick{m}\mid m\in\SyncedMsgs(\alpha_1,\alpha_2)\}
\cup\bigl((\alpha_1\cup\alpha_2)\setminus\{m\psym{!}v,m\psym{?}v\mid m\in\SyncedMsgs(\alpha_1,\alpha_2),\,v\}\bigr).
\]

%% file: sections/appendix/pact/defining-communication-safety.tex
\subsection{Reachability and Closed Environments}
\applabel{app-pact-safety}

We give the reachable prefixes used by the safety definitions in
\secref{pact-safety}, followed by the environment restriction for closed
systems. For the initial prefix $h_0=Q_0$, the reachable set is
\[
\Reach_{\Strategy_{Q_0}}(Q_0) =
\{h\mid \stratstep{\Strategy_{Q_0}}{}{h_0}{*}{h}\}.
\]
Here $*$ denotes the reflexive-transitive closure of the strategy-step
relation. Thus the initial prefix is reachable, as is every prefix obtained
by repeatedly taking a step under the environment selected by the strategy.

In a closed system, $\BoundaryMsgs(Q_0)=\emptyset$, so the strategy has no
boundary messages to offer. For every prefix $h$, its environment satisfies
\[
\Imm(\Strategy_{Q_0}(h))\in\{\emptyset,\{\#\}\}
\quad\text{and}\quad
\Fut(\Strategy_{Q_0}(h))=\emptyset.
\]
The definitions of honesty, synchronization compatibility, and liveness apply
with this restriction; closed compatibility still requires one honest
strategy to ensure both synchronization compatibility and liveness.

%% file: sections/appendix/types.tex
\section{Complete Definitions and Proofs for the \codename{} Type System}
\applabel{appendix-types}

We first give the full type semantics and automaton constructions. These
provide the basis for the proofs and algorithms for conformance, refinement,
and compatibility that follow. Finally, we prove that the inferred types are
optimal for data-oblivious checking.

\input{sections/appendix/types/type-syntax-and-semantics}

\input{sections/appendix/types/type-inference-refinement-and-checking}

\input{sections/appendix/types/type-compatibility-checking}
\input{sections/appendix/types/optimality}

%% file: sections/appendix/types/type-syntax-and-semantics.tex
\subsection{Complete Type Semantics and Automaton Construction}
\applabel{type-semantics-appendix}

We first make the message alphabets explicit in the type syntax and state
the well-formedness requirements. We then give the operational rules and offer
functions, followed by the canonical and constructor-based automata and their
correspondence proofs.

\subsubsection{Signed Types}
The full type syntax retains the boundary alphabet omitted in the main text.
Every child of a compound type also carries an alphabet:

\[
\begin{aligned}
 \tau::={}&(\Omega,\nu),\qquad\Omega\subseteq\mathsf{ActM},\\
 \nu::={}&\tkw{skip}\mid a\mid\tclock\mid\tmarker{t}\mid X\mid\tmu X.\tau
       \mid\tau\tseq \tau\mid\tau\tparop_I\tau\\
       &\mid\tau\tplus\tau\mid\tau\tsym{\dotplus}_\eta\tau
       \mid\toplus(\tau_0,a_1:\tau_1,\ldots,a_n:\tau_n)
       \mid\twedge(\tau_0,a_1:\tau_1,\ldots,a_n:\tau_n),\\
 a::={}&m\tsym{!}\mid m\tsym{?}.
\end{aligned}
\]
Here $\Omega$ is the finite alphabet, denoted by $\Alphabet(\tau)$, and
$\nu$ is the constructor body, denoted by $\mathsf{Body}(\tau)$. The polarized message
actions $\mathsf{ActM}$ and alphabet well-formedness are defined in
Appendix~\ref{app:signed-process-syntax}: sequential and choice children carry
the parent alphabet, while parallel children satisfy $\mathsf{ParCompat}_I$.
These conditions also apply to every offer continuation and default.

Parallel completion changes the surviving type's alphabet just as it changes
a process alphabet. This re-signing propagates through type constructors:

\[
\begin{aligned}
 (\Omega,\nu)[\Omega']&=(\Omega',\nu[\Omega']),\\
 \nu[\Omega']&=\nu\quad
   (\nu\in\{\tkw{skip},a,\tclock,\tmarker{t},X\}),\\
 (\tau_1\circ\tau_2)[\Omega']&=\tau_1[\Omega']\circ\tau_2[\Omega']
       \quad(\circ\in\{\tseq ,+,\tsym{\dotplus}_\eta\}),\\
 (\tmu X.\tau_0)[\Omega']&=\tmu X.\tau_0[\Omega'],\\
 \toplus(\tau_0,\{a_i:\tau_i\}_{i\in J})[\Omega']
   &=\toplus(\tau_0[\Omega'],\{a_i:\tau_i[\Omega']\}_{i\in J}),\\
 \twedge(\tau_0,\{a_i:\tau_i\}_{i\in J})[\Omega']
   &=\twedge(\tau_0[\Omega'],\{a_i:\tau_i[\Omega']\}_{i\in J}),\\
 (\tau_1\tparop_I\tau_2)[\Omega']
   &=\tau_1[\Pi_I(\Omega',\Alphabet(\tau_1))]
       \tparop_I\tau_2[\Pi_I(\Omega',\Alphabet(\tau_2))].
\end{aligned}
\]
Here $\Pi_I$ is the child-alphabet projection defined for processes. A
parallel child retains its own interface polarities, while every other
constructor passes the incoming alphabet to all its children.

\input{sections/appendix/types/type-well-formedness}

\subsubsection{Type Semantics}

The semantics of types is the value-erased analogue of the process semantics
of \secref{pact-semantics}. A type transition has the form
\[
\lstep{E_\tau}{\tau}{\alpha}{\tau'},
\]
where $E_\tau=(U,F)$ is the current/future environment and $\alpha$ is the
observation. For a source $\tau=(\Omega,\nu)$, admissible environments are
\[
\begin{aligned}
 \Env{\tau}=\{(U,F)\mid&U\in2^{\overline\Omega}\cup\{\{\#\}\},\quad
 F\in2^{\overline\Omega},\\
 &U=\{\#\}\Rightarrow F=\emptyset,\quad
 U\neq\{\#\}\Rightarrow U\cap F=\emptyset\}.
\end{aligned}
\]
Here $U$ and $F$ contain bare complementary actions; $U$ may instead contain
the clock tick, in which case $F$ is empty. Every rule has the implicit
premise $E_\tau\in\Env{\tau}$, and each child
step has its corresponding admissibility premise. Rules suppress annotations:
selecting a child returns that signed child, while rebuilding a constructor
keeps the source alphabet.

Payload erasure applies to observations, environments, and auxiliary
functions. The rules below follow the corresponding process rules, with
data-dependent conditionals replaced by nondeterministic choices.

The atomic action $a$ fires when the environment contains its complement:
\begin{center}
\begin{semrulebox}[fontupper=\scriptsize]
\begin{adjustbox}{max width=.99\linewidth}
$\begin{array}{@{}ccc@{}}
\dfrac{\Need(a)\in\Imm(E_\tau)}
      {\lstep{E_\tau}{a}{a}{\tkw{skip}}}
\rulelabel{T-Action}
&
\dfrac{\Imm(E_\tau)\neq\{\tclock\}}
      {\lstep{E_\tau}{\tau_1 + \tau_2}{\epsilon}{\tau_i}}
\rulelabel{T-Choice}
&
\dfrac{\Imm(E_\tau)\neq\{\tclock\}}
      {\lstep{E_\tau}{\tau_1 \tsym{\dotplus}_\eta \tau_2}{\eta_i}{\tau_i}}
\rulelabel{T-FairChoice}
\\[1.15em]
\multicolumn{3}{c}{
\dfrac{\Need(a)\notin\Imm(E_\tau) \quad \Imm(E_\tau)\neq\{\tclock\}}
      {\lstep{E_\tau}{a}{\epsilon}{a}}
\rulelabel{T-ActionWait}}
\end{array}$
\end{adjustbox}
\end{semrulebox}
\end{center}
Here $i\in\{1,2\}$, with $\eta_1=\eta_{\mathsf{true}}$ and
$\eta_2=\eta_{\mathsf{false}}$.
Sequencing and recursion retain their process rules, and the clock and
progress rules carry over unchanged:
\begin{center}
\begin{semrulebox}[fontupper=\scriptsize]
\begin{adjustbox}{max width=.99\linewidth}
$\begin{array}{@{}ccc@{}}
\dfrac{\Imm(E_\tau)=\{\tclock\}}
      {\lstep{E_\tau}{\tclock}{\tclock}{\tkw{skip}}}
\rulelabel{T-Clock}
&
\dfrac{\Imm(E_\tau)\neq\{\tclock\}}
      {\lstep{E_\tau}{\tclock}{\epsilon}{\tclock}}
\rulelabel{T-ClockWait}
&
\dfrac{\Imm(E_\tau)\neq\{\tclock\}}
      {\lstep{E_\tau}{\tkw{skip}}{\epsilon}{\tkw{skip}}}
\rulelabel{T-Skip}
\\[1.15em]
\dfrac{\lstep{E_\tau}{\tau_1}{\alpha}{\tkw{skip}}}
      {\lstep{E_\tau}{\tau_1\tseq\tau_2}{\alpha}{\tau_2}}
\rulelabel{T-SeqDone}
&
\dfrac{\Imm(E_\tau)\neq\{\tclock\}}
      {\lstep{E_\tau}{\tmarker{t}}{\tmarker{t}}{\tkw{skip}}}
\rulelabel{T-Marker}
&
\dfrac{\Imm(E_\tau)\neq\{\tclock\}}
      {\lstep{E_\tau}{\tmu X.\tau}{\epsilon}{\tau[X\mapsto\tmu X.\tau]}}
\rulelabel{T-Rec}
\\[1.15em]
\multicolumn{3}{c}{
\dfrac{\lstep{E_\tau}{\tau_1}{\alpha}{\tau_1'} \quad \tau_1'\neq\tkw{skip}}
      {\lstep{E_\tau}{\tau_1\tseq\tau_2}{\alpha}{\tau_1'\tseq\tau_2}}
\rulelabel{T-Seq}}
\end{array}$
\end{adjustbox}
\end{semrulebox}
\end{center}

An exclusive offer selects one immediately enabled branch after the branch
set is resolved. It waits while resolution is pending, or takes its default
when resolved with no enabled branch:
\begin{center}
\begin{semrulebox}[fontupper=\scriptsize]
\begin{adjustbox}{max width=.99\linewidth}
$\begin{array}{@{}c@{}}
\dfrac{a_i : \tau_i \in B \quad \Need(a_i) \in \Imm(E_\tau) \quad \Resolved(E_\tau,B)}
      {\lstep{E_\tau}{\toplus(\tau_0, B)}{a_i}{\tau_i}}
\rulelabel{T-XChoice}
\\[1.15em]
\dfrac{\Need(B) \cap \Imm(E_\tau) = \emptyset \quad \Resolved(E_\tau,B) \quad \Imm(E_\tau)\neq\{\tclock\}}
      {\lstep{E_\tau}{\toplus(\tau_0, B)}{\epsilon}{\tau_0}}
\rulelabel{T-XChoiceElse}
\qquad
\dfrac{\neg\,\Resolved(E_\tau,B) \quad \Imm(E_\tau)\neq\{\tclock\}}
      {\lstep{E_\tau}{\toplus(\tau_0, B)}{\epsilon}{\toplus(\tau_0, B)}}
\rulelabel{T-XChoiceWait}
\end{array}$
\end{adjustbox}
\end{semrulebox}
\end{center}
The conjunctive offer communicates every immediately enabled head in the
same step. For $B=\{a_i:\tau_i\mid i\in J_B\}$, where
$J_B=\{1,\ldots,n\}$, define
\[
 K(E_\tau,B)=\{i\in J_B\mid\Need(a_i)\in\Imm(E_\tau)\},\qquad
 A_K=\{a_i\mid i\in K\}.
\]
After these communications, the target contains the enabled branches'
continuations and an optional singleton offer for every remaining head:
\[
\begin{aligned}
 R_i^K&=\begin{cases}
         \tau_i[\Omega],&i\in K,\\
         \toplus(\tkw{skip},a_i:\tau_i[\Omega]),&i\notin K,
        \end{cases}\\
 C_{B,K}&=R_1^K\tparop_{\emptyset}\cdots\tparop_{\emptyset}R_n^K.
\end{aligned}
\]
The product is right-associated, with a single factor returned unchanged.
Every generated node carries the source alphabet $\Omega$. Re-signing each
continuation is the identity for an alphabet-well-formed source. The
continuations take no step in this round.
\begin{center}
\begin{semrulebox}[fontupper=\scriptsize]
\begin{adjustbox}{max width=.99\linewidth}
$\begin{array}{@{}c@{}}
\dfrac{K=K(E_\tau,B)\quad K\neq\emptyset}
      {\lstep{E_\tau}{\twedge(\tau_0, B)}{A_K}{C_{B,K}}}
\rulelabel{T-ConjChoice}
\\[1.15em]
\dfrac{\forall a_i:\tau_i\in B.\ \Need(a_i)\notin\Imm(E_\tau) \quad \Resolved(E_\tau,B) \quad \Imm(E_\tau)\neq\{\tclock\}}
      {\lstep{E_\tau}{\twedge(\tau_0, B)}{\epsilon}{\tau_0}}
\rulelabel{T-ConjChoiceElse}
\qquad
\dfrac{\forall a_i:\tau_i\in B.\ \Need(a_i)\notin\Imm(E_\tau) \quad
       \neg\,\Resolved(E_\tau,B) \quad \Imm(E_\tau)\neq\{\tclock\}}
      {\lstep{E_\tau}{\twedge(\tau_0, B)}{\epsilon}{\twedge(\tau_0, B)}}
\rulelabel{T-ConjChoiceWait}
\end{array}$
\end{adjustbox}
\end{semrulebox}
\end{center}
Parallel composition uses the same four cases as the process semantics.
Both component steps and their offer pairs determine the local environments.

\begin{center}
\begin{semrulebox}[fontupper=\scriptsize]
\begin{adjustbox}{max width=.97\linewidth}
$\dfrac{\begin{gathered}
\lstep{E_1}{\tau_1}{\alpha_1}{\tkw{skip}}\quad \lstep{E_2}{\tau_2}{\alpha_2}{\tkw{skip}}\\
(O_{C1},O_{F1})\in\Off(E_1,\tau_1,\alpha_1,\tkw{skip})\quad (O_{C2},O_{F2})\in\Off(E_2,\tau_2,\alpha_2,\tkw{skip})\\
E_1\in\LocalEnv(E_\tau,I,\tau_1,O_{C2},O_{F2})\quad E_2\in\LocalEnv(E_\tau,I,\tau_2,O_{C1},O_{F1})\quad \Sync_I(\alpha_1,\alpha_2)\end{gathered}}{\lstep{E_\tau}{\tau_1\tparop_I \tau_2}{\alpha_1\unionI\alpha_2}{\tkw{skip}}}\rulelabel{T-ParFuse}$
\end{adjustbox}
\end{semrulebox}
\end{center}
\begin{center}
\begin{semrulebox}[fontupper=\scriptsize]
\begin{adjustbox}{max width=.97\linewidth}
$\dfrac{\begin{gathered}
\lstep{E_1}{\tau_1}{\alpha_1}{\tau_1'}\quad \lstep{E_2}{\tau_2}{\alpha_2}{\tau_2'}\\
(O_{C1},O_{F1})\in\Off(E_1,\tau_1,\alpha_1,\tau_1')\quad (O_{C2},O_{F2})\in\Off(E_2,\tau_2,\alpha_2,\tau_2')\\
E_1\in\LocalEnv(E_\tau,I,\tau_1,O_{C2},O_{F2})\quad E_2\in\LocalEnv(E_\tau,I,\tau_2,O_{C1},O_{F1})\quad \Sync_I(\alpha_1,\alpha_2)\\
\tau_1\neq \tkw{skip}\quad \tau_2\neq \tkw{skip}\quad (\tau_1'\neq \tkw{skip}\ \text{or}\ \tau_2'\neq \tkw{skip})\end{gathered}}{\lstep{E_\tau}{\tau_1\tparop_I \tau_2}{\alpha_1\unionI\alpha_2}{\tau_1'\tparop_I \tau_2'}}\rulelabel{T-Par}$
\end{adjustbox}
\end{semrulebox}
\end{center}
\begin{center}
\begin{semrulebox}[fontupper=\scriptsize]
\begin{adjustbox}{max width=.97\linewidth}
$\dfrac{\begin{gathered}
\lstep{E_1}{\tkw{skip}}{\epsilon}{\tkw{skip}}\quad \lstep{E_2}{\tau_2}{\alpha_2}{\tau_2'}\\
(O_{C1},O_{F1})\in\Off(E_1,\tkw{skip},\epsilon,\tkw{skip})\quad (O_{C2},O_{F2})\in\Off(E_2,\tau_2,\alpha_2,\tau_2')\\
E_1\in\LocalEnv(E_\tau,I,\tkw{skip},O_{C2},O_{F2})\quad E_2\in\LocalEnv(E_\tau,I,\tau_2,O_{C1},O_{F1})\quad \Sync_I(\epsilon,\alpha_2)\\\tau_2\neq \tkw{skip}\quad \tau_2'\neq \tkw{skip}\end{gathered}}{\lstep{E_\tau}{\tkw{skip}\tparop_I \tau_2}{\alpha_2}{\tau_2'[\Alphabet(\tau)]}}\rulelabel{T-ParLeft}$
\end{adjustbox}
\end{semrulebox}
\end{center}
\begin{center}
\begin{semrulebox}[fontupper=\scriptsize]
\begin{adjustbox}{max width=.97\linewidth}
$\dfrac{\begin{gathered}
\lstep{E_1}{\tau_1}{\alpha_1}{\tau_1'}\quad \lstep{E_2}{\tkw{skip}}{\epsilon}{\tkw{skip}}\\
(O_{C1},O_{F1})\in\Off(E_1,\tau_1,\alpha_1,\tau_1')\quad (O_{C2},O_{F2})\in\Off(E_2,\tkw{skip},\epsilon,\tkw{skip})\\
E_1\in\LocalEnv(E_\tau,I,\tau_1,O_{C2},O_{F2})\quad E_2\in\LocalEnv(E_\tau,I,\tkw{skip},O_{C1},O_{F1})\quad \Sync_I(\alpha_1,\epsilon)\\\tau_1\neq \tkw{skip}\quad \tau_1'\neq \tkw{skip}\end{gathered}}{\lstep{E_\tau}{\tau_1\tparop_I \tkw{skip}}{\alpha_1}{\tau_1'[\Alphabet(\tau)]}}\rulelabel{T-ParRight}$
\end{adjustbox}
\end{semrulebox}
\end{center}
The auxiliary functions $\Imm$, $\Fut$, $\Need$, $\Resolved$,
$\mathsf{Off}$, $\LocalEnv$, and $\Sync_I$ used above are the erased
readings of their process counterparts. For type labels,
$\Sends(\alpha)=\{m\mid m\tsym{!}\in\alpha\}$ and
$\Recvs(\alpha)=\{m\mid m\tsym{?}\in\alpha\}$. Synchronization then uses the same
two message-equality clauses as process synchronization, together with the
same clock agreement condition:
\[
\alpha_1=\{\tclock\}\Leftrightarrow\alpha_2=\{\tclock\}.
\]
Together with the displayed rules, these erased auxiliary functions define
the transition relation used below.

\input{sections/appendix/types/type-offer-functions}
\input{sections/appendix/types/semantic-automaton}
\input{sections/appendix/types/constructor-automaton}

%% file: sections/appendix/types/type-well-formedness.tex
\myparagraph{Well-formedness}
Types must satisfy the following well-formedness requirements on alphabets,
recursion, message use, and marker identifiers:
\[
\begin{aligned}
 \mathsf{TypeWF}(\tau)\quad\Longleftrightarrow\quad
 &\mathsf{AlphaWF}(\tau)\land\mathsf{Closed}(\tau)\land\mathsf{Hygienic}(\tau)\\
 &{}\land\mathsf{ClockGuarded}(\tau)\land\mathsf{TailRec}(\tau)
       \land\mathsf{SameCycleLinear}(\tau)\\
 &{}\land\mathsf{IfaceScoped}(\tau)\land\mathsf{ParaMsgOwn}(\tau)
       \land\mathsf{ConjMsgOwn}(\tau)\\
 &{}\land\mathsf{ProgLinear}(\tau)\land\mathsf{FairLinear}(\tau).
\end{aligned}
\]
Clock guarding and message linearity are required to prevent hardware hazards. Without clock guarding, recursive unfolding within a cycle can create a combinational loop. Similarly, reusing a message within the same cycle can cause distinct exchanges to compete for the same handshake signals, leading to multi-driver issues. The alphabet, scope, and ownership conditions prevent conflicting uses of statically declared interfaces. Finally, tail recursion prevents recursive calls from accumulating continuations or parallel components.
Together with the other requirements, this ensures a finite control representation.

Here $\mathsf{AlphaWF}$ is the alphabet condition from
Appendix~\ref{app:signed-process-syntax}. The clause $\mathsf{Closed}$ excludes
free recursion variables, and $\mathsf{Hygienic}$ forbids a binder from reusing
the name of an enclosing binder. Sibling binders may reuse a name.

For each binder $\tmu X.\theta$, $\mathsf{ClockGuarded}$ forbids a syntactic
path from $\theta$ to $X$ before the next $\tclock$. Such paths enter the right
side of a sequence only if its left side passes $\Complete$, follow one alternative
at choices and offers, and explore parallel components independently. The
completion test and future offers are defined in
Appendix~\ref{app:type-offer-functions}; semantic paths use the rules in
this subsection.

Tail recursion requires each call to return directly to its binder. To check
this condition, we determine where calls to a binder may occur in its body:
\[
\begin{aligned}
 \mathsf{Tail}_X(\theta_1\tseq \theta_2)
  &\Longleftrightarrow X\notin\mathsf{FV}(\theta_1)
       \land\mathsf{Tail}_X(\theta_2),\\
 \mathsf{Tail}_X(\theta_1\tparop_I\theta_2)
  &\Longleftrightarrow X\notin\mathsf{FV}(\theta_1)\cup\mathsf{FV}(\theta_2),\\
 \mathsf{Tail}_X(\theta_1+\theta_2)
  &\Longleftrightarrow\mathsf{Tail}_X(\theta_1)\land\mathsf{Tail}_X(\theta_2),\\
 \mathsf{Tail}_X(\theta_1\tsym{\dotplus}_\eta\theta_2)
  &\Longleftrightarrow\mathsf{Tail}_X(\theta_1)\land\mathsf{Tail}_X(\theta_2),\\
 \mathsf{Tail}_X(\toplus(\theta_0,\{a_i:\theta_i\}_{i\in J}))
  &\Longleftrightarrow\mathsf{Tail}_X(\theta_0)
       \land\bigwedge_{i\in J}\mathsf{Tail}_X(\theta_i),\\
 \mathsf{Tail}_X(\twedge(\theta_0,\{a_i:\theta_i\}_{i\in J}))
  &\Longleftrightarrow\mathsf{Tail}_X(\theta_0)
       \land X\notin\bigcup_{i\in J}\mathsf{FV}(\theta_i),\\
 \mathsf{Tail}_X(\tmu Y.\theta)
  &\Longleftrightarrow Y=X\ \lor\ \mathsf{Tail}_X(\theta).
\end{aligned}
\]
Here $\mathsf{FV}$ denotes free recursion variables, and $\mathsf{Tail}_X$
holds on atoms and variables. The clause $\mathsf{TailRec}$ requires
$\mathsf{Tail}_X(\theta)$ at every binder $\tmu X.\theta$. The sequence clause
allows calls to $X$ only in the right continuation; the parallel clause
excludes them from both children. Choices and exclusive offers check every
alternative. A conjunctive offer allows calls to $X$ only in its default;
its branches may still recurse through binders inside those branches.
At a binder $\tmu Y.\theta$, the check stops when $Y=X$; otherwise it checks
the body.

The clause $\mathsf{SameCycleLinear}$ forbids repeated uses of a message
along a finite semantic path before the next clock, starting from any signed
subterm. A send, receive, or internal $\tick{m}$ contributes $m$, with an
internal exchange counted once. For offers, it also checks the branch heads
against the future offers of every default and branch continuation:
\[
 \{\mathsf{Msg}(a_i)\mid i\in J\}
 \cap\{\mathsf{Msg}(b)\mid b\in\FutOff(\sigma)\}=\emptyset.
\]
Here $\{a_i\}_{i\in J}$ are the offer heads and $\sigma$ is any default or
branch continuation. The condition excludes all head message names from
each continuation's optimistic future menu, including offers that may not
occur on an actual path.

The clause $\mathsf{IfaceScoped}$ keeps an interface distinct from those
declared inside its children. At every $(\Omega,\theta_1\tparop_I\theta_2)$,
it requires
\[
 I\cap\mathsf{IfaceIds}(\theta_1)
 =I\cap\mathsf{IfaceIds}(\theta_2)=\emptyset.
\]
Here $\mathsf{IfaceIds}(\theta)$ collects all interfaces declared below
$\theta$. The same clause requires every signed subterm $(\Omega,\nu)$ to
satisfy $\restr{\Omega}{\mathsf{IfaceIds}(\nu)}=\emptyset$: an interface
declared inside a subterm cannot also occur in its boundary alphabet.

The clause $\mathsf{ParaMsgOwn}$ allows two parallel children to share
message names only on their interface. It compares the message names in
their $\mathsf{MsgLabel}$ sets, which collect the polarized actions written
in each body and its continuations. For a conjunctive offer,
$\mathsf{ConjMsgOwn}$ requires the message names in each branch head and
continuation to be disjoint from those of every other branch. These
conditions are checked recursively, separating the messages used by
independently executing bodies.

Finally, $\mathsf{ProgLinear}$ and $\mathsf{FairLinear}$ require each progress
identifier and each fair-choice identifier, respectively, to occur at most
once in the written syntax.

%% file: sections/appendix/types/type-offer-functions.tex
\myparagraph{Type offer functions}
\label{app:type-offer-functions}
The type offer functions are the erased versions of those in
\appref{aux-appendix}. On type syntax, their future-menu clauses are
\[
\begin{aligned}
 \FutOff(\tkw{skip})&=\FutOff(X)=\FutOff(\tclock)=\FutOff(\tmarker{t})=\emptyset,\\
 \FutOff((\Omega,a))&=\{a\}\cap\Omega,\qquad
 \FutOff(\tmu X.\theta)=\FutOff(\theta),\\
 \FutOff(\theta_1+\theta_2)&=\FutOff(\theta_1\tsym{\dotplus}_\eta\theta_2)
       =\FutOff(\theta_1)\cup\FutOff(\theta_2),\\
 \FutOff(\theta_1\tseq \theta_2)&=
 \begin{cases}
  \FutOff(\theta_1)\cup\FutOff(\theta_2),&\Complete(\theta_1),\\
  \FutOff(\theta_1),&\text{otherwise},
 \end{cases}\\
 \FutOff(\theta_1\tparop_I\theta_2)
       &=\Hide{I}{\FutOff(\theta_1)\cup\FutOff(\theta_2)},\\
 \FutOff((\Omega,\toplus(\theta_0,B)))
       &=\FutOff((\Omega,\twedge(\theta_0,B)))\\
       &=\Omega\cap\Bigl(\FutOff(\theta_0)
          \cup\bigcup_{a_i:\theta_i\in B}(\{a_i\}\cup\FutOff(\theta_i))\Bigr).
\end{aligned}
\]
The completion test has $\Complete(a)=\Complete(\tkw{skip})
=\Complete(\tmarker{t})=\mathsf{true}$ and
$\Complete(\tclock)=\Complete(X)=\mathsf{false}$. Recursion reads its body;
sequence and parallel take conjunction; ordinary and fair choice take
disjunction; either offer takes the disjunction over its default and branch
continuations.

For $\Off(E,\tau,\alpha,\tau')$, let $D_a=\{a\}\cap\Alphabet(\tau)$,
$A_B=\{a_i\mid a_i:\tau_i\in B\}$, $D_B=A_B\cap\Alphabet(\tau)$, and
$W(E,B)=\{a\in A_B\mid\Need(a)\in\Fut(E)\}$. Atomic actions publish
\[
\begin{aligned}
 \Off(E,a,a,\tkw{skip})&=\{(D_a,\emptyset)\},\\
 \Off(E,a,\epsilon,a)&=\{(D_a,\FutOff(a)\setminus D_a)\}.
\end{aligned}
\]
Both atomic clauses have an empty future field: an atomic message has no
continuation, and its current offer removes its head from the future menu.
Clock steps publish $(\emptyset,\emptyset)$. Silent clock waiting, completed
waiting, and progress steps also have empty offer pairs. Choice and recursion
publish
\[
\begin{aligned}
 \Off(E,\tau_1+\tau_2,\epsilon,\tau_i)
   &=\Off(E,\tau_1\tsym{\dotplus}_\eta\tau_2,\eta_i,\tau_i)\\
   &=\{(\emptyset,\FutOff(\tau_1)\cup\FutOff(\tau_2))\},\\
 \Off(E,\tmu X.\theta,\epsilon,\theta')
   &=\{(\emptyset,\FutOff(\theta'))\},
 \qquad\theta'=\theta[X\mapsto\tmu X.\theta].
\end{aligned}
\]
For $Z=\toplus(\tau_0,B)$ and $Y=\twedge(\tau_0,B)$, the offer clauses are
\[
\begin{aligned}
 \Off(E,Z,a_i,\tau_i)&=\{(\{a_i\},\FutOff(Z)\setminus A_B)\},\\
 \Off(E,Z,\epsilon,\tau_0)&=\{(D_B,\FutOff(Z)\setminus A_B)\},\\
 \Off(E,Z,\epsilon,Z)&=\{(W(E,B),\FutOff(Z)\setminus W(E,B))\},\\
 \Off(E,Y,A_K,C_{B,K})&=\{(D_B,\FutOff(C_{B,K})\setminus D_B)\},\\
 \Off(E,Y,\epsilon,\tau_0)&=\{(D_B,\FutOff(\tau_0)\setminus D_B)\},\\
 \Off(E,Y,\epsilon,Y)&=\{(D_B,\FutOff(Y)\setminus D_B)\}.
\end{aligned}
\]

Sequence adds the right menu when the left source passes $\Complete$, then
removes the current offer. Parallel unions component fields and hides $I$;
survivor entry hides $I$ in the surviving pair. In both cases use exactly the
component pairs and local environments in the chosen rule premises, as in
\appref{aux-appendix}. A clock step always overrides these clauses with the
empty pair.

%% file: sections/appendix/types/semantic-automaton.tex
\subsubsection{Canonical Semantic Automaton and Correspondence}
\label{app:semantic-automaton-correspondence}

We complete the canonical automaton definition from \secref{typecompat} and
prove its correspondence with type executions. We then extend this
correspondence to conforming processes.

The canonical automaton has a final state for each
completed residual. Its final states and edge alphabet are
\[
\begin{aligned}
 F_\tau&=\{\lfloor\rho\rfloor_\tau\in Q_\tau
               \mid\mathsf{Body}(\rho)=\tkw{skip}\},\\
 \Sigma_\tau&=\Env{\tau}\times2^{\mathsf{Act}_\tau}
                    \times2^{\Alphabet(\tau)}\times2^{\Alphabet(\tau)}.
\end{aligned}
\]
Here $\mathsf{Act}_\tau$ contains the message, clock, fairness, and progress
observations of $\tau$.

To state \lemref{type-automata-bisimulation} in full, we combine a semantic
step and its offer pair into a decorated step:
\[
 e\vdash\rho\xrightarrow[\langle O_C,O_F\rangle]{\alpha}\rho'.
\]
This notation requires both $\lstep{e}{\rho}{\alpha}{\rho'}$ and
$(O_C,O_F)\in\Off(e,\rho,\alpha,\rho')$.

For every well-formed regular type $\tau$,
\(\Aut_\tau\) is finite and there is a relation
\(\mathcal R_\tau\) between residuals of \(\tau\) and states of
\(\Aut_\tau\), with \((\tau,i_\tau)\in\mathcal R_\tau\), such that, for every
\((\tau_0,q)\in\mathcal R_\tau\) and \(e=(E_C,E_F)\),
\[
\begin{aligned}
 &e\vdash\tau_0\xrightarrow[\langle O_C,O_F\rangle]{\alpha}\tau_1\\
 &\quad\Longrightarrow\exists q'.\
 \atrans{q}{E_C}{E_F}{\alpha}{O_C}{O_F}{q'}\in\delta_\tau
 \land(\tau_1,q')\in\mathcal R_\tau,\\[2pt]
 &\atrans{q}{E_C}{E_F}{\alpha}{O_C}{O_F}{q'}\in\delta_\tau\\
 &\quad\Longrightarrow\exists\tau_1.\
 e\vdash\tau_0\xrightarrow[\langle O_C,O_F\rangle]{\alpha}\tau_1
 \land(\tau_1,q')\in\mathcal R_\tau.
\end{aligned}
\]
In particular, a \termref{def:type-residual}{residual} has a successor under \(e\) exactly when its related
automaton state has an edge carrying \(e\).

\begin{proof}[Proof of \lemref{type-automata-bisimulation}]
We will relate each reachable residual $\rho$ to its control
representation $\lfloor\rho\rfloor_\tau$, giving the relation
$\mathcal R_\tau$.
Recall that sharing preserves each constructor, its annotation,
and its sequence and parallel contexts.
The only repeated syntax replaced by a pointer is a recursive binder
reintroduced at a call, whose unfolding still takes an explicit silent step.

We first show that sharing preserves completion tests and future menus.
This follows by induction on their syntax-directed definitions, since
atomic and compound clauses inspect the same constructors, alphabets,
unions, and projections.
The interesting case is recursion.
Here, clock guarding prevents the pre-clock menu calculation from
following a recursive call back to its binder.
Tail position also excludes the bound variable from sequence-left
completion tests, parallel children, and conjunctive branches.
Consequently, sharing and textual substitution give the same completion
tests and menu contributions before the next clock.
More precisely, the marker-path and substitution equations of
Lemma~\ref{lem:constructor-menus} identify the contributions introduced
when a free call is replaced by its binder, while retaining the explicit
unfolding step.
These equations apply equally to the shared representation.

We will also need substitution and re-signing to commute with
representation.
For substitution, we graft the shared binder at each free tail
occurrence, consistently renaming bound variables.
This leaves the surrounding constructor contexts unchanged.
For re-signing, the incoming alphabet is passed through ordinary
children, with $\Pi_I$ applied at each parallel node.
Since the shared graph retains the annotation at each represented
occurrence, it performs the same replacements as the signed residual.
Here, sharing is by the annotated control term, including its context.

With these properties established, consider a decorated step in either
representation.
We proceed by induction on its derivation.
Direct rules inspect the same body, guard, and environment, so their
current offers agree, while the menu property identifies their future
offers.
For sequence, the same left step and completion test determine whether
to rebuild the sequence or return its right child.
Parallel similarly preserves the child steps, their chosen pairs,
local environments, and synchronization.
Fusion returns signed $\tkw{skip}$, while survivor entry uses the same
re-signing operation.
Conjunctive firing preserves the enabled set $K$, label $A_K$,
and generated product.
Finally, recursion uses the substitution property to match the
explicit unfolding step.
Thus, each case preserves the complete decoration and produces targets
with the same control representation.
This establishes step preservation in both directions, including when
distinct residuals share a control state.
The definition of $\delta_\tau$ now gives the required correspondence.
The initial pair is related, and final-state agreement follows from
preservation of the $\tkw{skip}$ constructor.

It remains to show that the control representation is finite.
By the rank argument of Appendix~\ref{app:constructor-automaton},
written constructors and generated conjunctive products give finitely
many context templates.
Moreover, recursive calls occur only in tail position.
Unfolding therefore introduces no additional pending sequence context,
parallel component, or conjunctive branch around a call, but returns
to a shared binder header.
Hence, these contexts cannot grow without bound through recursion.
Since annotations range over subsets of the fixed finite message
alphabet, there are only finitely many annotated control terms.
The environment, observation, and offer alphabets are also finite,
establishing finiteness of the automaton.
\end{proof}

\myparagraph{Process--automaton correspondence}
\label{app:process-automaton-correspondence}
The full statement of \lemref{process-automaton-correspondence} is as follows.

Suppose \(P\models\tau\) and $\tau$ is well formed. There is a relation \(\mathcal C\) between process
residuals and states of \(\Aut_\tau\), containing \((P,i_\tau)\), such that
for every \((P_0,q)\in\mathcal C\) and process step
\[
  \lstep{E}{P_0}{\alpha}{P_1},
\]
there are \(q',O_C,O_F\) for which
\[
  \atrans{q}{\erase(E)_C}{\erase(E)_F}{\erase(\alpha)}{O_C}{O_F}{q'}
  \qquad\text{and}\qquad
  (P_1,q')\in\mathcal C.
\]
At every related pair, $P_0$ has a step under $E$ exactly when $q$
has an edge under $\erase(E)$. Thus every ordinary process execution prefix has a matching automaton prefix
with the same erased environments and observations.

The same statement applies to a fixed group by taking
\(P=Q_0=P_1\parallel_{I_1}\cdots\parallel_{I_{n-1}}P_n\) and
\(\tau=\Theta_0\), provided the composite conformance
\(Q_0\models\Theta_0\) has been established. For types accepted by the
data-oblivious checker, \corref{compositional-conformance} establishes this
premise from the component checks.

\begin{proof}[Proof of \lemref{process-automaton-correspondence}]
We will construct the required witness by composing the conformance
relation $R$ with $\mathcal R_\tau$ through their shared type residual:
\[
 \mathcal C=\{(P_0,q)\mid\exists\rho.
       (P_0,\rho)\in R\land(\rho,q)\in\mathcal R_\tau\}.
\]

Let us show that $\mathcal C$ preserves both steps and availability.
For a process step, conformance supplies a matching type step under
the erased environment. Every type step admits a decoration, either
directly from its rule or from its component pairs.
The forward automaton correspondence then gives the required edge
with a related target.

Conversely, consider an automaton edge under $\erase(E)$.
The backward correspondence gives a type step, and availability
reflection supplies a process step under $E$.
Together, these establish readiness equivalence at every related pair.

We can now match finite process executions by induction.
If an execution becomes stuck, both sides are stuck under the same
erased environment and record the same erased failure entry and suffix.
Finally, the group result follows by applying this argument to the
established composite conformance $Q_0\models\Theta_0$.
\end{proof}

%% file: sections/appendix/types/constructor-automaton.tex
\subsubsection{Constructor-Based Automaton}
\label{sec:tc-appendix}
\label{app:constructor-automaton}

\begingroup
\raggedbottom
We construct the automaton by induction on the type. This construction is
used in our implementation and yields an automaton bisimilar to the type
semantics. Each constructor defines its states, entry, final states, and
transitions:
\[
 \Aut_\tau=(Q_\tau,\Sigma_\tau,\delta_\tau,i_\tau,F_\tau),\qquad
 \Sigma_\tau=\Env{\tau}\times2^{\mathsf{Act}_\tau}
                     \times2^\Omega\times2^\Omega.
\]
Here $\tau=(\Omega,\nu)$, and $\mathsf{Act}_\tau$ contains its message,
clock, fairness, and progress observations. Open fragments also use temporary
recursion-variable labels, which are removed by their binders.

An edge records its environment, observation, and current/future offer pair:
\[
 q\xrightarrow[e]{\alpha,O_C,O_F}q'.
\]
Here $e=(E_C,E_F)\in\Env{\tau}$. The empty observation is $\epsilon$,
and singleton observations appear without braces. We copy child state sets
disjointly, along with their menu references, and omit the injections.

The menu $\mathsf{OF}_\tau(q)$ approximates messages offered from $q$ before
the next clock. A dependency $(q,r)\in L_\tau$ includes the menu of $r$ in
that of $q$ without adding a transition.

\newcounter{pactautcase}
\newcommand{\pactautcase}[2][]{%
 \par\medskip\refstepcounter{pactautcase}%
 \noindent\textbf{\arabic{pactautcase}.}\quad $#2$\quad #1\par\nobreak}
\newcommand{\pactautedge}[6]{%
 \qst{#1}\xrightarrow[\aenv{#2}]{\aact{#3,\langle#4,#5\rangle}}\qst{#6}}

An unindexed $\mathsf{OF}$ denotes the menu of the containing fragment.
Completed states wait for the next clock using an idle self-loop.

\pactautcase[$Q_\tau=\{q_0\},\quad i_\tau=q_0,\quad F_\tau=\{q_0\}.$]{\nu=\tkw{skip}}
\[
\begin{aligned}
 \delta_\tau={}&
 \bigl\{
 \pactautedge{q_0}{e}{\epsilon}{\emptyset}{\mathsf{OF}(q_0)}{q_0}
 \ \bigm|\ E_C\neq\{\#\}
 \bigr\}. &&\text{(idle)}
\end{aligned}
\]

\pactautcase[$Q_\tau=\{q_0,q_f\},\quad i_\tau=q_0,\quad F_\tau=\{q_f\}.$]{\nu=a}
Let $D_a=\{a\}\cap\Omega$ be the declared head. The transitions are
\[
\begin{aligned}
 \delta_\tau={}&
 \bigl\{
 \pactautedge{q_0}{e}{a}{D_a}{\mathsf{OF}(q_f)\setminus D_a}{q_f}
 \ \bigm|\ \overline a\in E_C
 \bigr\} &&\text{(fire)}\\
 &{}\cup
 \bigl\{\, \pactautedge{q_0}{e}{\epsilon}{D_a}{\mathsf{OF}(q_0)\setminus D_a}{q_0}\ \bigm|\ \overline a\notin E_C,\ E_C\neq\{\#\}\,\bigr\} &&\text{(wait)}\\
 &{}\cup
 \bigl\{\,\pactautedge{q_f}{e}{\epsilon}{\emptyset}{\mathsf{OF}(q_f)}{q_f}
 \ \bigm|\ E_C\neq\{\#\}\,\bigr\}. &&\text{(idle)}
\end{aligned}
\]

\pactautcase[$Q_\tau=\{q_0,q_f\},\quad i_\tau=q_0,\quad F_\tau=\{q_f\}.$]{\nu=\tclock}
\[
\begin{aligned}
 \delta_\tau={}&
 \bigl\{
 \pactautedge{q_0}{e}{\#}{\emptyset}{\emptyset}{q_f}
 \ \bigm|\ E_C=\{\#\}
 \bigr\} &&\text{(clock)}\\
 &{}\cup
 \bigl\{
 \pactautedge{q_0}{e}{\epsilon}{\emptyset}{\mathsf{OF}(q_0)}{q_0}
 \ \bigm|\ E_C\neq\{\#\}
 \bigr\} &&\text{(wait)}\\
 &{}\cup
 \bigl\{\,\pactautedge{q_f}{e}{\epsilon}{\emptyset}{\mathsf{OF}(q_f)}{q_f}
 \ \bigm|\ E_C\neq\{\#\}\,\bigr\}. &&\text{(idle)}
\end{aligned}
\]
\pactautcase[$Q_\tau=\{q_0,q_f\},\quad i_\tau=q_0,\quad F_\tau=\{q_f\}.$]{\nu=\tmarker{t}}
\[
\begin{aligned}
 \delta_\tau={}&
 \bigl\{
 \pactautedge{q_0}{e}{\$_t}{\emptyset}{\mathsf{OF}(q_f)}{q_f}
 \ \bigm|\ E_C\neq\{\#\}
 \bigr\} &&\text{(progress)}\\
 &{}\cup
 \bigl\{\,\pactautedge{q_f}{e}{\epsilon}{\emptyset}{\mathsf{OF}(q_f)}{q_f}
 \ \bigm|\ E_C\neq\{\#\}\,\bigr\}. &&\text{(idle)}
\end{aligned}
\]

\pactautcase{\nu=\tau_1;\tau_2}
We redirect transitions that complete the left type to the right entry,
and update the future offers of left transitions:
\[
 Q_\tau=(Q_1\uplus Q_2)\setminus(F_1\setminus\{i_1\}),\qquad
 i_\tau=i_1,\qquad F_\tau=F_2.
\]
Let $d(q)=i_2$ for $q\in F_1$, and $d(q)=q$ otherwise.
For a left transition with source $q$, observation $\alpha$, and offers
$(O_C,O_F)$, the updated future offer is
\[
 O'_F=
 \begin{cases}
  \emptyset,&\#\in\alpha,\\
  (O_F\cup\mathsf{OF}(i_2))\setminus O_C,&\kappa_1(q),\\
  O_F\setminus O_C,&\text{otherwise}.
 \end{cases}
\]
Here $\kappa_1(q)=\Complete(R_{\tau_1}(q))$ when the signed residual
$R_{\tau_1}(q)$ is defined, and $\mathsf{false}$ otherwise.
The state correspondence below defines these residuals. The transitions are
\[
\begin{aligned}
 \delta_\tau={}&
 \bigl\{\, \pactautedge{q}{e}{\alpha}{O_C}{O'_F}{d(q')}\ \bigm|\ \pactautedge{q}{e}{\alpha}{O_C}{O_F}{q'}\in\delta_1,\
             q\in Q_\tau\,\bigr\} &&\text{(left)}\\
 &{}\cup\delta_2. &&\text{(right)}
\end{aligned}
\]

Left dependencies $(p,r)$ become $(p,d(r))$ when $p$ is retained;
right dependencies are unchanged. We also add $(q,i_2)$ to $L_\tau$ for
each retained left state $q$ with $\kappa_1(q)$.

\pactautcase{\nu=\tau_1+\tau_2}
Let $i_\tau$ be fresh, $Q_\tau=\{i_\tau\}\uplus Q_1\uplus Q_2$, and
$F_\tau=F_1\uplus F_2$. The entry takes a silent step to either child:
\[
\begin{aligned}
 \delta_\tau={}&
 \bigl\{
 \pactautedge{i_\tau}{e}{\epsilon}{\emptyset}{\mathsf{OF}(i_\tau)}{i_b}
 \ \bigm|\ b\in\{1,2\},\ E_C\neq\{\#\}
 \bigr\} &&\text{(choose)}\\
 &{}\cup\delta_1\cup\delta_2. &&\text{(inherit)}
\end{aligned}
\]
\pactautcase{\nu=\tau_1\dotplus_\eta\tau_2}
The state sets and entry are the same as for ordinary choice. Each initial
transition also records the selected outcome:
\[
\begin{aligned}
 \delta_\tau={}&
 \bigl\{
 \pactautedge{i_\tau}{e}{\eta_{\mathsf{true}}}{\emptyset}{\mathsf{OF}(i_\tau)}{i_1}
 \ \bigm|\ E_C\neq\{\#\}
 \bigr\} &&\text{(true)}\\
 &{}\cup
 \bigl\{
 \pactautedge{i_\tau}{e}{\eta_{\mathsf{false}}}{\emptyset}{\mathsf{OF}(i_\tau)}{i_2}
 \ \bigm|\ E_C\neq\{\#\}
 \bigr\} &&\text{(false)}\\
 &{}\cup\delta_1\cup\delta_2. &&\text{(inherit)}
\end{aligned}
\]

\pactautcase{\nu=\tau_1\parallel_I\tau_2}
Parallel transitions combine child transitions whose offers agree with
their local environments:
\[
\mathsf{SyncEdges}_I(\tau_1,\tau_2)=
\left\{\begin{aligned}
 &\pactautedge{(q_1,q_2)}{e}{\alpha}{O_C}{O_F}{(q'_1,q'_2)}
       \mid \Sync_I(\alpha_1,\alpha_2),\\
 &\quad\pactautedge{q_j}{e_j}{\alpha_j}{O_{Cj}}{O_{Fj}}{q'_j}
           \in\delta_j\quad(j=1,2),\\
 &\quad e_1\in\LocalEnv(e,I,\tau_1,O_{C2},O_{F2}),\\
 &\quad e_2\in\LocalEnv(e,I,\tau_2,O_{C1},O_{F1}),\\
 &\quad\alpha=\alpha_1\unionI\alpha_2,\\
 &\quad O_C=\Hide{I}{O_{C1}\cup O_{C2}},\
       O_F=\Hide{I}{O_{F1}\cup O_{F2}}
\end{aligned}\right\}.
\]
Here $I$ is the declared interface. We include all product sources, even
unreachable ones, since they may later be copied into a survivor.
We add a copy for each surviving component and a final state:
\[
\begin{aligned}
 Q_\tau={}&(Q_1\times Q_2)
   \uplus\{\mathsf{sur}_1(q)\mid q\in Q_1\}
   \uplus\{\mathsf{sur}_2(q)\mid q\in Q_2\}\uplus\{q_d\},\\
 i_\tau={}&(i_1,i_2),\qquad F_\tau=\{q_d\}.
\end{aligned}
\]
Each synchronized transition keeps its source and edge fields.
Its target is given below:
\[
\begin{array}{lll}
 \text{source}&\text{target}&\text{new target}\\[2pt]
 q_1\notin F_1,\ q_2\notin F_2
   &q'_1\notin F_1\text{ or }q'_2\notin F_2 &(q'_1,q'_2)\\
 \text{any}&q'_1\in F_1,\ q'_2\in F_2&q_d\\
 q_1\in F_1,\ q_2\notin F_2
   &q'_1\in F_1,\ q'_2\notin F_2&\mathsf{sur}_2(q'_2)\\
 q_2\in F_2,\ q_1\notin F_1
   &q'_2\in F_2,\ q'_1\notin F_1&\mathsf{sur}_1(q'_1).
\end{array}
\]
Only the listed cases contribute transitions. Survivor entry still uses
both child transitions, including the completed child's non-clock silent
transition with empty offers. A pair of final child states must still step
to $q_d$.

Subsequent survivor transitions come from $\tau_j[\Omega]$.
Re-signing preserves its state sets, constructor tags, and final sets.
The transitions are
\[
\begin{aligned}
 \delta_\tau={}&
 \left\{\begin{aligned}
  &\pactautedge{s}{e}{\alpha}{O_C}{O_F}{r}
   \ \mid\ \pactautedge{s}{e}{\alpha}{O_C}{O_F}{t}
       \in\mathsf{SyncEdges}_I(\tau_1,\tau_2),\\
  &\qquad r\text{ is the new target for }s,t\text{ in the table above}
 \end{aligned}\right\} &&\text{(synchronize)}\\
 &{}\cup\bigcup_{j\in\{1,2\}}
 \left\{\begin{aligned}
  &\pactautedge{\mathsf{sur}_j(q)}{e}{\alpha}{O_C}{O_F}{\mathsf{sur}_j(q')}
   \ \mid\ \pactautedge{q}{e}{\alpha}{O_C}{O_F}{q'}
       \in\delta_{\tau_j[\Omega]},\\
  &\qquad q'\notin F_{\tau_j[\Omega]}
 \end{aligned}\right\} &&\text{(survive)}\\
 &{}\cup\bigcup_{j\in\{1,2\}}
 \left\{\begin{aligned}
  &\pactautedge{\mathsf{sur}_j(q)}{e}{\alpha}{O_C}{O_F}{q_d}
   \ \mid\ \pactautedge{q}{e}{\alpha}{O_C}{O_F}{q'}
       \in\delta_{\tau_j[\Omega]},\\
  &\qquad q'\in F_{\tau_j[\Omega]}
 \end{aligned}\right\} &&\text{(finish)}\\
 &{}\cup
 \bigl\{\,\pactautedge{q_d}{e}{\epsilon}{\emptyset}{\mathsf{OF}(q_d)}{q_d}
   \ \bigm|\ E_C\neq\{\#\}\,\bigr\}. &&\text{(idle)}
\end{aligned}
\]
For product menus, let
$K_\tau(q_1,q_2)=\Hide{I}{\mathsf{OF}_{\tau_1}(q_1)
                                  \cup\mathsf{OF}_{\tau_2}(q_2)}$,
using the original child menus, including those at final states.
Products add no menu dependencies. Survivors inherit the dependencies and
fixed contributions of the re-signed child; non-final sources carry the
survivor tag, and final targets are redirected to $q_d$.

\pactautcase{\nu=\toplus(\tau_0,B)}
Let $B=\{a_i:\tau_i\mid i\in J_B\}$, where $J_B=\{1,\ldots,n\}$,
and let $A_B=\{a_i\mid i\in J_B\}$. We will use
$D_B=A_B\cap\Omega$ for the declared heads and
$W=\overline{E_F}\cap A_B$ for heads whose complements remain in the future
environment.

\[
 Q_\tau=\{i_\tau\}\uplus Q_0\uplus\biguplus_{i\in J_B}Q_i,\qquad
 F_\tau=F_0\uplus\biguplus_{i\in J_B}F_i.
\]
\[
\begin{aligned}
 \delta_\tau={}&
 \left\{\begin{aligned}
  &\pactautedge{i_\tau}{e}{a_i}{\{a_i\}}
       {\mathsf{OF}(i_\tau)\setminus A_B}{i_i}
    \ \mid\ i\in J_B,\ \overline{a_i}\in E_C,\\
  &\qquad E_F\cap\overline{A_B}=\emptyset
 \end{aligned}\right\} &&\text{(fire)}\\
 &{}\cup
 \bigl\{\,\pactautedge{i_\tau}{e}{\epsilon}{W}
       {\mathsf{OF}(i_\tau)\setminus W}{i_\tau}
    \ \bigm|\ E_F\cap\overline{A_B}\neq\emptyset,\ E_C\neq\{\#\}\,\bigr\} &&\text{(wait)}\\
 &{}\cup
 \left\{\begin{aligned}
  &\pactautedge{i_\tau}{e}{\epsilon}{D_B}
       {\mathsf{OF}(i_\tau)\setminus A_B}{i_0}
    \ \mid\ E_C\cap\overline{A_B}=E_F\cap\overline{A_B}=\emptyset,\\
  &\qquad E_C\neq\{\#\}
 \end{aligned}\right\} &&\text{(default)}\\
 &{}\cup\delta_0\cup\bigcup_{i\in J_B}\delta_i. &&\text{(inherit)}
\end{aligned}
\]
The root depends on the default and every branch entry:
$\{(i_\tau,i_0)\}\cup\{(i_\tau,i_i)\mid i\in J_B\}\subseteq L_\tau$.

\pactautcase{\nu=\twedge(\tau_0,B)}
Each nonempty $K\subseteq J_B$ uses a separate copy of $C_{B,K}$.
Let $(Q_K,\Sigma_K,\delta_K,i_K,F_K)$ be its automaton and
$A_K=\{a_i\mid i\in K\}$. Then
\[
 Q_\tau=\{i_\tau\}\uplus Q_0\uplus\biguplus_{\emptyset\neq K\subseteq J_B}Q_K,
 \qquad F_\tau=F_0\uplus\biguplus_{\emptyset\neq K\subseteq J_B}F_K.
\]
The default and subset copies have distinct tags, retained under re-signing
even when a subset has no incoming firing edge.
\[
\begin{aligned}
 \delta_\tau={}&
 \bigl\{\,\pactautedge{i_\tau}{e}{A_K}{D_B}
       {\mathsf{OF}(i_K)\setminus D_B}{i_K}
    \ \bigm|\ K=\{i\in J_B\mid\overline{a_i}\in E_C\}\neq\emptyset\,\bigr\} &&\text{(fire)}\\
 &{}\cup
 \left\{\begin{aligned}
  &\pactautedge{i_\tau}{e}{\epsilon}{D_B}
       {\mathsf{OF}(i_\tau)\setminus D_B}{i_\tau}
    \ \mid\ E_C\cap\overline{A_B}=\emptyset,\\
  &\qquad E_F\cap\overline{A_B}\neq\emptyset,\ E_C\neq\{\#\}
 \end{aligned}\right\} &&\text{(wait)}\\
 &{}\cup
 \left\{\begin{aligned}
  &\pactautedge{i_\tau}{e}{\epsilon}{D_B}
       {\mathsf{OF}(i_0)\setminus D_B}{i_0}
    \ \mid\ E_C\cap\overline{A_B}=E_F\cap\overline{A_B}=\emptyset,\\
  &\qquad E_C\neq\{\#\}
 \end{aligned}\right\} &&\text{(default)}\\
 &{}\cup\delta_0\cup\bigcup_{\emptyset\neq K\subseteq J_B}\delta_K.
   &&\text{(inherit)}
\end{aligned}
\]
The root depends on the default and every residual-product entry:
\[
 \{(i_\tau,i_0)\}\cup
 \{(i_\tau,i_K)\mid\emptyset\neq K\subseteq J_B\}\subseteq L_\tau.
\]

\pactautcase[$Q_\tau=\{q_0,q_1\},\quad i_\tau=q_0,\quad F_\tau=\emptyset.$]{\nu=X}
A free recursion variable uses a temporary edge to $q_1$, which has no
outgoing edge or annotation:
\[
\begin{aligned}
 \delta_\tau={}&
 \bigl\{\,\pactautedge{q_0}{e}{X}{\emptyset}{\mathsf{OF}(q_1)}{q_1}
   \ \bigm|\ E_C\neq\{\#\}\,\bigr\}. &&\text{(variable)}
\end{aligned}
\]

\pactautcase{\nu=\mu X.\theta}
The binder redirects the temporary $X$ edges to the body entry, retaining
the unfolding step. Let $i_\tau$ be fresh,
$Q_\tau=\{i_\tau\}\uplus Q_\theta$, and $F_\tau=F_\theta$.
\[
\begin{aligned}
 \delta_\tau={}&
 \bigl\{\,\pactautedge{i_\tau}{e}{\epsilon}{\emptyset}{\mathsf{OF}(i_\theta)}{i_\theta}
   \ \bigm|\ E_C\neq\{\#\}\,\bigr\} &&\text{(enter)}\\
 &{}\cup
 \bigl\{\,\pactautedge{q}{e}{\alpha}{O_C}{O_F}{q'}\in\delta_\theta
   \ \bigm|\ \alpha\neq\{X\}\,\bigr\} &&\text{(inherit)}\\
 &{}\cup
 \bigl\{\,\pactautedge{q}{e}{\epsilon}{\emptyset}{\mathsf{OF}(i_\theta)}{i_\theta}
   \ \bigm|\ \exists q'.\
   \pactautedge{q}{e}{X}{\emptyset}{\emptyset}{q'}\in\delta_\theta\,\bigr\}.
   &&\text{(recur)}
\end{aligned}
\]
By tail recursion, a variable transition cannot run in parallel with a
sibling. Closed types have no remaining variable labels.

\myparagraph{Computing future menus}
After constructing the states and guarded transitions, we compute future
menus by iteration:
\[
\begin{aligned}
 \mathsf{OF}_\tau&=\mu U.\Phi_\tau(U),\\
 \Phi_\tau(U)(q)&=\Omega\cap\Bigl(K_\tau(q)
       \cup\bigcup_{(q,r)\in L_\tau}U(r) \qquad \cup
   \bigcup_{\pactautedge{q}{e}{\alpha}{O_C}{O_F}{q'}\in\delta_\tau,
                  \ \#\notin\alpha}(O_C\cup U(q'))\Bigr).
\end{aligned}
\]
Here $U$ assigns an offer approximation to each state. The three terms
collect its fixed contributions, the menus on which it depends, and the
current offers and successor menus of its non-clock transitions.

We start with empty menus and apply $\Phi_\tau$ until none changes.
Edge guards, current offers, dependencies, and fixed contributions remain
fixed during iteration; guards use already computed child menus.
The resulting menus fill the future fields of the transitions.
All atomic menus are empty except $\mathsf{OF}_a(q_0)=D_a$.

\myparagraph{Finiteness and correspondence}
To show termination, we order recursive calls by the lexicographic rank
\[
 \mathsf{Rank}(\tau)=
  (\text{number of conjunctive offers in }\tau,\ |\mathsf{Body}(\tau)|).
\]
Written children have smaller bodies, and a generated conjunctive product
removes its enclosing offer. Re-signing preserves the rank, so survivor
calls also decrease it. Each constructor forms finite unions or products
and adds finitely many transitions. Menu iteration is monotone on a finite
lattice. Thus the construction terminates with a finite automaton.

\input{sections/appendix/types/constructor-correspondence}
\par\endgroup

%% file: sections/appendix/types/constructor-correspondence.tex
\myparagraph{Residuals represented by states}
To prove correspondence, we will relate constructed states to their semantic
residuals. Survivor states may represent residuals that differ on unused
alphabet declarations, since nested completions can re-sign them in
different orders.

Let $\rho\equiv_{\mathsf{live}}\theta$ mean that the two residuals have the
same control syntax and outer alphabet, and corresponding signed subterms
$(\Omega_1,\rho_1)$ of $\rho$ and $(\Omega_2,\rho_2)$ of $\theta$ satisfy
\[
 (\Omega_1\mathbin\triangle\Omega_2)
       \cap\mathsf{MsgLabel}(\rho_1)=\emptyset.
\]
Here $\triangle$ denotes symmetric difference, and $\mathsf{MsgLabel}$
collects the polarized actions in the body and its continuations.

Let $\mathcal C_\sigma$ relate residuals to constructed states.
A fresh root represents
$\sigma$, and an atomic final state represents signed $\tkw{skip}$.
An ordinary child copy uses its child's relation. The remaining clauses
retain the surrounding context:
\[
\begin{aligned}
 ((\Omega,\rho_1;\sigma_2),\iota_1(q_1))\in\mathcal C_{\sigma_1;\sigma_2}
 &\quad\text{if }(\rho_1,q_1)\in\mathcal C_{\sigma_1},\\
 ((\Omega,\rho_1\parallel_I\rho_2),(q_1,q_2))\in\mathcal C_\sigma
 &\quad\text{if }(\rho_j,q_j)\in\mathcal C_{\sigma_j},\\
 (\rho,\mathsf{sur}_j(q))\in\mathcal C_\sigma
 &\quad\text{if }(\theta,q)\in\mathcal C_{\sigma_j[\Omega]},\quad
              \rho\equiv_{\mathsf{live}}\theta,\\
 ((\Omega,\tkw{skip}),q_d)&\in\mathcal C_{\sigma_1\parallel_I\sigma_2},\\
 (\rho[X\mapsto S],\iota(q))\in\mathcal C_S
 &\quad\text{if }S=(\Omega,\mu X.\sigma_0),\quad
                   (\rho,q)\in\mathcal C_{\sigma_0}.
\end{aligned}
\]
Only retained left states occur in the sequence clause. The survivor clause
requires $q$ to be non-final and $\rho$ to satisfy $\mathsf{AlphaWF}$ and
$\mathsf{IfaceScoped}$. Conjunctive child $K$ uses
$\mathcal C_{C_{B,K}}$ with its subset tag. Temporary variable targets have
no related residual. Omitting the survivor variation gives the canonical
annotation $R_\sigma(q)$ used in the construction. An annotation is
undefined when a required child annotation is undefined.

The proofs use open fragments in a binder context $\Gamma$. Their free
variables lie in $\Gamma$, their binders are fresh for $\Gamma$, and each
free variable satisfies its inherited tail condition. Descending under
$\mu X$ adds $X$ to this context. At a closed type the context is empty.

\begin{lemma}[Structural invariants]
\label{lem:constructor-structural-invariants}
Every recursive compiler call is well formed in its inherited context and
has smaller rank. Re-signing preserves state tags, final sets, and completion
flags. For $(\rho,q)\in\mathcal C_\sigma$,
\[
 \Alphabet(\rho)=\Alphabet(\sigma),\qquad
 q\in F_\sigma\Longleftrightarrow\mathsf{Body}(\rho)=\tkw{skip},\qquad
 \kappa_\sigma(q)=\Complete(\rho).
\]
Represented residuals satisfy alphabet well-formedness, interface scoping,
tail recursion, and message ownership.
\end{lemma}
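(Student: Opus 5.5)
The plan is a single induction on the constructor-based compilation of $\sigma$, ordered by the lexicographic $\mathsf{Rank}$. Every clause of the statement is checked case by case on the last constructor applied. The claims have two natures. The first is \emph{syntactic invariants of the compiler call tree}: recursive calls are well formed in the inherited context and have smaller rank, and re-signing preserves tags, final sets and completion flags. The second is \emph{invariants of the annotation relation} $\mathcal C_\sigma$: alphabets agree, finality matches $\tkw{skip}$, $\kappa_\sigma(q)=\Complete(\rho)$, and well-formedness is preserved. I would prove the first group first, since the second uses it in the survivor and conjunctive cases.

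For the call-tree invariants, I would go through the constructor cases. Written children are subterms, carry either the parent alphabet or a $\mathsf{ParCompat}_I$ alphabet, and inherit $\mathsf{TypeWF}$ clause by clause. Descending under $\tmu X$ extends $\Gamma$ with $X$. The tail condition is inherited from $\mathsf{Tail}_X$ of the parent. Hygiene ensures the new binders are fresh for $\Gamma$. There are two non-structural calls.
\begin{enumerate}
\item The generated product $C_{B,K}$ removes one conjunctive offer. Its first rank component therefore drops. Well-formedness of $C_{B,K}$ uses $\mathsf{ConjMsgOwn}$ for the empty-interface products, and $\mathsf{Tail}_X$ forbidding free $X$ in conjunctive branches for tail recursion.
\item The survivor call on $\tau_j[\Omega]$ has rank no larger than $\tau_j$, which is a proper subterm.
\end{enumerate}
That re-signing preserves tags, final sets and $\kappa$ follows by induction on the re-signing equations. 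These change only alphabet annotations, never constructors, and $F_\sigma$ and $\Complete$ are defined from constructors alone. The one point to check is that $\Pi_I$ keeps the interface polarities, so nested $\mathsf{ParCompat}$ still holds.

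For the relation invariants, I would induct on the derivation of $(\rho,q)\in\mathcal C_\sigma$. Most cases are immediate.
\begin{itemize}
\item Fresh roots are annotated by $\sigma$ itself.
\item Atomic final states are annotated by a signed $\tkw{skip}$.
\item The sequence clause keeps the outer $\Omega$. Retained left states are non-final, and $\Complete(\rho_1;\sigma_2)=\Complete(\rho_1)\land\Complete(\sigma_2)$ matches the redirected flag. The $\Complete(\sigma_2)$ factor is folded into $\kappa$ through the entry $i_2$.
\item Products are non-final and take the conjunction of child flags.
\item $q_d$ is annotated by $\tkw{skip}$.
\item For the recursion clause, substitution of the binder for a tail variable preserves $\Complete$ because $\Complete(X)=\mathsf{false}$ and $\Complete(\tmu X.\sigma_0)=\Complete(\sigma_0)$. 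Tail position means the variable never sits to the left of a sequence.
\end{itemize}
Preservation of alphabet well-formedness, interface scoping, tail recursion and ownership is checked against these clauses as well. Sequence and recursion contexts reuse hypotheses on $\sigma$. Products combine $\mathsf{ParCompat}$ with the children.

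I expect the main obstacle to be the survivor clause, where $\rho\equiv_{\mathsf{live}}\theta$ rather than $\rho=\theta$. Alphabets are agreed outermost by definition of $\equiv_{\mathsf{live}}$, and completion and finality depend only on control syntax, so those parts transfer. Well-formedness does not transfer automatically, because inner annotations may differ on unused actions. This is why the clause explicitly assumes $\mathsf{AlphaWF}$ and $\mathsf{IfaceScoped}$ of $\rho$. Tail recursion and message ownership depend only on control syntax and $\mathsf{MsgLabel}$, which $\equiv_{\mathsf{live}}$ preserves.
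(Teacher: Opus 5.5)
Your decomposition matches the paper's: rank decrease and well-formedness of the two non-structural calls, then induction over the clauses of $\mathcal C_\sigma$. However, the recursion case of the flag equation $\kappa_\sigma(q)=\Complete(\rho)$ rests on a false claim. You argue that substituting $S=(\Omega,\tmu X.\sigma_0)$ for a tail occurrence of $X$ preserves $\Complete$, because $\Complete(X)=\mathsf{false}$ and $\Complete(S)=\Complete(\sigma_0)$. These two facts give invariance only when $\Complete(\sigma_0)=\mathsf{false}$. Well-formedness does not force that, since $\Complete$ is optimistic at choices and offers. Tail position does not help either, because $\Complete(\tau_1\tseq\tau_2)$ also reads its right operand.

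Concretely, take the paper's blocking-receive type $S=\tmu X.\toplus(\tclock\tseq X,\ m\tsym{?}:\tkw{skip})$. Here $\Complete(S)=\mathsf{true}$ through the $\tkw{skip}$ branch. The body automaton has a state $q$ representing the post-clock residual $X$, the right operand of $\tclock\tseq X$, with $\Complete(X)=\mathsf{false}$. Its copy $\iota(q)$ in the binder's automaton represents $X[X\mapsto S]=S$, so the flag there must be $\mathsf{true}$. Clock guarding only yields $\Complete(\sigma_0[X\mapsto S])=\Complete(\sigma_0)$ at the unfolding itself, not at residuals reached after a clock.

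The missing idea is the stronger invariant the paper uses. First, $\kappa_\sigma(q)$ is $\Complete$ of the canonical annotation $R_\sigma(q)$. Second, every related residual has the same control syntax as $R_\sigma(q)$, differing only on inner alphabets through $\equiv_{\mathsf{live}}$ at survivors. In the recursion clause the same substitution is applied to both sides, $R_S(\iota(q))=R_{\sigma_0}(q)[X\mapsto S]$. So equality of control syntax is preserved even though $\Complete$ itself may change, and the flags agree because $\Complete$ reads only control syntax. Your weaker hypothesis $\kappa_{\sigma_0}(q)=\Complete(\rho)$ suffices for the sequence, product and survivor clauses, but not for recursion.

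There are also two smaller omissions:
\begin{itemize}
\item Re-establishing $\mathsf{ParCompat}_J$ after a survivor re-signing needs $\mathsf{IfaceScoped}$; the fact that $\Pi_I$ keeps interface polarities is not enough. You must know that the incoming alphabet $A$ has no actions on any nested interface $J$. This gives both $\restr{A}{J}=\emptyset$ and $((\restr{\Omega_j}{J})\cup A)\setminus J=A$, and it also preserves scoping below.
\item Well-formedness of the retained singleton offers $\toplus(\tkw{skip},a_i:\tau_i)$ in $C_{B,K}$ uses the head-exclusion clause of $\mathsf{SameCycleLinear}$ inherited from the conjunctive offer. Neither $\mathsf{ConjMsgOwn}$ nor $\mathsf{Tail}_X$ covers this.
\end{itemize}
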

\begin{proof}
The rank argument above gives the required decrease. Let us first check
well-formedness. A generated $C_{B,K}$ uses each continuation once.
Branch compatibility supplies its
alphabets, conjunctive ownership separates its factors, and the
head-exclusion condition makes its retained singleton offers well formed.
The remaining requirements are inherited, including closure, guarding,
tail position, and identifier linearity.

A survivor call only re-signs a proper child. To check that this preserves
well-formedness, consider a node with interface $J$ and incoming alphabet
$A$. Interface scoping gives $A|_J=\emptyset$ and separates $J$ from
descendant interfaces. Each child receives $(\Omega_j|_J)\cup A$, preserving
its interface polarities and coverage. Removing $J$ leaves $A$ on both
sides, as required by the parallel alphabet conditions. The same argument
preserves scoping in the children; other structural conditions are unchanged.

For the state properties, we proceed by induction on the compiler rank.
Re-signing preserves constructor tags and therefore the state and final sets,
including those of every generated nonempty subset. It also leaves
$\Complete$ unchanged.

The remaining residual properties follow from the clauses of
$\mathcal C$. The root and atomic cases are immediate. Ordinary copies retain their child's
alphabet; parallel either rebuilds the parent or uses its re-signed child.
Tail recursion excludes substitution below parallel nodes and conjunctive
branches, so substituted variables retain their binder's alphabet. These
clauses preserve the structural and inherited tail conditions. Final
states represent completed children, an atomic final, or $q_d$; retained
sequence-left and product states are not final. Substitution preserves
$\tkw{skip}$, and related residuals have the same control syntax as
$R_\sigma(q)$, giving the same completion flag.
\end{proof}

\begin{lemma}[Annotation and survivor transport]
\label{lem:constructor-survivor-transport}
For an alphabet-well-formed, interface-scoped residual and
$A\subseteq\Alphabet(\rho)$,
\[
 \FutOff(\rho[A])=\FutOff(\rho)\cap A,\qquad
 \Complete(\rho[A])=\Complete(\rho).
\]
Live-equivalent residuals have equal menus and completion tests. When they
also satisfy tail recursion and message ownership, their decorated steps
match in both directions, with live-equivalent targets. Moreover,
\[
 (\rho,q)\in\mathcal C_\sigma
 \ \Longrightarrow\ \exists\theta.\
 (\theta,q)\in\mathcal C_{\sigma[A]}
       \land\rho[A]\equiv_{\mathsf{live}}\theta.
\]
\end{lemma}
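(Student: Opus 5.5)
The plan is a structural induction on the residual $\rho$ for the two equations, followed by an induction on decorated-step derivations for the step-matching claim, and finally an induction on the definition of $\mathcal C_\sigma$ for the transport claim. For the menu equation $\FutOff(\rho[A])=\FutOff(\rho)\cap A$, we go clause by clause through the type offer functions of Appendix~\ref{app:type-offer-functions}.

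The atomic case uses $\FutOff((A,a))=\{a\}\cap A$ together with $\{a\}\cap\Omega\cap A=\{a\}\cap A$, which holds because $A\subseteq\Omega$. The empty clauses are immediate. Choice, recursion, and sequence pass $A$ to every child, so the induction hypothesis distributes over the unions; for sequence we also need the completion equation, which is proved simultaneously. Offers intersect with the new annotation $A$ rather than with $\Omega$, and this gives exactly $(\cdots)\cap A$.

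The parallel case is the delicate one. Each child receives $\Pi_I(A,\Omega_j)=\restr{\Omega_j}{I}\cup A$, so the induction hypothesis gives $\FutOff(\rho_j)\cap(\restr{\Omega_j}{I}\cup A)$. Hiding $I$ deletes the $\restr{\Omega_j}{I}$ part. What remains is $\Hide{I}{\FutOff(\rho_j)}\cap A$, because interface scoping gives $\restr{A}{I}=\emptyset$, so intersecting with $A$ commutes with hiding. The induction hypothesis also needs $\Pi_I(A,\Omega_j)\subseteq\Omega_j$; this follows from $\Omega_j\setminus I=\Omega\supseteq A$. The completion equation is immediate, since $\Complete$ never inspects annotations.

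For live-equivalent residuals, both menus and completion tests depend only on control syntax and on annotations intersected with $\mathsf{MsgLabel}$. Every menu contribution is a written action $a\in\mathsf{MsgLabel}(\rho_1)$ intersected with the local alphabet, and on such actions the two alphabets agree by the $\equiv_{\mathsf{live}}$ condition. An induction then gives equal menus. Step matching proceeds by induction on the derivation under the same environment $e$, which we can assume lies in both $\Env{\cdot}$ sets because the outer alphabets coincide.

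Direct rules inspect body and environment only. Their offer pairs use $D_a=\{a\}\cap\Omega$ and $D_B$, and these agree on written heads, while the future fields agree by menu equality. Parallel steps use local environments built from sibling offers that agree by induction, and $\Sync_I$ sees identical labels. Tail recursion and message ownership ensure that substitution in a recursion step and the generated conjunctive products preserve the symmetric-difference condition on $\mathsf{MsgLabel}$. This makes the targets live-equivalent again; re-signing at survivors sends both sides to the same annotation.

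For the transport statement, we induct on the clause of $\mathcal C_\sigma$ witnessing $(\rho,q)$, using Lemma~\ref{lem:constructor-structural-invariants} and the fact that re-signing preserves state tags, so $q$ is also a state of $\Aut_{\sigma[A]}$. The root, atomic, sequence, product, and $q_d$ clauses follow from the induction hypothesis, with $\theta$ obtained by re-signing the children, since re-signing commutes with these constructors. The recursion clause requires re-signing to commute with substitution, $(\rho[X\mapsto S])[A]=\rho[A][X\mapsto S[A]]$. This holds because tail recursion keeps $X$ out of parallel children and conjunctive branches, so $X$ receives the same annotation on both sides.

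The main obstacle is the survivor clause. We have $\rho\equiv_{\mathsf{live}}\theta'$ with $(\theta',q)\in\mathcal C_{\sigma_j[\Omega]}$, and we need to land in $\mathcal C_{(\sigma[A])_j[A]}$. The plan is to show that nested re-signings compose up to live equivalence, namely $\sigma_j[\Omega][A]\equiv_{\mathsf{live}}\sigma_j[A]$ and $\Pi_I$ composing correctly, and then apply the induction hypothesis to $\theta'$. The needed equivalence is a transitivity and congruence property of $\equiv_{\mathsf{live}}$ under re-signing: re-signing only shrinks alphabets by $A$ outside interfaces, so symmetric differences stay disjoint from $\mathsf{MsgLabel}$.
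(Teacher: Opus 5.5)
Your decomposition is the paper's. The menu projection goes by structural induction, and your parallel case via $\Pi_I(A,\Omega_j)=\restr{\Omega_j}{I}\cup A$ and $\restr{A}{I}=\emptyset$ is exactly the intended argument, spelled out in more detail. Menus of live-equivalent residuals agree because only written actions intersected with local annotations contribute. State transport goes by induction on the clauses of $\mathcal C_\sigma$, with transitivity at survivors.

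\textbf{The gap is in step matching.} You run the induction ``under the same environment $e$, which we can assume lies in both $\Env{\cdot}$ sets because the outer alphabets coincide''. That hypothesis is not inherited by subterms, because live equivalence constrains corresponding inner annotations only on $\mathsf{MsgLabel}$. For example, take a parallel node with interface $I$. The children of $\rho$ may declare $m\tsym{!}$ and $m\tsym{?}$ for some $m\in I$ that neither child uses any more, while the children of $\theta$ declare the opposite polarities; both sides still satisfy $\mathsf{ParCompat}_I$. These children have different outer alphabets, so they are not live-equivalent in the lemma's sense and $\Env{\rho_j}\neq\Env{\theta_j}$. Your hypothesis therefore cannot be applied to the component steps of the parallel rule.

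The paper closes this by strengthening the induction to subterms with possibly different outer alphabets, under environments that agree only on complements of actions used by their bodies. The parallel case must then show that $\LocalEnv$ again yields environments agreeing on used actions. Ownership separates non-interface uses, and the alphabet conditions fix the polarities of used interface messages. The stated correspondence is then the special case of equal outer alphabets.

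\textbf{Two smaller claims.} First, re-signing at survivor entry does not ``send both sides to the same annotation''. $\Pi_J$ at each nested parallel node keeps that side's own interface polarities, and in the strengthened induction the incoming parent alphabets differ as well. So the targets agree only on used actions, which is all that $\equiv_{\mathsf{live}}$ requires.

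Second, in the survivor clause of the transport, proving only $\sigma_j[\Omega][A]\equiv_{\mathsf{live}}\sigma_j[A]$ would not let you move the witness from $\mathcal C_{\sigma_j[\Omega][A]}$ into $\mathcal C_{\sigma_j[\Pi_I(A,\Omega_j)][A]}$, since $\mathcal C$ is indexed by the type itself. What you need, and what holds, is exact equality. By interface scoping, the incoming alphabets avoid every nested interface $J$, so $\Pi_J(\Omega'',\Pi_J(\Omega',\Omega_c))=\restr{\Omega_c}{J}\cup\Omega''$ at each nested parallel node, and both types equal $\sigma_j[A]$. The child hypothesis, $\rho[A]\equiv_{\mathsf{live}}\theta'[A]$, and transitivity then finish the case.
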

\begin{proof}
We will first show the menu projection by structural induction. The ordinary
cases follow from distribution of intersection over unions and the
unchanged completion test. At parallel, re-signing retains each child's
interface actions and projects its boundary actions to $A$. Hiding the
interface therefore gives the original parent menu intersected with $A$.

Consider now live-equivalent residuals.
Their common control syntax gives the same completion test, and their
annotations agree on every action contributing to a menu, so their menus
also agree. To match their steps, we strengthen the induction to subterms with
possibly different outer alphabets and environments agreeing on complements
used by their bodies. The direct cases are immediate. Sequence combines
matching child pairs with the same completion flag and right menu. At
parallel, $\LocalEnv$ gives matching environments on used actions; we apply
the child hypotheses and hide the same interface. Ownership separates
non-interface uses, while the alphabet conditions fix interface polarities.
The same projection preserves agreement at survivor targets. Recursion
substitutes at tail positions, where the variable has its binder's alphabet.
Thus steps preserve live equivalence, the outer alphabet, and interface
scoping. Equal outer alphabets give the stated step correspondence.

It remains to establish state transport. Here, we proceed by induction on
the clauses of $\mathcal C_\sigma$. Ordinary copies and sequences follow from the child hypotheses. At a
product, the children and their represented residuals have equal outer
alphabets, so re-signing applies the same interface projection. At a
survivor, we project its live-equivalent witness using the child hypothesis
and use transitivity. The structural invariants ensure that the resulting
witness satisfies the required alphabet and scoping conditions. Recursion
uses the same tail-position substitution argument.
\end{proof}

\begin{lemma}[Recursion substitution]
\label{lem:constructor-recursion-substitution}
Let $S=(\Omega,\mu X.\sigma_0)$ be well formed in context $\Gamma$, and
$(\rho,q)\in\mathcal C_{\sigma_0}$. If $\mathsf{Body}(\rho)\neq X$,
substitution of $S$ transports steps in both directions, preserving the
environment, label, and current offer, and substituting their targets.
Future fields use the same offer clauses with substituted menu arguments.
If $\mathsf{Body}(\rho)=X$, the substituted residual is $S$ and its only
steps are non-clock silent unfoldings to $\sigma_0[X\mapsto S]$.
\end{lemma}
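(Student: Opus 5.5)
The plan is to prove both directions by a single induction on the derivation of the decorated step, but with a generalised statement. We quantify over open residuals $\rho$ whose free variable $X$ occurs only in tail positions, in the sense of $\mathsf{Tail}_X$. We write $\rho^S$ for $\rho[X\mapsto S]$.

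\textbf{Case $\mathsf{Body}(\rho)=X$.} This is immediate: $\rho^S=S$. The only rule applicable to $S$ is \rulelabel{T-Rec}, which requires $\Imm(E)\neq\{\#\}$ and yields the silent unfolding to $\sigma_0[X\mapsto S]$. Its offer pair is $(\emptyset,\FutOff(\sigma_0[X\mapsto S]))$. This matches the (recur) edge of the constructor automaton.

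\textbf{Case $\mathsf{Body}(\rho)\neq X$.} First I would establish two auxiliary equations by structural induction, using the tail condition.
\begin{enumerate}
\item Completion: $\Complete(\rho^S)=\Complete(\rho)$. Tail position excludes $X$ from sequence-left positions, parallel children and conjunctive branches, and these are the only places where $\Complete$ of a subterm containing $X$ would be consulted with a different value. At $X$ itself, $\Complete(X)=\mathsf{false}$ while $\Complete(S)=\Complete(\sigma_0)$. The discrepancy can only arise through choice and offer disjunctions, so here I would invoke clock guarding: every path from the body to $X$ crosses a $\tclock$ that is recorded as incomplete in a sequence-left position.
\item Menus: $\FutOff(\rho^S)=\FutOff(\rho)\cup M$, where $M$ is nonempty only if $X$ is reachable before the next clock along a $\Complete$-path. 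Clock guarding then forces $M=\emptyset$. This is the ``marker-path'' form of the substituted menu that the statement allows via ``substituted menu arguments''.
\end{enumerate}

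With these equations, the main induction runs over the rules.
\begin{itemize}
\item \textbf{Atomic, choice, marker, clock and offer rules.} Guards inspect only the head constructor and the environment, which substitution leaves unchanged. Hence the environment, label and current offer coincide, and the targets are substituted children. For offers, $\Resolved$, $\Need$, $K(E,B)$ and $W(E,B)$ depend only on heads.
\item \textbf{Conjunctive firing.} Tail recursion excludes $X$ from the branches, so $C_{B,K}$ commutes with substitution trivially.
\item \textbf{Sequence.} We apply the hypothesis to the left child. $X$ is not free in the left child, so only the right menu $F_2$ changes, and it changes by substitution. The \rulelabel{T-SeqDone} and \rulelabel{T-Seq} choice is preserved because $\tkw{skip}$ is preserved by substitution.
\item \textbf{Parallel.} Tail recursion gives $X\notin\mathsf{FV}$ of either child, so substitution is the identity below the node.
\item \textbf{Inner binders $\tmu Y$.} Hygiene makes $Y\neq X$, and we push substitution under the binder, commuting it with unfolding because $S$ is closed.
\end{itemize}
The backward direction is symmetric. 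Every step of $\rho^S$ is derived by a rule whose head constructor is that of $\rho$ (as $\mathsf{Body}(\rho)\neq X$), so we invert and apply the hypothesis to the premise.

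The main obstacle I expect is the menu and completion equation. The future field of a step can depend on $\FutOff$ of deeper continuations, via choice menus and offer menus $\FutOff(Z)$, which may contain $X$ in a non-head position. So I would first try to prove a precise form, $\FutOff(\rho^S)=\FutOff(\rho)[\text{menu of }X:=\FutOff(S)]$, treating $\FutOff(X)=\emptyset$ as a placeholder. Clock guarding would then be used only to argue that the placeholder is never reached by a $\Complete$-path. If that fails, I would settle for the weaker statement as phrased, keeping future fields as the same clauses with substituted arguments.
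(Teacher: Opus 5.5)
Your main induction matches the paper's proof. Because $\mathsf{Body}(\rho)\neq X$, the same rule fires on $\rho$ and on $\rho[X\mapsto S]$. Tail position keeps $X$ out of sequence-left operands, parallel children and conjunctive branches, so every premise concerns an $X$-free subterm, and only the selected targets and menu arguments get substituted. Inner binders commute with the substitution, and the $X$ case is the single silent \textsf{T-Rec} unfolding. One small correction: $S$ is only well formed in context $\Gamma$, not closed. The commutation at an inner $\tmu Y$ should therefore be justified by $Y\notin\mathsf{FV}(S)$, which follows from freshness of inner binders for $\Gamma$ and $X$.

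The step that would fail is the pair of auxiliary equations you plan to establish first. Clock guarding constrains only the body $\sigma_0$. A residual reached after a clock step inside $\sigma_0$ can reach $X$ before the next clock, so neither equation holds for general residuals. Take $\sigma_0=b\tsym{!}\tseq(\tclock\tseq(a\tsym{!}\tseq X))$, with $a\tsym{!},b\tsym{!}$ declared. The residual $\rho=a\tsym{!}\tseq X$ has $\FutOff(\rho)=\{a\tsym{!}\}$ but $\FutOff(\rho[X\mapsto S])=\{a\tsym{!},b\tsym{!}\}$. Now take $\sigma_0=(\tclock\tseq(a\tsym{!}\tseq X))\tplus\tkw{skip}$. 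The same residual has $\Complete(\rho)$ false but $\Complete(\rho[X\mapsto S])$ true. So sequence-right positions, not only choice and offer disjunctions, propagate the discrepancy.

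The correct relation is the one in Lemma~\ref{lem:constructor-menus}. The substituted menu gains $\FutOff(\sigma_0)$ exactly when the path condition $b_X(\rho)$ holds, and clock guarding yields only $b_X(\sigma_0)=\mathsf{false}$. Neither equation is needed for this lemma anyway. Offer clauses consult $\Complete$ only at sequence-left sources, which are $X$-free. The future fields need only be the same clauses evaluated at substituted menu arguments. Your fallback is therefore not a weakening but the correct formulation. Drop the auxiliary equations rather than try to prove them.
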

\begin{proof}
Let us first consider residuals whose body is not $X$.
The inherited tail condition excludes $X$ from sequence-left operands,
parallel components, and conjunctive branches. The direct cases therefore
follow by substitution in the selected target. In sequence, the left step
and completion flag stay unchanged, and substitution acts on the right
continuation.

For a nested binder $Y$, freshness gives $Y\neq X$ and
$Y\notin\mathsf{FV}(S)$, so capture-avoiding substitution gives
\[
 (\theta[Y\mapsto\mu Y.\theta])[X\mapsto S]
 =\theta[X\mapsto S][Y\mapsto(\mu Y.\theta)[X\mapsto S]].
\]
This identifies the unfolding targets. The offer clauses preserve current
offers and substitute only their menu arguments.

It remains to consider residuals with body $X$. Substitution returns $S$
itself. By induction on constructor tags,
a state representing $X$ has only temporary $X$ edges. Replacing them with
silent backedges therefore gives exactly the explicit unfolding steps.
\end{proof}

\myparagraph{Menu and step correspondence}
We will prove menu and step correspondence together by induction on the
compiler rank. At each rank, the child hypotheses match environments,
labels, current offers, and targets. These determine the menus, since menu
iteration does not read future fields. Equal menus then identify the future
fields.

\begin{lemma}[Menus agree]
\label{lem:constructor-menus}
If $(\rho,q)\in\mathcal C_\sigma$, then
$\mathsf{OF}_\sigma(q)=\FutOff(\rho)$.
\end{lemma}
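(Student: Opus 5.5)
We prove \lemref{constructor-menus} by induction on the compiler rank $\mathsf{Rank}(\sigma)$. Inside each rank we do a secondary induction on the clauses of $\mathcal C_\sigma$, proving it jointly with the step correspondence as the preceding paragraph prescribes. The plan first describes both sides as least fixed points. $\mathsf{OF}_\sigma$ is by definition $\mu U.\Phi_\sigma(U)$. $\FutOff$ is syntax-directed, but we will read it as the least solution of the same kind of equation over the residuals reachable before the next clock. We then show that $\mathcal C_\sigma$ carries one equation system onto the other. Concretely, for each related pair $(\rho,q)$ we will establish the local equation
\[
 \FutOff(\rho)=\Omega\cap\Bigl(K_\sigma(q)\cup\bigcup_{(q,r)\in L_\sigma}\FutOff(\rho_r)\cup\bigcup_{q\xrightarrow{}q',\ \#\notin\alpha}(O_C\cup\FutOff(\rho_{q'}))\Bigr),
\]
where $\rho_r$ and $\rho_{q'}$ are residuals related to $r$ and $q'$. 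The inclusion $\mathsf{OF}_\sigma(q)\subseteq\FutOff(\rho)$ then follows from leastness of $\mathsf{OF}$. The reverse inclusion follows by unfolding $\FutOff(\rho)$ syntactically and exhibiting each contributed action as a current offer on some non-clock path, or as a fixed or dependency contribution, from $q$.

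The cases follow the constructor list. The atomic cases and $\tkw{skip}$, $\tclock$, and $\tmarker{t}$ are direct: $\mathsf{OF}_a(q_0)=D_a=\{a\}\cap\Omega$, and every other atomic menu is empty. For choice and fair choice, the root has silent edges to $i_1$ and $i_2$, so $\mathsf{OF}(i_\tau)=\mathsf{OF}(i_1)\cup\mathsf{OF}(i_2)$, which matches $\FutOff(\tau_1)\cup\FutOff(\tau_2)$ by the child hypotheses. For exclusive and conjunctive offers, the root dependencies on $i_0$, on each $i_i$, and on each $i_K$, together with the current offers $\{a_i\}$ and $D_B$ on fire edges, give $\Omega\cap(\FutOff(\theta_0)\cup\bigcup_i(\{a_i\}\cup\FutOff(\theta_i)))$. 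In the conjunctive case the extra entries $i_K$ contribute only $\FutOff(C_{B,K})$. This set is contained in the heads and continuation menus, because the generated singleton offers re-expose only the heads $a_i$. For sequence, the child hypothesis for $\tau_1$ gives the left menus. The dependency $(q,i_2)$ is added exactly when $\kappa_1(q)$ holds, and by \lemref{constructor-structural-invariants} this equals $\Complete(\rho_1)$, which reproduces the case split in $\FutOff(\rho_1\tseq\sigma_2)$. Redirecting final left states to $i_2$ is harmless, because completed left residuals have empty menu and their successor is represented by $i_2$. For parallel product states, the fixed contribution $K_\tau(q_1,q_2)$ is by definition $\Hide{I}{\FutOff(\rho_1)\cup\FutOff(\rho_2)}$, using the child hypotheses. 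For survivor states we invoke \lemref{constructor-survivor-transport}: re-signing projects menus by $\cap A$, and live-equivalent residuals have equal menus.

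Recursion is the main obstacle. The difficulty is that the residual $\rho[X\mapsto S]$ is related to $\iota(q)$, yet the fixed-point iteration may follow the recur edge from a state representing $X$ back to $i_\theta$. This would add the binder body's menu, even though syntactically $\FutOff(X)=\emptyset$. We will rule this out using $\mathsf{ClockGuarded}$. Every non-clock path from $q$ that reaches a variable-representing state must pass through a $\tclock$ edge, and $\Phi$ discards clock edges. Hence the menu contributions of recur edges never reach states from which the variable is clock-free reachable. On the syntactic side, $\mathsf{TailRec}$ keeps $X$ out of sequence-left completion tests and parallel children, so substituting $S$ changes $\FutOff$ only at occurrences that are already hidden behind a clock. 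We will state this as a substitution equation, $\FutOff(\rho[X\mapsto S])=\FutOff(\rho)$ whenever $\mathsf{Body}(\rho)\neq X$, and prove it by induction using \lemref{constructor-recursion-substitution}. For the case $\mathsf{Body}(\rho)=X$, the substituted residual is $S$ itself. There the explicit unfolding edge gives $\mathsf{OF}=\mathsf{OF}(i_\theta)=\FutOff(\theta[X\mapsto S])=\FutOff(S)$, and this matches the clause $\FutOff(\mu X.\theta)=\FutOff(\theta)$ once the substitution equation is applied.
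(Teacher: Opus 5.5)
\textbf{The gap: the recursion case.} Your argument there rests on a false invariance. The equation $\FutOff(\rho[X\mapsto S])=\FutOff(\rho)$ for $\mathsf{Body}(\rho)\neq X$ fails whenever the tail call is reachable before the next clock. That is exactly the situation after the clock inside a loop body. Take the well-formed $S=\tmu X.\,(n\tsym{!}\tseq\tclock\tseq\tmarker{t}\tseq X)$, with $n\tsym{!}$ declared. After the clock, the residual is $\rho[X\mapsto S]$ with $\rho=\tmarker{t}\tseq X$. Then $\FutOff(\rho)=\emptyset$. Yet $\Complete(\tmarker{t})$ holds, so $\FutOff(\tmarker{t}\tseq S)=\FutOff(S)=\{n\tsym{!}\}$.

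The automaton side changes too. The state for $\tmarker{t}\tseq X$ reaches the $X$-state both by a non-clock edge and by the sequence dependency. After binding, the recur edge therefore feeds it $\mathsf{OF}(i_\theta)=\{n\tsym{!}\}$. Hence your claim that every path from $q$ to a variable-representing state crosses a clock is also false. $\mathsf{ClockGuarded}$ constrains only paths starting at the binder body, not paths from arbitrary residual states inside it. Both menus grow, and the lemma holds only because they grow by the same set under the same condition.

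\textbf{What is missing.} You need to track that condition explicitly.
\begin{itemize}
\item Define $b_X(\rho)$: whether the structural menu reaches a free $X$. It is true at $X$ and a disjunction at choices and exclusive offers. At sequence it is $b_X(\rho_1)\lor(\Complete(\rho_1)\land b_X(\rho_2))$. It is false in parallel components and conjunctive branches, by tail position.
\item Show that $b_X(\rho)$ holds exactly when there is a finite non-clock edge/dependency path from $q$ to an $X$-marker source, including the nested-binder case.
\item Prove $\FutOff(\rho[X\mapsto S])=\FutOff(\rho)\cup V$ and $\mathsf{OF}_S(\iota(q))=\mathsf{OF}_{\sigma_0}(q)\cup U$ when $b_X(\rho)$ holds, with no addition otherwise. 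Here $V=\FutOff(\sigma_0)$ and $U$ is the body-entry menu after binding.
\end{itemize}
Clock guarding is needed only to get $b_X(\sigma_0)=\mathsf{false}$. This stops the backedge from feeding the entry's own menu, so the child hypothesis gives $U=V$.

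\textbf{Smaller gaps.} Several further steps are asserted rather than proved.
\begin{itemize}
\item In the sequence case, you must exclude a left edge from a state with $\neg\kappa_1(q)$ into a redirected final state, which would import $\mathsf{OF}(i_2)$. This needs the fact that a non-clock step whose target passes $\Complete$ has a source that also passes it. Your remark about empty menus at deleted final states gives only the other inclusion.
\item In the product case, you must bound the synchronized and survivor-entry edges by $K_\tau(q_1,q_2)$. This needs the fact that a non-clock step's current offer and target menu lie within the source menu. Your pre-fixed-point local equation silently relies on the same fact.
\item A conjunctive offer has no per-branch entries $i_i$. Its continuation menus arrive only through the copy for $K=J_B$.
\end{itemize}
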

\begin{proof}
Let us begin with two properties of non-clock steps and menu dependencies.

First, if the target passes the completion test, then its source also
passes the test. For execution steps, this follows from the completion
clauses and preservation under re-signing. A sequence dependency is added
only when the left residual passes the test. For conjunctive offers, if
$C_{B,K}$ passes the test, then so does a continuation of a branch in $K$,
since $K$ is nonempty. The enclosing offer therefore passes its completion
test as well. Clock-guarded unfolding preserves the same pre-clock tests.

Second, a non-clock step's current offer and the future menu of its target
are both contained in the future menu of its source. This follows by
induction.

These properties let us compare finite menu derivations with structural
future offers. The atomic and choice cases follow directly from their
definitions. In particular, an undeclared message has only an empty-current waiting loop,
whose least menu is empty. For exclusive offers, dependencies supply every
branch and default menu, including continuations of undeclared heads.
Each declared head contributes through a firing edge under its complementary
immediate environment and an empty future environment. Their union is
exactly the source menu.

For a conjunctive offer, the generated product has menu
\[
 \FutOff(C_{B,K})=\Omega\cap\Bigl(
  \bigcup_{i\in J_B}\FutOff(\tau_i)
    \cup\{a_i\mid i\in J_B\setminus K\}\Bigr).
\]

The dependency for $K=J_B$ supplies all continuation menus, while the
default dependency supplies the default menu. Under the empty environment,
the default edge supplies every declared head. These contributions give
exactly the source menu; no generated child or execution edge adds anything
outside it. If $J_B$ is empty, only the default menu is needed.

For a retained sequence-left state, the required menu is
\[
 \mathsf{OF}_{\sigma_1}(q_1)\ \cup\
 \begin{cases}
  \mathsf{OF}_{\sigma_2}(i_2),&\kappa_1(q_1),\\
  \emptyset,&\text{otherwise}.
 \end{cases}
\]

A contribution can cross into the right fragment only through an actual
completion, a dependency redirected into a final state, or the new sequence
dependency. By the completion property above, each requires
$\kappa_1(q_1)$. Conversely, copying finite left-menu derivations loses
nothing, since deleted final states have empty menus. When the flag holds,
the new dependency supplies the full right menu. This proves both
inclusions, even when the retained left entry is final.

At a product state, the fixed contribution is the hidden union of child
menus. Child step correspondence bounds each synchronized edge by this
union. Survivor transport projects its target menu to the parent alphabet
and adds nothing outside the union. A survivor inherits the re-signed
child's contributions and dependencies, with live equivalence identifying
its represented menu. The done state has an empty menu.

The remaining case is recursion. Here, we must track whether the structural
menu reaches a free $X$.
Let $b_X(\rho)$ denote this condition. It is true at $X$, false at other
atoms, and a disjunction at choices and exclusive offers. For sequence it is
\[
 b_X(\rho_1)\lor(\Complete(\rho_1)\land b_X(\rho_2)).
\]

Tail position makes it false in parallel components and conjunctive
branches; only a conjunctive default can reach $X$. Under another binder
we descend into its body, stopping at a binder of $X$.

Induction on the preceding cases identifies $b_X(\rho)$ with a finite
non-clock edge/dependency path from $q$ to an $X$ marker source. At a nested
binder $Y$, a new backedge reaches $X$ exactly when the old path reaches
$Y$ and the body entry reaches $X$. Clock guarding prevents the body entry
from reaching $Y$ again before a clock. Cutting at the first new backedge
therefore establishes the same path property for this case.

This path property now lets us compare the menus before and after binding.
Let $V=\FutOff(\sigma_0)$ and let $U$ be the copied body-entry menu after
binding $X$. Structural substitution and finite saturation derivations give
\[
\begin{aligned}
 \FutOff(\rho[X\mapsto S])
   &=\FutOff(\rho)\cup
       \begin{cases}V,&b_X(\rho),\\\emptyset,&\text{otherwise},\end{cases}\\
 \mathsf{OF}_S(\iota(q))
   &=\mathsf{OF}_{\sigma_0}(q)\cup
       \begin{cases}U,&b_X(\rho),\\\emptyset,&\text{otherwise}.\end{cases}
\end{aligned}
\]

The second equality follows by cutting a finite derivation at its first
new backedge. Conversely, a marker path followed by a body-entry derivation
supplies the added menu. Clock guarding gives
$b_X(\sigma_0)=\mathsf{false}$, so the child hypothesis gives $U=V$.
The two equations now agree at every copied state and the fresh root.
Since all derivations are finite, they identify the least fixed point.
\end{proof}

\begin{lemma}[Constructor correspondence]
\label{lem:constructor-correspondence}
For a well-formed closed type $\tau$, $\mathcal C_\tau$ relates its initial
residual and entry state and matches decorated steps in both directions,
with related targets. Final states correspond exactly to signed
$\tkw{skip}$ residuals.
\end{lemma}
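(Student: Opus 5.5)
We prove Lemma~\ref{lem:constructor-correspondence} by induction on the compiler rank $\mathsf{Rank}(\sigma)$, for open fragments $\sigma$ that are well formed in a binder context $\Gamma$. The statement is strengthened to say that for every $(\rho,q)\in\mathcal C_\sigma$ and every $e\in\Env{\sigma}$, decorated steps $e\vdash\rho\xrightarrow[\langle O_C,O_F\rangle]{\alpha}\rho'$ match edges $q\xrightarrow[e]{\alpha,O_C,O_F}q'$ in both directions, with $(\rho',q')\in\mathcal C_\sigma$. An edge labelled $X$ is matched only at states representing a free variable. The closed case is the instance $\Gamma=\emptyset$, where no variable labels remain. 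Initial relatedness holds because the fresh root or the child entry represents $\sigma$. Final-state agreement is exactly the structural invariant $q\in F_\sigma\Leftrightarrow\mathsf{Body}(\rho)=\tkw{skip}$ of Lemma~\ref{lem:constructor-structural-invariants}.

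In each constructor case we separate the three decoration components. Guards, labels, and current offers are compared directly against the rules. Future fields are then obtained from Lemma~\ref{lem:constructor-menus}, which gives $\mathsf{OF}_\sigma(q)=\FutOff(\rho)$. Since menu iteration never reads future fields, there is no circularity.

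The constructor cases go as follows.
\begin{enumerate}
\item \emph{Atomic constructors} ($\tkw{skip}$, $a$, $\tclock$, $\tmarker{t}$): we compare edge guards with the rules T-Action, T-ActionWait, T-Clock, T-ClockWait, T-Marker and T-Skip. Idle loops at final states match L-Skip/T-Skip.
\item \emph{Choices}: the fresh entry's silent or $\eta$-labelled edges match T-Choice and T-FairChoice. The inherited edges follow by the child hypothesis.
\item \emph{Exclusive offers}: the fire, wait and default guards are precisely $\Resolved$, $\neg\Resolved$, and resolved with no immediately enabled head. The offer pairs coincide with the $\Off$ clauses, using $W=\overline{E_F}\cap A_B$.
\item \emph{Conjunctive offers}: the enabled set $K$ determines the subset copy. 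The residual $C_{B,K}$ is related to $i_K$ through the tagged child relation, and the child hypothesis applies because $C_{B,K}$ has smaller rank.
\item \emph{Sequence}: a left edge into $F_1$ is redirected to $i_2$, matching T-SeqDone; the other left edges match T-Seq. The future field $O'_F$ agrees with the sequencing clause because $\kappa_1(q)=\Complete(\rho_1)$ by Lemma~\ref{lem:constructor-structural-invariants}.
\item \emph{Recursion}: we use Lemma~\ref{lem:constructor-recursion-substitution}. It transports non-variable steps through substitution. It also turns states representing $X$ into the explicit unfolding step, which matches the recur backedge and the enter edge.
\end{enumerate}

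The main obstacle is parallel composition. For the product states, $\mathsf{SyncEdges}_I$ is built from exactly the premises of T-Par and T-ParFuse. The child hypotheses supply matching child edges with the same offer pairs, so the local environments produced by $\LocalEnv$ coincide on both sides, as do $\Sync_I$, $\unionI$, and the hidden offer unions. The target table then reproduces the case split among T-Par, T-ParFuse and entry into survivor mode.

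The difficulty lies in the survivor states. Here the semantics re-signs the surviving residual with $\Alphabet(\sigma)$, possibly after earlier nested completions carried out in a different order. The automaton instead uses the fixed copy $\sigma_j[\Omega]$. To bridge the two we first apply the transport part of Lemma~\ref{lem:constructor-survivor-transport}. This produces a witness $\theta$ with $(\theta,q)\in\mathcal C_{\sigma_j[\Omega]}$ and $\rho[\Omega]\equiv_{\mathsf{live}}\theta$. We then use the step-matching part of the same lemma, which gives bisimilarity of live-equivalent residuals together with equal menus. Combining these with the child hypothesis for $\sigma_j[\Omega]$, which has the same rank, yields matching decorated steps whose targets remain in the survivor clause of $\mathcal C_\sigma$.

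Two side conditions must be checked in the survivor case: $\mathsf{AlphaWF}$ and $\mathsf{IfaceScoped}$ are preserved along steps, and finishing edges are redirected to $q_d$. Both follow from Lemma~\ref{lem:constructor-structural-invariants}. Nested parallel nodes are handled by the projection $\Pi_I$, which keeps interface polarities fixed.
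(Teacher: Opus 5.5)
Your proposal is correct and follows essentially the same route as the paper. Both arguments use simultaneous rank induction over open fragments. Both match guards, labels and current offers constructor by constructor, taking future fields from Lemma~\ref{lem:constructor-menus}. Survivor states are handled by the transport and live-equivalence parts of Lemma~\ref{lem:constructor-survivor-transport} together with the child hypothesis, and recursion via Lemma~\ref{lem:constructor-recursion-substitution}.

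Two wording points only. First, the survivor hypothesis applies because $\sigma_j[\Omega]$ has the rank of $\sigma_j$, which is strictly below that of the parallel node. Second, $X$-labelled temporary edges have no semantic counterpart, so they are excluded from the comparison rather than matched.
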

\begin{proof}
We will use the simultaneous rank induction described above to match
steps in both directions. For open fragments, the comparison includes only
labels containing no free variable. The free-variable case has no such edge or semantic step; its temporary marker is handled
at its binder. Atomic and choice cases follow directly from the rules,
and ordinary copies use the child hypothesis.

For offers, conjunctive firing selects the same enabled set $K$ and enters
the related entry of $C_{B,K}$. Resolution uses the same default or waiting
rule. For sequence, the final-state invariant identifies exactly when a
child target is $\tkw{skip}$, so both sides select the same completing or
rebuilding rule.

Let us consider parallel composition. The child hypotheses preserve the
chosen pairs, local environments, and synchronization premises.
Final-state agreement therefore selects the same product, fusion, or
survivor case. At survivor entry, the finished child takes a silent step
with empty offers, so the witness gives the surviving label and hidden
pair. Lemma~\ref{lem:constructor-survivor-transport} relates the re-signed
target to its survivor copy and, together with the child hypothesis,
matches subsequent survivor steps. Structural preservation keeps the
targets within the relation's alphabet and scoping conditions.

For recursion, the root takes one silent unfolding step into the body.
Lemma~\ref{lem:constructor-recursion-substitution} matches non-variable
body steps and their targets. A variable state has only marker edges;
their replacement gives one explicit unfolding backedge.

With these steps matched, it remains to identify their future offers.
Lemma~\ref{lem:constructor-menus} supplies matching menus. Direct cases use the same source or target menus, with exclusive commitment
and default removing all branch heads. Sequence uses the same child future,
source completion flag, and right-entry menu; parallel hides the same
child futures. Clock pairs are empty. In recursion copies,
Lemma~\ref{lem:constructor-recursion-substitution} identifies the substituted
menu arguments, and Lemma~\ref{lem:constructor-menus} identifies their values.
Thus the full offer pairs agree in both directions. The relation clauses
and structural invariants give the initial and final correspondence.
\end{proof}

\myparagraph{Equivalence with the canonical automaton}
We will relate reachable states of the two automata through a common semantic
residual:
\[
 \mathcal B_\tau=\{(\lfloor\rho\rfloor_\tau,q)\mid
     \rho\text{ reachable},\ (\rho,q)\in\mathcal C_\tau,
     \ q\text{ reachable}\}.
\]
To check this relation, consider an edge of either automaton. The canonical correspondence in
Appendix~\ref{app:semantic-automaton-correspondence} and
Lemma~\ref{lem:constructor-correspondence} give a semantic step and then a
matching edge of the other automaton. Both edges have the same environment,
label, and full offer pair, and their targets remain related. The initial
states are related, and both final tests identify $\tkw{skip}$. Thus
$\mathcal B_\tau$ is a decorated bisimulation, preserving environment
availability and execution logs.

%% file: sections/appendix/types/type-inference-refinement-and-checking.tex
\subsection{Proofs and Algorithms from \secref{type-checking-inference}}
\applabel{typeinference-optimality}

We first state how abstraction preserves alphabets and how process
well-formedness depends on the inferred type. We then define valid compositions using these requirements.
These definitions are used in the conformance and strategy-transfer proofs
that follow. Finally, we give the refinement game used by the data-oblivious
checker.

The abstraction copies every written alphabet:
$\mathsf{abs}((\Omega,R))=(\Omega,\mathsf{abs}_\Omega(R))$.
Here $\mathsf{abs}_\Omega$ applies the clauses of \secref{type-checking-inference} to the body $R$.
Each atomic type introduced for a prefix or binding carries $\Omega$.
Checking compares types at this fixed alphabet, so
$\Alphabet(\tau)=\Alphabet(P)$ whenever $\checkdo{P}{\tau}$.

\phantomsection\label{app:process-well-formedness}
Process well-formedness also requires expressions to evaluate to values:
\[
 \mathsf{WFProc}(P)\ \Longleftrightarrow\
 \mathsf{DataWF}(P)\land\mathsf{TypeWF}(\mathsf{abs}(P)).
\]
Here $\mathsf{DataWF}$ requires bound data and process variables and
well-typed expressions, whose operators are total on well-typed values.
These are the process well-formedness requirements used in the main text.

\input{sections/appendix/types/valid-compositions}

\input{sections/appendix/types/type-inference-proofs}
\input{sections/appendix/types/refinement-game}

%% file: sections/appendix/types/valid-compositions.tex
\subsubsection{Valid Compositions}
\applabel{valid-compositions}

Validity specifies the well-formedness and comparison requirements for the
initial compositions in \secref{types}. A composition fixes how its
components are grouped and which messages are internal at each parallel
operator.

Let $C$ be a finite binary composition tree with leaves $1,\ldots,n$ and
an interface at each internal node. The expression $C[R_1,\ldots,R_n]$
denotes the process or type composition obtained by placing $R_i$ at leaf
$i$. A process composition is valid when it satisfies $\mathsf{WFProc}$;
a type composition is valid when it satisfies $\mathsf{TypeWF}$, with its
components drawn from the regular types of \secref{type-syntax-semantics}.

For a fixed system $C[R_1,\ldots,R_n]$, a composition is valid for that
system when it is valid, uses the same tree $C$ and its interfaces, and its
corresponding components satisfy
\[
\begin{aligned}
 \Alphabet(S_i)&=\Alphabet(R_i),\\
 \EtaSet{S_i}&=\EtaSet{R_i},\\
 \PidSet{S_i}&=\PidSet{R_i}
 \qquad (i=1,\ldots,n).
\end{aligned}
\]
Here $S_i$ is the component at leaf $i$ of the second composition.
These equalities fix the polarized boundary actions, fair-choice
identifiers, and progress identifiers at every component.
All compatibility comparisons use the same set
$T\subseteq\PidSet{C[R_1,\ldots,R_n]}$ of required progress identifiers.
Thus types quantified over in the optimality and inference-failure results
form valid compositions for the fixed process system.

%% file: sections/appendix/types/type-inference-proofs.tex
\subsubsection{Proofs from \secref{type-checking-inference}}
\label{app:sec4-type-proofs}

We will first establish the properties of abstraction needed to relate process steps to type steps. These properties will then allow us to establish conformance and transfer strategies from types to processes.

\begin{lemma}[Abstraction and substitution]
\label{lem:abstraction-substitution}
For capture-avoiding substitutions,
\[
\begin{aligned}
 \Alphabet(\mathsf{abs}(P))&=\Alphabet(P),\\
 \mathsf{abs}(P[x\mapsto v])&=\mathsf{abs}(P),\\
 \mathsf{abs}(P[X\mapsto Q])&=\mathsf{abs}(P)[X\mapsto\mathsf{abs}(Q)],\\
 \mathsf{abs}(P[\Omega])&=\mathsf{abs}(P)[\Omega],\\
 \mathsf{Body}(\mathsf{abs}(P))=\tkw{skip}
   &\Longleftrightarrow\mathsf{Body}(P)=\pkw{skip},\\
 \mathsf{abs}(C^P_{B,K,\mathbf v})&=C_{\mathsf{abs}_B(B),K}.
\end{aligned}
\]
Data well-formedness is preserved by process steps.
\end{lemma}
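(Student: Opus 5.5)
We prove each of the six equations by structural induction on the signed process $P$, following the clauses of $\mathsf{abs}$ and $\mathsf{abs}_B$. The alphabet clause is immediate: $\mathsf{abs}((\Omega,R))=(\Omega,\mathsf{abs}_\Omega(R))$ copies the outer annotation, and every child annotation is copied in the same way.

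For value substitution, $\mathsf{abs}$ never inspects expressions, payloads, or data variables. Every clause that mentions $e$ or $x$ ($\pkw{let}$, $\pkw{if}$, $\pkw{fair-if}$, $\pkw{send}$, $\pkw{recv}$, and the branch heads of $\mathsf{abs}_B$) drops it. Substituting $x\mapsto v$ commutes with every process constructor, so the inductive hypothesis closes each case. Process-variable substitution is treated the same way. At the base case $X$ we get $\mathsf{abs}(Q)$ on both sides. At $\pmu Y.P'$ we use capture avoidance ($Y\neq X$, $Y$ not free in $Q$) together with the matching type-level capture avoidance. The base case $\mathsf{abs}(X)=X$ is what makes this clause homomorphic.

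Re-signing is again a clause-by-clause comparison. The process and type definitions of $[\Omega']$ have identical shapes, and both use $\Pi_I(\Omega',\cdot)$ at parallel nodes. Since the alphabet clause gives $\Alphabet(\mathsf{abs}(P_i))=\Alphabet(P_i)$, the projected child alphabets agree. Atomic types introduced for prefixes (for example the $m\tsym{!}$ in $m\tsym{!}\tseq\mathsf{abs}(P)$, or the $\tkw{skip}$ for $\pkw{let}$) carry the parent alphabet, so they are re-signed to $\Omega'$ exactly like the surrounding sequence.

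The skip clause is a case inspection. $\mathsf{abs}$ maps $\pkw{skip}$ to $\tkw{skip}$, and every other constructor goes to a non-$\tkw{skip}$ body. In particular $\pkw{let}$ becomes a sequence, not $\tkw{skip}$.

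For the conjunctive residual, we unfold $C^P_{B,K,\mathbf v}$ factor by factor. Take a branch $i\in K$ whose head is a receive. Its factor is $P_i[x\mapsto v_i][\Omega]$, which abstracts to $\mathsf{abs}(P_i)[\Omega]$ by the value-substitution and re-signing clauses; this is exactly $\tau_i[\Omega]$ in $R^K_i$. For $i\notin K$, the factor $\pkw{xoffer}\ a_i:P_i[\Omega]\ \pkw{else}\ \pkw{skip}$ abstracts to $\toplus(\tkw{skip},\Pol(a_i):\mathsf{abs}(P_i)[\Omega])$, which is the type factor. Both products are right-associated $\pparop_\emptyset$ / $\tpar{\emptyset}$ chains carrying $\Omega$, and $\mathsf{abs}$ maps $\pparop_I$ to $\tpar{I}$. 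Two further points need care:
\begin{itemize}
\item The enabled set $K$ must be the same on both sides. It is defined identically through $\Need$ and $\Bare$, and $\Need$ depends only on the polarity.
\item The singleton offers' alphabets must match the alphabet $\Omega$ of the generated nodes.
\end{itemize}
I expect this clause to be the main bookkeeping obstacle, because it combines all the earlier clauses with the alphabet conventions of the generated nodes.

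Finally, preservation of $\mathsf{DataWF}$ goes by induction on the step derivation. The only steps that change data are \rulelabel{L-Recv}, \rulelabel{L-Let}, the offer receive rules, and \rulelabel{L-OfferSome}. In each of these a bound variable is replaced by a well-typed value: the value comes either from an evaluated, well-typed expression or from the environment's payload domain. Since operators are total on well-typed values, a standard substitution lemma keeps every expression well-typed and bound. Unfolding \rulelabel{L-Rec} substitutes a closed, well-formed process, so variable boundedness is preserved. The structural rules (sequence and the parallel rules, including re-signing) never touch data.
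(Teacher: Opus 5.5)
Your proposal is correct and follows essentially the same route as the paper: structural induction clause by clause for the alphabet, value-substitution, process-substitution, re-signing and $\mathsf{skip}$ equations; a factor-by-factor comparison of the two right-associated empty-interface products for the conjunctive residual; and a substitution argument at receives, bindings and recursive unfolding for preservation of data well-formedness. Your first bullet is unnecessary, since $K$ is a fixed parameter on both sides of the residual equation. Checking that the enabled sets agree under erased environments only matters later, in the one-step abstraction lemma.
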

\begin{proof}
The abstraction properties follow by structural induction. Value substitution affects only expressions, which abstraction removes. Process substitution commutes with abstraction because both follow the same constructors, with bound variables renamed consistently under recursion. Re-signing also commutes with abstraction, since the alphabets and interfaces used by $\Pi_I$ are preserved.

The $\tkw{skip}$ equivalence follows from the abstraction clauses, including the binding clause $\tkw{skip}\tseq \mathsf{abs}(P)$.

Consider now conjunctive residuals. A consumed receive leaves $P_i[x\mapsto v_i]$, whose abstraction is $\mathsf{abs}(P_i)$. Unconsumed heads remain singleton exclusive offers, and abstraction preserves their empty-interface product. This gives the required residual equation.

It remains to establish preservation of data well-formedness. Here, value substitution preserves expression typing by induction. Applying this property at receives and bindings, together with capture-avoiding process substitution at recursive unfolding, establishes preservation across process steps. Totality of expression operators ensures expression progress.
\end{proof}

Matching process steps also requires abstraction to preserve the side conditions governing their execution.

\begin{lemma}[Erasure preserves local side conditions]
\label{lem:erasure-side-conditions}
Abstraction preserves $\Complete$ and $\FutOff$. It also preserves enabled branch indices and resolution under erased environments. Erasure commutes with $\Need$, $\Sync_I$, label combination, and $\LocalEnv$ formed from chosen current/future offer pairs.
\end{lemma}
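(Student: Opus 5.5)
The plan is to prove all the claims together by structural induction on the process $P$. The order follows the dependencies: first $\Complete$, then $\FutOff$ (which reads $\Complete$ at sequences), then the offer-local predicates, and finally the environment-level commutations. For $\Complete$, every clause of $\mathsf{abs}$ maps a constructor to the type constructor whose completion clause has the same shape.
\begin{itemize}
\item Prefixes become $a\tseq\mathsf{abs}(P)$; since $\Complete(a)=\mathsf{true}$, this gives $\Complete(\mathsf{abs}(P))$.
\item A let becomes $\tkw{skip}\tseq\mathsf{abs}(P)$, which likewise reduces to $\Complete(\mathsf{abs}(P))$.
\item Conditionals and fair conditionals become $\tplus$ and $\tfairplus{\eta}$, whose clauses are disjunctions, matching the process clauses.
\item Offers, recursion, sequence and parallel map to constructors whose clauses are identical to the process ones.
\end{itemize}

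For $\FutOff$, the alphabet equation of Lemma~\ref{lem:abstraction-substitution} gives $\Alphabet(\mathsf{abs}(P))=\Alphabet(P)$. This lets the $\Omega\cap(\cdot)$ projections at prefixes and offers coincide. One check is needed at prefixes: the process clause gives $(\{\Pol(a)\}\cap\Omega)\cup\FutOff(P)$. The type $a\tseq\mathsf{abs}(P)$ has an atomic left operand that passes $\Complete$, so the type sequence clause yields $(\{a\}\cap\Omega)\cup\FutOff(\mathsf{abs}(P))$. These agree by the induction hypothesis. Let-binding similarly uses $\FutOff(\tkw{skip})=\emptyset$ together with $\Complete(\tkw{skip})$. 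The sequence case uses the already-established $\Complete$ equation to select the same branch. Parallel hides the same interface. Offers unfold to the same union over heads $\Pol(a_i)$ and continuations, because $\mathsf{abs}_B$ maps each head to its polarity and each continuation to its abstraction.

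The offer-local claims are pointwise. $\Need$ depends only on $\Pol(a)$, and $\erase$ maps $m\psym{!}v$ to $m\tsym{!}$, so $\erase(\Need(a))=\Need(\mathsf{abs}_B(a))$. Membership $\Need(a_i)\in\Bare(\Imm(E))$ is therefore equivalent to membership in $\Imm(\erase(E))$; this covers both $\Bare$ on values and the clock case. Hence $K(E,B)=K(\erase(E),\mathsf{abs}_B(B))$. Resolution compares $\Need(B)$ with $\Fut(E)$, which is already bare, so $\Resolved(E,B)\iff\Resolved(\erase(E),\mathsf{abs}_B(B))$.

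For $\Sync_I$, note that $\Sends$ and $\Recvs$ return message names only. These are invariant under $\erase$, as is the clock-equality clause. Label combination $\unionI$ replaces synced raw occurrences by $\tick{m}$ and leaves everything else unchanged. It therefore commutes with the pointwise $\erase$, provided erasure is injective on the relevant names within one label. This holds because, by $\mathsf{SameCycleLinear}$, a label carries at most one value per message name. Finally, $\LocalEnv$ is built from $\Imm(E)$, $\Fut(E)$, the restriction $\restr{O_C}{I}$ and $\restr{O_F}{I}$, set union, and removal of $\Bare(U_L)$. Each of these commutes with erasure once offer pairs are erased, since restriction depends only on message names. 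For this step I would first prove the auxiliary fact that $\erase(\Off(E,P,\alpha,P'))$ matches the type $\Off$ clause by clause. This uses the $\FutOff$ result and $\erase(\widehat a)=\Pol(a)$.

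The main obstacle is the $\Off$ and $\unionI$ bookkeeping. Two points need care. First, set difference does not commute with erasure in general: removing $\Bare(D_a)$ from future sets only works because future sets are already bare, and $\Bare(D_a)=\erase(D_a)$. Second, the conjunctive-offer case must use the residual equation $\mathsf{abs}(C^P_{B,K,\mathbf v})=C_{\mathsf{abs}_B(B),K}$ from Lemma~\ref{lem:abstraction-substitution}. With that in hand, the induction goes through uniformly.
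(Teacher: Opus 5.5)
Your proposal is correct and follows essentially the same route as the paper's proof. Both use structural induction for $\Complete$ and $\FutOff$, both observe that $\Need$, $K$, resolution, $\Sync_I$, label combination and $\LocalEnv$ only inspect bare actions and message names, and both handle offer pairs case by case using the conjunctive residual equation of Lemma~\ref{lem:abstraction-substitution}. One small simplification: commuting $\erase$ with $\unionI$ needs neither injectivity nor $\mathsf{SameCycleLinear}$. Erasure always commutes with unions, and the removed set already contains every payload for each synced name.
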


\begin{proof}
Let us first consider completion and future offers. Both are preserved by structural induction: bindings contribute an empty $\tkw{skip}$ menu and a true completion flag, prefixes retain their heads and continuations, and the remaining constructors preserve the corresponding unions, projections, and completion tests.

The other side conditions depend only on information retained by erasure. Branch needs and resolution inspect bare actions, synchronization depends on message names and polarities, and $\LocalEnv$ uses unions, restrictions, and subtraction. These operations are therefore unchanged by erasure.

For offer pairs, we proceed by induction on their derivations. Atomic heads lose only payload values, clock pairs are empty, and conditionals retain their source union menus. Exclusive commitment and default remove the same heads, while waiting preserves the same promised heads.

Conjunctive steps use the same set $K$, with their residual menus identified by Lemma~\ref{lem:abstraction-substitution}. Sequence combines the left pair with the same right menu under the same completion flag, while parallel hides the same interface from its component pairs. Hence, every process offer pair erases to the corresponding abstract offer pair.
\end{proof}

With these properties in place, we can establish the correspondence between process steps and their abstractions.

\begin{lemma}[One-step abstraction]
\label{lem:one-step-abstraction}
If $\mathsf{DataWF}(P)$ and $\lstep{E}{P}{\alpha}{P'}$, then
\[
 \lstep{\erase(E)}{\mathsf{abs}(P)}{\erase(\alpha)}{\mathsf{abs}(P')}.
\]
Every decoration of the process step erases to a decoration of this type step.
\end{lemma}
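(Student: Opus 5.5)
We proceed by induction on the derivation of $\lstep{E}{P}{\alpha}{P'}$, proving the step claim and the decoration claim together. The step claim says the erased source takes the erased step to the abstracted target. The decoration claim says every $(O_C,O_F)\in\Off(E,P,\alpha,P')$ erases to a member of $\Off(\erase(E),\mathsf{abs}(P),\erase(\alpha),\mathsf{abs}(P'))$.

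Throughout, we use two facts. First, erasure is injective enough on side conditions: by Lemma~\ref{lem:erasure-side-conditions}, $\Imm(E)=\{\#\}$ iff $\Imm(\erase(E))=\{\tclock\}$, and $\Need$, $\Resolved$, $K(E,B)$, $\LocalEnv$, $\Sync_I$ and $\unionI$ commute with $\erase$. Second, abstraction commutes with the target operations: by Lemma~\ref{lem:abstraction-substitution}, it commutes with value substitution, recursive substitution and re-signing, and it sends $C^P_{B,K,\mathbf v}$ to $C_{\mathsf{abs}_B(B),K}$. We also need admissibility: $E\in\Env{P}$ implies $\erase(E)\in\Env{\mathsf{abs}(P)}$. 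This holds because the alphabets coincide and $\Bare$ preserves disjointness of the immediate and future parts.

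The atomic cases are direct.
\begin{itemize}
\item \rulelabel{L-Send} and \rulelabel{L-Recv} map to \rulelabel{T-Action} on $m\tsym{!}$ followed by sequencing, since $\mathsf{abs}(a\ \pkw{in}\ P)=\mathsf{abs}(a)\tseq\mathsf{abs}(P)$. So the process step becomes a \rulelabel{T-SeqDone} step whose residual $\mathsf{abs}(P)$ (respectively $\mathsf{abs}(P[x\mapsto v])=\mathsf{abs}(P)$) is the required target.
\item The wait rules map to \rulelabel{T-ActionWait} under \rulelabel{T-Seq}.
\item \rulelabel{L-Cycle}, \rulelabel{L-Skip}, \rulelabel{L-End} and \rulelabel{L-Rec} map to their type counterparts.
\item \rulelabel{L-IfTrue} and \rulelabel{L-IfFalse} map to \rulelabel{T-Choice}, which is unconstrained by the guard. \rulelabel{L-FairTrue} and \rulelabel{L-FairFalse} map to \rulelabel{T-FairChoice} with the same $\eta_i$.
\item \rulelabel{L-Let} maps to \rulelabel{T-SeqDone} through $\tkw{skip}\tseq\mathsf{abs}(P)$. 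This requires a silent \rulelabel{T-Skip} step reaching $\tkw{skip}$, which exists under any non-clock environment.
\end{itemize}

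For the decorations in these cases, we compare the offer clauses. Prefix pairs $(D_a,\FutOff(P')\setminus\Bare(D_a))$ erase to the sequence clause built from the atomic pair $(D_a,\emptyset)$ plus $\FutOff(\mathsf{abs}(P))$, using $\Complete(a)=\mathsf{true}$ and preservation of $\FutOff$. Conditional and let clauses publish the same source menus.

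The offer rules go the same way. \rulelabel{L-XOfferSend}, \rulelabel{L-XOfferRecv}, \rulelabel{L-XOfferElse} and \rulelabel{L-XOfferWait} map to \rulelabel{T-XChoice}, \rulelabel{T-XChoiceElse} and \rulelabel{T-XChoiceWait}, because the enabled heads and $\Resolved$ are erasure-invariant. \rulelabel{L-OfferSome} maps to \rulelabel{T-ConjChoice}: the same set $K$ yields label $\erase(\beta_K(\mathbf v))=A_K$ and the residual equation above. Each offer clause ($\{\widehat a_i\}$, $D_B$, $W(E,B)$ and the menu subtractions) erases clause by clause to its type counterpart.

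\rulelabel{L-Seq} and \rulelabel{L-SeqDone} follow from the induction hypothesis. The $\tkw{skip}$-equivalence of Lemma~\ref{lem:abstraction-substitution} selects the same rule, and the sequencing decoration uses the same completion flag and right menu.

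The main obstacle is parallel composition, because its premises are mutually dependent. The local environments $E_1,E_2$ are defined from the sibling's chosen offer pair. So the hypothesis must deliver not merely some type decoration but the erasure of the specific pair used in the premise. This is exactly why the decoration claim is carried in the induction.

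In the parallel case, we apply the hypothesis to both component steps, obtaining type steps under $\erase(E_1),\erase(E_2)$ with decorations $\erase(O_{Cj}),\erase(O_{Fj})$. Commutation of $\erase$ with $\LocalEnv$ then gives $\erase(E_1)\in\LocalEnv(\erase(E),I,\mathsf{abs}(P_1),\erase(O_{C2}),\erase(O_{F2}))$ and symmetrically for $E_2$. $\Sync_I$ and $\unionI$ commute with erasure, and the $\tkw{skip}$ side conditions transfer, so \rulelabel{T-Par} or \rulelabel{T-ParFuse} applies.

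The delicate subcases are \rulelabel{L-ParLeft} and \rulelabel{L-ParRight}. There we must check that the re-signed survivor satisfies $\mathsf{abs}(P_2'[\Alphabet(P)])=\mathsf{abs}(P_2')[\Alphabet(\mathsf{abs}(P))]$, which follows from the re-signing clause of Lemma~\ref{lem:abstraction-substitution} and alphabet preservation. We must also check that the finished side's empty pair is reproduced by \rulelabel{T-Skip}.

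Finally, the hidden unions $\Hide{I}{O_{C1}\cup O_{C2}}$ commute with $\erase$, which closes the decoration claim.
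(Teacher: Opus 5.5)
Your proposal is correct and follows the paper's proof. Both argue by induction on the process-step derivation, matching each rule to its type counterpart, using Lemma~\ref{lem:abstraction-substitution} (substitution, re-signing, $\tkw{skip}$-reflection, conjunctive residuals) for targets and Lemma~\ref{lem:erasure-side-conditions} for guards, local environments and synchronization, and carrying the component offer pairs through the parallel case. The only difference is presentational: the paper delegates the decoration correspondence to Lemma~\ref{lem:erasure-side-conditions}, whereas you check the offer clauses inline within the same induction and also make the admissibility of $\erase(E)$ explicit.
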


\begin{proof}
We proceed by induction on the process-step derivation, matching each process rule with its abstract counterpart.

For the basic rules, a firing prefix abstracts to an atomic message followed by its continuation, which the completing sequence rule reaches in the same step. Waiting prefixes use the non-completing sequence rule. Bindings similarly use a silent $\tkw{skip}$ step followed by completion, with value substitution leaving the abstract target unchanged. Atomic control rules preserve their guards and erased labels, while an abstract conditional selects the branch determined by the concrete guard.

Consider next sequence and offers. For sequence, abstraction reflects whether the left target is $\tkw{skip}$, allowing the same completing or rebuilding rule. Substitution commutation identifies recursive unfolding targets. Exclusive offers preserve their enabled heads and resolution tests, while conjunctive offers communicate the same heads and have matching residuals by Lemma~\ref{lem:abstraction-substitution}.

The remaining case is parallel composition. Here, the induction hypotheses match the component steps and their chosen offer pairs. Lemma~\ref{lem:erasure-side-conditions} preserves local environments and synchronization, while reflection of $\tkw{skip}$ determines the same product or fusion rule. Survivor entry additionally matches the finished component's silent step, with re-signing commutation identifying the target.

The correspondence between decorations follows from Lemma~\ref{lem:erasure-side-conditions}.
\end{proof}

So far, we have shown that every process step can be matched by its abstraction. Conformance also requires the converse availability property: whenever an abstract step is enabled, the process must admit a corresponding step.

The difficulty here is that abstraction removes data-dependent guards. An abstract step may therefore select a branch different from the one selected by the process. The following lemma establishes availability without requiring both steps to make the same branch choice.

\begin{lemma}[Decorated steps lift]
\label{lem:decorated-steps-lift}
Let $P$ be reachable from a well-formed process. If a step of
$\mathsf{abs}(P)$ under $\erase(E)$ exposes $(C,F)$, then $P$ has a step
under $E$ exposing $(O_C,F)$ with $\erase(O_C)=C$. Its label agrees on
message and clock observations; fair-choice outcomes may differ.
\end{lemma}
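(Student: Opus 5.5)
We prove the lemma by structural induction on the derivation of the abstract step $\lstep{\erase(E)}{\mathsf{abs}(P)}{\alpha_\tau}{\tau'}$ exposing $(C,F)$. In each case we build a process step under $E$. The step is obtained by following the same rule where the process syntax permits. Where the abstract rule resolves a choice that the process resolves by evaluating a guard, we switch to the sibling branch the process actually takes. The invariant we carry is the one in the statement: the process step exposes $(O_C,F)$ with $\erase(O_C)=C$. Its label agrees with $\alpha_\tau$ on message and clock observations. We deliberately do not ask the targets to be related. Reachability from a well-formed process gives $\mathsf{DataWF}(P)$, by preservation in Lemma~\ref{lem:abstraction-substitution}. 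Hence every guard and payload expression evaluates to a value, so a concrete rule is always applicable whenever its communication premises hold.

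The direct cases go as follows. For a prefix $a\ \pkw{in}\ Q$, the abstract firing or waiting step of $m\tsym{!}\tseq\mathsf{abs}(Q)$ depends only on $\Need(a)\in\Imm(\erase(E))$. That condition is equivalent to the concrete premise. For receives we pick any payload $v$ with $m\psym{!}v\in\Imm(E)$, so \rulelabel{L-Send}/\rulelabel{L-Recv} or their wait rules apply. The clock, skip, marker, binding and recursion cases are immediate. In these cases the offer pairs coincide by Lemma~\ref{lem:erasure-side-conditions}, because $\FutOff$ and $\Complete$ are preserved and $\mathsf{abs}$ commutes with substitution. Exclusive and conjunctive offers use the same enabled set $K$ and the same $\Resolved$ test, since both depend only on bare actions. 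The current offers then erase correctly: $\widehat a$ erases to $\Pol(a)$, and $W(E,B)$ erases to the type-level $W$.

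The conditional is the one place where branch choices differ. Suppose the abstract step is \rulelabel{T-Choice} to $\mathsf{abs}(P_i)$ and the process guard evaluates so that \rulelabel{L-IfTrue}/\rulelabel{L-IfFalse} selects $P_j$. Both steps are silent. Both publish $(\emptyset,\FutOff(P_1)\cup\FutOff(P_2))$, which is exactly why the paper defines the offers of conditionals by the source menu. Both guards require only $\Imm\neq\{\#\}$. The fair-if case is identical except that the labels $\eta_{\mathsf{true}}$ and $\eta_{\mathsf{false}}$ may disagree, which the statement allows. For sequence, the abstract step either rebuilds or completes. We first apply the induction hypothesis to the left component. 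A completing abstract left step may be matched by a non-completing concrete one, because a different conditional branch was chosen. But the published pair for a sequence depends only on the left pair, on $\Complete$ of the left source, and on $\FutOff$ of the right component. Since the source is fixed, this pair is the same whether \rulelabel{L-Seq} or \rulelabel{L-SeqDone} fires, so the exposed pair still matches.

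The main obstacle is parallel composition. Here the component steps must satisfy mutual constraints: each child's local environment is built from its sibling's offer pair, and $\Sync_I$ must hold. The induction hypothesis gives, for each child $P_k$, a step exposing $(O_{Ck},F_k)$ with $\erase(O_{Ck})=C_k$. It does so under the concrete local environment $E_k$ whose erasure is the abstract local environment. That environment is determined by the sibling's erased pair, so we must fix concrete payloads consistently. The plan is to take $E_k=\LocalEnv(E,I,P_k,O_{C,3-k},F_{3-k})$, using the concrete current offers supplied by the sibling's induction hypothesis. This is not circular: a step's current offers are determined by the source and the chosen rule, and do not depend on incoming payloads. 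We would therefore strengthen the induction to state that the concrete $O_C$ is a function of $P$ and the abstract pair alone. Synchronization transfers because message and clock observations agree and $\Sync_I$ reads only names. Payload agreement is automatic, since receives read values from $E_k$. One last subtlety is a mismatch of completion across sides, where the abstract step is a fusion but the concrete one is not. We would handle this by noting that the process-level rule is then \rulelabel{L-Par} or a survivor rule. These rules publish the same hidden union pair, except that a survivor rule uses only the surviving side's pair. This forces us to check that a finished side contributes $(\emptyset,\emptyset)$, which holds by the skip clauses.
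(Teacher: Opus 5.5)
Your proposal is correct and follows essentially the paper's route. You break the local-environment circularity by fixing the concrete current offers and send payloads before building the children's environments. You let a conditional take its concrete branch, since both branches publish $(\emptyset,\FutOff(P_1)\cup\FutOff(P_2))$, and you select the sequence and parallel rules according to the concrete targets. The only thing to make explicit is that the parallel completion mismatch also runs the other way: the concrete step may fuse while the abstract one keeps the product or enters a survivor. That case is settled by your same observation that every parallel rule publishes the hidden union of the component pairs, with a finished side contributing $(\emptyset,\emptyset)$.
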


\begin{proof}
Fix the abstract derivation, including its component offer pairs. We will construct a corresponding process step by restoring concrete payloads and environments while preserving the recorded offers.

We begin with the payloads. Data preservation ensures that expressions at every active process occurrence are well typed. We can therefore evaluate active sends before assigning values to receives. This is possible because no step evaluates a continuation before its binding or receive has executed. Message ownership also gives a unique sending component for each interface message and separates simultaneous conjunctive branches.

Using these payloads, we reconstruct the concrete current offers from their direct offer clauses, combining them through sequence and hidden union for parallel composition. Their erasure is the recorded abstract current set, while future offers remain unchanged.

The restored offers allow us to construct the local environments. Starting from $E$, each local environment is formed using the restored offers of the sibling component. By Lemma~\ref{lem:erasure-side-conditions}, these environments erase to those recorded in the abstract derivation. Receives obtain their values from the resulting environments, with interface receives taking the values of their corresponding sibling sends.

With the offers and environments fixed, we can now lift the derivation inductively. Atomic and offer guards agree after erasure. A conditional may select a different branch, but both choices expose the same source union menu and empty current set. Fair-choice outcomes may therefore differ without changing message or clock observations. Bindings and recursive unfolding follow from substitution and menu commutation.

Let us consider the structural rules. For sequence, the concrete left target may have a different completion status from the abstract target. We therefore select the completing or rebuilding rule according to the concrete target. Both rules preserve the recorded pair, since they combine the same child pair with the right source menu under the same source completion flag.

For parallel composition, the restored pairs satisfy both local-environment equations, and their labels still synchronize. The step fuses if both targets finish. Otherwise, it retains their product when both sources are unfinished, or enters the appropriate survivor. A finished source steps only to itself. These cases are exhaustive and expose the same hidden component pairs, even when the concrete and abstract targets differ.

Finally, consider the clock environment. Only clock, sequence, and two-component parallel rules apply here. Induction gives the same clock step and abstract target. Survivor entry is impossible because a finished component cannot step under this environment, and all clock offer pairs are empty.
\end{proof}

The two step-correspondence lemmas now give the properties needed to establish conformance between a well-formed process and its abstraction.

\DataAbsSound*
\begin{proof}
Fix a well-formed process $P_*$ and relate each reachable residual $P$ to $\mathsf{abs}(P)$. The initial pair is related.

We will show that this relation satisfies both conformance clauses. For step matching, data preservation and Lemma~\ref{lem:one-step-abstraction} match every process step with an abstract step whose target remains related.

For availability reflection, observe that every abstract step admits a decoration: direct rules provide a pair, sequence extends the left pair, and parallel requires component pairs. Lemma~\ref{lem:decorated-steps-lift} therefore supplies a corresponding process step under the concrete environment.

An inadmissible concrete environment also has an inadmissible erasure and introduces no additional abstract availability obligation.
\end{proof}

We will now use conformance to transfer a type strategy to the process. The idea is to maintain a matching type history alongside the process execution, allowing the type strategy to determine the environment choices of the process strategy.

\StrategyTransfer*
\begin{proof}
Let $R$ witness $P\models\tau$. To transfer the strategy, we first choose a concrete representative for each type environment. For every type environment $e$, let $\gamma(e)$ be a concrete environment whose erasure is $e$ and whose bare current and future sets agree with those of $e$, with
\[
\gamma((\emptyset,\emptyset))=(\emptyset,\emptyset).
\]
For each boundary send, choose any well-typed payload.

We will construct the process strategy together with a matching type history. Initially, the histories end in $(P,\tau)\in R$. Suppose a process history has prefix $h$, with selected matching type prefix $g$. We define the process strategy by following the type strategy through $\gamma$:
\[
 S_P(h)=\gamma(S_\tau(g)).
\]

To maintain the correspondence, consider any possible process step under this choice. Conformance provides a matching type step with the erased environment and label and a related target. Selecting one such step extends $g$ consistently. By induction, every generated finite process prefix therefore has a matching type history. On prefixes that the strategy does not generate, it is defined arbitrarily.

This correspondence also extends to completed executions. An infinite process execution produces an infinite type execution with the erased log. If the process instead becomes stuck under its chosen environment, availability reflection ensures that the related type is also stuck under the erased choice. Both completions record that choice with $\epsilon$ and subsequently repeat the empty-environment failure entry.

Thus, every completed process log erases to a completed type log generated by $S_\tau$. Since $\Phi$ is value-oblivious, satisfaction transfers to the process log.

It remains to establish honesty. Erasure preserves clocks, while $\gamma$ preserves bare availability and future sets. Hence, every same-cycle honesty inclusion of the matching type history also holds for the process history, proving honesty of $S_P$.
\end{proof}

Next, we will show that conformance is preserved under type refinement. This allows us to replace a type by a further abstraction without losing conformance.

\ConformanceRefinement*
\begin{proof}
Let $R_{pt}$ and $R_{tt}$ witness the two premises. We will construct the required relation by composing them through their shared type residual:
\[
 R=\{(P_0,\tau'_0)\mid\exists\tau_0.
       (P_0,\tau_0)\in R_{pt}\land(\tau_0,\tau'_0)\in R_{tt}\}.
\]

Consider first availability. A decorated step of $\tau'_0$ is matched by a step of $\tau_0$ under the same erased environment through refinement. Conformance then supplies the corresponding process step.

For step matching, conformance matches a process step with a step of $\tau_0$. Decorating this step and applying refinement's response clause gives a step of $\tau'_0$ with the same erased label and a target related through $R$.

The initial pair also belongs to $R$, establishing the required conformance.
\end{proof}

With abstraction soundness and preservation under refinement established, type-checking soundness follows directly.

\TypeCheckingSound*
\begin{proof}
For a well-formed process $P$, \lemref{data-abs-sound} gives
\[
P\models\mathsf{abs}(P).
\]
Successful checking establishes
\[
\mathsf{abs}(P)\sqsubseteq\tau.
\]
Applying \lemref{conformance-refinement} gives the result.
\end{proof}

We will now establish decidability of type checking and bound the complexity of refinement checking.

\TypeCheckingDecidable*
\begin{proof}
Abstraction visits each syntax node once and introduces a constant number of nodes. With shared subterms and alphabet annotations, it therefore takes $O(|P|)$ time. Comparing the resulting abstraction with a given finite type is decidable.

To bound the checking cost, consider the safety game of \appref{refinement-game}. Here, checking reduces to refinement of two finite automata. The game has $|Q|\,|Q'|$ pair vertices and at most $|\delta|\,|Q'|$ challenge vertices.

A matching response is determined by edges with equal environments, labels, and complete offer pairs. Indexing these signatures gives at most $|\delta|\,|\delta'|$ response edges. Availability and completion violations are identified within the same product bound.

For automaton sizes $m=|Q|+|\delta|$ and $n=|Q'|+|\delta'|$, the arena therefore has size $O(mn)$.

The losing attractor of the violation sink can be computed using a worklist. Each vertex is removed once, while response counters and predecessor lists process each edge once. This gives $O(mn)$ time and space, measured after construction of the two automata.
\end{proof}

Finally, let us consider compositional conformance. Here, the checked refinements can be lifted through parallel composition.

\myparagraph{Compositional conformance of checked types}
\label{app:compositional-conformance-proof}
\noindent\textbf{\corref{compositional-conformance} (full statement).}\enspace
Let $P_1\pparop_I P_2$ be a valid process composition and $\tau_1\tpar{I}\tau_2$ a
valid type composition for it. If
$\checkdo{P_i}{\tau_i}$ for $i=1,2$, then
\[
  P_1\pparop_I P_2\models\tau_1\tpar{I}\tau_2.
\]

\begin{proof}[Proof of \corref{compositional-conformance}]
Checking gives $\mathsf{abs}(P_i)\sqsubseteq\tau_i$. Since both compositions are well formed and corresponding alphabets agree, Lemma~\ref{app:refinement-product-lifting} gives
\[
 \mathsf{abs}(P_1\pparop_I P_2)
 =\mathsf{abs}(P_1)\tpar{I}\mathsf{abs}(P_2)
 \sqsubseteq\tau_1\tpar{I}\tau_2.
\]

Applying inferred-type soundness to the composed process, followed by conformance under refinement, establishes the result.
\end{proof}

%% file: sections/appendix/types/refinement-game.tex
\subsubsection{Refinement Game}
\applabel{refinement-game}

The checker compares two well-formed types at the same boundary alphabet.
It repeatedly removes state pairs that fail the refinement conditions. We
define this procedure on their automata and show that the initial pair
survives exactly when the types are related by refinement.

Let their finite automata be
\[
 \Aut_\tau=(Q,\Sigma,\delta,i,F),\qquad
 \Aut_{\tau'}=(Q',\Sigma,\delta',i',F').
\]
An edge signature contains its environment, label, and complete offer pair.
Let $\mathsf{Sig}(q)$ be the set of signatures of outgoing edges from $q$.
For a candidate relation $R\subseteq Q\times Q'$, define
\[
\begin{aligned}
 \mathsf{Good}_R(q,q')\ \Longleftrightarrow\quad
 &\mathsf{Sig}(q')\subseteq\mathsf{Sig}(q)\\
 &{}\land(q\in F\Longleftrightarrow q'\in F')\\
 &{}\land\ \forall s,r.\ (q\xrightarrow{s}r\in\delta\ \Longrightarrow\\
 &\hspace{14mm}\exists r'.\ q'\xrightarrow{s}r'\in\delta'
                    \land(r,r')\in R).
\end{aligned}
\]
The first clause matches each abstract signature on the precise side,
without relating its target. The second checks completion. The third
matches each precise edge by an abstract edge leading back to $R$.

Starting with $R_0=Q\times Q'$, remove pairs that fail these clauses:
\[
 R_{k+1}=\{(q,q')\in R_k\mid\mathsf{Good}_{R_k}(q,q')\}.
\]
The finite sequence stabilizes. The checker accepts exactly when $(i,i')$
remains. Equivalently, Spoiler chooses a precise edge and Duplicator answers
with an abstract edge of the same signature. Availability or completion
violations and unanswered challenges enter a losing sink. Duplicator wins
by avoiding that sink. A pair with no precise challenge may stay in place.

\begin{lemma}[Refinement game characterization]
The initial pair survives exactly when $\tau\sqsubseteq\tau'$.
\end{lemma}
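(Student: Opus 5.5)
The plan is to read the iteration as computing a greatest fixed point, and then to move between automaton states and type residuals using the type--automaton correspondence (\lemref{type-automata-bisimulation}).

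\textbf{Step 1: the iteration computes a greatest fixed point.} First observe that $\mathsf{Good}_R(q,q')$ is monotone in $R$: only its third clause mentions $R$, and it does so positively. The sequence $R_0\supseteq R_1\supseteq\cdots$ is decreasing in the finite lattice $2^{Q\times Q'}$, so it stabilizes at some $R^*$. A routine induction then gives two facts. First, $R^*=\{(q,q')\in R^*\mid\mathsf{Good}_{R^*}(q,q')\}$. Second, if a relation $S$ satisfies $\mathsf{Good}_S(q,q')$ for every $(q,q')\in S$, then $S\subseteq R_k$ for every $k$: from $S\subseteq R_k$ and monotonicity we get $\mathsf{Good}_{R_k}$ on $S$, hence $S\subseteq R_{k+1}$. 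So $R^*$ is the largest relation that is $\mathsf{Good}$ with respect to itself.

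\textbf{Step 2: soundness (survival implies refinement).} Let $\mathcal R_\tau$ and $\mathcal R_{\tau'}$ be the correspondence relations of \lemref{type-automata-bisimulation}. Assume $(i,i')\in R^*$. Define a relation on residuals by
\[
\Rel=\{(\tau_0,\tau'_0)\mid\exists (q,q')\in R^*.\ (\tau_0,q)\in\mathcal R_\tau\land(\tau'_0,q')\in\mathcal R_{\tau'}\}.
\]
This contains $(\tau,\tau')$. We check the three clauses of \defref{type-refinement}. Both types share the same boundary alphabet, so they have the same environment set $\Env{\cdot}$, and the correspondence turns decorated semantic steps into edges with identical signatures, in both directions.
\begin{itemize}
\item[] \emph{Availability.} A decorated step of $\tau'_0$ gives a signature in $\mathsf{Sig}(q')$. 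This lies in $\mathsf{Sig}(q)$, which yields an edge of $q$ with that signature, which the backward correspondence turns into a step of $\tau_0$ with the same environment, label and offer pair.
\item[] \emph{Response.} A decorated step of $\tau_0$ gives an edge $q\xrightarrow{s}r$. The third clause of $\mathsf{Good}$ supplies an edge $q'\xrightarrow{s}r'$ with $(r,r')\in R^*$. The backward correspondence gives a step of $\tau'_0$ whose target is related to $r'$, so the targets are in $\Rel$.
\item[] \emph{Completion.} Final states correspond exactly to $\tkw{skip}$ residuals, so the second clause of $\mathsf{Good}$ gives completion agreement.
\end{itemize}

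\textbf{Step 3: completeness (refinement implies survival).} Conversely, take a refinement witness $\Rel$ and set
\[
S=\{(q,q')\mid\exists(\tau_0,\tau'_0)\in\Rel.\ (\tau_0,q)\in\mathcal R_\tau\land(\tau'_0,q')\in\mathcal R_{\tau'}\}.
\]
We show $\mathsf{Good}_S$ holds on every pair of $S$ by running the same three translations in reverse. The availability clause gives the signature inclusion. The response clause gives the matching abstract edge with target in $S$. The completion clause gives final-state agreement. Since $(i,i')\in S$, Step 1 gives $(i,i')\in R^*$.

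\textbf{Main obstacle.} The delicate point is sharing: one automaton state $q$ may represent several residuals, while $\mathsf{Good}$ speaks only about $q$. We must therefore ensure that $\mathsf{Sig}(q)$, finality, and the edge targets do not depend on which representative is chosen. I would discharge this using the two-directional form of \lemref{type-automata-bisimulation}, which holds at every related pair. Every representative of $q$ therefore has exactly the decorated steps recorded by $q$'s edges, with related targets. The witness quantifiers in $\Rel$ and $S$ can then be instantiated with any representative. A secondary check is that edge signatures include the complete offer pair. This is precisely what \defref{type-refinement} demands, so no information is lost in either translation.
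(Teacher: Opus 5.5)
Your proposal is correct and follows the paper's own argument. You transport the three $\mathsf{Good}$ clauses through the two-directional type--automaton correspondence, read the stabilized relation as a refinement witness, and conversely push a refinement witness to a post-fixed relation on states that no iteration removes. Your explicit greatest-fixed-point step and the composed relations $\Rel$ and $S$ spell out what the paper states in a few lines, and the sharing concern is harmless: both relations are defined existentially, so each clause is checked at one chosen witness pair.
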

\begin{proof}
The type--automaton correspondence preserves decorated steps in both
directions and identifies final states with $\tkw{skip}$. It therefore
transports the three clauses of $\mathsf{Good}$ to those of refinement.
At the fixed point, the remaining relation gives a refinement witness.
Conversely, a refinement witness transports to a post-fixed relation on
states. No iteration removes one of its pairs, so its initial pair remains.
\end{proof}

%% file: sections/appendix/types/type-compatibility-checking.tex
\subsection{Formal Type-Compatibility Construction}
\applabel{type-compatibility-construction}

To prove the game reduction of \secref{typecompat}, we first relate semantic
execution prefixes to automaton states and establish the log properties used
by the game. We then construct the game and prove its correspondence with
semantic strategies. Finally, we prove process safety and the complexity
bound.

Throughout, require $\mathsf{TypeWF}(\Theta_0)$ and fix the parenthesized composition
\[
  \Theta_0=\tau_1\tpar{I_1}\cdots\tpar{I_{n-1}}\tau_n,
  \qquad I=I_1\cup\cdots\cup I_{n-1},
\]
and a progress set \(T\subseteq\Pids(\Theta_0)\).

\subsubsection{Strategies, Executions, and Logs}

To relate semantic executions to game plays, we pair each finite execution
prefix from \secref{typecompat} with a matching automaton state.
For the canonical automaton, the current residual determines its matching
state. Represented prefixes then correspond one-to-one to the prefixes of
\secref{pact-safety}, so the strategies in
Theorem~\ref{thm:type-compatibility-soundness-completeness} are those of
\secref{pact-safety}. The constructor automaton may have several states representing a
residual, so transferring a strategy also tracks the matching state.
Let \(\widehat{\mathsf{Pref}}(\Theta_0)\) contain these represented prefixes,
with matching states supplied by the relation of
\lemref{type-automata-bisimulation}. A strategy is a function
\[
  \Strategy_\Theta:
  \widehat{\mathsf{Pref}}(\Theta_0)
  \longrightarrow
  \mathsf{Env}_I.
\]
Here $\mathsf{Env}_I$ is the finite set of well-formed environments on the
boundary of $\Theta_0$. The current matching state is finite controller
memory. After the strategy chooses an environment, the adversarial resolution
extends the semantic prefix and updates the state to any corresponding
successor allowed by the relation. The log set includes every such resolution.
For a prefix $h_k$ of $k$ steps, let $\widehat{h}_k$ denote its
represented form. An execution is generated by $\Strategy_\Theta$ when
$e_k=\Strategy_\Theta(\widehat{h}_k)$ at each ordinary step.

Completed executions follow the convention in \secref{pact-safety},
with payloads erased.
An enabled but dishonest choice executes normally; its log is rejected by
the honesty conjunct of $\Psi_T$. Let $\mathsf{Log}(h)$ denote the
sequence of environment--action entries of a completed execution $h$.
Then
\[
  \Logs^\omega_{\Strategy_\Theta}(\Theta_0)
  =
  \{\mathsf{Log}(h)\mid
    h\text{ is an infinite completed execution generated by }
    \Strategy_\Theta\}.
\]
The adjacent honesty condition also constrains every later environment in
the same cycle:
\begin{lemma}[Same-cycle honesty]
\label{lem:type-log-honesty}
For every failure-free log
\(\ell=\langle e_0,\alpha_0\rangle
       \langle e_1,\alpha_1\rangle\cdots\),
\(\Honest(\ell)\) holds exactly when, for every \(i<j\) such that none of
\(\alpha_i,\ldots,\alpha_{j-1}\) contains \(\tclock\),
\[
  \Avail(e_j)\subseteq\Fut(e_i).
\]
\end{lemma}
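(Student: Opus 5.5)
The plan is to prove the two directions separately, using only the definition of $\Honest(\ell)$, which is the adjacent condition $\#\in\alpha_i\lor\Avail(e_{i+1})\subseteq\Fut(e_i)$, and the well-formedness constraint on environments. That constraint says that when $\Imm(e)\neq\{\#\}$, the bare immediate set and the future set are disjoint, so $\Fut(e)\subseteq\Avail(e)$ for every non-clock environment.

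\emph{Backward direction.} Assume the same-cycle inclusion for all $i<j$ with no clock among $\alpha_i,\ldots,\alpha_{j-1}$. Specialise to $j=i+1$. For each $i$, either $\#\in\alpha_i$, or the window consists of $\alpha_i$ alone and contains no clock, so $\Avail(e_{i+1})\subseteq\Fut(e_i)$. This is exactly $\Honest(\ell)$ at position $i$.

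\emph{Forward direction.} Assume $\Honest(\ell)$, fix $i$, and induct on $j>i$ over clock-free windows.
\begin{itemize}
\item Base case $j=i+1$: this is the adjacent clause, since $\#\notin\alpha_i$.
\item Step from $j$ to $j+1$: here $\#\notin\alpha_j$, so honesty gives $\Avail(e_{j+1})\subseteq\Fut(e_j)$. We need to chain this with the hypothesis $\Avail(e_j)\subseteq\Fut(e_i)$, which requires $\Fut(e_j)\subseteq\Avail(e_j)$.
\end{itemize}

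That inclusion is immediate from the definition $\Avail(e)=\Bare(\Imm(e)\setminus\{\#\})\cup\Fut(e)$, whatever $\Imm(e_j)$ is. If $\Imm(e_j)=\{\#\}$, well-formedness forces $\Fut(e_j)=\emptyset$, so the inclusion holds trivially. Transitivity then gives $\Avail(e_{j+1})\subseteq\Fut(e_i)$, which completes the induction.

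\emph{Main obstacle.} The delicate point is the role of the failure-free hypothesis and of clock environments, not the chaining. On a failure completion the log switches to the empty environment with $\epsilon$ labels, and the argument above would still go through. The hypothesis probably exists so that every $\alpha_k$ is a genuine step label, and so that $\#\notin\alpha_k$ really means no cycle boundary has been crossed.

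We must also rule out an environment with $\Imm(e_k)=\{\#\}$ but a label $\alpha_k$ not containing $\#$. Under the rules, every step taken under the clock environment is a clock step: the wait rules all require $\Imm(E)\neq\{\#\}$, and parallel composition synchronises the clock. Such a window therefore cannot arise. In any case it would be harmless, since its future set is empty. I would state this observation explicitly and appeal to the clock guards of the semantic rules, but the core inclusion chain does not depend on it.
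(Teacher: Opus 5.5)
Your proof is correct and follows the paper's argument. The backward direction specializes to $j=i+1$. The forward direction chains the adjacent inclusions through $\Fut(e_k)\subseteq\Avail(e_k)$, which holds directly from the definition of $\Avail$ as a union containing $\Fut$ (as you eventually say), not from the disjointness constraint you first invoke; the side remarks on failure completions and clock environments are not needed.
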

\begin{proof}
The right-to-left direction takes \(j=i+1\).  Conversely, each adjacent
honesty step gives both
\(\Avail(e_{k+1})\subseteq\Fut(e_k)\) and
\(\Fut(e_{k+1})\subseteq\Fut(e_k)\).  Chaining the latter inclusions and using
the final adjacent step yields
\(\Avail(e_j)\subseteq\Fut(e_i)\).
\end{proof}

\noindent\textbf{\propref{type-objective-regular} (full statement).}\enspace
For every finite composition \(\Theta_0\) and finite progress set \(T\),
\(\Psi_T\) is \(\omega\)-regular.

\begin{proof}[Proof of \propref{type-objective-regular}]
Honesty is a safety property recognized by a finite monitor that stores either
\(\star\), after a clock, or the preceding future set.  The remaining condition
is the finite disjunction of
\[
\begin{array}{ll}
\mathit{live}:&
\tclock\text{ and every }\tmarker{t}\ (t\in T)
\text{ occur infinitely often},\\[1mm]
\mathit{unfair}_{\eta,b}:&
\tclock\text{ and }\eta_b\text{ occur infinitely often, while }
\eta_{\bar b}\text{ occurs only finitely often}.
\end{array}
\]
The live case is generalized B\"uchi and each unfair case is a Rabin
condition.  There are finitely many observations and fairness identifiers, so
their finite union, intersected with the honesty monitor, is
\(\omega\)-regular.
\end{proof}

\input{sections/appendix/types/detailed-type-compatibility-game}

\subsubsection{Soundness and Complexity Proofs}
\label{app:type-compatibility-process-soundness}

A winning type strategy must also ensure \termref{def:comm-safety}{communication safety} for conforming
processes. We prove this using strategy transfer, then derive decidability
and the complexity bound from the finite Rabin game.

\noindent\textbf{Theorem~\ref{thm:type-compatibility-process-soundness} (full statement).}\enspace
Fix the process composition
\(Q_0=P_1\pparop_{I_1}\cdots\tparop_{I_{n-1}}P_n\), the matching type
composition \(\Theta_0\) with the same nesting and interfaces, and a progress
set \(T\). Require \(\Theta_0\) to be well formed and the two compositions
to have the same fair-choice and progress identifiers. If
\[
Q_0\models\Theta_0
\qquad\text{and}\qquad
\TypeCompat_{I,T}(\tau_1,\ldots,\tau_n),
\]
then the processes are compatible:
\[
\exists\Strategy_{Q_0}.\
\Honest(\Strategy_{Q_0})
\land
\ProcJoin_{\Strategy_{Q_0},T}(P_1,\ldots,P_n).
\]

\begin{proof}[Proof of Theorem~\ref{thm:type-compatibility-process-soundness}]
By \thmref{type-compatibility-soundness-completeness}, a winning type strategy
makes every completed type log satisfy $\Psi_T$, and is therefore honest.
Let $\Phi(\ell)=\Psi_T(\erase(\ell))$. This objective is value-oblivious
and agrees with $\Psi_T$ on type logs. Apply \lemref{strategy-transfer}
to $Q_0\models\Theta_0$. Every completed process log then satisfies $\Phi$,
and the transferred strategy is honest.

Erasure preserves clock, fairness, and progress observations. Equality of
the identifier sets makes process and type fairness range over the same
identifiers and gives them the same progress set $T$. The transferred
objective therefore gives process liveness. If a reachable process prefix
were stuck under its chosen environment, its failure completion would have
only $\epsilon$ observations thereafter, contradicting clock recurrence.
Thus the strategy is also sync-compatible, proving the result.
\end{proof}

\label{app:type-compatibility-complexity}
\begin{proof}[Proof of \thmref{type-compatibility-decidable}]
The construction is effective and always produces a finite Rabin game.
Winning regions in finite Rabin games are decidable, so the reduction directly
implies that type compatibility is decidable.  The stated bound follows from
standard Rabin-game algorithms~\cite{majumdarRabinGamesColourful2024}.  The
parameters \(N\) and \(M\) already include the automaton product, honesty
memory, and round-robin monitors; for fixed \(K\), the query is polynomial in
the monitored arena.
\end{proof}

%% file: sections/appendix/types/detailed-type-compatibility-game.tex
\subsubsection{Detailed Type-Compatibility Game}
\label{sec:tc-game-appendix}

We will construct the game in two steps. First, the safety arena checks
honesty and sends unresolvable choices to a losing sink. Second, we add
monitors for clock, fairness, and progress. We then prove game correspondence.

Let the automaton of the fixed parenthesized composition, retaining its
individual interfaces, be
\[
  \Aut_I=\Aut(\Theta_0)=(S,\Sigma_I,\Delta_I,q_0,F_I).
\]
For \(e=(E_C,E_F)\in\mathsf{Env}_I\), we will omit the offer pair when
referring to product edges:
\[
q\xrightarrow[e]{\alpha}q'
\quad\triangleq\quad
\exists O_C,O_F.\
\atrans{q}{E_C}{E_F}{\alpha}{O_C}{O_F}{q'}\in\Delta_I.
\]
\nparagraph{Honesty memory}
To enforce honesty, we store the preceding future set until the next clock:
\[
  \mathsf{Mem}_I=
  \{\star\}
  \cup
  \{E_F\mid(E_C,E_F)\in\mathsf{Env}_I\}.
\]
Here \(\star\) marks a fresh cycle, with no preceding future set to
constrain the next choice. A choice respects this memory when
\[
\begin{aligned}
\mathsf{Adm}(\star,e)&\quad\text{for every }e,\\
\mathsf{Adm}(F,e)&\triangleq\Avail(e)\subseteq F
   &&(F\neq\star),
\end{aligned}
\]
After a step under \(e=(E_C,E_F)\), the memory is updated by
\[
\mathsf{next}(e,\alpha)=
\begin{cases}
\star,&\tclock\in\alpha,\\
E_F,&\tclock\notin\alpha.
\end{cases}
\]
Thus the memory resets only when the fired observation contains a clock tick.

\nparagraph{Safety arena}
The arena alternates between environment choices and system resolutions.
Player~0 vertices store the automaton state and honesty memory;
Player~1 vertices also store the chosen environment:
\[
\begin{aligned}
V_0
  &=\{q^{P,F}\mid q\in S,\ F\in \mathsf{Mem}_I\},\\
V_1
  &=\{q^{F,e}\mid q\in S,\ F\in \mathsf{Mem}_I,\ e\in\mathsf{Env}_I\}
    \cup\{\bot\}.
\end{aligned}
\]
At \(q^{P,F}\), Player~0 chooses an environment:
\[
E_{\mathsf{choice}}
=
\{q^{P,F}\xrightarrow{\theta_e}q^{F,e}
  \mid q\in S,\ F\in \mathsf{Mem}_I,\ e\in\mathsf{Env}_I\}.
\]
For an honest choice, Player~1 selects a matching product transition:
\[
\begin{aligned}
E_{\mathsf{resolve}}
=
\{q^{F,e}\xrightarrow{\langle e,\alpha\rangle}
  q'^{P,\mathsf{next}(e,\alpha)}
\mid{}&
\mathsf{Adm}(F,e),\
q\xrightarrow[e]{\alpha}q'\}.
\end{aligned}
\]
Each resolution contributes \(\langle e,\alpha\rangle\) to the log,
including when \(\alpha=\epsilon\).

A dishonest choice or one with no matching transition enters the absorbing sink:
\[
\begin{aligned}
e_\bot&=(\emptyset,\emptyset),\\
E_{\mathsf{fail}}
={}&
\{q^{F,e}\xrightarrow{\langle e,\epsilon\rangle}\bot
\mid
\neg\mathsf{Adm}(F,e)
\ \text{or}\
\nexists\alpha,q'.\ q\xrightarrow[e]{\alpha}q'\}\\
&{}\cup
\{\bot\xrightarrow{\langle e_\bot,\epsilon\rangle}\bot\}.
\end{aligned}
\]
The safety arena is
\[
A_I^{\mathsf H}(\tau_1,\ldots,\tau_n)
=
(V_0,V_1,
  E_{\mathsf{choice}}\cup E_{\mathsf{resolve}}\cup E_{\mathsf{fail}},
  q_0^{P,\star}).
\]

For a play \(\pi\), let \(\mathsf{Log}(\pi)\) retain the labels on resolving
and failure edges and omit the bookkeeping labels \(\theta_e\).
The failure loop makes every log infinite. By \lemref{type-log-honesty},
every log that avoids the sink is honest.

\nparagraph{Monitor product}
To monitor the remaining clock, fairness, and progress conditions, we
index their disjuncts by
\[
  D_T=\{\mathsf{live}\}
      \cup(\Etas_I\times\{\mathsf{true},\mathsf{false}\}).
\]
Here \(\Etas_I=\Etas(\Theta_0)\) contains the fair-choice identifiers.
Each case requires some observations infinitely often and permits others
only finitely often:
\[
\begin{array}{lll}
\mathsf{Req}(\mathsf{live})
  =\{\tclock\}\cup\{\tmarker{t}\mid t\in T\},
&\qquad&
\mathsf{Forbid}(\mathsf{live})=\emptyset,\\[1mm]
\mathsf{Req}(\eta,b)
  =\{\tclock,\eta_b\},
&&
\mathsf{Forbid}(\eta,b)=\{\eta_{\bar b}\}.
\end{array}
\]
Here \(\bar b\) denotes the other Boolean outcome.

For each \(d\), let \(\mathcal M_d\) be a deterministic round-robin monitor
that waits for the observations in \(\mathsf{Req}(d)\) in a fixed order.
On a resolving label \(\langle e,\alpha\rangle\), it advances when the
awaited observation belongs to \(\alpha\) and emits a one-step completion
flag at the end of each round. An \(\epsilon\)-step advances no monitor.
The flag therefore recurs exactly when every required observation recurs.

We combine \(A_I^{\mathsf H}\) with all monitors in a synchronous product,
updating them only on resolving edges. The Rabin condition uses sets of
vertices, so we insert a record vertex on each resolving edge to store its
observations and completed rounds. For \(\xi=(u,\langle e,\alpha\rangle,v)\),
the edge becomes
\[
u\xrightarrow{\langle e,\alpha\rangle}z_{\xi}
\xrightarrow{\theta}v.
\]
Here \(z_\xi\) stores \(\alpha\) and the flags after the monitors update
on the first edge. Only that edge contributes to the log; \(\theta\) is
bookkeeping. Each record vertex belongs to Player~1 with sole successor \(v\).

Let \(\widehat A_{I,T}\) denote the resulting arena, with initial vertex
\[
  s_0=(q_0^{P,\star},\mathbf c_0).
\]
Here \(\mathbf c_0\) contains the initial states of all monitors.
To define the Rabin pairs, we collect vertices that complete a round and
those carrying forbidden observations:
\[
\begin{aligned}
F_d
  &=\{v\in\widehat A_{I,T}\mid
      v\text{ carries the completion flag of }\mathcal M_d\},\\
B_d
  &=\{v\in\widehat A_{I,T}\mid
      v\text{ carries some }x\in\mathsf{Forbid}(d)\}.
\end{aligned}
\]
No sink vertex belongs to any \(F_d\). The pair \((F_d,B_d)\) requires
infinitely many visits to \(F_d\) and only finitely many to \(B_d\).
The resulting Rabin objective is
\[
R_T=\{(F_d,B_d)\mid d\in D_T\}.
\]
The resulting game is
\[
G_{I,T}(\tau_1,\ldots,\tau_n)
=
(\widehat A_{I,T},R_T).
\]

\begin{lemma}[Objective monitor]
\label{lem:type-objective-monitor}
For every play \(\pi\) of the monitored arena,
\[
  \pi\models R_T
  \quad\Longleftrightarrow\quad
  \Psi_T(\mathsf{Log}(\pi)).
\]
\end{lemma}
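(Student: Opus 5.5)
The proof splits the statement into its three constituents and shows that each Rabin-side ingredient computes exactly the corresponding ingredient of $\Psi_T$ on $\mathsf{Log}(\pi)$. The plays of $\widehat A_{I,T}$ come in two kinds. Either a play eventually enters the sink $\bot$, or it never does. I treat these two cases separately.

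\emph{Sink plays.} Suppose $\pi$ reaches $\bot$ via an $E_{\mathsf{fail}}$ edge. From then on its log is an infinite repetition of $\langle e_\bot,\epsilon\rangle$. No monitor advances on $\epsilon$-steps, and no sink vertex lies in any $F_d$. So only finitely many visits to each $F_d$ occur, and $\pi\not\models R_T$. On the log side, only finitely many $\tclock$ observations occur, so the conjunct $\tclock\in\Inf(\mathsf{Log}(\pi))$ of $\Psi_T$ fails. Both sides are false.

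\emph{Sink-free plays.} Now suppose $\pi$ avoids $\bot$. Every resolving edge satisfied $\mathsf{Adm}(F,e)$, where $F$ is the stored memory. By construction of $\mathsf{next}$, this memory is $\star$ right after a clock and otherwise the previous future set. Hence the adjacent honesty condition holds at each position. More precisely, $\Avail(e_{i+1})\subseteq\Fut(e_i)$ whenever $\tclock\notin\alpha_i$, so $\Honest(\mathsf{Log}(\pi))$ holds. Lemma~\ref{lem:type-log-honesty} is not even needed for this direction.

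It remains to show that $\pi\models R_T$ is equivalent to the non-honesty conjuncts. First I establish the monitor fact. For each $d$, the round-robin monitor $\mathcal M_d$ emits its completion flag infinitely often exactly when every observation in $\mathsf{Req}(d)$ occurs in infinitely many labels of the log. This follows by a standard argument.
\begin{itemize}
\item If every required observation recurs, each awaited one is eventually seen, so every round completes.
\item If some observation occurs only finitely often, the monitor stalls at it forever after its last occurrence.
\end{itemize}
The record vertices $z_\xi$ store exactly the updated flags and the observations of $\alpha$. Therefore visiting $F_d$ infinitely often is equivalent to $\mathsf{Req}(d)\subseteq\Inf(\mathsf{Log}(\pi))$. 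Likewise, visiting $B_d$ finitely often is equivalent to $\mathsf{Forbid}(d)\cap\Inf(\mathsf{Log}(\pi))=\emptyset$. So $\pi\models R_T$ iff one of the following disjuncts holds:
\begin{itemize}
\item $\mathit{live}$: $\tclock$ and all $\tmarker{t}$ for $t\in T$ recur.
\item $\mathit{unfair}_{\eta,b}$ for some $\eta$ and $b$: $\tclock$ and $\eta_b$ recur, while $\eta_{\bar b}$ does not.
\end{itemize}

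The final step is the Boolean identity
\[
\tclock\in\Inf\land(\TypeFair\Rightarrow\forall t\in T.\ \tmarker{t}\in\Inf)
\iff
\mathit{live}\lor\bigvee_{\eta,b}\mathit{unfair}_{\eta,b}.
\]
This is the decomposition already used in the proof of \propref{type-objective-regular}.

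For the forward direction, if the log is fair then every marker recurs, giving $\mathit{live}$. If it is unfair, some $\eta$ has exactly one outcome recurring, giving $\mathit{unfair}_{\eta,b}$ for that outcome $b$.

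For the backward direction, each disjunct includes recurrence of $\tclock$. The disjunct $\mathit{live}$ yields progress outright. The disjunct $\mathit{unfair}_{\eta,b}$ falsifies $\TypeFair$, which makes the implication vacuous.

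The main obstacle is bookkeeping rather than any conceptual difficulty. It consists of three checks:
\begin{itemize}
\item The inserted record vertices and the $\theta$ bookkeeping edges must neither add nor drop log entries.
\item Monitors must update only on resolving edges.
\item $\epsilon$-resolutions, which do enter the log, must advance no monitor and carry no forbidden observation.
\end{itemize}
Together these ensure that the vertex-based Rabin sets see precisely the $\Inf$ set of $\mathsf{Log}(\pi)$.
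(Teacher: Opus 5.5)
Your proof is correct and takes the same route as the paper's. You split on whether the play reaches $\bot$, obtain honesty on sink-free plays from the $\mathsf{Adm}$ guard, and match the live and unfair Rabin pairs against the disjunctive reading of the fairness implication in $\Psi_T$, with more bookkeeping detail than the paper gives; you are also right that the adjacent honesty check needs no appeal to Lemma~\ref{lem:type-log-honesty}.
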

\begin{proof}
Let us first consider a play that reaches \(\bot\). Its log thereafter
repeats \(\langle e_\bot,\epsilon\rangle\), so the clock cannot recur and
no monitor completes another round. Both conditions are therefore false.
For a play that avoids the sink, the safety arena ensures honesty.

It remains to compare the recurrence conditions. The live pair
\((F_{\mathsf{live}},\emptyset)\) accepts exactly when the clock and every
marker in \(T\) recur. For a fairness identifier \(\eta\), the pair for
\(\mathsf{true}\) accepts exactly when the clock and
\(\eta_{\mathsf{true}}\) recur while \(\eta_{\mathsf{false}}\) occurs
only finitely often; the other pair is symmetric. Thus the union accepts
precisely the logs with recurring clock that are live or unfair, as
required by the fairness implication in \(\Psi_T\).
\end{proof}

\nparagraph{Game correspondence}
To establish game correspondence, we relate boundary strategies to
Player~0 strategies and compare their outcomes:

\noindent\textbf{Theorem~\ref{thm:type-compatibility-soundness-completeness} (full statement).}\enspace
For every fixed finite well-formed parenthesized composition \(\Theta_0\), its interfaces
\(I_1,\ldots,I_{n-1}\), and progress set \(T\),
\[
\TypeCompat_{I,T}(\tau_1,\ldots,\tau_n)
\Longleftrightarrow
\exists\Strategy_\Theta.\
\forall\ell\in\Logs^\omega_{\Strategy_\Theta}(\Theta_0).\
\Psi_T(\ell).
\]

\begin{proof}
We will transfer strategies through \lemref{type-automata-bisimulation}.
A represented semantic prefix and its related automaton state have the
same chosen environments, fired labels, and offers. A semantic boundary
strategy therefore gives a Player~0 strategy that chooses the same
environment. Conversely, a Player~0 strategy gives a semantic controller
that stores the current arena state.

For an honest enabled choice, Player~1's resolutions are exactly the
semantic nondeterministic successors. By
\lemref{type-automata-bisimulation}, a choice has no semantic successor
exactly when the product has no resolving edge. Both executions then enter
their failure sinks after the same choice, regardless of its honesty.
In these cases, the outcome logs agree entry for entry.

An enabled dishonest choice is the only case in which the logs need not
match. The semantic execution proceeds and violates honesty, while the
arena enters \(\bot\). Both outcomes lose. Applying
\lemref{type-objective-monitor} therefore gives
\[
s_0\in
\TypeWinRegion(G_{I,T}(\tau_1,\ldots,\tau_n))
\Longleftrightarrow
\exists\Strategy_\Theta.\
\forall\ell\in\Logs^\omega_{\Strategy_\Theta}(\Theta_0).\
\Psi_T(\ell),
\]
By the definition of type compatibility, this is the required equivalence.
\end{proof}

\nparagraph{Size of the monitoring construction}
For a boundary with \(b\) bare message polarities, honesty requires at
most \(1+2^b\) non-failure memory values. The round-robin product has at most
\[
  (|T|+1)\,2^{2|\Etas_I|}
\]
counter combinations before reachable-state pruning. Both factors are
included in \(N\) and \(M\) in \thmref{type-compatibility-decidable}.
Neither adds Rabin pairs; the objective has \(K=1+2|\Etas_I|\) pairs.

%% file: sections/appendix/types/optimality.tex
\input{sections/appendix/types/type-optimality-proofs}

%% file: sections/appendix/types/type-optimality-proofs.tex
\subsection{Proofs from \secref{type-optimality}}
\applabel{type-optimality}

We will establish optimality by first showing that refinement is
preserved under alphabet restriction and parallel composition.
These properties will allow us to show that refinement preserves
compatibility, from which the optimality and inference-failure
results of \secref{type-optimality} follow.

\begin{lemma}[Common alphabet restriction]
\label{lem:refinement-alphabet-restriction}
Let $\rho\sqsubseteq\theta$ at alphabet $\Omega$, with both types closed,
alphabet well formed, interface scoped, and tail recursive. For
$A\subseteq\Omega$,
\[
 \rho[A]\sqsubseteq\theta[A].
\]
\end{lemma}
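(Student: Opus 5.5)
We plan to build the refinement witness directly from a given one. Let $\Rel$ witness $\rho\sqsubseteq\theta$ at $\Omega$. We take
\[
 \Rel_A=\{(\rho_1,\theta_1)\mid\exists(\rho_0,\theta_0)\in\Rel.\
   \rho_1\equiv_{\mathsf{live}}\rho_0[A]\land\theta_1\equiv_{\mathsf{live}}\theta_0[A]\},
\]
restricted to residuals reachable from $\rho[A]$ and $\theta[A]$. The closure under $\equiv_{\mathsf{live}}$ is needed because re-signed residuals may step to targets that are only live-equivalent to re-signings of the original targets (nested survivor completions re-sign in different orders). The initial pair $(\rho[A],\theta[A])$ lies in $\Rel_A$ trivially.

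The core step is a transport lemma relating the decorated steps of $\sigma[A]$ to those of $\sigma$. An environment $E\in\Env{\sigma[A]}$ uses only complements of $A$, so $E\in\Env{\sigma}$ as well. We would show, by induction on derivations in the style of Lemma~\ref{lem:constructor-survivor-transport}, the following correspondence. Steps of $\sigma[A]$ under $E$ with pair $(O_C,O_F)$ correspond bijectively to steps of $\sigma$ under $E$ with pair $(O_C',O_F')$, with the same label, $O_C=O_C'\cap A$, $O_F=O_F'\cap A$, and target live-equivalent to the re-signed original target.

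The ingredients for this induction are as follows:
\begin{itemize}
\item The equation $\FutOff(\rho[A])=\FutOff(\rho)\cap A$ and invariance of $\Complete$, both from Lemma~\ref{lem:constructor-survivor-transport}.
\item The definitions of $D_a,D_B$ as intersections with the source alphabet.
\item For offers, the observation that $W(E,B)$ is already inside $A$ whenever $E$ is an $A$-environment.
\item For parallel composition, the projection $\Pi_I$: each child keeps its interface actions, so $\LocalEnv$ and $\Sync_I$ are unchanged while the boundary parts are cut to $A$.
\end{itemize}
Tail recursion and interface scoping are what make the recursion and parallel cases go through, exactly as in the cited survivor-transport proof.

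With transport in hand, we check the three clauses of Definition~\ref{def:type-refinement} for $\Rel_A$; live equivalence lets us reduce to the case of a pair $(\rho_0[A],\theta_0[A])$.

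\textbf{Availability.} Take a step of $\theta_0[A]$ under $E$ with pair $(O_C,O_F)$ and lift it to a step of $\theta_0$ with pair $(O_C',O_F')$. Availability in $\Rel$ gives a step of $\rho_0$ with the same pair. Projecting it back down gives a step of $\rho_0[A]$ with pair $(O_C'\cap A,O_F'\cap A)=(O_C,O_F)$.

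\textbf{Response.} This clause is symmetric. The targets are re-signings of $\Rel$-related targets, up to $\equiv_{\mathsf{live}}$, and so they lie in $\Rel_A$.

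\textbf{Completion.} This holds because re-signing preserves the $\tkw{skip}$ body.

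The main obstacle we expect is the offer-pair bookkeeping in the transport lemma. The difficulty is not that projected pairs agree, which is immediate. It is that two different $\Omega$-level pairs may project to the same $A$-level pair, so a single $A$-step can lift to several $\Omega$-steps. The lifting direction therefore has to choose the $\Omega$-pair actually produced by the derivation under $E$, rather than an arbitrary preimage. Our approach is to fix the derivation (component steps and chosen pairs) and lift the whole derivation, as in Lemma~\ref{lem:decorated-steps-lift}. The Availability clause of $\Rel$ then applies to that specific lifted pair, and projecting it back yields the required $A$-level pair.
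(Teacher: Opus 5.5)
Your overall route is the paper's: a transport equality for decorated steps under restriction, proved by induction with the same ingredients (menu projection, invariance of $\Complete$, $D_a,D_B$ as intersections, $W(E,B)\subseteq A$, the $\Pi_J$ bookkeeping that leaves $\LocalEnv$ and $\Sync_J$ unchanged, tail position for recursion). It is followed by projecting the witness and checking availability and response by lift, apply, project.

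\textbf{The gap.} The step that does not go through as written is closing $\Rel_A$ under $\equiv_{\mathsf{live}}$. Your reduction ``live equivalence lets us reduce to a pair $(\rho_0[A],\theta_0[A])$'' needs step matching between live-equivalent residuals. Lemma~\ref{lem:constructor-survivor-transport} provides that only under message ownership, which this lemma does not assume.

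\textbf{Why the closure is unnecessary.} The targets agree exactly. The only delicate case is survivor entry at a node with interface $J$. There you compare $(\sigma_i'[A_i])[A]$ with $(\sigma_i'[\Omega])[A]$, where $A_i=\restr{\Omega_i}{J}\cup A$. Re-signing passes the incoming alphabet to ordinary children, and at a nested parallel node with interface $I$ it gives each child its own $I$-part plus the incoming alphabet. So an intermediate alphabet can survive a second re-signing only through its $I$-parts at nested interfaces. Interface scoping gives $I\cap J=\emptyset$ and $\restr{\Omega}{I}=\emptyset$, so these parts are empty for both $A_i$ and $\Omega$. Hence both sides equal $\sigma_i'[A]$, which is the paper's identity $(\sigma_i[A_i])[A]=(\sigma_i[\Omega])[A]$. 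The re-signing-order phenomenon you have in mind concerns survivor copies in the constructor automaton, not this comparison. State transport with targets exactly $\sigma[A]$ and use $\{(\rho_0[A],\theta_0[A])\mid(\rho_0,\theta_0)\in\Rel\}$ directly.

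\textbf{Smaller points.}
\begin{itemize}
\item The correspondence is an image equality in both directions, not a bijection; you note yourself that several $\Omega$-steps can project to one $A$-step. Image equality is all the argument uses.
\item Your ``main obstacle'' therefore disappears. For availability, any decorated step of $\theta_0$ that projects to the given one will do. Availability of $\Rel$ returns a step of $\rho_0$ with that same full pair, and its projection is the required pair.
\item Restrict $\Rel$ itself to reachable pairs before projecting, as the paper does. This ensures the structural hypotheses of your transport lemma hold at every $(\rho_0,\theta_0)$ you use; restricting only the $A$-side residuals does not.
\end{itemize}
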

\begin{proof}
We will first show that alphabet restriction preserves decorated
steps. Let $\mathcal D_E(\rho)$ contain the tuples
$(\alpha,O_C,O_F,\sigma)$ of decorated steps of $\rho$ under $E$.
For every environment $E\in\mathsf{Env}_A$ admissible at alphabet $A$, we claim
\[
 \mathcal D_E(\rho[A])=
 \{(\alpha,O_C\cap A,O_F\cap A,\sigma[A])\mid
       (\alpha,O_C,O_F,\sigma)\in\mathcal D_E(\rho)\}.
\]

We establish this equality by simultaneous induction on step and
offer derivations. The atomic and control cases are immediate:
they inspect the same environment and retain their labels.
An undeclared action can wait with empty current offers.
Offer rules select the same enabled set, branch, default, or
waiting case, since intersection commutes with the relevant
future unions and differences.
For sequence, the completion flag is unchanged and the right
menu is projected, by
Lemma~\ref{lem:constructor-survivor-transport}.
Re-signing also preserves the $\tkw{skip}$ test of each target.

Let us consider recursion and parallel composition, where the
interaction with restriction requires more care.
For recursion, tail position gives
\[
(\rho_0[X\mapsto S])[A]=\rho_0[A][X\mapsto S[A]].
\]
Here, ordinary tail children receive $A$, and no free $X$
occurs below a parallel node or in a conjunctive branch.

At a parallel node with interface $J$, child $i$ receives
$A_i=(\Omega_i|_J)\cup A$.
Interface scoping ensures that removed boundary actions lie
outside $J$, so the projected component pairs retain exactly
their interface restrictions.
Both local-environment equations therefore use the same
environments before and after projection.
Hiding $J$ projects the published pair to $A$, while the
unchanged source and target completion tests select the same rule.

For survivor entry, the targets agree because
\[
(\sigma_i[A_i])[A]=(\sigma_i[\Omega])[A].
\]
This follows by induction down the target: both sides retain
each nested child's interface part and pass it the same incoming
boundary. Together, these cases establish both inclusions of
the displayed step equality.

We can now use this equality to transport the refinement witness.
Restrict the witness to residual pairs reachable on their
respective sides. Since the structural conditions are preserved
by steps, the equality applies at every such pair.
Project each pair by $A$.

For response, lift a projected precise step, apply the original
refinement witness, and project the matching abstract step.
For availability, lift an abstract signature, apply availability,
and project the resulting precise step.
In both cases, the projected offers agree, response targets
remain related, and the $\tkw{skip}$ tests still agree.
The projected relation therefore witnesses
$\rho[A]\sqsubseteq\theta[A]$.
\end{proof}

\begin{lemma}[Componentwise product lifting]
\label{app:refinement-product-lifting}
Fix the same nesting and interfaces on two well-formed type compositions.
If corresponding components refine each other at equal alphabets, then the
whole precise composition refines the whole abstract composition.
\end{lemma}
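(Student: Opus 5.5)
The plan is to build the refinement witness for the composite directly from the component witnesses, by induction on the fixed composition tree $C$. At a leaf the component refinement $\tau_i\sqsubseteq\tau_i'$ is given. At an internal node with interface $J$ and children compositions $\rho_1\tpar{J}\rho_2$ and $\theta_1\tpar{J}\theta_2$, the induction hypothesis gives $\rho_j\sqsubseteq\theta_j$ with witnesses $\Rel_1,\Rel_2$. The candidate relation then has three kinds of pairs:
\begin{itemize}
\item product pairs $(\sigma_1\tpar{J}\sigma_2,\ \sigma_1'\tpar{J}\sigma_2')$ with $(\sigma_j,\sigma_j')\in\Rel_j$, both sides unfinished;
\item survivor pairs $(\sigma_j[A],\sigma_j'[A])$ for the parent alphabet $A$, taken up to the live equivalence of \lemref{constructor-survivor-transport};
\item the pair of signed $\tkw{skip}$ residuals.
\end{itemize}
Completion holds at every pair: product pairs are never $\tkw{skip}$, survivor pairs inherit the clause from $\Rel_j$ via preservation of the $\tkw{skip}$ test under re-signing, and the $\tkw{skip}$ pair is trivial.

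For \emph{response} at a product pair, take a decorated precise step derived by \rulelabel{T-Par}, \rulelabel{T-ParFuse}, or a survivor-entry rule. Its premises supply component steps $\sigma_j$ under $E_j$ with offer pairs $(O_{Cj},O_{Fj})$, where $E_1\in\LocalEnv(E,J,\sigma_1,O_{C2},O_{F2})$ and symmetrically for $E_2$, together with $\Sync_J(\alpha_1,\alpha_2)$. Apply response of $\Rel_j$ at $E_j$. This gives abstract steps with the \emph{same} labels and offer pairs and related targets. Since $\LocalEnv$ depends only on $E$ and the sibling's chosen pair, the same local environments work on the abstract side. $\Sync_J$ and $\alpha_1\unionI\alpha_2$ depend only on labels, so they also carry over. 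The published pair is the hidden union of the component pairs, so it agrees as well.

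The rule selected on the abstract side is determined by which targets are $\tkw{skip}$, and completion of $\Rel_j$ at the targets makes it the same rule. Fusion lands in the $\tkw{skip}$ pair. Survivor entry lands in a survivor pair, using that re-signing commutes with the relation. \emph{Availability} is symmetric: decompose an abstract parallel step into component steps, use the availability clause of each $\Rel_j$ to obtain precise component steps with equal offer pairs, and reassemble them with the same local environments.

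At survivor pairs, both clauses reduce to \lemref{refinement-alphabet-restriction}, which gives $\sigma_j[A]\sqsubseteq\sigma_j'[A]$. This needs the common alphabet restriction with $A\subseteq\Alphabet(\sigma_j)$, which follows from $\mathsf{ParCompat}_J$. It also needs the structural hypotheses (closed, alphabet well formed, interface scoped, tail recursive), which are preserved along steps by \lemref{constructor-structural-invariants}. We take the union of the resulting witnesses with the product and $\tkw{skip}$ pairs.

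The main obstacle I expect is the mismatch between the two clauses: availability relates no targets, yet an abstract parallel step may need component steps whose targets are unrelated. This is harmless because the refinement relation only requires availability of signatures, not related targets, so the reassembled precise step need not stay in $\Rel$. A second subtlety is the recursion in the $\Off$ and $\LocalEnv$ definitions. The local environments of the two children are defined mutually through each other's offers. I would handle this by fixing the whole solution (both component pairs) from the given derivation and checking that it remains a solution on the other side, rather than recomputing environments.
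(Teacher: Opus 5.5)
Your plan is the paper's proof. The paper also inducts on the nesting and uses a candidate relation made of product pairs with componentwise-related children, re-signed survivor pairs handled by \lemref{refinement-alphabet-restriction}, and the signed $\tkw{skip}$ pair. Your response argument matches the paper's as well: keep the fixed solution of the mutual $\LocalEnv$ constraints and the same $\Sync_J$ premise, publish the same hidden pair, and fix the parent rule by completion of $\Rel_j$ at the related child targets.

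The step that does not go through as written is availability, which you call symmetric. In response, completion at the related targets forces the abstract side to use the same parent rule. The availability clause of $\Rel_j$, however, gives precise component steps with the right labels and offer pairs but no relation between their targets and the abstract ones. So nothing forces a precise child target to be $\tkw{skip}$ exactly when the abstract child target is. If the abstract step is \rulelabel{T-ParFuse}, the precise components need not both finish, so fusion may be unavailable on the precise side. Conversely, an abstract \rulelabel{T-Par} or survivor step may have to be answered by fusion. Your obstacle paragraph addresses only the harmless fact that the reassembled step need not stay in the relation.

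The missing idea is that the precise parent rule must be chosen afresh from the actual precise targets, and every possible choice must publish the required signature. Both hold:
\begin{itemize}
\item Some rule always applies. Completion at the related sources makes them agree on being $\tkw{skip}$, and a $\tkw{skip}$ source can only take its silent empty-pair step to $\tkw{skip}$. Hence fusion, \rulelabel{T-Par}, or the appropriate survivor rule is available.
\item Each of these rules publishes the same signature. The pair is the $J$-hidden union of the component pairs. The label is $\alpha_1\cup_J\alpha_2$, which reduces to $\alpha_2$ when $\alpha_1=\epsilon$, because $\Sync_J(\epsilon,\alpha_2)$ leaves no synchronized messages.
\end{itemize}
This is exactly the case split the paper performs.

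Separately, the ``up to live equivalence'' qualification on survivor pairs is unnecessary. The survivor rules produce literally $\sigma[\Alphabet(\tau)]$, so \lemref{refinement-alphabet-restriction} applies directly.
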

\begin{proof}
It suffices to establish refinement through one parallel node;
the general result then follows by induction on the nesting.

Consider the candidate relation consisting of equal-alphabet
refinement on the structural class of the preceding lemma,
together with pairs of alphabet-compatible products whose
corresponding children refine each other.
Product sources agree on being non-$\tkw{skip}$.

We will check the response and availability clauses.
For response, consider a precise product step.
Each child step has a matching abstract step by its refinement
witness, with the same local environment, label, and complete
offer pair.
The preserved pairs satisfy the same $\LocalEnv$ equations,
and the labels still synchronize.
Completion agreement selects the same parent rule.
Continuing targets remain related as products, while fused
targets are identical signed $\tkw{skip}$.
For survivor entry, common alphabet restriction gives refinement
after both child targets are re-signed by the parent alphabet.
All these rules publish the same hidden pair.

Availability requires a slightly different argument because
the precise and abstract targets need not agree.
Start with an abstract product step and use the children's
availability clauses to obtain precise steps with the same
local environments, labels, and pairs.
We then choose the parent rule according to the actual precise
targets: fuse if both finish, retain the product if both sources
are unfinished and a target continues, or enter the appropriate
survivor.
A finished source can only remain finished, so these cases
cover all possibilities.
Each rule publishes the required label and hidden pair;
availability does not require related targets.

Thus, the candidate witnesses refinement at this parallel node.
Induction on the fixed nesting completes the proof.
\end{proof}

With product lifting established, we can now show that replacing
components by more precise types preserves compatibility.

\noindent\textbf{\lemref{type-compat-refinement-monotone} (full statement).}\enspace
Let $\tau_i\sqsubseteq\tau_i'$ for every $i$. Fix the same nesting and
interfaces on both sides and require both compositions to be well formed.
Corresponding components must have the same boundary, fair-choice
identifiers, and progress identifiers. Then
\[
\TypeCompat_{I,T}(\tau_1',\ldots,\tau_n')
\Longrightarrow
\TypeCompat_{I,T}(\tau_1,\ldots,\tau_n).
\]

\begin{proof}
By product lifting, the corresponding compositions satisfy
$\Theta_0\sqsubseteq\Theta'_0$.
Since the component identifiers agree, both compositions use
the same fairness predicates, progress set, and hence $\Psi_T$.

Let $S'$ be a winning strategy for $\Theta'_0$.
We will construct a strategy for $\Theta_0$ by maintaining a
matching abstract history, as in \lemref{strategy-transfer},
but without payload concretization.
For a precise prefix $h$ with selected abstract match $g$,
choose $S(h)=S'(g)$.

Consider any precise step under this choice.
Decorating it and applying refinement's response clause gives
an abstract step with the same environment and label and a
related target.
We can therefore extend the matching history consistently.
By induction, every infinite precise execution has a matching
abstract execution with the identical log.

If a precise prefix becomes stuck under the selected environment,
the abstract prefix must also be stuck.
Otherwise, availability would supply a precise step.
Both failure completions consequently record the same failed
choice and empty-environment suffix.

Thus, every completed precise log is also a completed log
generated by $S'$ and satisfies $\Psi_T$.
The game reduction now gives compatibility of the precise tuple.
\end{proof}

We can now prove optimality of the inferred types.
The forward direction uses the inferred types themselves,
while the reverse direction follows from compatibility
monotonicity.

\noindent\textbf{\thmref{type-inference-optimality} (full statement).}\enspace
Fix a well-formed composition of $P_1,\ldots,P_n$, its interfaces $I$, and
\termref{def:progress-marker}{progress markers} $T$. Quantify over well-formed type
compositions with the same nesting and interfaces, and the same boundary,
fair-choice identifiers, and progress identifiers at each component. Then
\begin{align*}
\begin{array}{rcl}
\TypeCompat_{I,T}
 (\mathsf{abs}(P_1),\ldots,\mathsf{abs}(P_n))
 & \iff &
\exists \tau_1,\ldots,\tau_n.\\
\left(\bigwedge_i \checkdo{P_i}{\tau_i}\right)
\wedge
\TypeCompat_{I,T}(\tau_1,\ldots,\tau_n).
\end{array}
\end{align*}

\begin{proof}
We prove both directions.

First, suppose the inferred types are compatible:
\[
\TypeCompat_{I,T}(\mathsf{abs}(P_1),\ldots,\mathsf{abs}(P_n)).
\]
Choose $\tau_i=\mathsf{abs}(P_i)$ for every $i$.
Abstraction preserves alphabets, fair-choice identifiers, and
progress identifiers, while process well-formedness ensures
that the inferred composition is well formed.
Thus, this choice belongs to the quantified domain.
By reflexivity of refinement,
$\mathsf{abs}(P_i)\sqsubseteq\mathsf{abs}(P_i)$,
so $\checkdo{P_i}{\tau_i}$ follows from
\defref{data-oblivious-type-checking}.
This establishes the forward implication.

Conversely, suppose there exist types $\tau_1,\ldots,\tau_n$
such that
\[
\left(\bigwedge_i \checkdo{P_i}{\tau_i}\right)
\wedge
\TypeCompat_{I,T}(\tau_1,\ldots,\tau_n).
\]
By \defref{data-oblivious-type-checking},
$\mathsf{abs}(P_i)\sqsubseteq\tau_i$ for every $i$.
The quantified domain provides the required well-formedness,
fixed nesting and interfaces, and equality of corresponding
alphabets and identifiers.
We can therefore apply
\lemref{type-compat-refinement-monotone}, taking
$\mathsf{abs}(P_i)$ as the precise type and $\tau_i$
as the coarser type, to obtain
\[
\TypeCompat_{I,T}(\mathsf{abs}(P_1),\ldots,\mathsf{abs}(P_n)).
\]
This proves the reverse implication.
\end{proof}

Finally, the inference-failure result follows directly from
optimality: if the inferred types are incompatible, no compatible
tuple of checked types can exist.

\noindent\textbf{\corref{type-inference-failure} (full statement).}\enspace
Under the hypotheses and type domain of \thmref{type-inference-optimality},
if $\neg\TypeCompat_{I,T}
 (\mathsf{abs}(P_1),\ldots,\mathsf{abs}(P_n)),$
then there are no types $\tau_1,\ldots,\tau_n$ such that
$\left(\bigwedge_i \checkdo{P_i}{\tau_i}\right)
\wedge
\TypeCompat_{I,T}(\tau_1,\ldots,\tau_n).$

\begin{proof}
The claim is the contrapositive of the backward implication
of \thmref{type-inference-optimality}.
Indeed, any compatible tuple of checked types would imply
compatibility of the inferred tuple, contradicting the hypothesis.
\end{proof}

%% file: sections/appendix/evaluation.tex
\section{Evaluation}
\applabel{appendix-evaluation}

\input{sections/appendix/evaluation/implementation}

%% file: sections/appendix/evaluation/implementation.tex
\subsection{Implementation Optimizations}
\applabel{implementation-optimizations}

We implemented several optimizations to reduce the hardware generated by \anvil{} and improve the performance of compatibility checking.
The checker also supports parallel execution across multiple threads.

\myparagraph{Code generation}
The code generator translates each process thread into an event graph, where events represent control operations such as delays, branch and loop joins,
 and data selection for message sends. These events introduce control state in the generated hardware.
 We apply two optimizations to eliminate redundant state.

\begin{itemize}
  \item \emph{Shared delays at joins.}
  When multiple branches end with the same delay, the code generator can join the branches before applying the delay.
  This transformation is valid because delaying each input before a join is equivalent to delaying the joined result,
  provided the delays have no actions of their own and the join is their only reader. The optimization also applies
  to branches that end a loop iteration or re-enter a recursion. Since a one-cycle delay requires one flip-flop,
  delays merged into the same event can share a flip-flop, even when they originate from different conditionals.

  \item \emph{Data from the active sender.}
  When multiple program locations send the same message, the baseline implementation uses a selector register
  to remember the last sender and preserve its data throughout the message lifetime. However, if the lifetime
  is one cycle, the receiver only needs the data during the handshake cycle. In this case, the selector can
  be eliminated by taking the data directly from the active sender.
  This optimization requires each sender to drive its valid condition throughout its send and at most one sender to be active in any cycle.
  The latter is guaranteed by \anvil{}'s lifetime checker, which rejects overlapping uses of the same message.
  Thus, whenever the valid signal is asserted, the data comes from exactly one active sender.
\end{itemize}

\myparagraph{Compatibility checking}
The compatibility checker constructs the synchronous product of the participants' type automata and solves the
two-player graph game described in \secref{typecompat}. We optimize both the construction of the product and
the solution of the resulting game.

\begin{itemize}
  \item \emph{Pruned product construction.}
  A product transition combines member transitions whose guards and synchronization constraints are consistent.
  Rather than enumerating all combinations, the checker constructs them incrementally using depth-first search.
  At each step, it selects the member sharing the most messages with those already considered, allowing synchronization
  constraints to be checked earlier.

  The checker prunes a partial combination whenever it violates a constraint that cannot be repaired by adding more members.
  These constraints require agreement on clock advancement, matching communication partners, and consistent synchronization
  with members already considered. The remaining combinations undergo the full consistency check. Since only irreparable
  combinations are pruned, the resulting product is identical to that obtained by exhaustive enumeration.

  \item \emph{Integer arena and linear-time attractors.}
  The checker represents arena nodes as integers and stores edges in compressed arrays.
  Each edge records whether it advances the clock and which progress markers it carries,
  using a bit mask.

  This representation supports linear-time attractor computation. A Player~0 node enters
  the attractor when it has a winning successor, while a Player~1 node enters only when
  all its successors are winning. The checker maintains a counter of unresolved edges
  for each Player~1 node and constructs each predecessor index once.
  The counters only avoid rescanning edges, so the checker computes the same
  attractor as the standard fixpoint while examining each edge once.

  \item \emph{Generalized B\"uchi without degeneralization.}
  The checker takes $T$ to contain every participant's progress marker. Without fair-choice identifiers, the winning condition then requires both clock advancement and every participant's progress marker to occur infinitely often. This is a generalized B\"uchi objective. A conventional approach reduces it to a B\"uchi objective using a round-robin monitor, which replicates the arena for each participant.
  Instead, the checker solves the generalized objective directly. In each fixpoint iteration, it computes the regions satisfying the clock and individual progress requirements from the same candidate region. These computations run in parallel, and their results are intersected to obtain the next candidate region.
  The first candidate region contains the nodes from which Player~0 can avoid the failure sink. A node removed in an iteration cannot satisfy one of the requirements even within the current candidate region, so it cannot be winning. Conversely, once the region stops changing, Player~0 can force each requirement to recur from every node of the region while staying in it, so it can satisfy all requirements in turn, infinitely often.
  Thus, the fixpoint is the same winning region as the one obtained with the monitor, without replicating the arena.

  With fair-choice identifiers, the checker first computes this fixpoint while letting a fair node make progress through any of its fair arms.
  The resulting region over-approximates the winning region and is closed under Player~1 moves. When it contains the initial node,
  the checker solves the Rabin game of \appref{type-compatibility-construction} only within this region.
\end{itemize}